\def\fontsettingup{2} 
\newtheorem{theorem}{Theorem}
\newtheorem{observation}[theorem]{Observation}
\newtheorem{claim}[theorem]{Claim}
\newtheorem*{claim*}{Claim}
\newtheorem{lemma}[theorem]{Lemma}
\newtheorem{proposition}[theorem]{Proposition}
\newtheorem{corollary}[theorem]{Corollary}
\theoremstyle{definition}
\newtheorem{definition}[theorem]{Definition}
\newtheorem{remark}[theorem]{Remark}
\newtheorem*{remark*}{Remark}
  \def\*#1{\mathbf{#1}} 
  \def\+#1{\mathcal{#1}} 
  \def\-#1{\mathrm{#1}} 
  \def\^#1{\mathbb{#1}} 
  \def\!#1{\mathfrak{#1}} 
  \def\*#1{\boldsymbol{#1}} 
  \def\+#1{\mathcal{#1}} 
  \def\-#1{\mathrm{#1}} 
  \def\^#1{\mathbb{#1}} 
  \def\!#1{\mathfrak{#1}} 
\newcommand{\ap}[1][]{
  \ifthenelse{\isempty{#1}}{
    \mathrm{AP}_\sigma
  }{
    \mathrm{AP}_{#1}
  }
} 
\newcommand{\elist}[2]{\+L^{#1}_{#2}} 
\newcommand{\vlist}[2]{{#1}(\delta_{#2})} 
\def\oPr{\mathbf{Pr}}
\renewcommand{\Pr}[2][]{ \ifthenelse{\isempty{#1}}
  {\oPr\left[#2\right]}
  {\oPr_{#1}\left[#2\right]} } 
\def\oE{\mathbb{E}}
\newcommand{\E}[2][]{ \ifthenelse{\isempty{#1}}
  {\oE\left[#2\right]}
  {\oE_{#1}\left[#2\right]} }
\def\oVar{\mathbf{Var}}
\NewDocumentCommand{\Var}{ O{} O{} m }{
  \ifthenelse{\isempty{#1}} {
    \ifthenelse{\isempty{#2}} {
      \oVar\left[#3\right]
    } {
      \oVar^{#2}\left[#3\right]
    }
  } {
    \ifthenelse{\isempty{#2}} {
      \oVar_{#1}\left[#3\right]
    } {
      \oVar_{#1}^{#2}\left[#3\right]
    }
  }
}
\def\oEnt{\mathbf{Ent}}
\NewDocumentCommand{\Ent}{ O{} O{} m }{
  \ifthenelse{\isempty{#1}} {
    \ifthenelse{\isempty{#2}} {
      \oEnt\left[#3\right]
    } {
      \oEnt^{#2}\left[#3\right]
    }
  } {
    \ifthenelse{\isempty{#2}} {
      \oEnt_{#1}\left[#3\right]
    } {
      \oEnt_{#1}^{#2}\left[#3\right]
    }
  }
}
\newcommand{\Trelax}{T_{\text{rel}}}
\newcommand{\Tmix}{T_{\text{mix}}}
\newcommand{\DTV}[2]{\-D_{\mathrm{TV}}\left({#1},{#2}\right)}
\newcommand{\e}{\mathrm{e}}
\renewcommand{\epsilon}{\varepsilon}
\renewcommand{\emptyset}{\varnothing}
\newcommand{\set}[1]{\left\{#1\right\}}
\newcommand{\tuple}[1]{\left(#1\right)}
\newcommand{\eps}{\varepsilon}
\newcommand{\tp}{\tuple}
\newcommand{\ol}{\overline}
\newcommand{\abs}[1]{\left\vert#1\right\vert}
\newcommand{\ftp}[1]{\left\lfloor#1\right\rfloor}
\newcommand{\cgap}{c_{\-{gap}}}
\newcommand{\csob}{c_{\-{sob}}}
\title{Optimal Mixing for Randomly Sampling Edge Colorings on Trees Down to the Max Degree}
\date{}
\author{Author(s)}
\author{
Charlie Carlson\thanks{Department of Computer Science, University of California, Santa Barbara. Email: \{charlieannecarlson,vigoda\}@ucsb.edu.  Research supported in part by NSF grant CCF-2147094.}  
\and 
Xiaoyu Chen\thanks{State Key Laboratory for Novel Software Technology, New Cornerstone Science Laboratory, Nanjing
University, China. E-mail: chenxiaoyu233@smail.nju.edu.cn}
\and
Weiming Feng\thanks{Institute for Theoretical Studies, ETH Z\"{u}rich, Switzerland. Email: weiming.feng@eth-its.ethz.ch. Research supported by Dr. Max R\"ossler, the Walter Haefner Foundation and the ETH Z\"urich Foundation.}
\and Eric Vigoda$^{*}$
}
\begin{document}

\maketitle

\begin{abstract}
We address the convergence rate of Markov chains for randomly generating an edge coloring of a given tree.  Our focus is on the Glauber dynamics which updates the color at a randomly chosen edge in each step.  For a tree $T$ with $n$ vertices and maximum degree $\Delta$, when the number of colors $q$ satisfies $q\geq\Delta+2$ then we prove that the Glauber dynamics has an optimal relaxation time of $O(n)$, where the relaxation time is the inverse of the spectral gap.  This is optimal in the range of $q$ in terms of $\Delta$ as Dyer, Goldberg, and Jerrum (2006) showed that the relaxation time is $\Omega(n^3)$ when $q=\Delta+1$.  For the case $q=\Delta+1$, we show that an alternative Markov chain which updates a pair of neighboring edges has relaxation time $O(n)$.  Moreover, for the $\Delta$-regular complete tree we prove $O(n\log^2{n})$ mixing time bounds for the respective Markov chain.
Our proofs establish approximate tensorization of variance via a novel inductive approach, where the base case is a tree of height $\ell=O(\Delta^2\log^2{\Delta})$, which we analyze using a canonical paths argument.

\end{abstract}

\thispagestyle{empty}

\newpage

\setcounter{page}{1}

\section{Introduction}

We study the Glauber dynamics for sampling $q$-edge colorings on trees of maximum degree $\Delta$.  
Given a set of colors $[q]:=\{1,2,\ldots,q\}$, a proper edge coloring $\sigma \in [q]^E$ assigns each edge a color such that adjacent edges receive distinct colors.  For a graph $G=(V,E)$, let $\Omega$ denote the set of proper $q$-edge colorings of $G$, and let $\mu$ denote the uniform distribution over $\Omega$.
We focus on a given graph $G$ and integer $q\geq 2$, generating a $q$-coloring almost uniformly at random.  More precisely, given $G$, $q$ and $\eps>0$, our goal is to generate a coloring $\sigma\in\Omega$ from a probability distribution $\nu$ which is within total variation distance $\leq\eps$ of the uniform distribution $\mu$, and to do so in time polynomial in $n=|V|$ and $\log(1/\eps)$.

The Glauber dynamics is a simple Markov chain which we study for this sampling problem.  The dynamics operates as follows.  From a coloring $X_t\in\Omega$, we choose an edge $e\in E$ and color $c\in [q]$ uniformly at random.  If no neighbor of $e$ has color $c$ in $X_t$ then we set $X_{t+1}(e)=c$ and otherwise we set $X_{t+1}(e)=X_t(e)$.  For all other edges $f\neq e$ we set $X_{t+1}(f)=X_t(f)$.  If the maximum degree of $G$ is $\Delta$ then when $q\geq 2\Delta+2$ then the Glauber dynamics is ergodic and the unique stationary distribution is the uniform distribution $\mu$.  

The {\em mixing time} $\Tmix$ is the number of steps $t$, from the worst initial state $X_0$, so that the distribution of $X_t$
is within total variation distance $\leq 1/4$ of $\mu$.  
The {\em relaxation time} $\Trelax$ is the inverse of the spectral gap.
More precisely, let $P$ denote the $N\times N$ transition matrix for the Glauber dynamics where $N=|\Omega|$, and let $\lambda_1=1>\lambda_2\geq\dots\lambda_N>-1$ denote the eigenvalues of $P$.  Then, $\Trelax:=(1-\lambda_*)^{-1}$ where $\lambda_*=\max\{\lambda_2,\abs{\lambda_N}\}$. We say the relaxation time is optimal if $\Trelax=O(n)$ since it is easy to establish the lower bound of $\Trelax=\Omega(n)$.   The optimal mixing time bound is $O(n\log{n})$~\cite{HayesSinclair}.

For graphs of maximum degree $\Delta$, there always exists a $q$-edge coloring when $q\geq 2\Delta+1$, and hence this presents a natural barrier at $q=2\Delta+1$ for general graphs.
For general graphs of maximum degree $\Delta$ when $\Delta=O(1)$ the following results were established.  Abdodolazimi, Liu, and Oveis Gharan~\cite{AbdolazimiLiuOveisGharan} proved $O(n\log{n})$ mixing when $q>(10/3)\Delta$; this improves upon the $q>(11/3-\eps_0)\Delta$ for a small, fixed $\eps_0>0$ which one obtains via vertex $q$-colorings~\cite{CDMPP19,Vigoda}.
A near-optimal bound was obtained by Wang, Zhang and Zhang~\cite{WZZ24} who proved $O(n\log{n})$ mixing time of Glauber dynamics when $q>(2+o(1))\Delta$ for any graph of maximum degree $\Delta$ for $\Delta=O(1)$. 

If we restrict attention to trees then the Glauber dynamics is ergodic for all $q\geq\Delta+1$, and hence the natural threshold is at $q=\Delta+1$ for trees.  Understanding the convergence rate of the Glauber dynamics on trees is interesting as it can provide intuition for its behavior on locally tree-like graphs such as sparse random graphs.  In fact, locally tree-like graphs serve as the crucial gadget for hardness of approximate counting results~\cite{sly2010computational,galanis2015inapproximability} by utilizing the stationary properties of the Gibbs distribution which are directly related to the mixing properties.

 Delcourt, Heinrich, and Perarnau~\cite{DHP} proved $\-{poly}(n)$ mixing time for the Glauber dynamics on trees with maximum degree $\Delta$ when $q\geq\Delta+1$.  Can we obtain optimal mixing for trees down to $q\geq\Delta+1$?  For a path with $q=3$ colors (which corresponds to $q=\Delta+1$ for $\Delta=2$), Dyer, Goldberg, and Jerrum~\cite{DGJ} proved that the Glauber dynamics has mixing time $\Theta(n^3\log{n})$ and hence relaxation time $\Trelax=\Omega(n^3)$.
 Therefore, optimal mixing and optimal relaxation time does not hold for the Glauber dynamics when $q=\Delta+1$ for all trees of maximum degree $\Delta$.

We prove that with one additional color optimal relaxation time of the Glauber dynamics holds for all trees of maximum degree $\Delta$.

\begin{theorem}
\label{thm:main-mixing}
  For any tree with maximum degree $\Delta$ and $n$ vertices, and any $q \geq \Delta + 2$, then the relaxation time of the Glauber dynamics for $q$-edge-colorings is $O_{\Delta,q}(n)$, where the big-$O$ notation hides a factor depending only on $\Delta$ and $q$.
\end{theorem}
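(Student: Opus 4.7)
The plan is to establish \emph{approximate tensorization of variance} with a constant depending only on $\Delta$ and $q$: for every $f\colon \Omega\to\^R$,
\[
\oVar_\mu[f] \le C(\Delta,q)\sum_{e\in E}\oE_\mu\!\left[\oVar[f \mid \sigma_{E\setminus\{e\}}]\right],
\]
which by standard spectral arguments yields a $1/n$ spectral gap and hence $\Trelax=O_{\Delta,q}(n)$. I would prove this bound by induction on the height of the tree $T$ after rooting it at an arbitrary vertex: the base case handles trees whose height is at most $\ell=O(\Delta^{2}\log^{2}\Delta)$ via a canonical-paths argument, while the inductive step peels off a top block of depth $\ell$ and applies a block factorization of variance.

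\textbf{Base case.} If $T$ has height at most $\ell$ then $|V|\le\Delta^{\ell+1}$ and $|\Omega|$ depends only on $\Delta,q,\ell$. I would bound the spectral gap of the Glauber dynamics on such trees using the Jerrum--Sinclair canonical paths / flow method: for every ordered pair of proper edge-colorings $(\sigma,\tau)$, construct a canonical path in the Glauber transition graph by recoloring edges one by one from the root towards the leaves, using the ``extra'' color guaranteed by $q\ge\Delta+2$ as a buffer whenever the target color is temporarily blocked by a neighboring edge. The length and congestion of such paths depend only on $\Delta,q,\ell$, which produces a spectral gap bounded below by a constant in $\Delta,q$.

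\textbf{Inductive step.} For a tree of height $h>\ell$, let $B$ denote the set of edges within depth $\ell$ of the root; removing $B$ leaves a forest of rooted subtrees of strictly smaller height. Conditioned on the colors of the edges crossing the cut, the restriction of $\mu$ to $B$ and to the remaining subtrees factorizes as a product of independent uniform edge-coloring distributions, each on a smaller tree with a reduced palette inherited from its parent edge. A block factorization of variance then splits $\oVar_\mu[f]$ into (i) a contribution from $B$, which is controlled by the base case uniformly over the boundary coloring, and (ii) contributions from the child subtrees, which are controlled by the induction hypothesis. Solving the resulting recurrence for the tensorization constant closes the induction, provided the block-factorization constant itself is bounded by a function of $\Delta,q,\ell$.

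\textbf{The main obstacle} is precisely to obtain such a bounded block-factorization constant: this amounts to a quantitative spatial-mixing estimate showing that conditioning on the colors at depth $\ell$ nearly decouples the top of the tree from the subtrees below, with an error small enough that the constant does not compound over the $\Theta(\log_\Delta n/\ell)$ levels of recursion. The hypothesis $q\ge\Delta+2$ is precisely what drives this decay---one spare color suffices to resample locally and propagate perturbations out of the tree---and the depth $\ell=O(\Delta^{2}\log^{2}\Delta)$ is chosen so that the correlation decay is strong enough to absorb the overhead at each level. At $q=\Delta+1$, the long-range rigidity exhibited by the $3$-coloring of the path in~\cite{DGJ} would defeat any such spatial-mixing argument, consistent with the $\Omega(n^{3})$ relaxation-time lower bound there.
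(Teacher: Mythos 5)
Your high-level skeleton---approximate tensorization of variance, established by induction with a base case on trees of height $O(\Delta^2\log^2\Delta)$ handled via canonical paths, and an inductive step that peels a depth-$\ell$ block---matches the paper's architecture. But you have correctly identified, and then left open, the one step that actually carries the proof: why the tensorization constant does not compound geometrically over the $\Theta(\log_\Delta n/\ell)$ levels of recursion. Naming the issue as "a quantitative spatial-mixing estimate" is not a proof of it, and the mechanism the paper supplies is genuinely different from what your base case produces.

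Concretely, the paper does \emph{not} prove ordinary approximate tensorization for the small tree of height $\ell$ and then recurse; any such bound has constant $\gamma(\Delta,q,\ell)\ge 1$, which is exactly the quantity that would compound. Instead, the paper peels from the \emph{bottom} of the tree and proves a modified statement they call root-tensorization of variance for the small tree $\^T^\star_\ell$ with a hanging root edge $r$: the variance of the projection $\mu_{\^T^\star_\ell\setminus\{r\}}[f]$ onto the root-edge marginal is bounded by $\sum_i \alpha_i \sum_{e\in L_i}\mu[\oVar_e[f]]$, with a huge constant $\alpha_i$ at interior levels but, crucially, $\alpha_\ell<1$ at the leaf level. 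When the induction hypothesis is applied to the upper tree $\^T_{k-\ell}$ and one reaches the boundary level $k-\ell$, this root-tensorization transfers the boundary variance down into the bottom subtrees at a multiplicative rate $\alpha_\ell<1$, so the recursion $F_k(k)=\alpha_\ell F_{k-\ell}(k-\ell)+\gamma$ is a contracting geometric series with limit $\gamma/(1-\alpha_\ell)$. Your plan of peeling from the top and appealing to a bounded block-factorization constant does not have this contraction built in, and you provide no mechanism that would supply it. Moreover, the canonical-paths argument you sketch is the classical Jerrum--Sinclair one between all pairs $(\sigma,\tau)$; that yields only the constant $\gamma$ (condition 2 of the paper's Theorem~\ref{thm:induction}), not the $\alpha_\ell<1$ bound (condition 1). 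Proving $\alpha_\ell<1$ requires the paper's non-standard ingredients: canonical paths are defined only between pairs $(\sigma,\tau)$ related by a \emph{flip coupling} of an $(a,b)$-alternating path from $r$; the congestion only needs to be controlled carefully for transitions at the leaf level; the transitions are ordered so leaf edges are recolored first and last; and the $\alpha_\ell<1$ estimate comes from showing the alternating path and its auxiliary paths reach level $\ell$ with probability decaying like $(1-1/\Delta)^{\Theta(\ell)}$, overwhelming the $(2\Delta)^{O(P)}$ count of starting colorings consistent with a given leaf transition. None of this is anticipated by recoloring "from the root towards the leaves" in a flow argument over all pairs, and without $\alpha_\ell<1$ the induction does not close.
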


Moreover, if we consider a slightly more robust dynamics then we can obtain optimal relaxation time down to $\Delta+1$ colors.  We consider the heat-bath dynamics on a pair of adjacent edges as formalized by the following process.  From a coloring $X_t$, choose a pair of adjacent edges $e_1,e_2$ in $G$ uniformly at random ($e_1$ is potentially equal to $e_2$) and then resample the edge-coloring on $\set{e_1, e_2}$ uniformly at random from those assignments consistent with the coloring on $G\setminus \{e_1,e_2\}$.  We refer to this chain as the {\em heat-bath neighboring edge dynamics}.

\begin{theorem}
\label{thm:heat-bath-pair-edges-mixing}
  For any tree with maximum degree $\Delta$ and $n$ vertices, any $q \geq \Delta + 1$, then the relaxation time of the heat-bath neighboring edge dynamics for $q$-edge-colorings is $O_{\Delta,q}(n)$.
\end{theorem}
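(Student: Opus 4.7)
The proof of Theorem~\ref{thm:heat-bath-pair-edges-mixing} parallels that of Theorem~\ref{thm:main-mixing}, with single-edge Glauber replaced by the heat-bath neighboring edge dynamics. The stronger block is essential because at $q=\Delta+1$ a single edge can be \emph{frozen} (have a unique feasible color given its neighbors) while a pair of adjacent edges always admits a nontrivial swap; this is precisely what breaks $O(n)$ relaxation for single-edge Glauber on a path and what the pair-edge dynamics repairs. Following the abstract, the plan is to prove an approximate tensorization of variance (ATV) inequality with constant depending only on $\Delta$ and $q$, which is equivalent (up to a factor depending only on $\Delta$ and $q$) to $\Trelax=O_{\Delta,q}(n)$ for this block dynamics.

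The argument has two layers. In the inductive step, I root $T$ at an arbitrary vertex $r$ and peel off a ``root block'' $B$ consisting of the top $\ell$ levels of $T$, where $\ell=O(\Delta^2\log^2\Delta)$. Using a block factorization of variance combined with the spatial-mixing fact that conditioning on the coloring of $B$ decouples the pendant subtrees hanging below $B$, I would bound $\oVar_{\mu_T}[f]$ by a $(\Delta,q)$-dependent constant times the expected within-$B$ variance plus the sum of variances on the pendant subtrees. Iterating this decomposition reduces the entire problem to ATV on trees of height at most $\ell$, at the cost of accumulating a factor linear in $n$.

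For the base case, the tree has size at most $\Delta^{\ell}$, a constant depending only on $\Delta$, so it suffices to produce \emph{any} spectral-gap bound whose dependence on $\Delta,q$ is explicit. I would use a canonical paths argument: for each pair of proper colorings $\sigma,\tau$, build a path in the state space from $\sigma$ to $\tau$ using pair-edge updates, resolving disagreements along Kempe chains on pairs of colors. The reason $q=\Delta+1$ suffices here is that every Kempe swap on two adjacent edges is a legal pair-edge update even when neither single-edge recoloring is; the height bound $\ell=O(\Delta^2\log^2\Delta)$ should trace back to the worst-case Kempe-chain length together with a logarithmic congestion-control factor.

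I expect the base case to be the main obstacle. Although the base tree has size bounded in terms of $\Delta$, the induction forces the spectral-gap constant coming from canonical paths to be \emph{independent of $n$}, which in turn demands tight control on the congestion of the paths as a function of $\Delta$ and $q$ alone. At $q=\Delta+1$ the single-edge flexibility disappears and one must route disagreements essentially through pair-edge moves, so the combinatorics of Kempe chains on adjacent edge pairs, and the interaction of ``bottleneck'' configurations with the congestion bound, are the delicate combinatorial heart of the argument; choosing $\ell$ exactly $O(\Delta^2\log^2\Delta)$ is what makes both the canonical paths argument at the base and the spatial-mixing step of the induction close simultaneously.
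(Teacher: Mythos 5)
Your high-level outline (peel off $\ell$ levels, reduce to small trees, analyze small trees via canonical paths, $\ell=O(\Delta^2\log^2\Delta)$) matches the paper's architecture, but the proposal misses the paper's central technical mechanism and, as written, the induction does not close.

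\textbf{The missing idea: root-tensorization / root-factorization.} After peeling off a layer of depth $\ell$, you cannot simply iterate an ATV bound on trees of depth $\ell$. The law of total variance gives
$\Var[\mu]{f}=\mu[\Var[R]{f}]+\Var[\mu]{\mu_R[f]}$;
the first term is handled by a crude ATV constant $\gamma$ on the small piece, and the second is handled by the inductive hypothesis on the remaining tree. The trouble sits on the \emph{boundary layer}: when you push $\mu[\Var[h]{\mu_R[f]}]$ back to $\mu[\Var[h]{f}]$ for an edge $h$ at the boundary between the peeled-off part and the rest, the needed convexity step (Lemma \ref{lem:basic-var-fact}) fails because $\partial\set{h}$ intersects $R$. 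The paper resolves this with the \emph{root-factorization of variance} inequality \eqref{eqn:root-block-factorization}: for the small $\^T^\star_\ell$ with a hanging root edge $r$, one bounds $\Var[\mu^\star_\ell]{\mu_{\^T^\star_\ell\setminus\{r\}}[f]}$ — the variance of the \emph{conditional expectation given the root-edge color} — by per-level single-edge variances (plus pair variances involving $r$) with level-dependent constants $\alpha_0,\dots,\alpha_\ell$. The induction (Theorem \ref{thm:adjacentpair}) closes precisely because the leaf-level constant satisfies $\alpha_\ell<1$, so iterated applications produce a convergent geometric series rather than an exponential blow-up. Your proposal contains no analogue of this quantity and no mechanism forcing any level's constant below $1$; ``iterate ATV on height-$\ell$ trees, accumulating a factor linear in $n$'' would, done naively, produce $C^{O(\log_\Delta n)}=n^{O(1)}$ rather than a bounded ATV constant, and done more carefully hits the boundary-layer obstruction above.

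\textbf{The canonical paths are not the classical kind.} You plan to build canonical paths between \emph{every pair} $(\sigma,\tau)\in\Omega^2$ and bound congestion uniformly — that is the Jerrum--Sinclair route, and it would at best recover the crude constant $\gamma$. The paper's canonical paths are defined only for pairs $(\sigma,\tau)$ drawn from a specific \emph{coupling} (the flip coupling along the maximal $(a,b)$-alternating path from the hanging root edge), the congestion $\xi_t$ is measured per level and weighted by coupling probabilities, and the whole point is to get $\xi_\ell<1/(2(\ell+1))$ for leaf edges while a trivial $q^{\Delta^{O(\ell)}}$ bound suffices for $t<\ell$. This coupling-driven congestion is exactly what furnishes $\alpha_\ell<1$; a classical congestion bound over all pairs does not. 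The only place the pair-edge move is actually needed is Stage-II of the canonical path: exchanging colors on $\set{e_0=r,e_1}$ when the alternating path has the wrong parity — so the root-factorization block collection only needs pairs $\set{e,r}$ with $e\in L_1$, not general adjacent pairs.

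\textbf{Also note} that your decomposition peels the \emph{top} $\ell$ levels, which makes $\mu_B[f]$ a function on a \emph{forest} of pendant subtrees whose joint marginal is not a product, so the inductive hypothesis cannot be applied directly to it; the paper peels the \emph{bottom} $\ell$ levels, so $\mu_R[f]$ lives on a single smaller regular tree and the induction applies cleanly. Finally, the passage from regular trees to arbitrary trees of maximum degree $\Delta$ requires a separate comparison argument (Lemma \ref{lem:monotonicity-edge-dynamics}), which the proposal does not address.
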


We extend our optimal relaxation time bound for the Glauber dynamics with $q\geq\Delta+2$ to a near-optimal mixing time bound on regular trees.
A tree is said to be $\Delta$-regular complete if every non-leaf vertex has degree $\Delta$ and all leaf vertices are on the same level.  The root of the tree has $\Delta$ children and other non-leaf vertices have $d = \Delta - 1$ children.
For the case of regular trees, Delcourt, Heinrich, and Perarnau~\cite[Section 5]{DHP} can prove $O(n^{2+o_d(1)})$ mixing time when $q\geq\Delta+1$.  We obtain near-optimal mixing time bounds for regular trees.

\begin{theorem}
\label{thm:tmix-regular}
 For all $\Delta\geq 2$, all $q\geq\Delta+2$, for the $\Delta$-regular complete tree with $n$ vertices, the Glauber dynamics for $q$-edge-colorings has mixing time $O_{\Delta,q}(n\log^2{n})$.  Moreover, for an arbitrary tree $T$ of maximum degree $\Delta$ we obtain mixing time $O_{\Delta,q}(nD\log{n})$ where $D$ is the diameter of $T$.
\end{theorem}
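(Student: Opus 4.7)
The plan is to upgrade the $O_{\Delta,q}(n)$ relaxation-time bound of Theorem~\ref{thm:main-mixing} into a near-optimal mixing-time bound. The direct conversion $\Tmix \leq \Trelax \cdot \log(1/\mu_{\min})$ gives only $O(n^2)$ since $\log(1/\mu_{\min}) = O_{\Delta,q}(n)$, so a finer analysis is required. I would go via a modified log-Sobolev (MLS) inequality: if one can establish an MLS constant $\rho$, then the standard bound $\Tmix = O(\rho^{-1}\log\log(1/\mu_{\min})) = O(\rho^{-1}\log n)$ (using $q = O_\Delta(1)$) converts it into the desired mixing-time estimate.

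The key step is to strengthen the approximate tensorization of \emph{variance} at the heart of Theorem~\ref{thm:main-mixing} into an approximate tensorization of \emph{entropy} of the form
\[
\mathbf{Ent}[f] \;\leq\; C(D) \cdot \sum_e \mathbb{E}\!\left[\mathbf{Ent}_e[f]\right],
\]
with $C(D) = O(D)$. The inductive framework on tree height that underlies the variance proof should carry over, but entropy is less friendly under conditioning: each level of the recursion incurs a constant multiplicative loss which, unlike for variance, cannot be absorbed into the constant prefactor and instead accumulates to $O(D)$ across the height of the tree. The base case (a tree of height $\ell = O(\Delta^2\log^2\Delta)$) is handled by converting the canonical-paths spectral-gap argument there into an MLS bound using the fact that the marginals on edge colors are uniformly bounded below by $1/q$, which costs only a $\log q$ factor depending on $\Delta,q$.

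Combining the entropy tensorization with constant $O(D)$ and the MLS-to-mixing-time conversion yields $\Tmix = O(nD\log n)$ for an arbitrary tree, giving the second assertion of the theorem. For the $\Delta$-regular complete tree with $n$ vertices the diameter is $D = O_\Delta(\log n)$, specializing this to $\Tmix = O_{\Delta,q}(n\log^2 n)$ and establishing the first assertion.

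The main obstacle I expect is the entropy tensorization itself. The variance proof uses canonical paths at the base case, a technique that does not transfer cleanly to entropy because entropy is nonlinear and not well adapted to path-based flow arguments. A natural route is a comparison step that translates the variance inequality on the short base-case tree into an entropy inequality at the cost of a polynomial factor in the (constant) marginal bounds; the remaining challenge is then to carry the entropy version through the recursive step of the inductive argument while keeping the loss linear in the recursion depth rather than exponential, which is exactly what is needed to obtain the $O(D)$ tensorization constant.
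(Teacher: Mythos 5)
Your high-level strategy matches the paper's: upgrade the $O(n)$ relaxation time into a log-Sobolev-type inequality with a factor-of-depth loss, then apply the standard bound $\Tmix = O(\rho^{-1}\log\log(1/\mu_{\min})) = O(\rho^{-1}\log n)$, and use $D = O_\Delta(\log n)$ for the complete regular tree. However, the route you propose --- re-running the inductive argument at the level of entropy tensorization --- differs from the paper's, and the key step where you estimate the loss across levels contains a gap.

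The paper does \emph{not} prove any entropy tensorization. It uses the variance tensorization it already has, noting that Theorem~\ref{thm:induction} and Lemma~\ref{lem:verify-condition} give $\cgap^{-1}(\mu_j) = O_q(1)$ and $\cgap^{-1}(\mu^{\star,1}_j) = O_q(1)$ uniformly over every depth $j$, and then invokes Corollary~\ref{cor:sob-gap} (a variant of the Martinelli--Sinclair--Weitz recursion, via \cite{tetali2012phase}). That lemma provides an \emph{additive} recursion $\csob^{-1}(\mu^{\star,1}_\ell) \leq \csob^{-1}(\mu^{\star,1}_{\ell-1}) + \alpha\,\cgap^{-1}(\mu^{\star,1}_\ell)$, so the log-Sobolev cost accumulates linearly in the depth precisely because every term on the right is a \emph{spectral gap} bound, which is $O(1)$ independent of depth. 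Finally, Lemma~\ref{lem:monotonicity-glauber} transfers the log-Sobolev bound from the regular tree $\^T_k$ to the general tree $\^T$ at a cost of a factor $q$.

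The gap in your version is the assertion that the per-level loss ``accumulates to $O(D)$.'' If you run a direct entropy analog of the recursion in~\eqref{eq:F-recursion}, the per-level contribution is \emph{multiplicative}: the constants are roughly $\alpha^{\mathrm{ent}}_\ell F_{s-\ell}(s-\ell) + \gamma^{\mathrm{ent}}$. Such a recursion stays bounded only if $\alpha_\ell^{\mathrm{ent}} < 1$, in which case you get $O(1)$, not $O(D)$; and if the comparison step (variance to entropy at the base case, at cost $O_{\Delta,q}(1)$) pushes $\alpha_\ell^{\mathrm{ent}} \geq 1$, then the constants blow up like $(\alpha_\ell^{\mathrm{ent}})^{D/\ell}$, i.e.\ exponentially in $D$. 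Neither outcome is $O(D)$; there is no mechanism in your argument that produces a linear-in-depth loss from a multiplicative recursion. To salvage your plan you would either need to show $\alpha_\ell^{\mathrm{ent}} < 1$ (which would be a stronger result giving $O(n\log n)$ mixing --- the paper explicitly leaves open whether this is attainable), or replace the tensorization recursion by an \emph{additive} depth-wise decomposition of the log-Sobolev constant, which is exactly what Corollary~\ref{cor:sob-gap} provides and what the paper uses.
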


  For regular trees we can further improve the relaxation time bound for Glauber dynamics in \Cref{thm:tmix-regular} to $q=\Delta+1$  but at the expense of a slightly worse exponent.
  
\begin{theorem}
\label{thm:regular-refined}
 For any $\Delta$-regular complete tree with $n$ vertices, the Glauber dynamics for  $(\Delta+1)$-edge-colorings has relaxation time $\Trelax = \Delta n^{1+ O(1/\log \Delta)}$ and mixing time $\Tmix= (\Delta \log \Delta)^2 \cdot n^{1+ O(1/\log \Delta)}$, where the big-$O$ notation hides a universal constant. 
\end{theorem}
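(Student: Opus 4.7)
The plan is to prove \Cref{thm:regular-refined} by combining the optimal pair-edge relaxation of \Cref{thm:heat-bath-pair-edges-mixing} with a recursive block-dynamics argument that exploits the self-similarity of the complete $\Delta$-regular tree. Write $\tau(h)$ for the worst-case (over boundary conditions) relaxation time of single-edge Glauber on the complete $\Delta$-regular tree of height $h$; then $n(h) = \Theta(\Delta^h)$ and $L = \Theta(\log n/\log \Delta)$. I aim to establish a per-level recursion
\[
\tau(h) \;\le\; \Delta^{1+O(1/\log \Delta)} \cdot \tau(h-1),
\]
which, with base case $\tau(1) = O(\Delta)$, iterates to $\tau(L) = \Delta \cdot n^{1+O(1/\log \Delta)}$ and matches the claimed $\Trelax$ bound.

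\textbf{Step 1 (decomposition).} Use a standard block-dynamics decomposition whose blocks are (i) a shallow ``root block'' $B_r$ consisting of the $\Delta$ edges incident to the root $r$ together with a bounded-depth neighborhood, and (ii) the $\Delta$ subtrees $T_1,\dots,T_\Delta$ hanging off the children of $r$. Conditioned on $B_r$, the $T_i$ are independent complete $\Delta$-regular trees of height $h-1$ with a prescribed top-edge boundary condition; by the induction hypothesis each relaxes in time $\tau(h-1)$. A decomposition/comparison theorem then yields $\tau(h) \le \tau_{B_r} \cdot \bigl(\Delta + \tau(h-1)\bigr)$, where $\tau_{B_r}$ is the relaxation time of the heat-bath update on $B_r$.

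\textbf{Step 2 (root-block update via pair-edge dynamics).} The state space of $B_r$ has only $\mathrm{poly}(\Delta)$ configurations. Applying \Cref{thm:heat-bath-pair-edges-mixing} to the bounded-depth neighborhood around $r$ gives an $O_\Delta(1)$-relaxation time for the heat-bath pair-edge dynamics restricted there. I then transfer this bound to single-edge Glauber on $B_r$ via a canonical-paths comparison: each pair-edge move (a Kempe-type rotation) is simulated by a sequence of $O(\Delta)$ single-edge Glauber moves that remain inside $B_r$ by routing through spare colors freed up at a neighboring deep edge. This is the principal obstacle: at $q=\Delta+1$ the single-edge dynamics is near-rigid, and controlling the canonical-paths congestion uniformly over boundary conditions so that the overhead is $\Delta^{1+O(1/\log \Delta)}$, rather than some larger constant power of $\Delta$, requires a careful enumeration of the rigid edge-color patterns that can appear around $r$.

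\textbf{Step 3 (mixing time).} Iterating the recursion yields $\Trelax = \Delta\, n^{1+O(1/\log \Delta)}$. To lift this to the mixing time bound, I apply an approximate-tensorization-of-entropy argument in the spirit of \Cref{thm:tmix-regular}: the entropy decomposition along the block structure of Step 1, together with the $O_\Delta(1)$ entropy contraction of the root-block update, converts $\Trelax$ to $\Tmix$ with only a $(\Delta \log \Delta)^2$ overhead, rather than the generic $\log(1/\pi_{\min}) = \Theta(n \log q)$ penalty. The exponent $1 + O(1/\log \Delta)$ is preserved, giving $\Tmix = (\Delta \log \Delta)^2 \cdot n^{1+O(1/\log \Delta)}$.
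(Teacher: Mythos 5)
Your proposal is a genuinely different route from the paper's, and its central step has a gap that the paper explicitly flags as the hard part of the whole problem.

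The paper proves \Cref{thm:regular-refined} by instantiating \Cref{thm:induction} with $\ell=1$. Concretely, \Cref{lem:verify-condition-ell=1} establishes that for $q=\Delta+1$ and a single star level, approximate root-tensorization holds with $\alpha_0 = 4\Delta$, $\alpha_1 = 8$, and approximate tensorization on the star holds with $\gamma = 6$. The bound on $\alpha_0,\alpha_1$ comes from a direct canonical-path/coupling argument for single-edge Glauber on $\^T_1^\star$ (\Cref{lem:tensor-via-routing}): since the $(1,2)$-alternating path from the root edge in a star has length at most $2$, each flip can be routed by at most three Glauber moves, and careful bookkeeping on how many sources load each transition gives the linear-in-$\Delta$ constant. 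The bound $\gamma = 6$ comes from a local-to-global (spectral independence) argument on the star's correlation matrix, not from routing. Plugging $\alpha_1=8$ and $L = \Theta(\log_\Delta n)$ levels into the recursion \eqref{eq:F-recursion} produces the factor $8^{L+1} = n^{O(1/\log\Delta)}$, and $\alpha_0 = 4\Delta$ produces the extra $\Delta$. The mixing-time bound then comes from \Cref{cor:sob-gap} and \eqref{eq:log-sob-mixing}, not from an entropy analogue of the block recursion. At no point does the paper compare the Glauber dynamics to the heat-bath pair-edge dynamics.

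The gap in your Step 2 is the assertion that a heat-bath pair-edge move can be simulated by $O(\Delta)$ single-edge Glauber moves that remain inside $B_r$. At $q=\Delta+1$ this is generally false: to exchange colors $a,b$ on two adjacent edges $e_1,e_2$, the maximal $(a,b)$-alternating path through $e_1,e_2$ may run deep into the subtrees below $B_r$, and with the boundary of $B_r$ frozen the single-edge dynamics restricted to $B_r$ can be disconnected between $\sigma$ and its pair-swapped image. The paper itself emphasizes precisely this: "Simulating this kind of update via the Glauber dynamics updates turns out to be very challenging," and \Cref{fig:bad-example} shows the kind of local configuration that breaks the naive $\+E_i$-routing at $q=\Delta+1$. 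The reason the paper's $\ell=1$ argument succeeds is that on a star the alternating path from the root edge never has length more than $2$, so the canonical path can be written explicitly and its congestion bounded by hand; no detour through the pair-edge chain is needed, and none would be available. Your Step 1 recursion $\tau(h) \le \tau_{B_r}(\Delta + \tau(h-1))$ is also stated at the level of relaxation times rather than as a functional inequality, which hides constant-factor losses that the paper instead tracks via the explicit recursion $F_k(t)$ in \eqref{eq:F-recursion}; and your Step 3 appeals to entropy tensorization that you do not establish, whereas the paper gets the $\Tmix$ bound from the crude gap-to-log-Sobolev recursion of \cite{tetali2012phase} (\Cref{cor:sob-gap}), which is why a $\Delta\log\Delta$ (and a level-count) factor appears but the exponent $1+O(1/\log\Delta)$ is preserved.
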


We prove a lower bound for the relaxation time of the Glauber dynamics in \Cref{thm:lower-relax}.
This indicates that the relaxation time bound in \Cref{thm:regular-refined} is asymptotically tight for large~$\Delta$.
However, for regular trees with small degree $\Delta$, the picture is still very mysterious:
when $\Delta = 2$, there is an $\Omega(n^3)$ lower bound for the relaxation time due to~\cite{DGJ};
when $\Delta$ is sufficiently large, \Cref{thm:regular-refined} gives an $\Delta n^{1+o(1)}$ upper bound for the relaxation time.
Based on \Cref{thm:lower-relax}, we conjecture that the $\Delta n$ relaxation time holds as long as $\Delta \geq 3$.

\begin{theorem}
\label{thm:lower-relax}
  Let $\^T$ be a tree with $n$ vertices and maximum degree $\Delta$.
  Suppose there exists an edge $e$ in $\^T$ such that both endpoints of $e$ are of degree $\Delta$.
  Then, for the Glauber dynamics on $q$-edge-colorings with $q \in [\Delta+1, 2\Delta]$, the relaxation time satisfies $\Omega(\Delta n/(q-\Delta)^2)$ and the mixing time satisfies $\Omega(\Delta n\log{n})$. 
\end{theorem}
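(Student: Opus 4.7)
The plan is to isolate the bottleneck at the edge $e$ (whose two endpoints both have degree $\Delta$) using a one-bit witness for the Poincar\'e inequality, and then to extend that argument to the mixing time via a Hayes--Sinclair style coupon-collector analysis applied to \emph{effective} (color-changing) updates.

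For the relaxation-time bound, I would invoke the variational characterization $\Trelax = \sup_f \mathrm{Var}_\mu(f)/\mathcal{E}(f,f)$ with the indicator witness $f(\sigma) = \mathbf{1}[\sigma(e) = 1]$. By the permutation symmetry of $\mu$ over colors, the marginal of $\sigma(e)$ is uniform on $[q]$, so $\mathrm{Var}_\mu(f) = (q-1)/q^2$. Only Glauber moves at $e$ itself can alter $f$, hence
\begin{equation*}
\mathcal{E}(f,f) \;=\; \frac{1}{mq^2}\,\mathbb{E}_\mu\!\left[k(\sigma,e)-1 \,\bigm|\, \sigma(e)=1\right],
\end{equation*}
where $m = n-1$ and $k(\sigma,e)$ denotes the number of colors valid at $e$ in $\sigma$ (including the current color). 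The key bottleneck estimate is that since both endpoints of $e$ have degree $\Delta$, the $\Delta-1$ other edges at each endpoint use $\Delta-1$ distinct colors, none equal to $\sigma(e)$, forcing $k(\sigma,e)-1 \leq q-\Delta$ in every configuration. Hence $\mathcal{E}(f,f) \leq (q-\Delta)/(mq^2)$, and dividing yields $\Trelax \geq m(q-1)/(q-\Delta) = \Omega(\Delta n/(q-\Delta))$, which implies the stated $\Omega(\Delta n/(q-\Delta)^2)$ bound since $q-\Delta \geq 1$.

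For the mixing-time lower bound, a single Glauber step changes $\sigma(e)$ with probability at most $(q-\Delta)/(mq) = O(1/(n\Delta))$ whenever $q \in [\Delta+1, 2\Delta]$, so effective updates at the bottleneck edge accumulate very slowly. I would combine this slow effective rate with a Hayes--Sinclair style coupon-collector analysis: starting from a worst-case initial coloring with $X_0(e) = 1$, I would argue that the marginal distribution of $X_t(e)$ stays at total-variation distance bounded away from uniform until $\Omega(\log n)$ effective updates have landed on $e$, which combined with the $O(1/(n\Delta))$ per-step rate would give $\Tmix = \Omega(\Delta n \log n)$.

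The main obstacle is justifying the logarithmic factor rigorously. The Poincar\'e argument by itself gives only $\Tmix \geq \Trelax = \Omega(\Delta n)$, while a naive coupon collector over edge-selections gives only the Hayes--Sinclair bound $\Omega(n \log n)$, so neither in isolation recovers their product. I would close the gap via a coupling between two chains started from colorings that agree away from $e$ and disagree at $e$: the coupling can only succeed after an effective update at $e$, and a conditional-entropy / distinguishing-statistic estimate should show that each effective update at $e$ contributes only $O(1)$ distinguishable bits, so $\Omega(\log n)$ of them are required to drive the total-variation distance below $1/4$.
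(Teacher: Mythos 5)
For the relaxation-time bound you and the paper are running the same bottleneck argument, packaged slightly differently. The paper bounds the conductance of $S=\{\sigma:\sigma(e)=1\}$ and invokes Cheeger's inequality; you evaluate the Dirichlet form of $f=\mathbf{1}_S$ and invoke the variational characterization of the gap, and these are identical up to a factor $1-\mu(S)$. Where the two proofs genuinely differ is the key estimate. The paper measures escape from $S$ with the heat-bath rate $\tfrac{1}{n}\cdot\tfrac{|\elist{\sigma}{e}|-1}{|\elist{\sigma}{e}|}$, which is $\Theta(1)$ whenever $e$ is unfrozen, so it needs the probabilistic estimate $\oPr_{\sigma\sim\mu^{e1}}\bigl[|\elist{\sigma}{e}|\geq 2\bigr]\leq (q-\Delta)^2/\Delta$, obtained by analyzing the joint law of the palettes seen at the two endpoints of $e$. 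You instead measure escape with the propose-a-uniformly-random-color rate $(k(\sigma,e)-1)/(mq)$, and the degree hypothesis gives the pointwise bound $k(\sigma,e)-1\leq q-\Delta$, so no probabilistic estimate is needed. Both are correct; your version is cleaner, and in fact yields $\Trelax=\Omega(\Delta n/(q-\Delta))$, a bit more than stated.

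The mixing-time part has a real gap, which you flag. Two concrete problems. First, the per-step effective rate $(q-\Delta)/(mq)$ is $O(1/(n\Delta))$ only when $q-\Delta=O(1)$; near $q=2\Delta$ it is $\Theta(1/n)$, so the coupon-collector reasoning you sketch would give only the generic Hayes--Sinclair $\Omega(n\log n)$ at that end of the range. Second, and more fundamentally, the claim that the marginal law of $X_t(e)$ stays far from uniform until $\Omega(\log n)$ effective updates have landed at $e$ is not right: a single accepted resample at $e$ already draws $X_t(e)$ from its conditional stationary law, so $O(1)$ effective updates mix that one marginal. The $\log n$ in Hayes--Sinclair is a global phenomenon arising from tracking all $\Theta(n)$ coordinates simultaneously (a coupon-collector or variance-decay argument over the whole configuration), not from repeated passes through a single coordinate, and the coupling you sketch does not supply it. For what it is worth, the paper's appendix proves only the relaxation-time bound explicitly, so the $\log n$ factor in the mixing-time statement would require its own argument there as well.
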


We provide here a very brief high-level overview of the proof of \Cref{thm:main-mixing} which establishes an optimal relaxation time bound on the Glauber dynamics when $q\geq\Delta+2$; a detailed overview is provided in \Cref{sec:glauber-relax-time}.
We bound the spectral gap of the Glauber dynamics by establishing a property known as approximation tensorization of variance; this is an important notion in the recent flurry of results using spectral independence, e.g., see~\cite{CLV21,BCCPSV21}.  

For a graph $G=(V,E)$, consider a distribution $\mu$ on $\Omega\subset [q]^E$ where $[q]=\{1,\dots,q\}$ for integer $q\geq 2$.  If $\mu$ is a product distribution (i.e., each $\e\in E$ is mutually independent of each other) then
the variance of any functional $f:\Omega\rightarrow\mathbb{R}$ under the distribution $\mu$ can be upper bounded by the sum of conditional variances on each coordinate $e\in E$; this property is known as tensorization of variance~\cite{CMT15}.  Our goal is to establish an approximate version of tensorization so that the variance functional $\Var[\mu]{f}$ is upper bounded by a weighted sum of the variances on each coordinate, where the weights are given by $C(e)$ for each $e\in E$.  It turns out that approximate tensorization of variance with constants $C(e)\leq C$ for some constant $C\geq 1$ is equivalent to the spectral gap of the Glauber dynamics being $\geq 1/(Cn)$ (and hence the relaxation time satisfying $\Trelax\leq Cn$).  We refer the reader to \Cref{def:tensorization} for a more formal introduction.

We establish approximate tensorization
of the $\Delta$-regular complete tree by an inductive approach.  We first establish a ``modified'' approximate tensorization property for a $\Delta$-regular complete tree of height $\ell=O(\Delta^2\log^2{\Delta})$ where the constants $C(e)$ are small for the leaf edges and exponentially large (in $\Delta$) for the internal edges.  We then have an inductive approach to deduce approximate tensorization for an arbitrarily large tree, see \Cref{thm:induction} in \Cref{sec:glauber-relax-time} and its proof in \Cref{sec:proof-induction}, where the constants $C(e)$ are recursively defined, see \Cref{eq:F-recursion} in \Cref{sec:proof-induction}.

To establish the modified approximate tensorization property for the smaller tree of height $\ell$ we utilize the canonical paths approach~\cite{jerrum1989approximating,Sin92}.  An important aspect of the proof is that we only need a good bound on the associated congestion for transitions which recolor leaf edges. 
The modified approximate tensorization property does not directly correspond to the relaxation time of a Markov chain, and this makes our canonical path approach somewhat different from the original version~\cite{jerrum1989approximating,Sin92}.
In particular, our canonical path approach involves couplings, and also the overhead from the length of paths does not play a role here (see \Cref{sub:congestion-overview} for details).
We believe this is the first proof utilizing the canonical paths approach in combination with approximate tensorization of variance/entropy.

Our canonical path analysis for Glauber dynamics requires $q \geq \Delta + 2$.
This is mainly because our current analysis works for the $\Delta = 2$ case and the $\Delta \geq 3$ case simultaneously.
 A good canonical path of this type should not exist for $q = \Delta + 1$ since otherwise our proof gives $O(n)$ relaxation time for Glauber dynamics when $\Delta = 2$ and $q = 3$, which contradicts the $\Omega(n^3)$ lower bound.
In practice, when $q = \Delta + 1$, we can find bad examples to fail the canonical path analysis for Glauber dynamics (see \Cref{fig:bad-example} for details).

However, the heat-bath neighboring edge dynamics in \Cref{thm:heat-bath-pair-edges-mixing} works even when $q = \Delta + 1$.
When $\Delta = 2$ and $q = 3$, one can show $\widetilde{O}(n)$ mixing time for this dynamics via path coupling, which means we do not have a similar theoretical barrier as the Glauber dynamics.
In practice, our canonical path analysis for the heat-bath neighboring edge dynamics heavily relies on the updates which exchange the colors of two neighboring edges if the resulting coloring is proper.
Simulating this kind of
update via the Glauber dynamics updates turns out to be very challenging. 
It still remains open whether a good collection of canonical paths exist for the Glauber dynamics when $q \geq \Delta + 1$ and $\Delta \geq 3$.

 In \Cref{sec:glauber-relax-time} we give a detailed overview of the proof of \Cref{thm:main-mixing} for the Glauber dynamics when $q\geq\Delta+2$; 
the proofs of the main technical results are provided in \Cref{sec:proof-induction,sec:canonical-path}.  
We present an overview of the refined approach for the neighboring edge dynamics for the case $q=\Delta+1$ 
in \Cref{sec:edge-overview}.
The proof of \Cref{thm:tmix-regular} is also presented in \Cref{sec:glauber-relax-time}, 
and the refinement from \Cref{thm:regular-refined} is presented in \Cref{sec:regular-refined}.   The proof of the lower bound result in \Cref{thm:lower-relax} is given in \Cref{sec:lower-relax}.

\paragraph{Related works} Besides edge colorings on trees~\cite{DHP}, the mixing time and relaxation time of Glauber dynamics has been studied extensively for other Gibbs distributions on trees~\cite{MSW03,berger2005Glauber,MSW04,lucier2009glauber,goldberg2010mixing,tetali2012phase,restrepo2014phase,sly2017glauber,eppstein2023rapid,chen2023combinatorial, efthymiou2023optimal, chen2024spectral}.

Many of these results on trees~\cite{MSW03, MSW04, tetali2012phase, sly2017glauber} follow the paradigm developed in~\cite{MSW03}.
In these approaches, the relaxation time for a block dynamics is proved first.
The block dynamics picks a vertex uniformly at random and resamples the sub-tree of some constant depth beneath that vertex.
Then the relaxation time for the Glauber dynamics is proved via a simple comparison argument between the block dynamics and the Glauber dynamics.
We note that this approach does not suit the edge coloring problems considered here, since even the comparison between the heat-bath neighboring edge dynamics and the Glauber dynamics is non-trivial.

Recently, the Glauber dynamics on trees has been re-studied~\cite{efthymiou2023optimal, chen2024spectral} by using the lens of \emph{spectral independence}~\cite{anari2020spectral}.
In fact, this framework can also be used to deduce rapid mixing of the Glauber dynamics for edge colorings on general graphs~\cite{AbdolazimiLiuOveisGharan, WZZ24} as detailed earlier.
These works offer some powerful tools that can be applied to general Gibbs distributions.
However, we also note that these approaches cannot apply to the edge coloring problems considered in this paper as spectral independence for arbitrary ``pinnings'' (see \Cref{sec:tensor-prelim}) does not hold in our setting as the corresponding list coloring problem may be a disconnected state space.

\section{Preliminaries}
\label{sec:prelim}

For a positive integer $q$, let $[q]:=\{1,\dots,q\}$.  
Let $G=(V,E)$ be a graph. 
For a vertex $v\in V$, we let $N(v)$ denote the neighbors of $v$, and for an edge $e\in E$ we let $N(e)$ denote the edges which share an endpoint with $e$.  
Let $q\geq 2$ be an integer, denote the set of $q$-edge colorings on $G$ as
\[ \Omega = \Omega_G := \{\sigma\in [q]^E: \mbox{ for all } e\in E, f\in N(e), \sigma(e)\neq \sigma(f)\}.
\]

Let $\mu$ denote the uniform distribution over $\Omega\subset [q]^E$.
For any subset $S \subset E$ and any $\sigma\in\Omega$, we use 
$\mu^{\sigma_S}$ to denote the distribution of $\mu$ conditional on the configuration on $S$ is fixed to be $\sigma_S$. 
Note, a random sample $X \sim \mu^{\sigma_S}$ is a full configuration on $E$ where $X_S = \sigma_S$.
Moreover, for any subset $T \subset E$, we use $\mu_T^{\sigma_S}$ to denote the marginal distribution on $T$ projected from  $\mu^{\sigma_S}$.

More generally, let $\+L_e \subseteq [q]$ be a color list for each edge $e \in E$.
We can define $\+L$-list edge-colorings for $G$ if we further require the color of each edge $e$ can only be chosen from its color palette $\+L_e$.
Denote the set of $\+L$-list edge-colorings on $G$ as
\[
  \Omega_{G, \+L} := \set{\sigma: \text{for all } e \in E, \sigma_e \in \+L_e} \cap \Omega_G.
\]
We will use $\mu_{G,\+L}$ to denote the uniform distribution over $\Omega_{G, \+L}$.
The conditional and marginal distributions can be defined accordingly.

\subsection{Markov chain fundamentals}

For a pair of distributions $\nu,\pi$ on $\Omega$, their total variation distance is denoted as
$\DTV{\nu}{\pi} := \frac12\sum_{\sigma\in\Omega}|\nu(\sigma)-\pi(\sigma)|$.

Consider an ergodic Markov chain $(X_t)_{t\geq 0}$ on state space $\Omega$ with transition matrix $P$ and unique stationary distribution $\pi$.
The {\em mixing time} $\Tmix$ is defined to be the minimum time, from the worst initial state $X_0$, to ensure that $X_t$ is within total variation distance $\leq 1/4$ from $\pi$.

Let $1=\lambda_1>\lambda_2\geq \dots \lambda_{|\Omega|}>-1$ denote the eigenvalues of $P$.  The {\em absolute spectral gap} is the quantity $1-\lambda_*$ where $\lambda_*=\max\{|\lambda_2|,|\lambda_{\Omega}|\}$.  Since all eigenvalues of the Glauber dynamics are non-negative~\cite{DGU14}, then $\lambda_*=1-\lambda_2$.  The inverse of the spectral gap $\Trelax:=1/\lambda_*$ is usually known as the {\em relaxation time} of the Glauber dynamics. 
The following is a standard bound on the mixing time in terms of the relaxation time (e.g., see~\cite[Theorem 12.4]{levin2017markov}):
\begin{align*}
  \Tmix \leq \Trelax (1-\log(\min_{\sigma\in\Omega}\mu(\sigma))\leq \Trelax (1+n\log{q}).
\end{align*}


For a tree of maximum degree $\Delta$, the Glauber dynamics on $q$-edge colorings is ergodic when $q\geq\Delta+1$~\cite{DHP,DFFHV}.  For general graphs of maximum degree $\Delta$ one needs $q\geq 2\Delta$ to guarantee ergodicity.  Since the Glauber dynamics is symmetric the unique stationary distribution is the uniform distribution $\mu$ over $\Omega$.

\subsection{Approximate tensorization of variance}
\label{sec:tensor-prelim}

For a function $f: \Omega \to \mathbb{R}$,
we denote the expectation and variance of $f$ with respect to~$\mu$ as follows:
\[ \mu[f] = \E[\mu]{f} := \sum_{\sigma}\mu(\sigma)f(\sigma)
\ \ \mbox{ and } \ \ 
\Var[\mu]{f} = \mu[f^2]-\mu^2[f].
\]

For a subset $S\subset E$, a {\em pinning} is an assignment $\tau:S\rightarrow [q]$.  We will restrict attention to feasible pinnings which are $\tau\in [q]^S$ for some $S\subset V$ such that there exists $\sigma\in\Omega$ where $\sigma_S=\tau_S$.

For a pinning $\tau \in [q]^S$ for $S\subset E$, we use $\mu^\tau [f]$ to denote the expectation with respect to the conditional distribution $\mu^\tau$. For an edge coloring $\sigma \in \Omega$ and a subset $S \subseteq E$, denote the expectation on $E\setminus S$ with respect to the fixed assignment $\sigma$ on $S$ as
\[ \mu_S[f](\sigma) := \mu^{\sigma_{E \setminus S }}[f].
\] 
Moreover, denote the variance with respect to the conditional distribution $\mu^{\sigma_{E \setminus S }}$ by
\[\Var[S]{f}(\sigma) := \Var[\mu^{\sigma_{E \setminus S }}]{f}.
\]
We note that both $\mu_S[f]$ and $\Var[S]{f}$ can also be seen as functions on $\Omega \to \^R$.

The key property of the Gibbs distribution which implies fast mixing is known as approximate tensorization of variance.

\begin{definition}
    \label{def:tensorization}
We say that the distribution $\mu$ over $\Omega\subseteq [q]^E$ satisfies {\em approximate tensorization of variance} with constants $(C(e))_{e\in E}$ if for all $f:\Omega \to \mathbb{R}_{\geq 0}$ it holds that:
\begin{equation}
\label{eqn:tensorization}
\Var{f} \leq \sum_{e\in E}C(e)\mu(\Var[e]{f}).
\end{equation}
\end{definition}
If there is a universal constant $\gamma$ where $C(e)\leq\gamma$ for all $e\in E$ then we say approximate tensorization of variance holds with constant $\gamma$.
In words, on the RHS of~\cref{eqn:tensorization}, for each edge $e\in E$, we are drawing a sample $\sigma$ from $\mu$, then fixing the configuration $\sigma$ on $E\setminus \{e\}$ and looking at the variance of $f$ with respect to the conditional distribution $\mu^{\sigma_{E\setminus e}}$. If $\mu$ is a product distribution then approximate tensorization of variance holds with constants $C(e)=1$ for all $e\in E$, which is optimal.
Variance tensorization with constants $C(e)$ bounds the relaxation time of the Glauber dynamics as $\Trelax \leq Cn$ where $C=\max_{e\in E} C(e)$, see~\cite{CMT15,Cap23}.

\subsection{Relaxation time and log-Sobolev constant on regular trees} \label{sec:log-sob}
Let $\mu$ be a distribution with support $\Omega \subseteq [q]^E$.
For an arbitrary non-negative function $f: \Omega \to \^R_{\geq 0}$, we denote the entropy of $f$ with respect to $\mu$ as follow:
\begin{align} \label{eq:def-ent}
    \Ent[\mu]{f} = \mu\left[f\log \frac{f}{\mu[f]}\right],
\end{align}
where we use the convention that $0\log 0 = 0$.
The log-Sobolev constant~\cite{diaconis1996logarithmic} for the Glauber dynamics on $\mu$ is given by 
\begin{align} \label{eq:def-log-sob}
   \csob := \inf \set{\left. \frac{\Ent[\mu]{f}}{\sum_{e \in E} \mu[\Var[e]{\sqrt{f}}]} \right\vert f:\Omega \to \^R_{\geq 0}, \Ent[\mu]{f} \neq 0}.
\end{align}
The mixing time is related to the log-Sobolev constant $c_{\-{sob}}$ as follows:
\begin{align} \label{eq:log-sob-mixing}
\Tmix(\epsilon) \leq n \cdot \csob^{-1} \tp{\log\log \frac{1}{\mu_{\min}} + \log \frac{1}{2\epsilon^2}}.
\end{align}
We refer the interested readers to~\cite[Lemma 2.8 and Lemma 2.4]{blanca2022entropy} and the references therein.

As noted in~\cite{MSW03}, there is a crude bound that relates the log-Sobolev constant and the spectral gap on complete $\Delta$-regular trees. 
Let $\^T_\ell$ be the complete $\Delta$-regular tree with depth $\ell$.
Let $\mu_\ell$ be the uniform distribution over all the $q$-edge-coloring on $\^T_\ell$.
Let $\^T^\star_\ell$ be a $d$-ary tree of depth $\ell$ with an extra hanging root edge $r$.
Let $\mu^{\star, c}_\ell = \mu_{\^T^\star_\ell, \+L}$ be the uniform distribution over all the $\+L$-edge-list-coloring on $\^T^\star_\ell$, where the color of the root edge $r$ is fixed to $c$ and $\+L_h = [q]$ for all the other edges.
We refer the reader to \Cref{def:Tk-Tk-star} and \Cref{def:mu-k} for the formal definitions of $\^T_\ell, \^T^\star_\ell$ and $\mu_\ell, \mu^{\star,c}_\ell$.
For $\mu \in \set{\mu_\ell, \mu^{\star,c}_\ell}$, we will use $\csob(\mu)$ and $\cgap(\mu)$ to denote the log-Sobolev constant and the spectral gap for the continuous time Glauber dynamics on $\mu$.

For convenience, we refer to the variant of \cite{MSW03} in~\cite{tetali2012phase} which considers vertex coloring.
We note that their proof also works for the edge coloring with small modifications.

\begin{lemma}[\text{\cite[Lemma 26]{tetali2012phase}}] \label{lem:gap-to-sob}
   For $\ell > 0$, we have the following recursions
   \begin{align}
   \label{TVVVY:1st}
       c_{\-{sob}}^{-1}(\mu^{\star,1}_\ell) &\leq c_{\-{sob}}^{-1}(\mu^{\star,1}_{\ell-1}) + \alpha \cdot c_{\-{gap}}^{-1}(\mu^{\star,1}_\ell) \\
        \label{TVVVY:new}
       \text{and} \quad 
       c_{\-{sob}}^{-1}(\mu_\ell) &\leq c_{\-{sob}}^{-1}(\mu^{\star,1}_{\ell-1}) + \alpha \cdot c_{\-{gap}}^{-1}(\mu_\ell).
   \end{align}
   where $\alpha = \frac{\log(k-2)}{1 - 2/(k-1)}$ with $k = q^\Delta$ and we use $\csob^{-1}(\mu_0) = \csob^{-1}(\mu^{\star,1}_0) = 1$ as conventions.
\end{lemma}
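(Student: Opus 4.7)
The plan is to follow the Martinelli--Sinclair--Weitz recursive scheme adapted to edge colorings: decompose the entropy on $\^T_\ell$ (respectively $\^T^\star_\ell$) by conditioning on the colors of the top-level edges, control the inner conditional entropy by the log-Sobolev constant of the smaller tree $\^T^\star_{\ell-1}$, and control the outer entropy by the spectral gap of the full tree together with an entropy-to-variance inequality on a bounded outer state space. Concretely, for $\mu^{\star,1}_\ell$, let $v$ be the vertex incident to the pinned root edge $r$, let $B$ be the $d = \Delta - 1$ edges connecting $v$ to its children, and apply the law of total entropy,
\begin{equation*}
    \Ent[\mu^{\star,1}_\ell]{f} = \mu^{\star,1}_\ell\left[\Ent{f \mid \sigma_B}\right] + \Ent[\mu^{\star,1}_\ell]{\mu^{\star,1}_\ell[f \mid \sigma_B]}.
\end{equation*}
The bound \eqref{TVVVY:new} for $\mu_\ell$ is handled identically, with $B$ redefined to be the $\Delta$ edges adjacent to the root vertex.

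\textbf{Inner bound.} Conditioned on any valid $\sigma_B$, the remaining tree factors into $|B|$ independent subtrees, each with its top edge pinned and isomorphic to $\^T^\star_{\ell-1}$. By color-permutation symmetry of the uniform coloring measure, $\csob(\mu^{\star,c}_{\ell-1}) = \csob(\mu^{\star,1}_{\ell-1})$ for every color $c$. Tensorization of entropy across the independent subtrees combined with the log-Sobolev inequality \eqref{eq:def-log-sob} inside each, after trivially extending the sum over interior subtree edges to all of $E$, gives
\begin{equation*}
    \mu^{\star,1}_\ell\left[\Ent{f \mid \sigma_B}\right] \le \csob^{-1}(\mu^{\star,1}_{\ell-1}) \sum_{e \in E} \mu^{\star,1}_\ell\left[\Var[e]{\sqrt{f}}\right].
\end{equation*}

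\textbf{Outer bound and completion.} Let $g(\sigma_B) := \mu^{\star,1}_\ell[f \mid \sigma_B]$; this is a function of $\sigma_B$ alone, which takes at most $k = q^\Delta$ values, and a short symmetry argument shows the outer marginal of $\mu^{\star,1}_\ell$ on $\sigma_B$ is uniform on its support. The Diaconis--Saloff-Coste log-Sobolev estimate for a uniform distribution on $k$ atoms then gives
\begin{equation*}
    \Ent{g} \le \alpha \cdot \Var{\sqrt{g}}, \qquad \alpha = \frac{\log(k-2)}{1 - 2/(k-1)}.
\end{equation*}
Jensen's inequality $(\mu^{\star,1}_\ell[\sqrt{f} \mid \sigma_B])^2 \le g$ yields $\sqrt{g} \ge \mu^{\star,1}_\ell[\sqrt{f} \mid \sigma_B]$, from which a one-line computation gives $\Var{\sqrt{g}} \le \Var[\mu^{\star,1}_\ell]{\sqrt{f}}$. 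The Poincar\'e inequality then bounds the latter by $\cgap^{-1}(\mu^{\star,1}_\ell) \sum_e \mu^{\star,1}_\ell[\Var[e]{\sqrt{f}}]$. Summing the inner and outer bounds and normalizing by the Dirichlet form yields \eqref{TVVVY:1st}; the bound \eqref{TVVVY:new} follows identically with $\cgap(\mu_\ell)$ replacing $\cgap(\mu^{\star,1}_\ell)$.

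\textbf{Main obstacle.} The principal technical step is the outer entropy-to-variance bound with the sharp constant $\log(k-2)/(1 - 2/(k-1))$: one must recognize the outer marginal as uniform on a set of size at most $q^\Delta$, which holds because $\mu^{\star,1}_\ell$ and $\mu_\ell$ are themselves uniform on proper colorings, and then invoke the Diaconis--Saloff-Coste estimate for the log-Sobolev inverse of the $k$-point complete-graph chain. The color-symmetry observation is equally essential, as it is what allows the single quantity $\csob(\mu^{\star,1}_{\ell-1})$ to stand in for the log-Sobolev constants across all color pinnings of the root edge that appear in the recursion.
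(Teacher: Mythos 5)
Your proof is correct and follows essentially the same route as the argument the paper defers to, namely the Martinelli--Sinclair--Weitz recursive scheme of \cite{tetali2012phase} (which the paper cites without reproducing, noting only that it carries over to edge colorings with small modifications). You have correctly reconstructed all three ingredients: the chain rule for entropy with the top level as the pivot, the inner tensorization over the conditionally independent subtrees isomorphic to $\^T^\star_{\ell-1}$ using color-permutation symmetry, and the outer entropy-to-variance step followed by Poincar\'e.

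One small imprecision in the outer step: you describe $\alpha$ as coming from the Diaconis--Saloff-Coste log-Sobolev estimate ``for a uniform distribution on $k$ atoms,'' but the constant $\alpha = \log(k-2)/(1 - 2/(k-1))$ is the two-point Diaconis--Saloff-Coste extremal bound $\log(1/p-1)/(1-2p)$ applied with minimum mass $p \geq 1/(k-1)$. For this to kick in with $p \geq 1/(k-1)$ rather than merely $p \geq 1/k$, you need the support of the marginal on $\sigma_B$ to have size at most $k-1$, not $k$. This holds: for $\mu^{\star,1}_\ell$ the block $B$ has $|B| = d = \Delta - 1$ edges so the support has at most $q^{\Delta-1} \leq k - 1$ atoms, and for $\mu_\ell$ the $\Delta$ root edges must be pairwise-distinctly colored, giving at most $q(q-1)\cdots(q-\Delta+1) < q^\Delta = k$ atoms. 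With that off-by-one corrected, the uniformity observation (which you correctly flag as needing a symmetry argument) gives exactly the claimed constant, and your Jensen-plus-Poincar\'e chain $\Ent{g} \leq \alpha\Var{\sqrt{g}} \leq \alpha\Var[\mu]{\sqrt{f}} \leq \alpha \cgap^{-1} \sum_e \mu[\Var[e]{\sqrt{f}}]$ completes the recursion as stated.
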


We note that \Cref{lem:gap-to-sob} is a slight generalization from the version in \cite{tetali2012phase}.  
In particular, \cref{TVVVY:1st} corresponds to the statement in \cite{tetali2012phase}; here every tree has a pinned parent of the root.
We also consider the case where the root does not have a parent, see~\cref{TVVVY:new}; this inequality can be proved by the same proof as in~\cite{tetali2012phase}.

\begin{corollary} \label{cor:sob-gap}
    For $\ell > 0$, it holds that
    \begin{align}
    \label{tetali:correction}
        \csob^{-1}(\mu_\ell) &\leq \alpha \ell \cdot \max\set{\cgap^{-1}(\mu_\ell), \max_{1\leq j < \ell} \cgap^{-1}(\mu^{\star,1}_j)}.
    \end{align}
\end{corollary}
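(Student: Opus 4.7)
The plan is to combine the two recursions in Lemma~\ref{lem:gap-to-sob} by unrolling the first one down to the trivial base case, and then substituting the resulting estimate for $\csob^{-1}(\mu^{\star,1}_{\ell-1})$ into the second. This reduces the whole right-hand side to a weighted sum of spectral gap reciprocals, which is then bounded crudely by pulling out the maximum.

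Concretely, iterating \cref{TVVVY:1st} from depth $\ell-1$ down to depth $0$ gives a telescoping estimate
\[
\csob^{-1}(\mu^{\star,1}_{\ell-1}) \leq \csob^{-1}(\mu^{\star,1}_0) + \alpha \sum_{j=1}^{\ell-1} \cgap^{-1}(\mu^{\star,1}_j) = 1 + \alpha \sum_{j=1}^{\ell-1} \cgap^{-1}(\mu^{\star,1}_j),
\]
using the stated convention $\csob^{-1}(\mu^{\star,1}_0) = 1$. Plugging this into \cref{TVVVY:new} yields
\[
\csob^{-1}(\mu_\ell) \leq 1 + \alpha \sum_{j=1}^{\ell-1} \cgap^{-1}(\mu^{\star,1}_j) + \alpha \cdot \cgap^{-1}(\mu_\ell).
\]
With $M := \max\bigl\{\cgap^{-1}(\mu_\ell),\, \max_{1 \leq j < \ell} \cgap^{-1}(\mu^{\star,1}_j)\bigr\}$, the weighted sum has exactly $\ell$ terms (the $\ell-1$ starred terms plus the one at depth $\ell$), each bounded by $M$, so the right-hand side is at most $1 + \alpha \ell M$.

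The main bookkeeping issue, and arguably the only non-trivial step, is absorbing the residual additive constant $1$ into $\alpha \ell M$ to get the clean statement. This is routine: in the regime where the corollary is actually applied, $k = q^\Delta$ is large enough that $\alpha = \log(k-2)/(1-2/(k-1)) \geq 1$, and every spectral gap reciprocal $\cgap^{-1}(\cdot) \geq 1$ since the gap of the continuous-time Glauber dynamics is at most $1$, so $\alpha \ell M \geq 1$ and the $+1$ is harmlessly absorbed. I do not anticipate any real obstacle; the corollary is essentially a mechanical consequence of the two recursions, and the only care needed is to verify the parameter regime supports dropping the additive constant.
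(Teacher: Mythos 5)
Your telescoping approach is the natural one, and the bookkeeping up to the penultimate line is correct: iterating \eqref{TVVVY:1st} down to the base case and substituting into \eqref{TVVVY:new} gives exactly
\[
\csob^{-1}(\mu_\ell) \;\leq\; 1 + \alpha \sum_{j=1}^{\ell-1} \cgap^{-1}(\mu^{\star,1}_j) + \alpha\,\cgap^{-1}(\mu_\ell) \;\leq\; 1 + \alpha\ell M,
\]
with $M$ as you define it. The problem is the final ``absorption'': $1+\alpha\ell M$ is \emph{never} at most $\alpha\ell M$, regardless of the size of $\alpha\ell M$, so the observation $\alpha\ell M \geq 1$ does not let you simply drop the $1$. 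What $\alpha\geq 1$ and $\cgap^{-1}\geq 1$ (hence $M\geq 1$) actually buy you is $1 \leq \alpha M$, hence $1+\alpha\ell M \leq \alpha(\ell+1)M$, i.e.\ the corollary with $\ell$ replaced by $\ell+1$ (equivalently, a factor-$2$ loss via $1+\alpha\ell M\leq 2\alpha\ell M$).

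This additive $1$ is not just an artifact of your write-up: with the explicit convention $\csob^{-1}(\mu^{\star,1}_0)=1$ from Lemma~\ref{lem:gap-to-sob}, already the $\ell=1$ case of the corollary, $\csob^{-1}(\mu_1) \leq \alpha\,\cgap^{-1}(\mu_1)$, does not follow from the lemma, which only gives $\csob^{-1}(\mu_1) \leq 1 + \alpha\,\cgap^{-1}(\mu_1)$. So either the base-case convention should be $0$ (as would be natural for a trivial one-state chain, where $\Ent$ is identically $0$), or the constant in the corollary should read $\alpha(\ell+1)$. The discrepancy is immaterial downstream (Theorem~\ref{thm:tmix-regular} only needs $\csob^{-1}(\mu_\ell) = O_q(\ell)$), but your proof as written does not establish the stated inequality; you should state the corrected constant rather than claim the $+1$ is absorbed.
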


\Cref{cor:sob-gap} corrects a slight error appearing in \cite[Proof of Theorem 6]{tetali2012phase}.  The analog of \Cref{tetali:correction} in \cite{tetali2012phase} is missing the factor $\ell$ on the right-hand side; however, this appears to be due to a mismatch of continuous-time and discrete-time chain statements in \cite[Lemmas 26 and 27]{tetali2012phase}.  Consequently, \cite{tetali2012phase} use $O(n)$ relaxation time of the complete regular tree to deduce $O(n\log{n})$ mixing time, but we believe there is an extra factor of $O(\log{n})$ that appears in this mixing time bound.

\section{Relaxation time analysis}
\label{sec:overview-Glauber}

In this section we will focus on establishing a tight upper bound on the relaxation time for $\Delta$-regular complete trees when $q\geq\Delta+2$.
We will then use this to establish \Cref{thm:main-mixing} which yields $O(n)$ relaxation time for the Glauber dynamics on any tree of maximum degree $\Delta$ when $q\geq\Delta+2$.  Then using a 
refinement of the approach presented in this section we will prove \Cref{thm:heat-bath-pair-edges-mixing} (in \Cref{sec:edge-overview}) which yields $O(n)$ relaxation time for an edge dynamics when $q=\Delta+1$.

We prove the upper bound on the relaxation time by establishing approximate tensorization of variance (see~\Cref{def:tensorization}) with a constant $C=C(\Delta)$; recall this implies $\Trelax=O(n)$ for the Glauber dynamics.

We will consider the more general setting of list edge colorings.
Given a tree $\mathbb{T} = (V,E)$ and a color list $\+L_e \subseteq [q]$ for each edge $e \in E$, let $\mu_{\mathbb{T},\+L}$ denote the uniform distribution of all proper list edge colorings, where a list edge coloring assigns each edge $e$ a color from $\+L_e$ such that adjacent edges receive different colors. 
We often use $\mu$ to denote $\mu_{\mathbb{T},\+L}$ if $\mathbb{T}$ and $\+{L}$ are clear from the context.

There are two $\Delta$-regular trees that will play a key role in our analysis.
Let $\^T_k$ denote the complete $\Delta$-regular tree with $k$ edge levels (we call $k$ the depth of the tree), where the root has $\Delta$ children and other non-leaf vertices have $d = \Delta - 1$ children. All edges incident to the root are in level 1. 
For $1\leq i \leq k$, we use $L_i(\^T_k)$ to denote all edges in the $i$-th level, which is the set of edges with distance $i-1$ to $L_1(\^T_k)$ in the line graph. 
Let $\mu_k=\mu_{\^T_k,\+L}$ to denote the uniform distribution of all proper list edge colorings where $\+L_e = [q]$ for all $e\in E$.

Let $\^T_k^\star$ denote the tree formed by taking the complete $d$-ary tree of depth $k$ (all non-leaf vertices, including the root, have $d=\Delta-1$ children) and adding an additional edge $r$ adjacent to the root vertex.   In this tree, every non-leaf vertex has degree $\Delta$, and there is one additional hanging root edge $r = \set{u,v}$
incident to the root where $u$ is the root and $v$ is an additional (leaf) vertex. As before, we can define all sets $L_i(\^T_k^\star)$ for $1 \leq i \leq k$ and let $L_0(\^T_k^\star) = \{r\}$.
Let $\mu^\star_k=\mu_{\mathbb{\^T_k^\star},\+L}$ denote the uniform list edge-coloring distribution for $\^T_\ell^\star$, where $\+L_r = [q-d]$ and $\+L_h = [q]$ for other edges $h \neq r$.
Moreover, we denote $\mu^{\star,c}_k$ when the color of the hanging root edge $r$ is fixed to color $c\in [q]$.  Since this is symmetric over the colors, we simply consider $\mu^{\star,1}_k$.

The following definition summarizes the above description of the two kinds of $\Delta$-regular trees which will be relevant in this paper. 
\begin{definition} [$\^T_k, \^T^\star_k$] \label{def:Tk-Tk-star} 
  $\^T_k$ is the tree that has $\Delta$ children for the root $u$ and then branches with factor $d=\Delta-1$ until level $k$; whereas $\^T_k^\star$
  has an extra hanging root edge $r=(u,v)$ where $v$ is an additional leaf vertex and then branches with factor $d=\Delta-1$ for all non-leaf nodes until level $k$.
\end{definition}

Then, on $\^T_k$ and $\^T^\star_k$, there are three specific distributions we consider.
\begin{definition} [$\mu_k, \mu_k^\star, \mu_k^{\star,1}$] \label{def:mu-k}
  Suppose $q \geq \Delta + 1$ is fixed, we have:
  \begin{itemize}
  \item $\mu_k = \mu_{\^T_k, \+L_1}$, where $\+L_1(e) = [q]$ for all $e$;
  \item $\mu_k^\star = \mu_{\^T^\star_k, \+L_2}$, where $\+L_2(r) = [q-d]$ and $\+L_2(e) = [q]$ for $e \neq r$;
  \item $\mu_k^{\star,1} = \mu_{\^T^\star_k, \+L_3}$, where $\+L_3(r) = \set{1}$ and $\+L_3(e) = [q]$ for $e \neq r$.
  \end{itemize}
\end{definition}

\subsection{Proof overview for \Cref{thm:main-mixing,thm:tmix-regular}: Glauber dynamics for $q \geq \Delta + 2$}
\label{sec:glauber-relax-time}

Consider a tree $T=(V,E)$ of maximum degree $\Delta$.  For integer $q\geq\Delta+1$, let $\mu$  denote a distribution over a subset of $[q]^E$. (We will use $\mu$ to be either $\mu_\ell$, $\mu^{\star}_\ell$, or $\mu^{\star,1}_\ell$) Recall, {\em approximate tensorization of variance} for $\mu$ which holds with constants $C(e), e\in E$ if for all $f:\Omega \to \mathbb{R}_{\geq 0}$:
\begin{equation}
    \tag{\ref{eqn:tensorization}}
    \Var[\mu]{f} \leq \sum_{e\in E}C(e)\mu(\Var[e]{f}).
\end{equation}

We will also consider the following modified notion.  Consider the tree $\^T_k^\star$ and let $r=(u,v)$ denote the hanging root edge.  We say that {\em approximate {\bf root}-tensorization of variance} for $\mu^\star_k$ holds with constants $C(e), e\in E$ if for all $f:\Omega \to \mathbb{R}_{\geq 0}$:
\begin{equation}
    \label{eqn:root-tensorization}
\Var[\mu_k^\star]{\mu_{\^T_k^\star\setminus\{r\}}[f]} \leq \sum_{e\in E}C(e)\mu_k^\star(\Var[e]{f}).
\end{equation}

Note, $\oVar_{\mu_k^\star}[\mu_{\^T_k^\star\setminus\{r\}}[f]] = \oVar_{\sigma \sim \mu_k^\star}[ \E[\tau \sim \mu^{\star,\sigma(r)}_k ]{f(\tau)} ]$.  In words, in the LHS of \cref{eqn:root-tensorization} we look at the variance of the following quantity $X$: choose a sample $\sigma\sim\mu^\star_k$, fix the color $\sigma(r)$ of the hanging root edge $r$, and then let $X$ be the expectation of $f(\tau)$ where $\tau\sim\mu^{\star,\sigma(r)}_k$.

\begin{theorem} \label{thm:induction}
  Let $\Delta \geq 2$, $q \geq \Delta+1$ be two integers.
  Suppose there exists an integer $\ell\geq 1$ where the following holds:
  \begin{enumerate}
  \item \label{item:root-tensor} There exist constants $\alpha=(\alpha_0,\dots,\alpha_\ell)$ and 
  approximate {\bf root}-tensorization of variance for $\mu^\star_\ell$ holds with constants $C(e)=\alpha_i$ where $e\in L_i(\^T_\ell^\star)$;
  \item \label{item:broute-force} There exists a constant $\gamma$ such that for $1 \leq j \leq \ell$, both $\mu_j$ and $\mu^{\star,1}_j$ satisfy approximate tensorization of variance with constant $\gamma$.
  \end{enumerate}
  Then, for any $k \geq 1$, $\mu_k$ and $\mu_k^{\star, 1}$ satisfy approximate tensorization of variance with constants 
  \begin{align*}
      C'(e) \leq \gamma \cdot \tp{\max_{0 \leq j \leq \ell} \alpha_j \cdot \sum_{i=0}^{\ftp{k/\ell}}\alpha_\ell^i + \max\set{1, \alpha_0}}.
  \end{align*}
\end{theorem}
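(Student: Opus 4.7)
The plan is to prove the statement by strong induction on $k$, treating $\mu_k$ and $\mu^{\star,1}_k$ simultaneously. For the base case $k \le \ell$, assumption~\ref{item:broute-force} immediately yields approximate tensorization of variance with constant $\gamma$ for both distributions, which is bounded by the stated expression since $\sum_{i=0}^{\ftp{k/\ell}}\alpha_\ell^i \ge 1$ and $\gamma\max\{1,\alpha_0\}$ is nonnegative. For the inductive step $k > \ell$, I focus on $\mu^{\star,1}_k$; the case $\mu_k$ follows via an analogous decomposition around the unpinned root vertex of $\^T_k$.

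I decompose $\^T^\star_k$ into a \emph{top} region $T = \{r\}\cup L_1(\^T^\star_k)\cup\cdots\cup L_\ell(\^T^\star_k)$, which is isomorphic to $\^T^\star_\ell$, and a \emph{bottom} region $B$ that partitions into subtrees $\{B_j\}$, one per leaf edge $e_j\in L_\ell(\^T^\star_k)$ of the top, with each $\{e_j\}\cup B_j\cong\^T^\star_{k-\ell}$. By color symmetry and tree extendability, the marginal of $\mu^{\star,1}_k$ on $T$ equals $\mu^{\star,1}_\ell$. The law of total variance, conditioning on $\sigma_T$, gives
\[ \Var[\mu^{\star,1}_k]{f} = \mu^{\star,1}_k[\Var[B]{f}] + \Var[\mu^{\star,1}_\ell]{\mu_B[f]}. \]
The first term is bounded by exploiting the conditional independence of the bottom subtrees given $\sigma_T$, with $\sigma_{B_j}\mid\sigma_T\sim\mu^{\star,\sigma(e_j)}_{k-\ell}|_{B_j}$; applying product-distribution tensorization followed by the inductive hypothesis on $\mu^{\star,1}_{k-\ell}$ (invoked via color symmetry) yields $\mu^{\star,1}_k[\Var[B]{f}] \le \sum_{e'\in B} D_{k-\ell}(e')\,\mu^{\star,1}_k[\Var[e']{f}]$, where $D_{k-\ell}(\cdot)$ denotes the inductive AT constants on bottom edges.

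The second term is the heart of the proof, and it is here that assumptions~\ref{item:root-tensor} and~\ref{item:broute-force} must be combined. The strategy is to extend $\mu_B[f]$ symmetrically to $\mu^\star_\ell$'s support via color permutations of the root edge $r$, and then apply the law of total variance $\Var[\mu^\star_\ell]{G} = \E_c[\Var[\mu^{\star,c}_\ell]{G}] + \Var_c[\E_{\mu^{\star,c}_\ell}[G]]$: the first piece is controlled by assumption~\ref{item:broute-force}'s AT applied to each $\mu^{\star,c}_\ell$, while the second piece is controlled by assumption~\ref{item:root-tensor}'s root-tensorization for $\mu^\star_\ell$. This yields an approximate tensorization inequality for $\mu^\star_\ell$ with per-edge constants of the form $\gamma+\alpha_i$ at level $i$; each term $\mu^\star_\ell[\Var[e]{\mu_B[f]}]$ on the right is then related to local variances $\mu^{\star,1}_k[\Var[e']{f}]$ of $f$ itself via a careful per-edge argument that uses the product structure of the bottom distribution and the inductive hypothesis on $\mu^{\star,1}_{k-\ell}$ to avoid summing a residual term against $|T|$ top edges. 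Combining both terms yields a recursion roughly of the form $C_k(e)\le\alpha_\ell\,C_{k-\ell}(e')+\gamma\,\alpha_j$ for edges $e$ deep in the bottom (with $0\le j\le\ell$ determined by the edge's depth modulo $\ell$), together with a $\gamma\max\{1,\alpha_0\}$ boundary term from the root edge $r$; iterating $\ftp{k/\ell}$ times produces the geometric sum $\sum_{i=0}^{\ftp{k/\ell}}\alpha_\ell^i$ in the stated bound. The principal obstacle is precisely this delicate combination of the two assumptions at each recursion step: a naive use of AT alone would compound a factor of $|T|=\Theta(\Delta^\ell)$ per peel-off, destroying uniformity in $k$ even when $\alpha_\ell<1$, and root-tensorization is the essential ingredient that distills only the $\alpha_\ell$ factor per peeling of $\ell$ levels.
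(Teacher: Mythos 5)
Your decomposition runs in the opposite direction from the paper's, and this is not a cosmetic difference — it breaks the argument. You condition on the \emph{top} $\ell$ levels $T$ and write $\Var[\mu^{\star,1}_k]{f} = \mu^{\star,1}_k[\Var[B]{f}] + \Var[\mu^{\star,1}_\ell]{\mu_B[f]}$ with $B$ the bottom $k-\ell$ levels, then try to close the second term with assumptions~\ref{item:root-tensor} and~\ref{item:broute-force} applied to $\^T^\star_\ell$. Your treatment of the first term (product tensorization over the subtrees $B_j$ followed by the inductive hypothesis) is sound. The problem is the second term: any bound of the form $\Var[\mu^{\star,1}_\ell]{\mu_B[f]} \le \sum_{e\in T} c(e)\,\mu[\Var[e]{\mu_B[f]}]$ leaves you with the hinge terms $\mu[\Var[e_j]{\mu_B[f]}]$ for $e_j\in L_\ell(T)$. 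You cannot drop the inner conditional expectation via \Cref{lem:basic-var-fact}, because $\partial\set{e_j}$ contains the children of $e_j$, which lie in $B$; and enlarging $\set{e_j}$ to $\set{e_j}\cup B_j$ converts the term into $\mu[\Var[\set{e_j}\cup B_j]{\mu_{B_j}[f]}]$, which is precisely a \emph{root}-tensorization quantity for the depth-$(k-\ell)$ subtree hanging at $e_j$. Assumption~\ref{item:root-tensor} gives root-tensorization only at depth $\ell$, and the inductive hypothesis gives ordinary approximate tensorization — not root-tensorization — for $\mu^{\star,1}_{k-\ell}$. So there is nothing available to control these terms. The vague ``careful per-edge argument'' and the attempted extension to $\mu^\star_\ell$ (which only produces a vacuous zero root-variance, since the root $r$ of $\^T^\star_k$ stays pinned) do not fill this hole.

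The paper avoids this trap by orienting the peeling the other way: it conditions on the \emph{bottom} $\ell$ levels $R$, so $\Var[\mu]{f} = \mu[\Var[R]{f}] + \Var[\mu]{\mu_R[f]}$, with $\mu[\Var[R]{f}]$ handled directly by assumption~\ref{item:broute-force}, and the inductive hypothesis applied to the top tree $\^T_{k-\ell}$ acting on $f'=\mu_R[f]$. The hinge then sits at level $k-\ell$, and the critical conversion step — $\mu[\Var[h]{f'}] = \mu[\Var[\^T^h]{\mu_R[f]}] \le \mu[\Var[\^T^h]{\mu_{\widetilde{\^T}^h}[f]}]$ followed by root-tensorization on $\^T^h$ — is legitimate precisely because $\^T^h$ is a depth-$\ell$ subtree, matching assumption~\ref{item:root-tensor}. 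In short, the peeling direction must be chosen so that the subtrees at the hinge have depth exactly $\ell$, and your top-down version puts them at depth $k-\ell$ instead.
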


Note that $\ell$ can depend on $\Delta$ and $q$, and $\alpha$ and $\gamma$ can depend on $\ell$ (and hence $\Delta$ and $q$).  The proof of \Cref{thm:induction} is contained in \Cref{sec:proof-induction}.

\begin{corollary}\label{cor-Glauber}
Under the same conditions as \Cref{thm:induction}, if $\alpha_\ell<1$ then the relaxation time of the Glauber dynamics is $O_{\alpha,\gamma}(n)$ on $\^T_k$, the $\Delta$-regular tree of depth $k$. 
\end{corollary}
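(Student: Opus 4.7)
The plan is to apply \Cref{thm:induction} directly and then read off the relaxation time from approximate tensorization of variance via the standard equivalence mentioned in \Cref{sec:tensor-prelim}. The hypotheses of \Cref{cor-Glauber} are exactly the hypotheses of \Cref{thm:induction}, so for every $k\geq 1$ the distribution $\mu_k$ satisfies approximate tensorization of variance with constants
\[
C'(e) \;\leq\; \gamma \cdot \tp{\max_{0\leq j\leq \ell} \alpha_j \cdot \sum_{i=0}^{\ftp{k/\ell}} \alpha_\ell^{\,i} \;+\; \max\set{1,\alpha_0}}.
\]

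First I would use the assumption $\alpha_\ell<1$ to bound the geometric series uniformly in $k$:
\[
\sum_{i=0}^{\ftp{k/\ell}}\alpha_\ell^{\,i} \;\leq\; \frac{1}{1-\alpha_\ell}.
\]
Since $\alpha$ and $\gamma$ are treated as constants (depending only on $\Delta$ and $q$ through $\ell$), this gives a uniform bound
\[
C'(e) \;\leq\; C \;:=\; \gamma\left(\frac{\max_{0\leq j\leq \ell}\alpha_j}{1-\alpha_\ell} + \max\set{1,\alpha_0}\right)
\]
valid for every edge $e$ of $\^T_k$ and every depth $k\geq 1$. Thus $\mu_k$ satisfies approximate tensorization of variance with the single universal constant $C = C(\alpha,\gamma)$.

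Finally I would invoke the standard fact, recorded at the end of \Cref{sec:tensor-prelim} and attributed to~\cite{CMT15,Cap23}, that approximate tensorization of variance with a uniform constant $C$ implies the Glauber dynamics on the underlying graph has spectral gap at least $1/(Cn)$, where $n$ is the number of edges. Since $\^T_k$ has $\Theta(n)$ edges when it has $n$ vertices, this yields $\Trelax \leq Cn = O_{\alpha,\gamma}(n)$, which is exactly the claimed bound.

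There is essentially no obstacle here: the content of the corollary is packaged entirely inside \Cref{thm:induction}, and the only work is (i) observing that $\alpha_\ell<1$ collapses the geometric factor into a constant independent of $k$, and (ii) citing the known implication from approximate tensorization to a spectral gap bound. The real difficulty, of course, lies in establishing the hypotheses of \Cref{thm:induction} — i.e., producing the base case $\ell=O(\Delta^2\log^2\Delta)$ with $\alpha_\ell<1$ via canonical paths, and verifying approximate tensorization with constant $\gamma$ on trees of depth at most $\ell$ — but that work is outsourced to later sections and is not part of the present corollary.
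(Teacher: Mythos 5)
Your proof is correct and is exactly the argument the paper intends (the paper treats this corollary as immediate from Theorem~\ref{thm:induction} and the standard relation between approximate tensorization and relaxation time recalled at the end of Section~\ref{sec:tensor-prelim}). The two steps you spell out --- summing the geometric series using $\alpha_\ell<1$ to get a $k$-independent constant, then invoking $\Trelax\leq Cn$ --- are precisely what is needed.
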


We now give an overview on proving \Cref{thm:main-mixing} which gives an optimal relaxation time bound for the Glauber dynamics on trees of maximum degree $\Delta$ when $q\geq\Delta+2$.
First, we use \Cref{cor-Glauber} to prove an $O_{\Delta,q}(n)$ relaxation time bound for the Glauber dynamics on $\^T_k$.
The key step is to establish the conditions in \Cref{thm:induction}.
\begin{lemma}\label{lem:verify-condition}
Let $\Delta \geq 2$, $q = \Delta + 2$ be two integers. There exists $\ell = O(\Delta^2 \log^2 \Delta)$ such that the constants $\alpha=(\alpha_0,\dots,\alpha_\ell)$ and $\gamma$ in \Cref{thm:induction} exist with $\alpha_\ell = \frac{1}{2}$, $\alpha_j = q^{\Delta^{O(\ell)}}$ for all $j < \ell$ and $\gamma = q^{\Delta^{O(\ell)}}$.
\end{lemma}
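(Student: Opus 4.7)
The plan is to verify the two hypotheses of \Cref{thm:induction} separately at depth $\ell=O(\Delta^2\log^2\Delta)$. Condition~\ref{item:broute-force}, namely approximate tensorization of variance with some constant $\gamma$ for each $\mu_j$ and $\mu^{\star,1}_j$ with $j\leq\ell$, I would handle by a crude bound. Each such tree has at most $\Delta^{O(\ell)}$ edges, so its state space has size at most $q^{\Delta^{O(\ell)}}$ and its minimum stationary probability is at least $q^{-\Delta^{O(\ell)}}$. Since the Glauber dynamics is irreducible for $q\geq\Delta+1$, the standard correspondence between variance tensorization and the spectral gap of the Glauber dynamics yields approximate tensorization with constant $\gamma=q^{\Delta^{O(\ell)}}$ after paying at most an inverse-min-probability factor. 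No tightness is needed in the base case because $\gamma$ enters \Cref{thm:induction} only as a multiplicative prefactor that is absorbed by the geometric telescoping in $\alpha_\ell$.

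The main technical content is therefore Condition~\ref{item:root-tensor}: approximate root-tensorization of variance for $\mu^\star_\ell$ with constants $\alpha_\ell = 1/2$ at the leaf level and $\alpha_j = q^{\Delta^{O(\ell)}}$ for $j < \ell$. I would prove this via the canonical paths construction promised in the overview and developed in \Cref{sec:canonical-path}. Concretely, for each ordered pair $(\sigma,\tau)$ of configurations in the support of $\mu^\star_\ell$ whose contributions appear in the root-variance functional $\Var[\mu^\star_\ell]{\mu_{\^T^\star_\ell\setminus\{r\}}[f]}$, I construct a canonical path $\gamma_{\sigma,\tau}$ consisting of single-edge recolorings from $\sigma$ to $\tau$ staying inside the support of $\mu^\star_\ell$, and upper-bound the weighted congestion on each edge. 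The asymmetric target constants dictate the construction: transitions recoloring a leaf edge in $L_\ell(\^T^\star_\ell)$ must carry total weight at most $1/2$, while transitions recoloring internal edges may carry weight as large as $q^{\Delta^{O(\ell)}}$. The extra color provided by $q=\Delta+2$ is spent exactly at the leaves, since a leaf edge sees at most $\Delta-1$ forbidden colors and thus has three free colors, just enough slack to route every canonical path through the leaves without repetition. Because the author's canonical paths approach involves couplings, I expect the construction to proceed by coupling two Glauber chains started at $\sigma$ and $\tau$ and reading off the recoloring sequence from the coupling, so that each leaf is effectively touched only a controlled number of times per pair.

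The hard part will be proving the leaf-edge congestion bound of $1/2$, since $q = \Delta + 2$ is essentially tight: any leaf-congestion analysis that merely uses counting must fail at $q = \Delta + 1$ because of the $\Omega(n^3)$ relaxation-time lower bound of Dyer--Goldberg--Jerrum for $\Delta=2$. I expect the depth choice $\ell = \Theta(\Delta^2\log^2\Delta)$ to emerge from exactly this bound: one factor of roughly $\Delta\log\Delta$ comes from the depth needed to flush the conditional influence of the root on a subtree (a standard influence-decay estimate), and a second factor of $\Delta\log\Delta$ comes from the depth needed for the coupled canonical paths to decorrelate enough that the repeated visits of a fixed leaf edge can be amortized to total weight $\le 1/2$ rather than $O(1)$. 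Once both conditions are verified with these parameters, the statement of the lemma follows by reading off $\alpha_\ell = 1/2$, $\alpha_j = q^{\Delta^{O(\ell)}}$ for $j < \ell$, and $\gamma = q^{\Delta^{O(\ell)}}$ from the two arguments above.
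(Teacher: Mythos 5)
Your high-level plan is the right one and matches the paper's proof structure: handle Condition~\ref{item:broute-force} by a crude finite-state-space bound (this is correct, and the paper uses the same reasoning to get the trivial $q^{\Delta^{O(\ell)}}$ bounds on $\xi_j$ for $j<\ell$), and reduce Condition~\ref{item:root-tensor} to a congestion bound on a set of canonical paths built with respect to a coupling, with the only non-trivial estimate being $\alpha_\ell<1$ at the leaf level. However, two key ingredients in your sketch do not match the paper and, as described, would not yield the required bound.

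First, the coupling $\mathcal{C}_{ab}$ is not a coupling of two Glauber chains run from $\sigma$ and $\tau$. It is a \emph{static} coupling of the conditional measures $\mu^{ra}$ and $\mu^{rb}$: given $\sigma\sim\mu^{ra}$, the paper finds the maximal $(a,b)$-alternating path $\mathcal{E}^*$ starting at the hanging root edge and sets $\tau=\sigma\oslash_r b$ by swapping $a$ and $b$ along $\mathcal{E}^*$. The coupling is deterministic once $\sigma$ is drawn, which is exactly what makes the congestion analysis tractable: $\Pr_{(\sigma,\tau)\sim\mathcal{C}_{ab}}[s\in\gamma^{\sigma,\tau}]$ can be analysed by recovering the set of $\sigma$'s that could have generated the transition $s$ (\Cref{lem:congestion}), and a separate probability estimate (\Cref{lem:bad-event-prob}) bounds how often the alternating path and its auxiliary repair paths $\mathcal{E}_i$ actually reach level $\ell$. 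A coupling of two dynamic Glauber chains would introduce additional randomness over which transitions are traversed, which is precisely what the paper's definition of congestion in~\eqref{eq:def-congestion} is designed to avoid.

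Second, the role of the extra color in $q=\Delta+2$ is not ``spent at the leaves.'' A leaf edge of $\mathbb{T}^\star_\ell$ is already generous (it has $q-(\Delta-1)=3$ free colors); that is not where the argument becomes tight. The extra color is used in Stage-I of the path construction, where for an odd edge $e_i$ in the alternating path one must choose a color $c_i\notin\{a,b\}$ for its auxiliary path $\mathcal{E}_i$. The paper's argument relies on $|[q]\setminus\sigma(N_e(v_{i+1}))|=q-\Delta=2$ together with $b\in\sigma(N_e(v_{i+1}))$ to guarantee the first available color (in the ordering $\mathcal{O}$ that places $a,b$ last) is never in $\{a,b\}$. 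With $q=\Delta+1$ this fails, which is exactly why the paper switches to the neighboring-edge dynamics for $q=\Delta+1$ and exhibits concrete bad examples (\Cref{fig:bad-example}). Finally, the depth $\ell=\Theta(\Delta^2\log^2\Delta)$ does not come from an influence-decay argument; it comes from optimizing the explicit probability bound in \Cref{lem:bad-event-prob}, where the decay per level is $1-1/\Delta$ for the alternating path and $1-2/\Delta$ for the auxiliary paths, against a multiplicative overcounting of $(2\Delta)^{O(P+Z)}$ from \Cref{lem:congestion}; balancing the quadratic $-x^2/\Delta$ term against the $O(\log\Delta+\log\ell)\cdot x$ term in the exponent forces $\ell\gtrsim\Delta^2\log^2\Delta$.
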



%
Combining \Cref{cor-Glauber} with \Cref{lem:verify-condition}, one can immediately obtain the relaxation time for $\Delta$-regular trees. 
To prove \Cref{thm:main-mixing},
we need to explain how the relaxation time of $\Delta$-regular trees implies the relaxation time of arbitrary trees of maximum degree $\Delta$.
As noted by the previous work of~\cite{DHP}, there is a comparison argument of the relaxation time between the Glauber dynamics on the $\Delta$-regular tree and arbitrary trees.
This comparison argument for the relaxation time can be easily extend to the log-Sobolev constant $c_{\-{sob}}$ for continuous time Glauber dynamics (see \Cref{sec:log-sob} for a definition).

\begin{lemma}[\text{\cite[Proposition 9]{DHP}}] \label{lem:monotonicity-glauber}
  Let $\^T = (V, E)$ be a tree with root $r$ and maximum degree $\Delta$.
  Let $\^T_H = (V_H, H \subseteq E)$ be a connected sub-tree of $\^T$ containing $r$.
  Let $q \geq \Delta + 1$, and $\mu$, $\nu = \mu_H$ be the distribution of uniform $q$-edge-coloring on $\^T$ and $\^T_H$, respectively.
  Then,
  \begin{itemize}
      \item $\mu$ satisfies approximate tensorization of variance with constant $C$ implies $\nu$ satisfies approximate tensorization of variance with constant $Cq$;
      \item the log-Sobolev constant on $\nu$ is bounded by $c_{\-{sob}}^{-1}(\nu) \leq q \cdot c_{\-{sob}}^{-1}(\mu)$.
  \end{itemize}
\end{lemma}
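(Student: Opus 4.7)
The plan is a lift-and-project argument resting on the tree-specific fact that the marginal of $\mu$ onto $H$ is itself the uniform distribution $\nu$.

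First I would verify this marginal identity, $\mu_H = \nu$. Each edge of $E \setminus H$ lies in a pendant subtree attached to a unique vertex $v \in V_H$, and for any $\tau \in \Omega_{\^T_H}$ the number of proper extensions of $\tau$ to $\^T$ factorizes over these pendant subtrees. By symmetry under permutations of $[q]$, each such factor depends only on the size of the set of colors appearing on $H$-edges at $v$, which equals $\deg_H(v)$ and is hence independent of $\tau$. So every $\tau \in \Omega_{\^T_H}$ admits the same number of extensions, and $\mu_H = \nu$ follows.

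Given this, I would lift any $f : \Omega_\nu \to \^R_{\geq 0}$ to $\tilde f : \Omega_\mu \to \^R_{\geq 0}$ via $\tilde f(\sigma) := f(\sigma_H)$. The marginal identity yields $\oVar_\mu[\tilde f] = \oVar_\nu[f]$ and $\oEnt_\mu[\tilde f] = \oEnt_\nu[f]$. Applying approximate tensorization of variance for $\mu$ with constant $C$ to $\tilde f$ gives
\[
  \oVar_\nu[f] \;=\; \oVar_\mu[\tilde f] \;\leq\; C\sum_{e \in E} \mu\bigl[\Var[e]{\tilde f}\bigr] \;=\; C\sum_{e \in H} \mu\bigl[\Var[e]{\tilde f}\bigr],
\]
since $\tilde f$ does not depend on $\sigma(e)$ for $e \notin H$, so those local variances vanish.

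The crux is the local comparison between $\mu[\Var[e]{\tilde f}]$ and $\nu[\Var[e]{f}]$ for $e \in H$. The marginal identity implies that the conditional distribution of $\sigma(e)$ under $\mu$ given $\sigma_{H \setminus e} = \tau_{H \setminus e}$ coincides with the $\nu$-conditional, so $\Var[e]{f}(\tau)$ equals the variance of $\tilde f$ under $\mu(\cdot \mid \sigma_{H \setminus e} = \tau_{H \setminus e})$. Applying the law of total variance with inner conditioning on $\sigma_{E \setminus e}$, and dropping the non-negative outer-variance term, gives $\Var[e]{f}(\tau) \geq \oE[\Var[e]{\tilde f}(\sigma) \mid \sigma_{H \setminus e} = \tau_{H \setminus e}]$; averaging over $\tau \sim \nu$ yields $\mu[\Var[e]{\tilde f}] \leq \nu[\Var[e]{f}]$. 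Chaining into the previous display establishes the variance statement of the lemma (in fact with constant $C \leq Cq$). The log-Sobolev statement follows by the identical argument after substituting $\sqrt{f}$ for $f$ inside the Dirichlet-form terms.

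The main obstacle is the marginal identity: it crucially uses that $\^T$ is a tree and $\^T_H$ is a connected sub-tree, and fails on general graphs. Without it the lift $\tilde f$ would not preserve variance or entropy, and one would need a genuinely lossy reweighting of local variances.
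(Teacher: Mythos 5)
Your proposal is correct, and it follows the same high-level ``lift and compare'' plan as the paper's proof: lift $g:\Omega(\nu)\to\^R_{\geq 0}$ to $\tilde g(\sigma)=g(\sigma_H)$, note that the global variance and entropy are preserved (the paper states these as inequalities in its Claim~\ref{claim:equal-var}, but as you observe they are equalities since $\tilde g$ depends only on $\sigma_H$), apply tensorization or the log-Sobolev inequality for $\mu$, and then compare the per-edge variance terms. The substantive difference is in the per-edge comparison, the paper's Claim~\ref{claim:local-var-GD}. The paper proves $\mu[\Var[e]{\tilde g}] \leq q\,\nu[\Var[e]{g}]$ by expanding the variance as a symmetric double sum $\tfrac12\sum_{a,b}p(a)p(b)(\cdot)^2$, deleting one of the conditional weights $p(b)$, rearranging sums, and then reinserting $p(b)$ via the crude pointwise bound $\mu^\alpha_h(b)\geq 1/q$, which is exactly where the factor $q$ enters. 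Your argument instead identifies $\Var[e]{g}(\tau)$ with the variance of $\tilde g$ under $\mu(\cdot\mid\sigma_{H\setminus e}=\tau_{H\setminus e})$ via the marginal identity $\mu_H=\nu$, and then applies the law of total variance with inner conditioning on $\sigma_{E\setminus e}$ to get $\mu[\Var[e]{\tilde g}]\leq\nu[\Var[e]{g}]$ directly, with no factor of $q$. This is both more transparent and sharper: it yields approximate tensorization with constant $C$ (rather than $Cq$) and $c_{\-{sob}}^{-1}(\nu)\leq c_{\-{sob}}^{-1}(\mu)$ (rather than $q\,c_{\-{sob}}^{-1}(\mu)$). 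Your proposal also supplies a short justification for the marginal identity $\mu_H=\nu$, which the paper simply asserts in the lemma statement; that identity is the crucial tree-specific input in both proofs, so it is reasonable to spell it out.

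One small caution: in the log-Sobolev part, be explicit that you apply the $\mu$-log-Sobolev inequality to $F=\tilde g^2$ (so its Dirichlet term involves $\sqrt F=\tilde g$) and then use the per-edge comparison on $\tilde g$, not on $\tilde g^2$; as written, ``substituting $\sqrt f$ for $f$'' is the right idea but worth stating carefully so the reader sees it is $\mu[\Var[e]{\tilde g}]$, not $\mu[\Var[e]{\tilde g^2}]$, that is being bounded by $\nu[\Var[e]{g}]$.
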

The proof of \Cref{lem:monotonicity-glauber} could be done by a brute-force comparison between each term of~\eqref{eqn:tensorization} for $\mu$ and $\nu$.
Since the notation used here is quite different from the one in~\cite{DHP}, for completeness, we include a proof of \Cref{lem:monotonicity-glauber} in \Cref{sec:monotonicity-glauber}.

We can now prove \Cref{thm:main-mixing,thm:tmix-regular}.

\begin{proof}[Proof of \Cref{thm:main-mixing}]

Given a tree $\^T$ with maximum degree $\Delta$ and $q \geq \Delta + 2$.
Let $\nu$ be the distribution of uniform $q$-edge-coloring on $\^T$.
We consider the distribution of uniform $q$-edge-coloring $\mu$ on the $(q-2)$-regular tree $\^T_k$ for sufficiently large depth $k$ and degree $\Delta' = q-2$ such that $\^T$ is a subtree of $\^T_k$.
Since $q \geq \Delta + 2$, we have $\Delta' \geq \Delta$ and such $\^T_k$ must exist.

By \Cref{cor-Glauber} and \Cref{lem:verify-condition}, we know that when $q = \Delta' + 2$, the relaxation time of the Glauber dynamics on $\mu$ is $O_{\Delta',q}(n) = O_q(n)$.
This gives $O_{q}(1)$ approximate tensorization of variance of $\mu$.
Then, after applying \Cref{lem:monotonicity-glauber}, we know that $\nu$ also has $O_{q}(1)$ approximate tensorization of variance.
This proves the $O_{\Delta, q}(n)$ relaxation time for the Glauber dynamics on $\nu$.
\end{proof}

\begin{proof}[Proof of \Cref{thm:tmix-regular}]
    Let $\^T$ be a tree with maximum degree $\Delta$ and $q \geq \Delta + 2$ and let $\nu$ be the distribution of uniform $q$-edge-colorings on $\^T$.
    We consider the distribution of uniform $q$-edge-coloring $\mu$ on the $(q-2)$-regular tree $\^T_k$ with maximum degree $\Delta' = q - 2$ and sufficiently large depth $k$.
    In particular, we choose the depth $k$ according to the following rules:
    \begin{itemize}
        \item when $\^T$ is a complete $\Delta$-regular tree of depth $\ell$ (this means $\ell = O(\log_\Delta n)$), we let $k = \ell$;
        \item otherwise we let $k = D$ to be the diameter of $\^T$.
    \end{itemize}
    Since $q \geq \Delta + 2$, we have $\Delta' \geq \Delta$.
    This means in both cases, $\^T$ is a subtree of $\^T_k$.
    
    By \Cref{thm:induction} and \Cref{lem:verify-condition}, when $q = \Delta' + 2$, we know that for any $j \geq 1$, both $\mu_j$ and $\mu^{\star,1}_j$ satisfy approximate tensorization of variance with constants $O_{\Delta', q}(1) = O_q(1)$.
    This implies the relaxation time (or, the inverse of spectral gap) of the continuous time Glauber dynamics on $\mu_j$ and $\mu^{\star,1}_j$ are also $O_q(1)$.
    That is, for any $j \geq 1$, we have
    \begin{align*}
        c^{-1}_{\-{gap}}(\mu_j) = O_{q}(1) \quad \text{and} \quad c^{-1}_{\-{gap}}(\mu^{\star,1}_j) = O_{q}(1).
    \end{align*}
    Combining above equations with \Cref{cor:sob-gap}, for any $\ell \geq 1$, it holds that $c^{-1}_{\-{sob}}(\mu) = c^{-1}_{\-{sob}}(\mu_k) = k \cdot O_{q}(1)$.
    By \Cref{lem:monotonicity-glauber}, the log-Soblev constant for $\nu$ on $\^T$ is bounded by $c^{-1}_{\-{sob}}(\nu) \leq q c^{-1}_{\-{sob}}(\mu) = k \cdot O_q(1)$.
    Finally, according to the standard relation between log-Sobolev constant and the mixing time described in \eqref{eq:log-sob-mixing}, it holds that the mixing time for the Glauber dynamics on $\nu$ is bounded by
    \begin{align*}
        \Tmix = O_q(1) \cdot k \cdot n\log n.
    \end{align*}
    Note that by our choice of $k$, we have $k = O(\log_\Delta n)$ when $\^T$ is a regular complete tree; and $k = D$ be the diameter of $\^T$ when $\^T$ is an arbitrary tree.
    This finishes the proof.
\end{proof}

\subsection{Relaxation time and root-tensorization for complete regular tree}
\label{sub:congestion-overview}
In this section, we will outline the proof of \Cref{lem:verify-condition}.
In \Cref{lem:verify-condition}, we use a brute-force bound for all $\alpha_j$ with $0\leq j < \ell$ and $\gamma$. The most difficult part is to guarantee $\alpha_\ell = \frac{1}{2} < 1$ so that we can apply \Cref{cor-Glauber} to obtain \Cref{thm:main-mixing}.
In this subsection, we explain how to establish the \emph{first condition} in \Cref{thm:induction}, namely, the existance of constants $\alpha=(\alpha_0,\dots,\alpha_\ell)$ with $\alpha_\ell < 1$.


In \Cref{lem:verify-condition}, we assume $q = \Delta +2$. 
We first introduce some simplified notation that will be used in this subsection.
Since we focus on constants $\alpha$, we only care about the tree $\^T_\ell^\star$ (see \Cref{{def:Tk-Tk-star}}), and the list coloring distribution $\mu^\star_\ell$ in \Cref{thm:induction}. 
Recall that $\mu^\star_\ell$ is the uniform distribution of list colorings in $\^T_\ell^\star$, where the root hanging edge has the color list $[q-d]=[3]$ and all other edges have color list $[q]$.
For simplicity of notation, we drop the index $\ell$ and denote $\^T^\star = \^T^\star_\ell$. Also, let $\widetilde{\^T} := \^T^\star - r$. We use~$\mu$ to denote the distribution $\mu^\star_\ell$.

Our goal is to establish approximate root-tensorization of variance as defined in \eqref{eqn:root-tensorization}.  Recall this states that for any function $f:\Omega(\mu) \to \^R$, the following holds:
\begin{align}\label{eq:root-tensor}
  \Var[\mu]{\mu_{\widetilde{\^T}}[f]} 
  &\leq \sum_{i=0}^\ell \alpha_i \sum_{h \in L_i} \mu[\Var[h]{f}],
\end{align}
where we use $L_i$ to denote $L_i(\^T^\star)$ to simplify notation.

Let us expand the LHS of \eqref{eq:root-tensor}.  For a color $a\in [q]$, let $\mu^{ra}$ denote the conditional distribution $\mu^{r \gets a}$.
From the definition of variance we have the following identity:
\begin{equation}
\label{eqn:var-breakup}
 \Var[\mu]{\mu_{\widetilde{\^T}}[f]} 
   =\frac{1}{2} \sum_{a, b \in [q-d]: a \neq b} \mu_r(a)\mu_r(b)\tp{\mu^{ra}[f] - \mu^{rb}[f]}^2.
  \end{equation}
  
Our technique for bounding $\Var[\mu]{\mu_{\widetilde{\^T}}[f]}$, using the decomposition in \eqref{eqn:var-breakup}, is inspired by two classical tools: coupling and canonical paths. However, we use them in a non-standard manner.
For two colors $a,b \in [q-d]$ with $a \neq b$, let $\+C_{ab}$ denote a coupling of the distributions $\mu^{ra}$ and $\mu^{rb}$. 
Formally, for any random sample $(\sigma,\tau) \sim \+C_{ab}$, it holds that $\sigma \sim \mu^{ra}$ and $\tau \sim \mu^{rb}$.
Let $\Omega({\+C_{ab}}) \subseteq \Omega(\mu^{ra}) \times \Omega(\mu^{rb})$ denote the support of the coupling $\+C_{ab}$.
Let $\+C = \{\+C_{ab} \mid a,b \in [q-d], a \neq b\}$ denote the set of couplings and let $\Omega(\+C) = \bigcup_{\+C_{ab} \in \+C}\Omega(\+C_{ab})$ denote the union of their supports.
For two edge colorings $\gamma$ and $\gamma'$, we will use $\gamma \oplus \gamma' := \set{e: \gamma_e \neq \gamma'_e}$ to denote the set of edges on which $\gamma$ and $\gamma'$ are different.

Given a set of couplings, we will define the following collection of canonical paths.  Let $\Lambda(\mu)$ denote the ordered pairs $(\gamma,\gamma')\in\Omega(\mu)$ where
$|\gamma\oplus\gamma'|=1$, i.e., $\Lambda(\mu)$ are the pairs of states with positive probability in the transition matrix for the Glauber dynamics.

\begin{definition} [Canonical paths w.r.t. a coupling set]
Given a coupling set $\+C$, for any pair $(\sigma,\tau) \in \Omega(\+C)$, a canonical path $\gamma^{\sigma,\tau}$ is a simple path from $\sigma$ to $\tau$ in the graph $(\Omega(\mu),\Lambda(\mu))$.  More specifically,
the path $\gamma^{\sigma,\tau}=(\gamma_0, \gamma_1,\ldots,\gamma_{m})$ with the following properties: (1) $\sigma = \gamma_0$, $\tau = \gamma_{m}$ and all $\gamma_i$'s are distinct; (2) for any $1 \leq i \leq m$, $\gamma_{i-1}$ and  $\gamma_{i}$ differ only at the color on one edge.

Let $\Gamma = \{\gamma^{\sigma,\tau} \mid (\sigma,\tau) \in \Omega(\+C) \}$ denote the set of canonical paths w.r.t. the coupling set $\+C$.
\end{definition}


In the following we use $Q^{\text{Glauber}}_\mu(\gamma,\gamma')$ to denote the transition rate from $\gamma$ to~$\gamma'$ in the (continuous-time) heat-bath Glauber dynamics for $\mu$. 
Formally, 
\[
Q^{\text{Glauber}}_\mu(\gamma,\gamma') = \mu^{\gamma_{\^T \setminus e }}(\gamma') = \mu_e^{\gamma_{\^T \setminus e }}(\gamma'_e).
\]
This corresponds to the transition probability from $\gamma$ to $\gamma'$ for the heat-bath (discrete-time) Glauber dynamics conditional on the edge $\gamma \oplus \gamma'$ is recolored.

We define the following quantity as the expected congestion of the canonical paths $\Gamma$.

\begin{definition}[Congestion of canonical paths w.r.t. a coupling set]
Given a coupling set $\+C$ and a set of canonical paths $\Gamma$, define the expected congestion at level $0 \leq t \leq \ell$ by
\begin{equation}\label{eq:def-congestion}
    \xi_t = \max_{a,b\in [q-d]:\atop a\neq b} \sum_{s=(\gamma,\gamma')\in\Lambda(\mu):\atop (\gamma\oplus\gamma')\in L_t} \frac{ \tp{\Pr[(\sigma,\tau) \sim \+C_{ab}]{s\in\gamma^{\sigma,\tau}   } }^2}{\mu(\gamma)Q^{\text{Glauber}}_\mu(\gamma,\gamma')}.
\end{equation}
\end{definition}
The congestion in~\eqref{eq:def-congestion} sums over all pairs $(\gamma,\gamma') \in \Omega(\mu) \times \Omega(\mu)$ that differ only at one edge in level $L_t$. 
The numerator is the square of the probability that a random canonical path $\gamma^{\sigma,\tau}$, where $(\sigma,\tau) \sim \+C_{ab}$, uses $(\gamma,\gamma')$. 
The denominator is the capacity of $(\gamma,\gamma')$ in the (continuous-time) Glauber dynamics.

\begin{remark}[Comparison to standard canonical paths approach] The canonical paths technique is a classical tool for upper bounding the mixing time of Markov chains, see~\cite{jerrum1989approximating,DG91,Sin92}.
In the standard approach, we define a canonical path $\gamma^{\sigma,\tau}_{\text{classical}}$ between \emph{every pair} of colorings $\sigma,\tau \in \Omega(\mu)$.
The goal is to define canonical paths which minimize the congestion defined to be the maximum ``load'' through any transition.  The congestion then yields a bound on the spectral gap of the Glauber dynamics (equivalently, it establishes approximate tensorization of variance).

The standard canonical paths technique does not involve any coupling.
Here, our purpose is to establish approximate \textbf{root}-tensorization of variance. Our canonical path is constructed with respect to a set of couplings~$\+C$: for the coupling $\+C_{ab}$, we are going from a coloring $\sigma$ where the root edge $r$ has a color $a$ to a coloring $\tau$ where the root edge $r$ has color $b$, such that $(\sigma, \tau) \sim \+C_{ab}$.

The congestion~\cite{Sin92} in the classical canonical paths approach is defined as follows:
\begin{align*}
    \rho := \max_{(\gamma,\gamma')} \frac{1}{\mu(\gamma)Q_\mu^{\text{Glauber}}(\gamma,\gamma')} \sum_{\substack{(\sigma,\tau) \in \Lambda(\mu):\\ (\gamma,\gamma') \in \gamma^{\sigma,\tau}_{\text{classical}}}} \mu(\sigma)\mu(\tau) |\gamma^{\sigma,\tau}_{\text{classical}}|,
\end{align*}
where $|\gamma^{\sigma,\tau}_{\text{classical}}|$ is the length of the path $\gamma^{\sigma,\tau}_{\text{classical}}$.
Comparing $\rho$ with our $\xi_t$, there are some important differences. (1) Our definition of congestion involves the probability of a pair in the coupling~$\+C$. (2) We measure the total congestion in a level rather than the congestion in a single transition. (3) Our congestion does not contain the length of the path, this is because when we use $\xi_t$ to establish~\eqref{eq:root-tensor}, we apply the Cauchy-Schwarz inequality in a different manner than in~\cite{Sin92}, see the proof of \Cref{lem:con-var} for more details.

The most important difference is in our application of the technique.  To obtain a tight bound on the mixing time we need $\xi_\ell<1/(\ell+1)$ (which will imply $\alpha_\ell<1$) and we only need a trivial bound for levels $t<\ell$, see~\Cref{lem:verify-condition}.
\end{remark}

Finally, if we construct a set of canonical paths w.r.t. a coupling set $\+C$ which has congestion $\*\xi=(\xi_0,\xi_1,\dots,\xi_\ell)$ then we obtain approximate root-tensorization of variance with constant $\*\alpha=(\alpha_0,\dots,\alpha_\ell)$ where $\alpha_i=(\ell+1)\xi_i$.


\begin{lemma}\label{lem:con-var}
If there exist a set of couplings $\+C$ and a set $\Gamma$ of canonical paths such that the congestion with respect to the coupling $\+C$ is $\xi_t$ for $0 \leq t \leq \ell$, then~\eqref{eq:root-tensor} holds with $\*\alpha = (\ell+1)\*\xi$. 
\end{lemma}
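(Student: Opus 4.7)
The plan is to open up the variance using \eqref{eqn:var-breakup}, then bound each squared marginal-difference $(\mu^{ra}[f]-\mu^{rb}[f])^2$ by a congestion-weighted Dirichlet form, and finally identify that Dirichlet form with the variances on the right-hand side of \eqref{eq:root-tensor}. Two applications of Cauchy--Schwarz are needed: one over levels (which will be the source of the factor $\ell+1$) and one within each level (which is where the congestion $\xi_t$ enters).

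Concretely, for a pair $a\ne b$ I would use the coupling $\+C_{ab}$ to write
\[
\mu^{ra}[f]-\mu^{rb}[f] = \sum_{(\sigma,\tau)\in\Omega(\+C_{ab})} \Pr[\+C_{ab}]{(\sigma,\tau)}\bigl(f(\sigma)-f(\tau)\bigr),
\]
and then telescope along the canonical path $\gamma^{\sigma,\tau}=(\gamma_0,\dots,\gamma_m)$ to obtain
\[
f(\sigma)-f(\tau)=\sum_{s=(\gamma,\gamma')\in\gamma^{\sigma,\tau}}\bigl(f(\gamma)-f(\gamma')\bigr).
\]
Swapping the order of summation and partitioning the transitions $s\in\Lambda(\mu)$ according to the level $t$ of the recolored edge yields
\[
\mu^{ra}[f]-\mu^{rb}[f] = \sum_{t=0}^{\ell}\sum_{s=(\gamma,\gamma')\in \Lambda_t(\mu)} p_{ab}(s)\bigl(f(\gamma)-f(\gamma')\bigr),
\]
where $\Lambda_t(\mu)$ denotes transitions that recolor an edge in $L_t$ and $p_{ab}(s)=\Pr_{(\sigma,\tau)\sim\+C_{ab}}[s\in\gamma^{\sigma,\tau}]$.

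Now I would apply Cauchy--Schwarz twice. First, squaring and using Cauchy--Schwarz on the outer sum over the $\ell+1$ levels gives an $(\ell+1)$ factor. Second, for each fixed level $t$, I would apply Cauchy--Schwarz with the weighted splitting $p_{ab}(s)=\bigl(p_{ab}(s)/\sqrt{\mu(\gamma)Q^{\text{Glauber}}_\mu(\gamma,\gamma')}\bigr)\cdot \sqrt{\mu(\gamma)Q^{\text{Glauber}}_\mu(\gamma,\gamma')}$. The first factor of the resulting product is precisely the quantity that \eqref{eq:def-congestion} bounds by $\xi_t$ (using that $\xi_t$ is a maximum over $(a,b)$). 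Combining, I get
\[
(\mu^{ra}[f]-\mu^{rb}[f])^2 \le (\ell+1)\sum_{t=0}^\ell \xi_t \sum_{s=(\gamma,\gamma')\in\Lambda_t(\mu)} \mu(\gamma)Q^{\text{Glauber}}_\mu(\gamma,\gamma')\bigl(f(\gamma)-f(\gamma')\bigr)^2.
\]

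The final step is the identity
\[
\sum_{s=(\gamma,\gamma')\in \Lambda_t(\mu)}\mu(\gamma)Q^{\text{Glauber}}_\mu(\gamma,\gamma')(f(\gamma)-f(\gamma'))^2 = 2\sum_{h\in L_t}\mu[\Var[h]{f}],
\]
which follows by grouping the transitions recoloring edge $h$ by the shared pinning on $E\setminus h$, noting that $\mu(\gamma)Q^{\text{Glauber}}_\mu(\gamma,\gamma')=\mu_{E\setminus h}(\gamma_{E\setminus h})\cdot\mu^{\gamma_{E\setminus h}}(\gamma)\mu^{\gamma_{E\setminus h}}(\gamma')$, and using the standard representation of variance as half the expected squared difference of two independent copies. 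Substituting back into \eqref{eqn:var-breakup} and using $\tfrac{1}{2}\sum_{a\ne b}\mu_r(a)\mu_r(b)\le \tfrac{1}{2}$, the factor of $2$ from the variance identity cancels the $\tfrac12$, leaving $\alpha_t = (\ell+1)\xi_t$. The main subtlety to keep track of is ensuring that the congestion bound $\xi_t$ comes out uniformly in $(a,b)$ after the Cauchy--Schwarz step, which is exactly why the definition \eqref{eq:def-congestion} takes a maximum over color pairs; the rest is bookkeeping.
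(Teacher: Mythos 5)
Your proposal is correct and matches the paper's own proof essentially step for step: the same variance expansion~\eqref{eqn:var-breakup}, coupling and telescoping along canonical paths, one Cauchy--Schwarz over the $\ell+1$ levels and a second weighted Cauchy--Schwarz within each level to extract the congestion, and the same Dirichlet-form identity converting $\sum \mu(\gamma)Q^{\text{Glauber}}_\mu(\gamma,\gamma')(f(\gamma)-f(\gamma'))^2$ into $2\sum_{h\in L_t}\mu[\Var[h]{f}]$, with the $2$ and $\tfrac12$ cancelling. The only cosmetic slip is that the within-level Cauchy--Schwarz splitting should be applied to the product $p_{ab}(s)\,(f(\gamma)-f(\gamma'))$ rather than to $p_{ab}(s)$ alone, but that is clearly what you intend.
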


\begin{proof}
 We note that by definition
  \begin{align*}
    \Var[\mu]{\mu_{\widetilde{\^T}}[f]}
    \nonumber &= \frac{1}{2} \sum_{a, b \in [q-d]:a\neq b} \mu_r(a)\mu_r(b) \tp{\mu^{ra}(f) - \mu^{rb}(f)}^2 \\
   \text{(by coupling)}\quad &= \frac{1}{2} \sum_{\substack{a, b \in [q-d]: a\neq b}} \mu_r(a) \mu_r(b) \tp{\E[(\sigma,\tau) \sim \+C_{ab}]{f(\sigma) - f(\tau)}}^2\\
   \text{(by telescoping sum)}\quad &= \frac{1}{2} \sum_{\substack{a, b \in [q-d]: a\neq b}} \mu_r(a) \mu_r(b) \tp{\E[(\sigma,\tau) \sim \+C_{ab}]{ \sum_{j=1}^{m(\sigma,\tau)} f(\gamma^{\sigma\tau}_j) - f(\gamma^{\sigma\tau}_{j-1})}}^2.
  \end{align*}    
Fix $a$ and $b$. Denote $ \E[(\sigma,\tau) \sim \+C_{ab}]{ \sum_{j=1}^{m(\sigma,\tau)} f(\gamma^{\sigma\tau}_j) - f(\gamma^{\sigma\tau}_{j-1})}$ by $E_{ab}$. 
Then $E_{ab}$ can be rewritten as
\begin{align} \label{eq:Eab}
 E_{ab} = \sum_{\substack{s = (\gamma,\gamma') \in \Omega(\mu) }}(f(\gamma)-f(\gamma'))\Pr[(\sigma,\tau) \sim \+C_{ab}]{s \in \gamma^{\sigma,\tau} },
\end{align}
where the sum is over all \emph{ordered} pairs $(\gamma,\gamma')$ that differ at one edge and the equation uses the fact that every canonical path uses $(\gamma,\gamma')$ at most once. We can partition all $(\gamma,\gamma')$'s according to the level of the edge $e = \gamma \oplus \gamma'$. By using Cauchy-Schwarz, 
\begin{align} \label{eq:def-Ai}
E_{ab}^2
\leq (\ell + 1) \sum_{i=0}^\ell \tp{ \sum_{\substack{s = (\gamma,\gamma') \in \Lambda(\mu):\\ \gamma \oplus \gamma' \in L_i }}(f(\gamma)-f(\gamma'))\Pr[(\sigma,\tau) \sim \+C_{ab}]{s \in \gamma^{\sigma,\tau}  }}^2 := (\ell+1)\sum_{i=0}^\ell A_i^2,
\end{align}
where we use $A_i^2$ to denote the term for each level $i$.
Let $Q(\gamma,\gamma')$ denote $\mu(\gamma)Q^{\text{Glauber}}_\mu(\gamma,\gamma')$.
By applying Cauchy-Schwarz on $A_i^2$ we obtain the following bound:
\begin{align*}
    A_i^2 &=\bigg(\sum_{\substack{s = (\gamma,\gamma') \in \Lambda(\mu): \\ \gamma \oplus \gamma' \in L_i }}{(f(\gamma)-f(\gamma'))\sqrt{Q(\gamma,\gamma')}}\frac{\Pr[(\sigma,\tau) \sim \+C_{ab}]{s \in \gamma^{\sigma,\tau}  } }{\sqrt{Q(\gamma,\gamma')}}\bigg)^2 \\
&\leq\sum_{\substack{s = (\gamma,\gamma') \in \Lambda(\mu):\\ \gamma \oplus \gamma' \in L_i }} (f(\gamma)-f(\gamma'))^2Q(\gamma,\gamma') \sum_{ \substack{s = (\gamma,\gamma') \in \Lambda(\mu) \\ \gamma \oplus \gamma' \in L_i }}\frac{\tp{\Pr[(\sigma,\tau) \sim \+C_{ab}]{s \in \gamma^{\sigma,\tau}  }}^2}{Q(\gamma,\gamma')}\\
    &\leq 2\xi_i\sum_{e \in L_i} \mu[\Var[e]{f}].
\end{align*}
Combining everything together, we have
\begin{align*}
   \Var[\mu]{\mu_{\widetilde{\^T}}[f]} \leq \sum_{\substack{a, b \in [q-d]: a\neq b}} \mu_r(a) \mu_r(b)  (\ell+1)\sum_{i=0}^\ell \xi_i\sum_{e \in L_i} \mu[\Var[e]{f}] \leq (\ell+1)\sum_{i=0}^\ell \xi_i\sum_{e \in L_i} \mu[\Var[e]{f}].
\end{align*}
Hence, approximate root-tensorization of variance in~\eqref{eq:root-tensor} holds with $\*\alpha = (\ell+1)\*\xi$.
\end{proof}

Using the technique developed above, we prove \Cref{lem:verify-condition} via the following result.

\begin{lemma}\label{lem:expectation-bound}
Let $\Delta \geq 2$, $q = \Delta + 2$ be two integers. There exists $\ell = O(\Delta^2 \log^2 \Delta)$ such that the following holds for the uniform distribution $\mu= \mu_{\ell}^\star$ of the list colorings on $\^T^\star = \^T_\ell^\star$. There exists a set of couplings~$\+C$ and canonical paths $\Gamma$ w.r.t. $\+C$ such that the congestion $\xi$ satisfies $\xi_\ell = \frac{1}{2(\ell+1)}$ and $\xi_j = q^{\Delta^{O(\ell)}}$ for $j < \ell$.
\end{lemma}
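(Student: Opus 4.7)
I would construct the coupling set $\+C = \set{\+C_{ab}}$ and the canonical paths $\Gamma$ simultaneously, designing $\+C_{ab}$ so that under it the disagreement set $D(\sigma,\tau) := \sigma \oplus \tau$ forms a random subtree of $\^T_\ell^\star$ rooted at $r$ whose depth is governed by a strongly subcritical branching process. The canonical path $\gamma^{\sigma,\tau}$ would then update exactly the edges of $D$ one at a time, so that only realizations whose disagreement subtree reaches the leaf layer contribute to $\xi_\ell$; this exponential-in-$\ell$ decay delivers the target $\xi_\ell = \frac{1}{2(\ell+1)}$ once $\ell = \Theta(\Delta^2\log^2\Delta)$, whereas at levels $j < \ell$ a crude counting bound of the form $\xi_j \le q^{\Delta^{O(\ell)}}$ is automatic since every ratio in \eqref{eq:def-congestion} is at most $1/\mu_{\min}$ times the number of transitions at that level.

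\medskip

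Concretely, I would define $\+C_{ab}$ by revealing colors in BFS order down $\^T^\star_\ell$, starting from the fixed colors $a,b$ on $r$: for every edge $e$ whose parent edge already carries identical colors in the partial realization of $(\sigma,\tau)$, use the identity coupling, and otherwise apply the optimal (maximum) coupling of the two conditional marginals on $e$ under $\mu^{ra}$ and $\mu^{rb}$. For $\gamma^{\sigma,\tau}$, given $(\sigma,\tau) \in \Omega(\+C_{ab})$ I would process the edges of $D(\sigma,\tau)$ in reverse BFS (leaves first), replacing $\sigma(e)$ by $\tau(e)$ whenever this stays proper, and otherwise detouring through an auxiliary color in $[q]$ not used in the local neighborhood---such a color exists because $q = \Delta + 2$ leaves at least two slack colors around any edge. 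The bottom-up order guarantees that detours never disturb already-resolved edges.

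\medskip

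For the congestion, fix a transition $s = (\gamma,\gamma')$ with $e := \gamma \oplus \gamma' \in L_t$. The pair $(\sigma,\tau)$ whose canonical path visits $s$ is essentially determined by $\gamma$, by the disagreement subtree $D$, and by $O(1)$ bits recording the local detour, so combining with the capacity bound $\mu(\gamma)\,Q^{\text{Glauber}}_\mu(\gamma,\gamma') \ge \mu(\gamma)/q$ collapses the sum in \eqref{eq:def-congestion} to an estimate of the form $\xi_t \le C_1(\Delta,q)\cdot(\eta^2\,d\,q^{O(1)})^t$, where $\eta$ is the per-level disagreement-contraction coefficient of the max-coupling. Choosing $\ell = \Theta(\Delta^2\log^2 \Delta)$ with a sufficiently large hidden constant then drives $\xi_\ell$ below $1/(2(\ell+1))$. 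The main obstacle is establishing the required contraction $\eta^2 d < 1$ with an effective margin of $\Theta(1/\log\Delta)$ for the max-coupling when $q = \Delta + 2$: with only two ``slack'' colors the per-edge marginal contraction is delicate, and one cannot appeal to generic spectral independence because arbitrary pinnings can disconnect the list-coloring state space (as noted in the introduction). A secondary technical point is that the local detour must stay consistent when several neighbors of $e$ also lie in $D$; the reverse BFS order together with a careful choice of auxiliary color should handle this but requires explicit bookkeeping.
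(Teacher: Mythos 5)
Your plan diverges from the paper's approach in a way that matters, and it has a concrete gap in the canonical-path construction.

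The paper does \emph{not} use an optimal (max) coupling with a random disagreement subtree. It uses what it calls the \emph{flip coupling}: given $\sigma \sim \mu^{ra}$, find the maximal $(a,b)$-alternating path $\+E^*$ starting at $r$, and let $\tau$ be the coloring obtained by interchanging $a$ and $b$ on $\+E^*$. Under this coupling $\tau$ is a \emph{deterministic} function of $\sigma$ and the disagreement set $\sigma \oplus \tau$ is exactly a single alternating path, not a subtree. This choice is structural, not incidental: the canonical path is built to interchange $a$ and $b$ along $\+E^*$ (Stage-I recolors odd edges to non-$\{a,b\}$ colors, Stage-II flips even edges, Stage-III reverses Stage-I), and the congestion analysis recovers the initial $\sigma$ from an intermediate $\gamma$ precisely because the alternating path and the auxiliary ``freeing'' paths $\+E_i$ are uniquely determined in the right order. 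Under your BFS max coupling the disagreement set is a branching random subtree, $\tau$ is not determined by $\sigma$, and the reconstruction of which $(\sigma,\tau)$ pass through a given transition becomes far murkier; you would need a genuinely new encoding argument, not just ``$O(1)$ bits recording the local detour.''

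The concrete gap is the claim that $q = \Delta+2$ ``leaves at least two slack colors around any edge,'' used to justify detouring through a single auxiliary color. That is false for internal edges: an internal edge in $\^T^\star_\ell$ has $2\Delta - 2$ neighboring edges, and for $\Delta \ge 4$ we have $2\Delta - 2 > q - 1 = \Delta + 1$, so the neighbors can occupy every color other than the edge's own, leaving \emph{zero} alternative colors. Only leaf edges are guaranteed $q - \Delta = 2$ spare colors. Precisely because an internal odd edge $e_i$ may have no spare color, the paper must construct a descending path $\+E_i$ (possibly reaching the leaf layer) and rotate colors along it to free one up. Those $\+E_i$'s, not just the disagreement set, contribute to $\xi_\ell$, and bounding their probability of reaching depth $\ell$ (Lemma~\ref{lem:bad-event-prob}) is the technical heart of the proof. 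Your analysis of ``congestion at level $\ell$ comes only from $D$ reaching the leaves'' misses this contribution entirely, and the per-level contraction condition $\eta^2 d < 1$ that you flag as the main obstacle is never needed in the paper's argument, precisely because the flip coupling sidesteps it.
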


The high-level idea of the canonical path construction and the analysis of the associated congestion used in the proof of \Cref{lem:expectation-bound} is the following (see \Cref{sec:canonical-path} for more details).  We first define, for a pair of colors $a,b\in [q]$, the coupling $\+C_{ab}$ between a coloring $\sigma$ with the root edge having color $\sigma(r)=a$ and a coloring $\tau$ with $\tau(r)=b$.  To define this coupling, given $\sigma$ we consider the $(a,b)$-colored alternating path $\+E^*$ from the root edge $r$, and we define $\tau$ as the coloring obtained by flipping $\+E^*$, which means we interchange the colors $a$ and $b$ on $\+E^*$; this results in a valid coloring since $\+E^*$ is a maximal 2-colored component.

Now consider a pair of colorings $\sigma$ and $\tau$ which
only differ by a $(a,b)$-colored alternating path~$\+E^*$ from the root edge $r$. Let $\+E^*=(e_0,e_1,\dots,e_s)$ where $e_0=r$ and the edges are in order from the root towards the leaves.  In the first stage we aim to recolor the odd edges from $\sigma(e_i)=b$ to a color $c_i$ different than $a$ and $b$.  To do that we recolor a path $\+E_i$, which is defined starting from a sibling of $e_i$ towards the leaves.  By recoloring $\+E_i$ with particular colors we will free a color $c_i\neq a,b$ for edge $e_i$. However, we recolor the edges of $\+E_i$ and the odd edges of $\+E^*$ in the following manner.  Let $\+E$ denote the set of all edges recolored in this stage.  We order the edges of $\+E$ so that edges on lower levels (closer to the leaves) are recolored before edges on higher levels of the tree, and within a level the edges of the alternating path $\+E^*$ are last.  We then recolor the edges of $\+E$ in this order.  

In the second stage we recolor the even edges of $\+E^*$ from color $a$ to color $b$, which is available due to the first stage.  Finally in the last stage we recolor the edges of $\+E$ to their color in $\tau$, and we do these recolorings in reverse of the ordering used in the first stage.
The key to the congestion analysis is that we only need a non-trivial bound on the congestion for transitions which recolor leaf edges.  This is why in the first stage we first recolor all leaf edges before proceeding up the tree, and in the last stage we recolor the leaf edges last. 
When calculating the congestion on leaf edges, a specific transition on a leaf edge can still be used by exponentially large number of canonical paths. However, such large number of possibilities will be offset by the exponentially small probability of the path $\+E_i$ reaching the leaves.

\subsection{Proof overview for \Cref{thm:heat-bath-pair-edges-mixing}: Heat-bath edge dynamics for $q=\Delta+1$}
\label{sec:edge-overview}

In this section, we will prove \Cref{thm:heat-bath-pair-edges-mixing}.
Recall, approximate tensorization of variance implies optimal relaxation time for the Glauber dynamics.  Approximate tensorization can be generalized to a block factorization, which implies fast mixing for the more general block dynamics~\cite{CP21}.

\begin{definition} \label{def:block-factorization}
We say that the distribution $\mu$ over $\Omega\subseteq [q]^E$ satisfies {\em block factorization of variance} with constants $(C(B))_{B \subseteq E}$ if for all $f:\Omega \to \mathbb{R}_{\geq 0}$ it holds that:
\begin{align}
  \label{eqn:block-factorization}
  \Var{f} \leq \sum_{B \subseteq E} C(B) \mu(\Var[B]{f}).
\end{align}
For convenience we denote the support of $C$ as $\+B := \set{B \subseteq E \mid C(B) > 0}$.
\end{definition}

In particular, for $\+B = \set{\set{e} \mid e \in E}$, which is the set of all the singleton edges, then block factorization becomes approximate tensorization.

Consider the the heat-bath block dynamics $P_B$ defined by a collection of constants $(C(B))_{B \subseteq E}$ from \Cref{def:block-factorization}.
It updates a configuration $X_t$ to $X_{t+1}$ as follows:
\begin{enumerate}
\item pick a block $B \subseteq E$ with probability proportional to $C(B)$;
\item sample $X_{t+1} \sim \mu(\cdot \mid X_t(E\setminus B))$.
\end{enumerate}
Then \eqref{eqn:block-factorization} immediately implies an upper bound on the relaxation time of $P_B$ as $\Trelax(P_B) \leq \sum_{B\subseteq E} C(B)$.

We recall the definition of the heat-bath edge dynamics.
Let $\^T = (V, E)$ be a tree with maximum degree $\Delta$ and let
\begin{align} \label{def:P}
  \+P = \+P(\^T) := \set{\set{e} \mid e \in E} \cup \set{\set{e, e'} \mid \text{$e$, $e'$ are adjacent edges in $\^T$}}.
\end{align}
Also, let $q \geq \Delta + 1$ and let $\mu$ be the uniform distribution of $q$-edge-coloring on $\^T$.
Then, the heat-bath neighboring edge dynamics is exactly the heat-bath block dynamics defined by constants \[C(B) = \*1[B \in \+P], \text{ for } B \in \+P.\]

In order to prove $O_{\Delta,q}(n)$ relaxation time of the heat-bath edge dynamics, it is sufficient to prove the following tensorization bound of variance, for every $f:\Omega(\mu) \to \^R$, there is a constant $C = C(\Delta, q)$ such that:
\begin{align} \label{eq:def-edge-tensorization}
  \Var[\mu]{f} &\leq C \sum_{S \in \+P} \mu[\Var[S]{f}].
\end{align}
For convenience, if \eqref{eq:def-edge-tensorization} holds, then we say $\mu$ satisfies \emph{approximate edge-tensorization} with constant~$C$.
We will prove \eqref{eq:def-edge-tensorization} with a similar high-level plan as in the proof of \Cref{thm:main-mixing}.
All we need is a slightly generalized version of \Cref{thm:induction} which enables us to establish block factorization of variance with support $\+P$ on regular trees.
First, we generalize \eqref{eqn:root-tensorization} by allowing general blocks rather than singletons.
Let $\+B$ be a collection of edge sets defined as the following:
\begin{align} \label{eq:def-edge-block}
   \+B := \set{\set{e} \mid e \in \^T^\star_\ell} \cup \set{\set{e, r} \mid e \in L_1(\^T^\star_\ell)}. 
\end{align}
In words, $\+B$ contains all the singleton edges in $\^T^\star_\ell$ and the adjacent edge pairs which contain the root hanging edge $r$.
We say that {\em approximate {\bf root}-factorization of variance} for $\mu^\star_k$ holds with constants $C(B)$ if for all $f:\Omega \to \mathbb{R}_{\geq 0}$:
\begin{equation}
    \label{eqn:root-block-factorization}
\Var[\mu_k^\star]{\mu_{\^T_k^\star\setminus\{r\}}[f]} \leq \sum_{B\in \+B}C(B)\mu_k^\star(\Var[B]{f}).
\end{equation}


\begin{theorem} \label{thm:adjacentpair}
  Let $\Delta \geq 2$, $q \geq \Delta+1$ be two integers.
  Suppose there exists an integer $\ell\geq 2$ where the following holds:
  \begin{itemize}
  \item There exist constants $\alpha=(\alpha_0,\dots,\alpha_\ell)$ and $\beta$ and 
  approximate {\bf root}-factorization of variance for $\mu^\star_k$ holds with constants $C(\set{e})=\alpha_i$ where $e\in L_i(\^T_\ell^\star)$ and $C(B) = \beta$ where $\abs{B} = 2$;
  \item  There exists a constant $\gamma$ such that for $1 \leq j \leq \ell$, both $\mu_j$ and $\mu^{\star,1}_j$ satisfy approximate tensorization of variance with constant $\gamma$.
  \end{itemize}
  Then, for any $k \geq 1$, $\mu_k$ satisfies block factorizaiton of variance with constants 
  \begin{align*}
      C'(B) \leq 
     \begin{cases} 
      \beta \gamma \cdot \tp{\max_{0 \leq j \leq \ell} \alpha_j \cdot \sum_{i=0}^{\ftp{k/\ell}}\alpha_\ell^i + \max\set{1, \alpha_0}},   
       & \mbox{if } B\in\+P(\^T_k)
       \\
       0 & \mbox{if }B\not\in\+P(\^T_k).
       \end{cases}
  \end{align*}
\end{theorem}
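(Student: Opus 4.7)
The plan is to follow the inductive strategy of \Cref{thm:induction} essentially verbatim, with the single critical modification being the propagation of the pair blocks $\{e,r\}$ through the recursion. Recall that the proof of \Cref{thm:induction} proceeds by iteratively decomposing $\^T_k$ into a top part of depth $\ell$ together with a collection of $d$-ary subtrees of depth $k-\ell$ hanging off the deepest vertices; the root-tensorization hypothesis for $\mu^\star_\ell$ is applied to each such subtree (viewed with its parent edge as the hanging root edge $r$), producing variance contributions on singleton edges that then feed into the next layer of the induction. Our task is to check that if the hypothesis at each subtree also permits blocks of the form $\{e, r\}$ with $e \in L_1$ of that subtree and associated constant $\beta$, then the resulting variance bound at the top level only charges singleton edges of $\^T_k$ and adjacent edge pairs in $\+P(\^T_k)$.

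First, I would observe the geometric point underlying the whole argument: if a subtree $S$ of $\^T_k$ is attached to its parent by an edge $r$, and $e \in L_1(S)$ is an edge of $S$ incident to $r$'s lower endpoint, then $\{e,r\}$ is precisely a pair of adjacent edges of $\^T_k$, hence belongs to $\+P(\^T_k)$. Consequently, whenever the recursion invokes root-factorization on a subtree $S$ of $\^T_k$ with hanging root $r$, every block $B$ appearing with constant $\beta$ on the right-hand side is a genuine adjacent pair in the original tree. Pairs of this form coming from different subtrees are distinct blocks, so no block is ``overloaded'' across recursive calls. This is what allows the final block collection to sit inside $\+P(\^T_k)$.

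Second, I would run the induction on $k$ exactly as in \Cref{thm:induction}: peel off the top $\ell$ levels of $\^T_k$, apply root-factorization (now in the form \eqref{eqn:root-block-factorization}) to each depth-$(k-\ell)$ subtree with its parent edge as the hanging root, and then apply the approximate tensorization hypothesis (condition 2) to the residual top part $\mu_\ell$ or $\mu^{\star,1}_j$. The singleton contributions are bookkept by the same recursion $F(\cdot)$ appearing in \eqref{eq:F-recursion}, yielding exactly the constant
$\gamma\tp{\max_{j}\alpha_j \sum_{i=0}^{\lfloor k/\ell\rfloor}\alpha_\ell^i + \max\{1,\alpha_0\}}$
on each singleton $\{e\}$, while each newly-introduced pair block $\{e,r\}$ picks up only one additional factor of $\beta$ (since such a block is produced exactly once, at the recursive step at which its edges are split between parent and subtree) multiplied by the same geometric sum, which gives the stated $C'(B)$ for $B \in \+P(\^T_k)$ of size two. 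Blocks outside $\+P(\^T_k)$ never arise, so $C'(B)=0$ for such $B$.

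The main obstacle, and the point demanding the most care, is the bookkeeping: one must verify that (i) every pair block generated is indeed an adjacent pair of $\^T_k$, never a larger or non-adjacent set, and (ii) when the induction recurses on a subtree and again invokes root-factorization, any pair blocks newly created there are distinct from those created at previous scales, so the total multiplicative overhead on pair blocks is $\beta$ and not $\beta^{O(k/\ell)}$. Point (i) is immediate from the definition of $\+B$ in \eqref{eq:def-edge-block}, which only allows the root hanging edge to participate in a pair. Point (ii) is the reason the $\beta$ factor appears only once in $C'(B)$: any given adjacent pair $\{e,e'\}$ in $\+P(\^T_k)$ has a uniquely determined upper edge (the one closer to the global root), and so is introduced at the unique recursion level at which that upper edge is the hanging root of its subtree; thereafter it is passed through unchanged, accumulating only the same geometric factor that singletons accumulate. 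Once these two observations are in place, the remainder of the argument is structurally identical to the proof of \Cref{thm:induction} with $\+B$-valued block constants in place of scalar singleton constants.
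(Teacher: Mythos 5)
Your high-level strategy — rerun the induction of \Cref{thm:induction} while tracking the pair blocks, and use the geometric observation that $\{e,r\}$ with $e\in L_1$ of a sub-tree hanging by $r$ is a genuine adjacent pair in $\^T_k$ — is the right idea, and your observations about each such pair being introduced at a unique scale and hence picking up only a single factor of $\beta$ are both correct and match what the paper establishes (the paper makes this precise by defining a copy $\+A^r$ of $\+A$ for each hanging root $r$ at levels $t\equiv k\pmod\ell$). However, your description of the inductive decomposition is backwards in a way that matters. You propose to peel off the \emph{top} $\ell$ levels, apply root-factorization to each depth-$(k-\ell)$ sub-tree with its parent edge as hanging root, and handle the residual top part by the tensorization hypothesis. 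But condition (1) gives root-factorization only for the depth-$\ell$ tree $\^T^\star_\ell$, so it cannot be invoked on depth-$(k-\ell)$ sub-trees; this step, as written, has nothing to support it.

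The decomposition in the paper's proof of \Cref{thm:induction} (and of its generalization here) goes the other way: one sets $R = \bigcup_{i=k-\ell+1}^{k} L_i(\^T)$, the \emph{bottom} $\ell$ levels, and splits $\Var[\mu]{f}=\mu[\Var[R]{f}]+\Var[\mu]{\mu_R[f]}$. The term $\mu[\Var[R]{f}]$ factors over the bottom depth-$\ell$ sub-trees and is controlled by condition (2), i.e.\ by $\gamma$ (\Cref{claim-bound}). The term $\Var[\mu]{\mu_R[f]}$ is a function of the top $k-\ell$ levels and is controlled by the inductive hypothesis on $\mu_{k-\ell}$; this produces a boundary term $\mu\bigl[\Var[\^T^h]{\mu_{\widetilde{\^T}^h}[f]}\bigr]$ at each edge $h\in L_{k-\ell}(\^T)$, and it is precisely to \emph{these} depth-$\ell$ sub-trees $\^T^h$ (not to depth-$(k-\ell)$ ones, and not to the top part) that the root-factorization hypothesis is applied. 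Your ``structurally identical'' disclaimer is therefore not a safe handoff, because the structure you describe is not the structure of the proof you are deferring to. Once the decomposition is corrected, the rest of your reasoning — the singleton constants follow the same recursion $F$, the pair blocks are produced once each at the boundary step and receive the coefficient $\beta F_t(t)$, and all of this is dominated by the claimed bound — goes through as you expect.
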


We defer the proof of \Cref{thm:adjacentpair} to \Cref{sec:adjacentpair}.
The block factorization of variance directly implies the relaxation time of the heat-bath block dynamics.
We have the following corollary.

\begin{corollary} \label{cor:edge-dynamics}
Under the same conditions as \Cref{thm:adjacentpair}, if $\alpha_\ell<1$ then the relaxation time of the heat-bath $\+P(\^T_k)$-dynamics is $O_{\alpha,\beta,\gamma}(n)$ on $\^T_k$, the $\Delta$-regular tree of depth $k$. 
\end{corollary}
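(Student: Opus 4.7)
The plan is to directly invoke \Cref{thm:adjacentpair} and then convert the resulting block factorization of variance into a relaxation time bound via a one-line Dirichlet form comparison. First, I would observe that the assumption $\alpha_\ell < 1$ makes the geometric series $\sum_{i=0}^{\ftp{k/\ell}} \alpha_\ell^i$ uniformly bounded by $(1-\alpha_\ell)^{-1}$, independently of $k$. Plugging this into the conclusion of \Cref{thm:adjacentpair} gives a $k$-independent upper bound $C'(B) \leq C^\ast$ where $C^\ast = C^\ast(\alpha,\beta,\gamma,\ell)$ depends only on $\alpha,\beta,\gamma,\ell$. Hence $\mu_k$ satisfies the block factorization
\begin{align*}
    \Var[\mu_k]{f} \leq C^\ast \sum_{B \in \+P(\^T_k)} \mu_k[\Var[B]{f}].
\end{align*}

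Next, the heat-bath $\+P(\^T_k)$-dynamics picks a block $B \in \+P(\^T_k)$ (essentially) uniformly and resamples it from the conditional distribution, so its Dirichlet form is, up to constant factors, $\mathcal{E}(f,f) = |\+P(\^T_k)|^{-1} \sum_{B \in \+P(\^T_k)} \mu_k[\Var[B]{f}]$. Combining with the block factorization above yields $\Var[\mu_k]{f} \leq C^\ast \cdot |\+P(\^T_k)| \cdot \mathcal{E}(f,f)$, which is equivalent to the relaxation time bound $\Trelax \leq C^\ast \cdot |\+P(\^T_k)|$. Since $\+P(\^T_k)$ consists of $n-1$ singleton edges together with at most $\sum_v \binom{d_v}{2} \leq n\binom{\Delta}{2}$ pairs of adjacent edges, we have $|\+P(\^T_k)| = O(\Delta^2 n)$, and hence $\Trelax = O_{\alpha,\beta,\gamma}(n)$ as claimed (the $\Delta$ factor is absorbed since $\alpha,\beta,\gamma,\ell$ themselves depend on $\Delta$ and $q$).

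Since \Cref{thm:adjacentpair} does essentially all of the heavy lifting, there is no substantive obstacle; the one mildly delicate bookkeeping point is the exact convention for the heat-bath $\+P$-dynamics (selecting unordered blocks $B \in \+P(\^T_k)$ versus selecting ordered adjacent edge pairs $(e_1,e_2)$ as in the statement of \Cref{thm:heat-bath-pair-edges-mixing}), but the two conventions differ by at most a factor of $2$ in block-selection probability and therefore produce relaxation times of the same asymptotic order.
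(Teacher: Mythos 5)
Your proof is correct and takes the same route the paper (implicitly) takes: the paper does not give a detailed proof of \Cref{cor:edge-dynamics}, but instead cites the standard fact, stated just after \Cref{def:block-factorization}, that block factorization with constants $C(B)$ yields $\Trelax \leq \sum_B C(B)$ for the block dynamics with selection probabilities proportional to $C(B)$; your argument spells out the minor extra step of relaxing the constants from \Cref{thm:adjacentpair} to the uniform bound $C^\ast$ (valid since $\alpha_\ell<1$ makes the geometric series converge independently of $k$), so that the uniform heat-bath $\+P(\^T_k)$-dynamics also has $\Trelax \leq C^\ast\abs{\+P(\^T_k)} = O_{\alpha,\beta,\gamma}(n)$. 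The Dirichlet-form identity $\mathcal{E}_B(f,f) = \mu[\Var[B]{f}]$ for a heat-bath block update and the count $\abs{\+P(\^T_k)} = O(\Delta^2 n)$ are both accurate, and your remark about ordered versus unordered pairs affecting only a bounded multiplicative factor is also right.
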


We now give an overview on proving \Cref{thm:heat-bath-pair-edges-mixing} which gives an optimal relaxation time bound for the neighboring edge dynamics on trees of maximum degree when $q \geq \Delta + 1$.
Similar to the proof of \Cref{thm:main-mixing}, we will prove the $O_{\Delta,q}(n)$ relaxation time bound on $\^T_k$ first.
The crucial part is to verify the conditions in \Cref{thm:adjacentpair}.

\begin{lemma} \label{lem:edge-dynamics-cond}
    Let $\Delta \geq 2$, $q = \Delta + 1$.
    There exists $\ell = O(\Delta^2 \log^2 \Delta)$ such that the constants $\alpha = (\alpha_0, \cdots, \alpha_\ell)$, $\beta$, and $\gamma$ in \Cref{thm:adjacentpair} exist with $\alpha_\ell = \frac{1}{2}$, $\alpha_j = q^{\Delta^{O(\ell)}}$ for all $j < \ell$, and $\beta = \gamma = q^{\Delta^{O(\ell)}}$.
\end{lemma}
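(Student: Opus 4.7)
The plan is to prove this lemma in direct analogy with Lemma~\ref{lem:verify-condition}, exploiting the extra flexibility afforded by allowing blocks of size two that contain the hanging root edge $r$. Concretely, I would establish an edge-block analogue of Lemma~\ref{lem:expectation-bound}: there exist a coupling set $\+C = \{\+C_{ab}\}$ between $\mu^{ra}$ and $\mu^{rb}$ and a set $\Gamma$ of canonical paths whose steps are transitions of the heat-bath $\+B$-block dynamics, such that the congestion $\xi_\ell$ at the leaf level satisfies $\xi_\ell \leq \frac{1}{2(\ell+1)}$ (giving $\alpha_\ell = \frac{1}{2}$), while the congestion at other edge levels and on the size-two blocks containing the root is bounded by $q^{\Delta^{O(\ell)}}$. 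A minor generalization of Lemma~\ref{lem:con-var} allowing blocks of size two would then convert these congestion bounds into the constants $\alpha$ and $\beta$ required by the first hypothesis of Theorem~\ref{thm:adjacentpair}.

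The coupling $\+C_{ab}$ would be constructed exactly as in the Glauber overview of Section~\ref{sub:congestion-overview}: given $\sigma \sim \mu^{ra}$, expose the maximal $(a,b)$-alternating path $\+E^*$ from $r$ and define $\tau$ by flipping the colors $a,b$ along $\+E^*$. The difficulty in the $q = \Delta+1$ regime is that a vertex of full degree has no free color, so the single-edge Glauber route of first recoloring the odd edges of $\+E^*$ to a non-$a$, non-$b$ color may be obstructed at the root (this is precisely the bad example of Figure~\ref{fig:bad-example}). The proposed canonical path therefore mimics the three-stage structure of Section~\ref{sub:congestion-overview} on the \emph{interior} odd edges of $\+E^*$ (after recoloring auxiliary paths $\+E_i$ rooted at their siblings, lower levels first), but replaces the final move at the root by a single heat-bath update on the size-two block $\{r, e_1\} \in \+B$, where $e_1$ is the first edge of $\+E^*$. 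This swap is precisely the operation that the neighboring edge dynamics enables and that the Glauber dynamics lacks when $q = \Delta+1$.

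The congestion analysis then follows the same template as Lemma~\ref{lem:expectation-bound}. For transitions that recolor leaf edges, the exponentially many colorings compatible with a given transition are offset by the exponentially small probability that an auxiliary path $\+E_i$ reaches that leaf; choosing $\ell = O(\Delta^2 \log^2 \Delta)$ makes the geometric decay win over the $\Delta^{O(\ell)}$ combinatorial factors and yields $\xi_\ell \leq \frac{1}{2(\ell+1)}$. For all other blocks we rely on trivial counting bounds, giving $\alpha_j$ and $\beta$ of order $q^{\Delta^{O(\ell)}}$. For the second hypothesis, approximate tensorization of variance for $\mu_j$ and $\mu^{\star,1}_j$ with $1 \leq j \leq \ell$ follows from a brute-force argument: both state spaces have size at most $q^{\Delta^{O(\ell)}}$, the corresponding Glauber dynamics is ergodic when $q \geq \Delta+1$ on trees, and a generic spectral-gap bound on finite connected state spaces gives $\gamma = q^{\Delta^{O(\ell)}}$.

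The main obstacle, I expect, is verifying that the intermediate coloring produced at the end of the first stage makes the single $\{r, e_1\}$-block update well-defined and actually carry $\sigma(r) = a$ to $\tau(r) = b$ with the correct marginal weight in the heat-bath kernel; this requires a careful invariant about the colors incident to the root and to the deeper endpoint of $e_1$ throughout the recoloring of the $\+E_i$'s. A secondary subtlety is that, since no spare color is available, the auxiliary recolorings along the $\+E_i$ may themselves have to be threaded through Kempe-type chains rather than through arbitrary color changes, and bookkeeping these choices while keeping distinct canonical paths from colliding at any single transition is where most of the real work of the proof will concentrate.
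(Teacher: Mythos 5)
Your high-level structure is correct and matches the paper's proof overview: reuse the flip coupling, build canonical paths from mostly Glauber steps plus a $\{r,e_1\}$ pair update, establish a sharp congestion bound at the leaf level and crude $q^{\Delta^{O(\ell)}}$ bounds elsewhere (giving $\alpha_j$ for $j<\ell$ and $\beta$), and obtain $\gamma$ from a brute-force bound on the finite depth-$\leq\ell$ trees. However, the specific canonical-path construction you describe does not work, and your diagnosis of the obstruction is not quite the right one.

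You propose running the Glauber-style Stage-I on the interior odd edges of $\+E^*$ and then replacing the root move with the $\{r,e_1\}$ pair update, uniformly over all pairs $(\sigma,\tau)$. Both halves of this break. First, if $e_1$ is recolored in Stage-I as in Section~\ref{sec:canonical-path}, there is nothing left for the pair update to do; if $e_1$ is \emph{not} recolored in Stage-I, then after Stage-I the edge $e_2$ still carries color $a$, so the proposed move $\{r,e_1\}:(a,b)\to(b,a)$ is infeasible (and, at $q=\Delta+1$, $e_1$ typically has no third feasible color either, since each of its endpoints misses exactly one color and these need not coincide). Second, and more fundamentally, when $|\+E^*|$ is even and its last edge $e_s$ is odd and not a leaf, the auxiliary path $\+E_s$ cannot even be built in the Glauber manner: the unique free color at $v_{s+1}$ is $a$ (since $e_s$ has no $a$-child), so the first edge of $\+E_s$ would have to be $e_{s-1}$, which lies on $\+E^*$ and goes upward rather than into a disjoint sibling subtree. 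This, not a generic ``obstruction at the root,'' is the actual content of \Cref{fig:bad-example}. The paper (Section~\ref{sec:edge-dynamics-cond}) resolves both issues by a parity split on $|\+E^*|$: when $|\+E^*|$ is odd or equals $\ell+1$, the pure Glauber construction of Section~\ref{sec:canonical-path} already works and no pair update is used; when $|\+E^*|$ is even and $<\ell+1$, Stage-I instead builds the auxiliary paths $\+E_j$ on the \emph{positive even} edges $e_2,e_4,\ldots$, whose lower endpoint always sees both $a$ and $b$ so that the free color there is automatically outside $\{a,b\}$, and only then does Stage-II perform the $\{e_0,e_1\}$ pair swap (now unobstructed because $e_2$ no longer carries $a$) and directly recolor the remaining odd edges from $b$ to $a$. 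This parity-dependent choice of which class of edges receives the auxiliary paths is the missing idea in your proposal.
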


The proof of \Cref{lem:edge-dynamics-cond} is very similar to the proof of \Cref{lem:verify-condition} for the Glauber dynamics.
Following the high-level plan in \Cref{sub:congestion-overview} and \Cref{sec:canonical-path}, one can prove \Cref{lem:edge-dynamics-cond} by slightly modifying the construction of the canonical path.
We omit the detailed proof for \Cref{lem:edge-dynamics-cond}.
Instead, we will give an proof overview for \Cref{lem:edge-dynamics-cond} in \Cref{sec:edge-dynamics-cond}, where we will discuss why the edge dynamics can bypass the obstacles on Glauber dynamics in the canonical path analysis.

To obtain relaxation time bound for arbitrary trees, we also need a comparison argument for the heat-bath neighboring edge dynamics which is similar to \Cref{lem:monotonicity-glauber}.
\begin{lemma} \label{lem:monotonicity-edge-dynamics}
  Let $\^T = (V, E)$ be a tree with root $r$ and maximum degree $\Delta$.
  Let $\^T_H = (V_H, H \subseteq E)$ be a connected sub-tree of $\^T$ containing $r$
  Let $q \geq \Delta + 1$, and $\mu$, $\nu = \mu_H$ be the distribution of uniform $q$-edge-coloring on $\^T$ and $\^T_H$, respectively.
  If~$\mu$ satisfies approximate edge-tensorization of variance with constant $C$, then $\nu$ satisfies approximate edge-tensorization of variance with constant $C(q+1)^2$.
\end{lemma}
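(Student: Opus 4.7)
The plan is to imitate the brute-force comparison used for the Glauber version in \Cref{lem:monotonicity-glauber}. Given any $g:\Omega(\nu)\to\^R$, I lift it to $f:\Omega(\mu)\to\^R$ by $f(\sigma) := g(\sigma_H)$. Because $\nu$ is the marginal of $\mu$ on $H$, we have $\Var[\nu]{g} = \Var[\mu]{f}$, and the edge-tensorization hypothesis for $\mu$ yields
\[
\Var[\nu]{g} \leq C \sum_{B\in\+P(\^T)} \mu[\Var[B]{f}].
\]
The task then reduces to bounding each $\mu[\Var[B]{f}]$ by a constant multiple of $\nu[\Var[B']{g}]$ for a suitable $B'\in\+P(\^T_H)$, and then controlling combinatorial multiplicities.

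The blocks $B\in\+P(\^T)$ split into three kinds. When $B\cap H = \emptyset$, $f$ is constant in the $B$-coordinates, so $\Var[B]{f}\equiv 0$ and the term vanishes. When $B\subseteq H$, so that $B\in\+P(\^T_H)$, the conditional distribution $\mu_B^{\sigma_{E\setminus B}}$ is uniform on its support, which is a subset of the support of $\nu_B^{\sigma_{H\setminus B}}$, and the density ratio is at most $q^{|B|}\leq q^2$. Combining the standard variance-domination inequality
\[
\Var[\pi]{h} \leq \tp{\max_x\, \pi(x)/\pi'(x)}\,\Var[\pi']{h}
\]
with the identity $\nu = \mu_H$ then gives $\mu[\Var[B]{f}] \leq q^{|B|}\,\nu[\Var[B]{g}]$.

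The genuinely new case is a cross-boundary pair $B=\{e,e'\}$ with $e\in H$ and $e'\notin H$. Since $f$ depends on $B$ only through $\sigma_e$, $\Var[B]{f}$ reduces to the variance of $g$ under the marginal of $\mu_B^{\sigma_{E\setminus B}}$ on the $e$-coordinate. This marginal is \emph{not} uniform: each admissible color $c_e$ is weighted by the number of valid completions $c_{e'}$, a count lying in $\{1,\dots,q\}$ because $q\geq\Delta+1$. Thus it has density ratio at most $q$ against the uniform distribution on its support, and a second support-inclusion comparison relates that uniform distribution to $\nu_e^{\sigma_{H\setminus e}}$. Chaining the two steps yields $\mu[\Var[B]{f}] \leq q^2\,\nu[\Var[\{e\}]{g}]$, and each singleton $\{e\}\in\+P(\^T_H)$ absorbs the charges of at most $|N(e)\setminus H|\leq 2(\Delta-1)$ cross-boundary blocks.

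Summing over the three groups and reindexing by $B'\in\+P(\^T_H)$ yields edge-tensorization for $\nu$; a careful accounting, using $q\geq\Delta+1$ to absorb the cross-boundary multiplicities, produces the stated constant $(q+1)^2$. I expect the main obstacle to be the cross-boundary step: the induced marginal on $\sigma_e$ is intrinsically non-uniform (unlike every distribution appearing in the Glauber analogue of \Cref{lem:monotonicity-glauber}), so a clean support-ratio bound is insufficient and one must chain a non-uniformity bound with a support-inclusion bound. A secondary difficulty is the bookkeeping that folds cross-boundary contributions into singleton terms of $\+P(\^T_H)$ without inflating the constant beyond $(q+1)^2$.
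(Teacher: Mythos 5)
Your overall plan — lifting $g$ to $f(\sigma) := g(\sigma_H)$, invoking edge-tensorization for $\mu$, and comparing each $\mu[\Var[B]{f}]$ to $\nu$-terms by splitting into the cases $B\cap H=\emptyset$, $B\subseteq H$, and cross-boundary — matches the paper's proof (see \Cref{lem:local-var-paring-edges} in the appendix). The first two cases are handled correctly and give the same factors as the paper.

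The gap is the cross-boundary case $B = \{e, e'\}$ with $e\in H$, $e'\notin H$: the chained bound of $q^2$ per such block is too lossy. Each singleton $\{e\}\in\+P(\^T_H)$ absorbs up to $2(\Delta-1) = O(q)$ cross-boundary charges, so a factor $q^2$ per charge forces a total constant of order $q^3$, not $(q+1)^2$, and no amount of downstream bookkeeping can repair that. The correct per-charge factor is $q$, and no chaining is needed to see it. Let $\pi_e$ denote the marginal of $\mu^{\sigma_{E\setminus B}}_B$ on the $e$-coordinate; it is non-uniform as you observe, but it is still a probability measure, so $\pi_e(a)\leq 1$ for every color $a$. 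Meanwhile $\nu_e^{\sigma_{H\setminus e}}$ is uniform on at most $q$ colors, hence $\geq 1/q$ on its support, which contains $\mathrm{supp}(\pi_e)$. Therefore the direct density ratio satisfies $\pi_e(a)/\nu_e^{\sigma_{H\setminus e}}(a)\leq q$, and the variance-domination inequality gives $\mu[\Var[B]{f}]\leq q\,\nu[\Var[\{e\}]{g}]$. Routing through the intermediate uniform-on-support measure is exactly where the spurious factor of $q$ creeps in: the two $q$-bounds (non-uniformity, then support inclusion) cannot be simultaneously tight, since a completion count near $q$ forces the normalizer in $\pi_e$ to be at least $q$, which caps the first ratio well below $q$. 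The paper reaches the same factor $q$ by a different manipulation — expanding the conditional variance as a double sum, dropping one of the two probability weights, applying Bayes' rule to marginalize out $E\setminus H$, and re-inserting the $\nu$-weight at the end — but both routes yield $q$, and that is what brings the singleton contribution $q + 2(\Delta-1)q = O(q^2)$ down to the right order.
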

The proof of \Cref{lem:monotonicity-edge-dynamics} is given in \Cref{sec:monotonicity-edge-dynamics}.

\begin{proof}[Proof of \Cref{thm:heat-bath-pair-edges-mixing}]
    The proof follows from a similar argument as the proof of \Cref{thm:main-mixing}.

   Given a tree $\^T$ with maximum degree $\Delta$ and $q \geq \Delta + 1$.
Let $\nu$ be the distribution of uniform $q$-edge-coloring on $\^T$.
We consider the distribution of uniform $q$-edge-coloring $\mu$ on the $(q-1)$-regular tree $\^T_k$ for sufficiently large depth $k$ and degree $\Delta' = q-1$ such that $\^T$ is a subtree of $\^T_k$.
Since $q \geq \Delta + 1$, we have $\Delta' \geq \Delta$ and such $\^T_k$ must exist.

By \Cref{cor:edge-dynamics} and \Cref{lem:edge-dynamics-cond}, we know that when $q = \Delta' + 1$, the relaxation time of the edge dynamics on $\mu$ is $O_{\Delta',q}(n) = O_q(n)$.
This gives $O_{q}(1)$ approximate edge-tensorization of variance of $\mu$.
Then, after applying \Cref{lem:monotonicity-edge-dynamics}, we know that $\nu$ also has $O_{q}(1)$ edge-approximate tensorization of variance.
This proves the $O_{\Delta, q}(n)$ relaxation time for the edge-dynamics on $\nu$.
\end{proof}

\section{Approx. tensorization via root-tensorization: Proof of \Cref{thm:induction}}\label{sec:proof-induction}
In this section, we prove \Cref{thm:induction}; this is the main inductive proof of approximate tensorization of variance for an arbitrarily large $\Delta$-regular complete tree using the root-tensorization property for a small tree of height $\ell$.
Without loss of generality, we focus on $\mu_k$.
The approximate tensorization of variance for $\mu^{\star,1}_k$ follows from the same proof.
Recall the definition of $\^T_k, \^T_k^\star$ and $\mu_k, \mu_k^{\star,1}$ in \Cref{def:Tk-Tk-star,def:mu-k}.

Fix parameters $\ell$, $\*\alpha$, and $\gamma$ in the theorem.
Our proof is based on an induction on the depth~$k$ of the regular tree $\^T_k$.
For every $k \geq 1$, we will show that $\mu$ satisfies approximate tensorization of variance with constants $C(e) = F_k(t)$ for $1 \leq t \leq k$ and $e \in L_t(\^T_k)$.
In other words, let $\mu = \mu_k$, we will prove the following inequality for $f:\Omega(\mu) \to \^R$,
\begin{align} \label{eq:induction-hypothesis}
  \Var[\mu]{f} &\leq \sum_{t=1}^k F_k(t) \sum_{e \in L_t(\^T_k)} \mu[\Var[e]{f}].
\end{align}
The function $F_\cdot(\cdot)$ is defined as follows: for any $1\leq t\leq s \leq \ell$, $F_{s}(t) = \gamma$; for any $s > \ell$,
\begin{align} \label{eq:F-recursion}
    F_s(t) = \begin{cases}
    F_{s-\ell}(t) &\text{if } 1 \leq t \leq s-\ell - 1\\
    \alpha_0 F_{s-\ell}(s-\ell) &\text{if } t = s - \ell\\
    \alpha_{t-s+\ell}F_{s-\ell}(s-\ell) + \gamma &\text{if } t \geq s-\ell + 1.
    \end{cases}
\end{align}
Then, the proof of \Cref{thm:induction} is reduced to giving an upper bound for $F_\cdot(\cdot)$.
We defer that part to \Cref{sec:upper-bound-F}.

\subsection{Basic properties of variance}

We begin with some basic properties about variance that will be useful in the proof of \Cref{thm:induction}.  The following lemma is a basic decomposition property of variance (e.g. see~\cite{MSW04,CLV21,CP21,Cap23}).

\begin{lemma}
  Let $\pi$ be a distribution over $\Lambda\subset [q]^n$ and $f:\Omega \to \mathbb{R}$ be a function. For any subset $S \subseteq [n]$,
\begin{align} \label{eq:var-total-law}
  \Var[\pi]{f} = \pi[\Var[S]{f}] + \Var[\pi]{\pi_{S}[f]}.
\end{align}
Furthermore, if $\pi = \pi_{S_1}\times \pi_{S_2} \times \ldots \times \pi_{S_\ell}$ is a product distribution of $\ell$ marginal distributions, where $S_1,S_2,\ldots,S_\ell$ is a partition of $[n]$, then
\begin{align} \label{eq:var-sub-add}
  \Var[\pi]{f} \leq \sum_{i=1}^\ell \pi[\Var[S_i]{f}].
\end{align}
\end{lemma}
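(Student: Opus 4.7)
The plan is to establish the two statements separately: the first (law of total variance) by direct algebraic manipulation using the tower property of conditional expectation, and the second (subadditivity under product measures) by iterating the first.

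For the identity \eqref{eq:var-total-law}, I would expand both summands on the right-hand side using $\Var[\nu]{g} = \nu[g^2] - \nu[g]^2$. This gives
\[
\pi[\Var[S]{f}] = \pi\bigl[\pi_S[f^2]\bigr] - \pi\bigl[\pi_S[f]^2\bigr]
\quad\text{and}\quad
\Var[\pi]{\pi_S[f]} = \pi\bigl[\pi_S[f]^2\bigr] - \pi\bigl[\pi_S[f]\bigr]^2.
\]
The cross terms $\pm\,\pi[\pi_S[f]^2]$ cancel. Applying the tower property, which follows directly from the definition $\pi_S[g](\sigma) = \pi^{\sigma_{[n]\setminus S}}[g]$, gives $\pi[\pi_S[f^2]] = \pi[f^2]$ and $\pi[\pi_S[f]] = \pi[f]$. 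Substituting yields $\pi[f^2] - \pi[f]^2 = \Var[\pi]{f}$, as required.

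For the subadditivity \eqref{eq:var-sub-add} under a product distribution, I would induct on $\ell$. The base case $\ell = 1$ is immediate. For the inductive step, apply \eqref{eq:var-total-law} with $S = S_1$ to write
\[
\Var[\pi]{f} = \pi[\Var[S_1]{f}] + \Var[\pi]{\pi_{S_1}[f]}.
\]
Since $\pi$ is a product, $\pi_{S_1}[f]$ depends only on the coordinates in $S_2 \cup \dots \cup S_\ell$, and its variance under $\pi$ coincides with its variance under the product marginal $\pi_{S_2}\times\dots\times\pi_{S_\ell}$. Applying the inductive hypothesis to this restricted function produces a sum of terms of the form $\pi[\Var[S_i]{\pi_{S_1}[f]}]$ for $i\geq 2$, and by the product structure together with the conditional Jensen inequality $\Var[S_i]{\pi_{S_1}[f]} \leq \pi_{S_1}[\Var[S_i]{f}]$ (valid coordinatewise under a product), each such term is bounded by $\pi[\Var[S_i]{f}]$. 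Combining with the $i=1$ term from the outer decomposition yields the claimed inequality.

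The whole argument is routine; the only potential pitfall is keeping the paper's notational convention straight: $\pi_S[f]$ denotes the conditional \emph{expectation} of $f$ given the coordinates outside $S$, and the inner variances $\Var[S_i]{\cdot}$ are taken with respect to $\pi^{\cdot_{[n]\setminus S_i}}$, so under a product distribution they reduce to variances against $\pi_{S_i}$ alone, which is what makes the iteration clean.
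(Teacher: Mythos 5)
Your proof is correct. The paper states this lemma without giving a proof, citing \cite{MSW04,CLV21,CP21,Cap23} instead, so there is no internal argument to compare against; your two-part argument (expand both conditional-variance summands and cancel via the tower property; then induct on $\ell$ using \eqref{eq:var-total-law} with $S=S_1$ plus the conditional Jensen bound $\Var[S_i]{\pi_{S_1}[f]} \leq \pi_{S_1}[\Var[S_i]{f}]$) is exactly the standard derivation and is complete. One small remark: the Jensen step deserves a line of justification if written out in full — the point is that for $x\in\Lambda_{S_i}$ fixed, $\pi_{S_1}[f](\cdot,x) - \pi_{S_1\cup S_i}[f] = \E_{Y\sim\pi_{S_1}}\bigl[f(Y,\cdot,x) - \pi_{S_i}[f](Y,\cdot)\bigr]$, and squaring and applying Jensen to the inner expectation before averaging over $x$ gives the claimed comparison; the product structure is what lets the outer conditioning commute past $\pi_{S_1}$. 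Your convention check at the end is exactly the right thing to flag, since the paper's $\pi_S[\cdot]$ conditions on the complement of $S$, which is what makes the tower cancellation work.
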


The identity in~\eqref{eq:var-total-law} is called the law of total variance.
The inequality in~\eqref{eq:var-sub-add} states that approximate tensorization of variance holds for product distributions with constant $C=1$ which is the optimal constant.

In this paper, the distribution $\mu$ is over list edge colorings of a graph. Let $G=(V,E)$ be a graph. Let $\mu$ over $[q]^E$ be a distribution satisfying the following conditional independence property. For three disjoint $A,B,C \subseteq E$ such that the removal of $C$ disconnects $A$ and $B$ in the line graph of $G$, 
\begin{align}\label{eq:cond-ind}
 \forall \sigma \in \Omega(\mu_{C}), \quad \mu^\sigma_{A \cup B} = \mu^\sigma_A \times \mu^\sigma_B.
\end{align}

For distributions with the conditional independence property (e.g., the uniform distribution of (list) edge colorings), the following lemma holds.
\begin{lemma}[\cite{MSW04}] \label{lem:basic-var-fact}
  Let $G=(V,E)$ be a graph and $\mu$ be distribution over $[q]^E$ satisfying the conditional independence property in \eqref{eq:cond-ind}.
  For any $f: \Omega \to \^R$, any $S_1,S_2 \subseteq E$ such that $\partial S_1 \cap S_2 = \emptyset$,
  \begin{align} \label{eq:var-convex}
      \mu[\Var[S_1]{\mu_{S_2}[f]}] \leq \mu[\Var[S_1]{\mu_{S_1\cap S_2}[f]}],
  \end{align}
  where $\partial S_1 := \set{e \in E\setminus S_1 \mid \exists h \in S_1, e\cap h \neq \emptyset}$ denotes the exterior boundary of $S_1$.
\end{lemma}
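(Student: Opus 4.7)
I would begin by decomposing $E$ into the four disjoint pieces $A := S_1 \setminus S_2$, $B := S_2 \setminus S_1$, $C := S_1 \cap S_2$, and $R := E \setminus (S_1 \cup S_2)$. The hypothesis $\partial S_1 \cap S_2 = \emptyset$ says that no edge of $B$ shares a vertex with any edge of $S_1 = A \cup C$, which is exactly the statement that $R$ separates $A \cup C$ from $B$ in the line graph of $G$. Applying the conditional independence assumption \eqref{eq:cond-ind} to the partition $(A \cup C,\,B,\,R)$ therefore yields the crucial strengthening: conditionally on $X_R$, the random variables $X_{A \cup C}$ and $X_B$ are independent, and marginalizing out $X_C$ gives in particular $X_A \perp X_B \mid X_R$.

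Given this strengthened CI, the next step is to prove the identity
\[
  \mu_{S_2}[f](\sigma) \;=\; \E[X_B \sim \mu_B^{\sigma_R}]{\,\mu_C[f](\sigma_A, X_B, \sigma_R)\,},
\]
where $\mu_{S_2}[f]$ depends on $\sigma$ only through $(\sigma_A, \sigma_R)$ and $\mu_C[f]$ only through $(\sigma_A, \sigma_B, \sigma_R)$. The argument is a short tower computation: writing $\mu_{S_2}[f](\sigma) = \oE[f \mid \sigma_A, \sigma_R]$ and splitting the inner expectation over $(X_B, X_C)$, the strengthened CI factors $P(X_B, X_C \mid \sigma_A, \sigma_R) = P(X_B \mid \sigma_R)\,P(X_C \mid \sigma_A, \sigma_R)$; the inner $X_C$-average rebuilds $\mu_C[f]$, while the outer $X_B$-average is taken against a law that does not depend on $\sigma_A$.

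With the identity in hand, the conclusion is a single Jensen step. Fix $\sigma$ and pass to $\mu^{\sigma_{E \setminus S_1}}$. Since $\mu_{S_2}[f]$ does not depend on $X_C$, and since, again by CI, the marginal of $X_A$ under $\mu^{\sigma_{E \setminus S_1}}$ coincides with its marginal under $\mu^{\sigma_R}$, one has $\Var[S_1]{\mu_{S_2}[f]}(\sigma) = \Var[X_A \sim \mu_A^{\sigma_R}]{\mu_{S_2}[f](\sigma)}$. Convexity of the variance functional applied to the identity above then yields
\[
  \Var[S_1]{\mu_{S_2}[f]}(\sigma) \;\leq\; \E[X_B \sim \mu_B^{\sigma_R}]{\Var[X_A \sim \mu_A^{\sigma_R}]{\mu_C[f](X_A, X_B, \sigma_R)}}.
\]
Averaging over $\sigma \sim \mu$ and relabeling the dummy $X_B$ as $\sigma_B$ (it has the same conditional law given $\sigma_R$) recovers exactly $\mu[\Var[S_1]{\mu_C[f]}]$ on the right-hand side.

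The only substantive step is the upgrade from the generic CI \eqref{eq:cond-ind} (which only asserts independence conditional on the separator $C \cup R$) to independence conditional on $R$ alone. This upgrade is precisely what the hypothesis $\partial S_1 \cap S_2 = \emptyset$ encodes; everything downstream is the tower property plus one application of Jensen's inequality.
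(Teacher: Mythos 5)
Your proof is correct and uses the same two underlying ingredients as the paper — the conditional independence property~\eqref{eq:cond-ind} and a Jensen-type inequality for variance — but the route is genuinely different in organization. The paper (in Appendix~\ref{app:proof}) first proves a trio of abstract commutation identities for the conditional-expectation operators $\mu_S$ and $\mu_T$ (if $\dist(S,T)\geq 2$ then $\mu_S\mu_T = \mu_T\mu_S = \mu_{S\cup T}$; if $T\subseteq S$ then $\mu_S\mu_T = \mu_T\mu_S = \mu_S$; and then the combined exchange identity $\mu_S\mu_T = \mu_T\mu_S$ whenever $\partial S\cap T=\emptyset$), and then runs a chain of algebraic rewrites on $\mu[(\mu_T f - \mu_S\mu_T f)^2]$ ending with a single application of Jensen via $\mu[(\mu_T g)^2]\leq\mu[g^2]$. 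You instead decompose $E$ into $A=S_1\setminus S_2$, $B=S_2\setminus S_1$, $C=S_1\cap S_2$, $R=E\setminus(S_1\cup S_2)$, observe that $\partial S_1\cap S_2=\emptyset$ makes $R$ a separator of $A\cup C$ from $B$ in the line graph (so~\eqref{eq:cond-ind} directly yields $X_{A\cup C}\perp X_B\mid X_R$, hence $X_A\perp X_B\mid X_R$), derive the tower identity $\mu_{S_2}[f](\sigma_A,\sigma_R)=\E[X_B\sim\mu_B^{\sigma_R}]{\mu_C[f](\sigma_A,X_B,\sigma_R)}$, and then finish by convexity of $\Var[X_A\sim\mu_A^{\sigma_R}]{\cdot}$. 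Your framing is more probabilistic and explicit, which makes it clearer exactly where the hypothesis $\partial S_1\cap S_2=\emptyset$ is used (to guarantee $R$ is a valid separator), at the cost of tracking four blocks rather than manipulating operator identities; the paper's version is more compact once the commutation lemmas are in place and is the form most easily reused elsewhere in the paper. One small quibble: you call the passage from~\eqref{eq:cond-ind} to $X_{A\cup C}\perp X_B\mid X_R$ an ``upgrade,'' but it is really a direct instantiation of~\eqref{eq:cond-ind} with the separator taken to be $R$ — no separate argument is needed once you have verified $R$ separates. Also, in the displayed line $\Var[S_1]{\mu_{S_2}[f]}(\sigma)=\Var[X_A\sim\mu_A^{\sigma_R}]{\mu_{S_2}[f](\sigma)}$ the right-hand side should read $\mu_{S_2}[f](X_A,\sigma_R)$ with $X_A$ the varying coordinate; as written the argument is a constant. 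Neither point affects correctness.
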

The lemma in \cite{MSW04} is stated for distributions defined over $[q]^V$ instead of $[q]^E$ and the proof is omitted there, and hence we provide a proof of  \Cref{lem:basic-var-fact} in \Cref{app:proof} for completeness. 

\subsection{Establishing approximate tensorization~\eqref{eq:induction-hypothesis}}

 Under the condition of \Cref{thm:induction}, for every $k \geq 1$, we do the induction as follows.
 \paragraph{Base case ($k \leq \ell$):} 
  When $k \leq \ell$, 
  \eqref{eq:induction-hypothesis} follows from the second assumption of \Cref{thm:induction} with $F_k(t) = \gamma$ .

  \paragraph{Inductive case ($k > \ell$):}
  Suppose~\eqref{eq:induction-hypothesis} holds for all smaller $k' < k$.
  We prove that it also holds for~$k$.
  For convenience, let $\^T = \^T_k$.
  Let $R = \cup_{i=k-\ell+1}^k L_i(\^T)$ be the edge set of the bottom~$\ell$ levels of~$\^T$.
  According to the law of total variance \eqref{eq:var-total-law}, it holds that
  \begin{align} \label{eq:law-total-var}
    \Var[\mu]{f} &= \mu[\Var[R]{f}] + \Var[\mu]{\mu_R[f]}.
  \end{align}

 We give the following bound for the first term $\mu[\Var[R]{f}]$ in the RHS of~\eqref{eq:law-total-var}.

 \begin{claim}\label{claim-bound}
 Assuming \Cref{item:broute-force} of \Cref{thm:induction}, we have
  \begin{align} \label{eq:mu[Var_R[f]]}
    \mu[\Var[R]{f}] &\leq \gamma \sum_{i=1}^\ell \sum_{h\in L_{k - \ell + i}(\^T)} \mu[\Var[h]{f}].
  \end{align}
 \end{claim}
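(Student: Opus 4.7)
The plan is to exploit the Markov (conditional independence) structure of uniform list edge colorings on the tree. After conditioning on the configuration $\sigma_{E \setminus R}$ on the top $k-\ell$ levels, the conditional distribution on $R$ factorizes as a product over the $\ell$-deep subtrees hanging from the lower endpoints of the level-$(k-\ell)$ edges; then I apply approximate tensorization from assumption \Cref{item:broute-force} inside each subtree.

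Concretely, for each $e \in L_{k-\ell}(\^T)$, let $T_e$ denote the subtree rooted at the lower endpoint of $e$, consisting of $e$ itself together with the $\ell$-deep $(d=\Delta-1)$-ary subtree beneath it; write $R_e := E(T_e) \cap R = E(T_e) \setminus \{e\}$. Observe that $R$ is the disjoint union $R = \bigsqcup_{e \in L_{k-\ell}(\^T)} R_e$, and that for each such $e$, the edges of $R_e$ are separated from all other edges of $R$ (in the line graph of $\^T$) by $e$ itself, which lies in $E \setminus R$. By the conditional independence property \eqref{eq:cond-ind} applied repeatedly, the conditional distribution $\mu^{\sigma_{E \setminus R}}$ is a product over $e \in L_{k-\ell}(\^T)$ of the marginals on the blocks $R_e$. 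Moreover, for each fixed $\sigma$, the marginal on $R_e$ depends on $\sigma_{E \setminus R}$ only through $\sigma(e)$, and is exactly the restriction of $\mu^{\star, \sigma(e)}_\ell$ to the non-root edges of $\^T^\star_\ell$; by symmetry of the list coloring distribution under the permutation of colors in $[q]$, this distribution has the same law as $\mu^{\star,1}_\ell$ restricted to the non-root edges.

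Given this product structure, I apply the subadditivity of variance \eqref{eq:var-sub-add} to obtain, pointwise in $\sigma$,
\begin{align*}
    \Var[R]{f}(\sigma) \;\leq\; \sum_{e \in L_{k-\ell}(\^T)} \mu^{\sigma_{E \setminus R}}\bigl[\Var[R_e]{f}\bigr].
\end{align*}
Next, I apply approximate tensorization of variance for $\mu^{\star, 1}_\ell$ with constant $\gamma$ (from \Cref{item:broute-force}) inside each subtree $T_e$: because the marginal on $R_e$ conditional on $\sigma_{E\setminus R}$ coincides (up to relabeling colors) with $\mu^{\star,1}_\ell$ on the non-root edges, I get
\begin{align*}
    \mu^{\sigma_{E \setminus R}}\bigl[\Var[R_e]{f}\bigr] \;\leq\; \gamma \sum_{h \in R_e} \mu^{\sigma_{E \setminus R}}\bigl[\Var[h]{f}\bigr].
\end{align*}

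Finally, I take expectation over $\sigma \sim \mu$, use the tower property $\mu\bigl[\mu^{\sigma_{E\setminus R}}[\Var[h]{f}]\bigr] = \mu[\Var[h]{f}]$, sum over $e \in L_{k-\ell}(\^T)$, and regroup the resulting sum $\sum_e \sum_{h \in R_e}$ as $\sum_{h \in R} = \sum_{i=1}^\ell \sum_{h \in L_{k-\ell+i}(\^T)}$, which yields \eqref{eq:mu[Var_R[f]]}. There is no real obstacle: the only thing to be careful about is verifying that the subtree $T_e$ with the color of $e$ pinned is isomorphic (as a list-coloring instance) to $\^T^\star_\ell$ with hanging root edge $r$ pinned, so that the assumed tensorization for $\mu^{\star,1}_\ell$ applies after using color symmetry.
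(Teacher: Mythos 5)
Your proof is correct and follows essentially the same strategy as the paper's: condition on the top $k-\ell$ levels so that the conditional law on $R$ factorizes over the $\ell$-deep subtrees hanging from the level-$(k-\ell)$ edges, then invoke \eqref{eq:var-sub-add} and the $\gamma$-tensorization of $\mu^{\star,1}_\ell$ (via color symmetry) inside each subtree, and regroup. The only cosmetic difference is that you work with the blocks $R_e = E(T_e)\setminus\{e\}$ keeping the anchoring edge $e$ pinned throughout (so $\mu^{\star,1}_\ell$ applies directly), whereas the paper phrases the decomposition in terms of $\^T^{(j,\star)}$ including the edge $e$; the two bookkeepings are interchangeable here since $e$ is determined by the conditioning.
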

We now prove the theorem assuming the above claim. 
The proof of \Cref{claim-bound} will be given later.

  Next, we bound the second term $\Var[\mu]{\mu_R[f]}$ in the RHS of~\eqref{eq:law-total-var} using the inductive hypothesis.
  Note that for any $\sigma \in \Omega(\mu)$, the value of the function $\mu_R[f](\sigma)$ depends only on $\sigma_{\^T \setminus R}$. 
  By viewing $\mu_R[f]$ as a function $f':\Omega(\mu_{\^T \setminus R}) \to \^R$, we have
  $\Var[\mu]{\mu_R[f]} = \Var[\mu_{\^T - R}]{f'} = \Var[\mu_{{k-\ell}}]{f'}$, where the last equation uses the fact that the marginal distribution $\mu_{\^T \setminus R}$ is the same as $\mu_{{k-\ell}}$, and $\mu_{{k-\ell}}$ is the uniform distribution over proper $q$-edge-coloring in the tree $\^T_{k-\ell}$.
  For convenience, we use $\^T^h$ to denote the subtree with the hanging root edge $h$ and use $\widetilde{\^T}^h := \^T^h \setminus h$ to denote the subtree beneath the edge $h$.
  By the inductive hypothesis on $\mu_{k-\ell} = \mu_{\^T \setminus R}$, we have
  \begin{align} 
    \Var[\mu]{\mu_R[f]}
    \nonumber &\leq \sum_{t=1}^{k-\ell} F_{k-\ell}(t) \sum_{h \in L_t(\^T)} \mu_{\^T \setminus R}[\Var[h]{f'}]\\
   \label{middle-star}
   &= \sum_{t=1}^{k-\ell-1} F_{k-\ell}(t) \sum_{h \in L_t(\^T)} \mu[\Var[h]{\mu_R[f]}] 
                + F_{k-\ell}(k-\ell) \sum_{h \in L_{k-\ell}(\^T)} \mu\left[\Var[\^T^h]{\mu_{R}[f]}\right] \\
    \label{eq:ind-hypo}
              &\leq \sum_{t=1}^{k-\ell-1} F_{k-\ell}(t) \sum_{h \in L_t(\^T)} \mu[\Var[h]{f}] 
                + F_{k-\ell}(k-\ell) \sum_{h \in L_{k-\ell}(\^T)} \mu\left[\Var[\^T^h]{\mu_{\widetilde{\^T}^h}[f]}\right],
  \end{align}

where \eqref{eq:ind-hypo} holds by \eqref{eq:var-convex}; and \eqref{middle-star} holds because of the following:
  \begin{itemize}
      \item for all $h \in L_t(\^T)$ with $t < k - \ell$, all the neighboring edges of $h$ are in $\^T \setminus R$, i.e. $\partial h \subseteq \^T \setminus R$, by conditional independence, $\mu_{\^T \setminus R}[\Var[h]{f'}] =  \mu[\Var[h]{\mu_R[f]}]$;
      \item for all $h \in L_{k-\ell}(\^T)$, $\mu_{\^T \setminus R}[\Var[h]{f'}] = \mu\left[\Var[\^T^h]{\mu_{R}[f]}\right]$ because for any $\sigma \in \Omega(\mu)$, the value of $\mu_R[f](\sigma)$ is independent from $\sigma_{\^T_h \setminus h}$. 
  \end{itemize}
  Since $\mu\left[\Var[\^T^h]{\mu_{\widetilde{\^T}^h}[f]}\right]=\sum_{\sigma \in \Omega(\mu_{\^T  \setminus \widetilde{\^T}^h }) } \mu_{\^T  \setminus \widetilde{\^T}^h }(\sigma)\Var[\mu^\sigma]{ \mu_{\widetilde{\^T}^h} [f]}$. Fix one $\sigma$. One can view $f$ as a function from $\Omega(\mu^\sigma_{\widetilde{\^T}^h})$ to $\^R$. By applying \Cref{item:root-tensor} of \Cref{thm:induction} to each $\sigma$ individually and rearranging, it holds that
  \begin{align} \label{eq:route-decay}
    \sum_{h \in L_{k-\ell}(\^T)} \mu\left[\Var[\^T^h]{\mu_{\widetilde{\^T}^h}[f]}\right]
    &\leq \sum_{i=0}^\ell \alpha_i \sum_{g \in L_{k-\ell + i}(\^T)} \mu[\Var[g]{f}].
  \end{align}
  Combining \eqref{eq:law-total-var}, \eqref{eq:mu[Var_R[f]]}, \eqref{eq:ind-hypo}, and \eqref{eq:route-decay}, it holds that
  \begin{align*}
    \Var[\mu]{f}
    &\leq \sum_{t=1}^{k-\ell-1} F_{k-\ell}(t) \sum_{h \in L_t(\^T)} \mu[\Var[h]{f}] \\
    &\quad + F_{k-\ell}(k-\ell) \cdot \alpha_0 \sum_{h \in L_{k-\ell}(\^T)} \mu[\Var[h]{f}] \\
    &\quad + \sum_{i=1}^\ell (\alpha_i F_{k-\ell}(k-\ell) + \gamma) \sum_{h\in L_{k-\ell+i}(\^T)} \mu[\Var[h]{f}].
  \end{align*}
  The above inequality immediately gives the recurrence of $F$.

\begin{proof}[Proof of \Cref{claim-bound}]
  Note that $R$ is a set of disjoint trees 
  $\^T^{(j)}$ with depth $\ell$ for $j \in [z]$. 
 We use $\^T^{(j,\star)}$ to denote the tree obtained by combining  $\^T^{(j)}$ with the edge connecting the root of $\^T^{(j)}$ to its parent in $\^T$.
  Conditional on any coloring $\sigma \in \Omega(\mu_{E \setminus R})$, $\mu^\sigma$ is a product distribution over marginal distributions of all $\^T^{(j,\star)}$'s and $H = E \setminus (\cup_{i=1}^z \^T^{(j,\star)} )$. We have
  \begin{align}
  \nonumber
    \mu[\Var[R]{f}] 
    &= \sum_{\sigma \in \Omega(\mu_{E \setminus R}) }\mu_{E \setminus R}(\sigma) \Var[\mu^\sigma]{f} \\
    \label{eq:Var-R-1}
    &\leq   \sum_{\sigma \in \Omega(\mu_{E \setminus R}) }\mu_{E \setminus R}(\sigma) \sum_{j=1}^z \mu^\sigma[ \Var[\^T^{(j,\star)}]{f} ]= \sum_{j=1}^z \mu[ \Var[\^T^{(j,\star)}]{f} ],
  \end{align}
  where the inequality \eqref{eq:Var-R-1} 
  holds because of~\eqref{eq:var-sub-add} and $\mu^\sigma[ \Var[H]{f} ]=0$.
  Note that conditioning on any feasible coloring $\sigma$ of $E \setminus \^T^{(j)}$, $\mu^\sigma_{T^{(j,\star)}}$  is the same as the distribution $\mu^{\star,1}_\ell$ in~\Cref{item:broute-force} of \Cref{thm:induction} by the symmetry of colors. We can open the expectation in $\mu[\Var[\^T^{(j,\star)}]{f}]$ then apply \Cref{item:broute-force} of \Cref{thm:induction} to obtain
  \begin{align} \label{eq:Var-R-2}
    \mu[\Var[\^T^{(j,\star)}]{f}] & = \sum_{\sigma \in \Omega(\mu_{E \setminus \^T^{(j,\star) }})} \mu_{E \setminus \^T^{(j,\star) }}(\sigma)\Var[\mu^\sigma]{f} \notag\\
    &\leq \sum_{\sigma \in \Omega(\mu_{E \setminus \^T^{(j,\star) }})} \mu_{E \setminus \^T^{(j,\star) }}(\sigma)  \gamma \sum_{i=1}^\ell  \sum_{h \in L_i(\^T_\ell^\star)} \mu^\sigma[\Var[h]{f}] \notag\\
    &\leq \gamma \sum_{i=1}^\ell  \sum_{h \in L_i(\^T^{(j,\star)})} \mu[\Var[h]{f}].     
  \end{align}
  By the definition of $\^T^{(j)}$, we know that $h \in L_i(\^T^{(j)})$ means $h \in L_{k-\ell+i}(\^T)$.
  So, \Cref{claim-bound} is proved by combining \eqref{eq:Var-R-1} and \eqref{eq:Var-R-2}.
\end{proof}

\subsection{Upper bound for the recursion $F$}
\label{sec:upper-bound-F}
In this section, we will give an upper bound for the recursion $F_\cdot(\cdot)$ defined in \eqref{eq:F-recursion} and then finish the proof of \Cref{thm:induction} for $\mu_k$.
The proof of \Cref{thm:induction} for $\mu^{\star,1}_k$ follows from the same proof.

   Let $S_k := \set{a \in \^Z \mid a \equiv k (\-{mod}\; \ell)}$ be a set of integer. By resolving the recursion in \eqref{eq:F-recursion}, we claim that $F_k:[k] \to \^R$ has the following upper bound:
   \begin{align} \label{eq:Fk-explicit}
     \forall 1 \leq t \leq k, \quad \widehat{F}_k(t) := \gamma \cdot \tp{ \tp{\sum_{i=0}^{\abs{[t-1] \cap S_k}} \alpha_\ell^{i}} \cdot \alpha_{t - \max([t-1]\cap S_k)} + 1 }\cdot \alpha_0^{\*1[t \in S_k \text{ and } t\neq k]}.
   \end{align}

   \begin{proof}[Proof of \Cref{thm:induction}]
     The upper bound stated in \Cref{thm:induction} follows directly from \eqref{eq:Fk-explicit}.
   \end{proof}
   
   Now, we only left to prove \eqref{eq:Fk-explicit}.
   We will finish the proof by induction.
   For the base case, that $1 \leq t \leq k \leq \ell$, it holds that
   $F_k(t) = \gamma \leq \gamma (1 + \alpha_t) = \widehat{F}_k(t)$.
   For the inductive case where $k \geq \ell$, suppose for any $s < k$ and $t \leq s$, it holds that $F_s(t) \leq \widehat{F}_s(t)$.
   By definition, $S_k = S_{k-\ell}$.
   This implies $F_k(t) = F_{k-\ell}(t) \leq \widehat{F}_{k-\ell}(t) = \widehat{F}_{k}(t)$ for $t \leq k - \ell - 1$ and $F_k(k-\ell) = F_{k-\ell}(k-\ell)\cdot \alpha_0 \leq \widehat{F}_{k-\ell}(k-\ell) \cdot \alpha_0 = \widehat{F}_{k}(k-\ell)$ by definition.
   For $t \geq k - \ell + 1$, let $j := t - (k - \ell)$, we have
   \begin{align*}
       F_k(t) &= \alpha_j F_{k-\ell}(k-\ell) + \gamma 
       \leq \alpha_j \widehat{F}_{k-\ell}(k-\ell) + \gamma \\
       &= \gamma\tp{\alpha_j  \tp{ \tp{\sum_{i=0}^{\abs{[k-\ell-1] \cap S_k}} \alpha_\ell^{i}} \cdot \alpha_\ell + 1 } + 1 } \\
        &= \gamma\tp{\alpha_j  \tp{\sum_{i=0}^{\abs{[k-\ell] \cap S_k}} \alpha_\ell^{i}} + 1 }
        = \widehat{F}_k(t). \qedhere
   \end{align*}
   This finishes the proof of \eqref{eq:Fk-explicit}.

\section{Coupling and canonical paths} \label{sec:canonical-path}
In this section, we give a proof for \Cref{lem:expectation-bound}, which shows that for the $\Delta$-regular complete tree of height $\ell=O(\Delta^2\log^2{\Delta})$ there is a set of couplings $\mathcal{C}$ and a set of canonical paths $\Gamma$ with good congestion on the leaf edges. This lemma then completes the proof of \Cref{lem:verify-condition} as presented in \Cref{sub:congestion-overview}.

In \Cref{sec:const}, we will describe the construction of canonical paths. 
Then, we analyze the congestion for these canonical paths and finish the proof of \Cref{lem:expectation-bound} in \Cref{sec:congestion} by bounding the number of canonical paths that may go through a specific coloring $\gamma$ via leaf edges (see \Cref{lem:congestion}); and the probability of such a coloring $\gamma$ appearing (see \Cref{lem:bad-event-prob}). 
Then, we finish the proofs of  \Cref{lem:congestion,lem:bad-event-prob} in \Cref{sec:congestion,sec:bad-event-prob}, respectively.

\subsection{Construction}\label{sec:const}
Fix $a,b \in [q-d]$ with $a \neq b$, we construct a coupling $\+C_{ab}$ between $\mu^{ra}$ and $\mu^{rb}$.
Consider a random sample $\sigma \sim \Omega(\mu^{ra})$. The root hanging edge $r$ must receive the color $\sigma_r = a$. Starting from $r$, we can find the maximal $(a,b)$-alternating path $e_0,e_1,e_2,\ldots,e_s$, which we denote as $\+E^*$.  Note that $\+E^*=(e_0,e_1,e_2,\ldots,e_s)$ has the following properties: $e_0 = r$; $\sigma(e_i) = a$ for even $i$ and $\sigma(e_j) = b$ for odd $j$. We can ``flip'' the colors $a$ and $b$ in this alternating path $\+E^*$ to obtain a new coloring $\tau$ (by flip we mean interchange the colors $a$ and $b$). We denote this flip operation by $\tau:= \sigma \oslash_r b$.
This notation means, given a coloring $\sigma$, an edge $r$, and a color $b$, we find the maximal $(\sigma(r),b)$-alternating path containing $r$, and flip the colors $(\sigma(r),b)$ in the maximal alternating path to obtain the new coloring $\tau$.
The following observation is easy to verify.
\begin{observation}
For $\sigma \sim \mu^{ra}$, let $\tau=\sigma \oslash_r b$.  Then, 
$\tau\sim \mu^{rb}$ and $\tau$ is uniquely determined by $\sigma$ and $b$. 
\end{observation}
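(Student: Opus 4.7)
The plan is to argue that the flip map $\sigma \mapsto \sigma \oslash_r b$ is a well-defined involution that restricts to a bijection between the supports $\Omega(\mu^{ra})$ and $\Omega(\mu^{rb})$. Since both distributions are uniform on finite sets, any bijection automatically pushes one to the other, which immediately gives $\tau \sim \mu^{rb}$. Uniqueness of $\tau$ given $\sigma$ and $b$ will be a byproduct of the well-definedness check.

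First I would show that the maximal $(a,b)$-alternating path $\+E^* = (e_0, e_1, \ldots, e_s)$ with $e_0 = r$ is uniquely determined by $\sigma$ and $b$. The observation to invoke is that in any proper edge coloring, each vertex has at most one incident edge of color $a$ and at most one of color $b$. Consequently, the subgraph of $\^T^\star$ spanned by edges colored $a$ or $b$ has maximum degree $2$, so each of its connected components is a path or a cycle; since $\^T^\star$ is a tree no cycles occur, and the component containing $r$ is a unique path, which is exactly $\+E^*$. Uniqueness of $\tau$ follows at once.

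Next I would verify that $\tau$ is a proper $\+L$-list coloring. Internal to $\+E^*$, properness is automatic because we merely permuted the two colors $a$ and $b$ along a path that alternates between them. The subtle step is checking the boundary: for any edge $e_i \in \+E^*$ with $\sigma(e_i) = a$ (so $\tau(e_i) = b$), a neighbor $f \notin \+E^*$ could only violate properness in $\tau$ if $\tau(f) = b$, i.e., $\sigma(f) = b$; but then $f$ would itself lie in the $(a,b)$-component of $r$ and hence already belong to $\+E^*$, a contradiction. The symmetric argument handles the case $\sigma(e_i) = b$. The list constraint $\tau(r) = b \in [q-d]$ holds by assumption on $b$, and every other edge has list $[q]$, so the flipped colors remain admissible. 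This is the main step and, while standard Kempe-chain reasoning, is the only place where care is really needed.

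Finally, I would note that the flip is an involution: applying $\oslash_r a$ to $\tau$ recovers $\sigma$, since the maximal $(a,b)$-alternating path from $r$ in $\tau$ is again $\+E^*$ (only the colors along it have been swapped), and flipping twice returns the original colors. Hence $\sigma \mapsto \sigma \oslash_r b$ is a bijection from $\Omega(\mu^{ra})$ onto $\Omega(\mu^{rb})$, which, combined with the uniformity of both measures, concludes that $\tau \sim \mu^{rb}$ whenever $\sigma \sim \mu^{ra}$.
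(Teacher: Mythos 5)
Your proof is correct and is the standard Kempe-chain argument that the paper implicitly has in mind; the paper itself only remarks that the observation ``is easy to verify'' and supplies no written proof, so you have simply filled in the expected details. The three steps you identify — uniqueness of the $(a,b)$-component of $r$ (a path, since the subgraph on colors $\{a,b\}$ has maximum degree $2$ and $\^T^\star$ is a tree), properness of $\tau$ (internal to $\+E^*$ by alternation, on the boundary by maximality of the component, and the list constraint since $b\in[q-d]$), and the involution/bijection giving $\tau\sim\mu^{rb}$ from uniformity — are exactly the ingredients needed, and all are justified correctly.
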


Hence, this construction of $(\sigma,\tau)$ defines a coupling $\+C_{ab}$ of $\mu^{ra}$ with $\mu^{rb}$;
we refer to this coupling $\+C_{ab}$ as the \emph{flip coupling}.  Note the coupling is symmetric, i.e., $\+C_{ab}=\+C_{ba}$.  Let $\+C=\cup_{a,b}\+C_{ab}$. 

Fix an initial coloring $\sigma \in \Omega(\mu^{ra})$ and the final coloring $\tau = \tau(\sigma,b) \in \Omega(\mu^{rb})$ where $\tau=\sigma \oslash_r b$.
 Let $v_0,v_1,\ldots,v_{s+1}$ list the vertices of $\+E^*$ in order from the root downwards, and let $\+E^*=(e_0,e_1,\ldots,e_s)$ be the edges in downwards order.  Therefore, $r=e_0=\{v_0,v_1\}$ and $e_i = \{v_{i},v_{i+1}\}$.

We now detail the construction of the canonical path $\gamma^{\sigma,\tau}$ from $\sigma$ to $\tau$. 
Recall,  $\+E^*=(e_0,e_1,e_2,\ldots,e_s)$ is the $(a,b)$-alternating path in $\sigma$ which we flip to obtain $\tau$ and the path $\gamma^{\sigma,\tau}$ needs to perform this flip by a sequence of single-edge updates (i.e., Glauber transitions).
 
Fix an arbitrary ordering $\+O$ on all colors in $[q]$ such that the colors $a$ and $b$ are the last two colors. 
The following is the high level idea for constructing the canonical path $\gamma^{\sigma,\tau}$ from $\sigma$ to $\tau$, see \Cref{fig:canonical-path} for an illustration. 

{\bf \noindent Stage-I:} For every $e_i = \{v_{i},v_{i+1}\} \in \sigma \oplus \tau$, where $i \in [s]=\{1,2,\ldots,s\}$ and $i$ is odd, we would like to change the color $\sigma(e_i) = b$ to some color $c_i \notin \{a,b\}$. If $e_i$ has multiple such available colors (i.e., do not appear in the neighborhood of $e_i$ in $\sigma$) that do not in $\set{a,b}$ then we let $c_i$ be the first such available color in the ordering $\+O$.  If $e_i$ does not have any available color, excluding its current color $b$ (note that $a$ is not available as $\sigma(e_{i-1})=a$) then we will define a path $\+E_i$ from $v_{i}$ downwards using the ordering $\+O$; this path $\+E_i$ may reach a leaf edge.  The construction of this path $\+E_i$ will be detailed later.  We let $c_i$ be the initial color in $\sigma$ of the first edge in $\+E_i$, and after recoloring $\+E_i$ then color $c_i$ will be available for $e_i$.  The new color for each edge in $\+E_i$ will be the initial color in $\sigma$ of its descendant edge in the path $\+E_i$ (and we will recolor these edges from the leaves upwards).
    
    We recolor all of these paths $\+E_i$ and the odd edges of $\+E^*$ in the following order.  Let $\+E = \left(\bigcup_{\textit{odd }i} e_i\right)\cup\left(\bigcup_i \+E_i\right)$ denote the set of edges we are recoloring in this stage.  We order the edges in $\+E$ so that edges in lower levels (closer to the leaves) are before higher levels, and thus the leaf edges are first.  Moreover, within a level all of the edges in $\cup_i \+E_i$ are before those in $\+E^*$; this ensures that the edges in $\+E\cap\+E^*$ can be recolored to $c_i$. 
    To change the color of $e_i$, this process may  change all colors along a path $(e_i \cup \+E_i)$, where $(e_i \cup \+E_i)$ intersects with $\sigma \oplus \tau$ only at edge $e_i$.

    We describe how to construct the path $\+E_i$ for odd $i \in [s]$. If $e_i=\{v_i,v_{i+1}\}$ has an available color $c_i \notin \{a,b\}$, then let $\+E_i = \emptyset$. Otherwise, let $N_e(v_{i+1})$ denote the set of edges incident to $v_{i+1}$ and let $c_i$ be the first color in $[q] \setminus \sigma(N_e(v_{i+1}))$ according to the ordering $\+O$. 
    We call $c_i$ the first available color for vertex $v_{i+1}$ in $\sigma$.
    Note that the color $c_i \notin \{a,b\}$ because $|[q] \setminus \sigma(N_e(v_{i+1}))| = q - \Delta = 2$ and the color $b$ must in $\sigma(N_e(v_{i+1}))$. 
    The vertex $v_{i+1}$ must have an available color that is not $a$ nor $b$ and that color would be the first one according to $\+O$. We find the incident edge $e=\{v_i,v\}$ to $v_i$ whose color is $c_i$ in $\sigma$. Such edge $e$ must exist, otherwise  $c_i \notin \{a,b\}$ is an available color for $e_i$ and we do not need to construct $\+E_i$. Hence, we defined the first edge $e$ in the path $\+E_i$. All other edges can be constructed by the following recursive procedure. Specifically, suppose $\{u_j,u_{j+1}\}$ is the $j$-th edge in $\+E_i$ and $u_{j+1}$ is a child of $u_j$.
    If  $\{u_j,u_{j+1}\}$ has two available colors, we stop the construction of $\+E_i$. Otherwise, let $c$ be the first available color  for $u_j$ in $\sigma$; we find the edge $e'$ incident to $u_{j+1}$ whose color is $c$ in $\sigma$; and let $e'$ be the $(j+1)$-th edge in $\+E_i$.  
    This process must stop once it touches a leaf edge but it may stop earlier before going down to the leaves.

    We now detail the process of changing colors along the paths $(e_i \cup \+E_i)$ for odd $i \in [s]$.
    For two different paths $(e_i \cup \+E_i)$ and $(e_{i'} \cup \+E_{i'})$, their distance is at least 2 in the line graph. 
    Hence, changing the colors in $(e_i \cup \+E_i)$ has no effect on the process of changing colors in $(e_{i'} \cup \+E_{i'})$. 
    For a specific path  $e_i \cup \+E_i= \{e_i =: h_0, h_1,h_2,\ldots,h_t\}$, we first change the color of $h_t$ to the first available color $c$ for $h_t$ in $\sigma$ with $c \neq \sigma(h_t)$. For each $j$ from $t-1$ down to $0$, after changing the color of $h_{j+1}$, $\sigma(h_{j+1})$ becomes an available color for $h_j$ and we change the color of $h_j$ to $\sigma(h_{j+1})$. 
    
    
    After Stage-I, for any even $j \in \{0\}\cup [s]$, we claim that color $b$ is available for edge $e_j=\{v_j,v_{j+1}\}$ in the current coloring.
    For even $j$, both $e_{j-1}$ and $e_{j+1}$ switched to colors not in $\{a,b\}$. Note that $e_{j}$ may have an incident edge $e$ in $\+E_{j+1}$ such that $e \cap e_j = \set{v_{j+1}}$. We need to show that the edge~$e$ cannot take $b$ in the current coloring. This is because, in the initial coloring $\sigma$, $v_{j+1}$ has incident edges with colors $a$ and $b$. Note that after Stage-I, $e$ either takes the first available color for $v_{j+1}$ in $\sigma$ ($\+E_i$ has at least 2 edges) or an available color of $e$ in $\sigma$ ($e$ is the only edge in $\+E_i$).  In both cases, the new color cannot be $b$ and cannot be $a$. 

    A detailed illustration of \textbf{Stage-I} is given at \Cref{fig:path-step-I}.
    
    {\bf \noindent Stage-II:} For every $e_i = \{v_{i},v_{i+1}\} \in \sigma \oplus \tau$, where $i \in \{0\}\cup[s]$ and $i$ is even, we change the color of $e_i$ from $\sigma(e_i) = a$ to color $b$.

 {\bf \noindent Stage-III:} 
 In this stage we recolor all of $\+E_i$, for all odd $i$, and the odd edges of $\+E^*$ by the following
 process.  First, consider the same set $\+E$ from Stage-I.  Then, order the edges in $\+E$ in reverse order from Stage-I's ordering, and hence the edges closer to the root are earlier in the ordering and edges in $\+E^*$ are before $\+E_i$.  For every edge $e \in \+E$ we recolor it to $\tau(e)$ (proceeding in the order we just defined).  Note that for $e\in\+E^*\cap\+E$ then $\tau(e)=a$ and for
 $e\in\+E_i$ then $\tau(e)=\sigma(e)$ is its original color in $\sigma$.

\begin{figure}[!h]
  \centering
  \colorlet{cR}{magenta}
  \colorlet{cG}{lime}
  \colorlet{cB}{cyan}
  \colorlet{cY}{yellow}
  \colorlet{cD}{darkgray}
  \colorlet{cO}{orange}
  \colorlet{cBG}{black!20}

  \ifarxiv
  \includegraphics{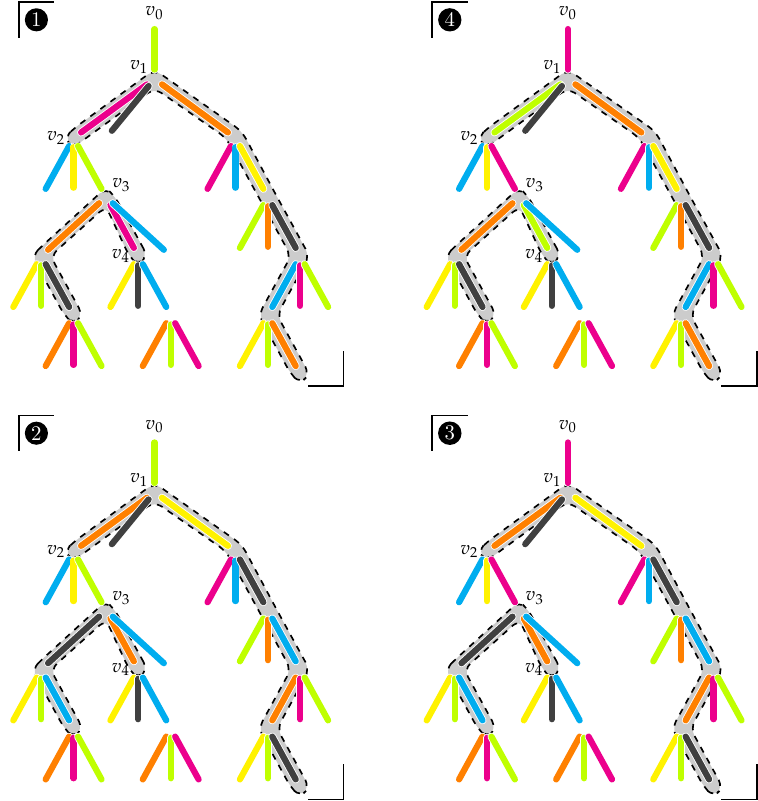}
  \else
    \def\NumberedBox#1{
    \node[fill=black, circle, inner sep=1pt, minimum size = 2pt] (tp) at (-2, 0) {\color{white} #1};
    \draw[-,thick] (-2.3, -0.3) -- (-2.3, 0.3) -- (-1.7, 0.3);
    \draw[-,thick] (2.6, -6.2) -- (3.2, -6.2) -- (3.2, -5.6);
  }
  \def\Marks{
    \node[label={[label distance = -7pt]90:{\small $v_0$}}] at (0) {};
    \node[label={[label distance = -5pt]120:{\small $v_1$}}] at (1) {};
    \node[label={[label distance = -3pt]180:{\small $v_2$}}] at (2) {};
    \node[label={[label distance = -5pt]60:{\small $v_3$}}] at (7) {};
    \node[label={[label distance = -3pt]180:{\small $v_4$}}] at (9) {};
  }
  \def\Paths{
    \begin{scope}[on background layer]
      \draw[fill=cBG, draw=black, rounded corners, dashed, thick] (2.north) -- (1.north) -- (4.north east) -- (19.east) -- (22.east) -- (23.east) -- (28.east) -- (28.south) -- (28.west) -- (23.west) -- (22.west) -- (19.west) -- (4.west) -- (1.south) -- (2.south) -- (2.west) -- cycle;
      \draw[fill=cBG, draw=black, rounded corners, dashed, thick] (9.east) -- (7.east) -- (7.north) -- (8.north) -- (8.west) -- (31.west) -- (31.south) -- (31.east) -- (8.east) -- (7.south) -- (9.west) -- (9.south) -- cycle;
    \end{scope}
  }
  \begin{tikzpicture}
    \begin{scope}
      \NumberedBox{1}
      \graph[
      tree layout,
      sibling distance = 2pt,
      nodes = {circle, inner sep = 0pt, minimum size = 3mm, as=},
      edges = {draw=white, double = black, line cap = round, double distance = 3pt}
      ]{
        0  -- [double = cG] 1,
        1  -- [double = cR] 2,  1  -- [double = cD] 3,  1  -- [double = cO] 4,
        2  -- [double = cB] 5,  2  -- [double = cY] 6,  2  -- [double = cG] 7,
        7  -- [double = cO] 8,  7  -- [double = cR] 9,  7  -- [double = cB] 10,
        9  -- [double = cY] 11, 9  -- [double = cD] 12, 9  -- [double = cB] 13,
        13 -- [double = cO] 14, 13 -- [double = cG] 15, 13 -- [double = cR] 16,
        4  -- [double = cR] 17, 4  -- [double = cB] 18, 4  -- [double = cY] 19,
        19 -- [double = cG] 20, 19 -- [double = cO] 21, 19 -- [double = cD] 22,
        22 -- [double = cB] 23, 22 -- [double = cR] 24, 22 -- [double = cG] 25,
        23 -- [double = cY] 26, 23 -- [double = cG] 27, 23 -- [double = cO] 28,
        8  -- [double = cY] 29, 8  -- [double = cG] 30, 8  -- [double = cD] 31,
        31 -- [double = cO] 32, 31 -- [double = cR] 33, 31 -- [double = cG] 34
      };
      \Paths
      \Marks
    \end{scope}
    \begin{scope}[shift = {(0, -7)}]
      \NumberedBox{2}
      \graph[
      tree layout,
      sibling distance = 2pt,
      nodes = {circle, inner sep = 0pt, minimum size = 3mm, as=},
      edges = {draw=white, double = black, line cap = round, double distance = 3pt}
      ]{
        0  -- [double = cG] 1,
        1  -- [double = cO] 2,  1  -- [double = cD] 3,  1  -- [double = cY] 4,
        2  -- [double = cB] 5,  2  -- [double = cY] 6,  2  -- [double = cG] 7,
        7  -- [double = cD] 8,  7  -- [double = cO] 9,  7  -- [double = cB] 10,
        9  -- [double = cY] 11, 9  -- [double = cD] 12, 9  -- [double = cB] 13,
        13 -- [double = cO] 14, 13 -- [double = cG] 15, 13 -- [double = cR] 16,
        4  -- [double = cR] 17, 4  -- [double = cB] 18, 4  -- [double = cD] 19,
        19 -- [double = cG] 20, 19 -- [double = cO] 21, 19 -- [double = cB] 22,
        22 -- [double = cO] 23, 22 -- [double = cR] 24, 22 -- [double = cG] 25,
        23 -- [double = cY] 26, 23 -- [double = cG] 27, 23 -- [double = cD] 28,
        8  -- [double = cY] 29, 8  -- [double = cG] 30, 8  -- [double = cB] 31,
        31 -- [double = cO] 32, 31 -- [double = cR] 33, 31 -- [double = cG] 34
      };
      \Paths
      \Marks
    \end{scope}
    \begin{scope}[shift = {(7, -7)}]
      \NumberedBox{3}
      \graph[
      tree layout,
      sibling distance = 2pt,
      nodes = {circle, inner sep = 0pt, minimum size = 3mm, as=},
      edges = {draw=white, double = black, line cap = round, double distance = 3pt}
      ]{
        0  -- [double = cR] 1,
        1  -- [double = cO] 2,  1  -- [double = cD] 3,  1  -- [double = cY] 4,
        2  -- [double = cB] 5,  2  -- [double = cY] 6,  2  -- [double = cR] 7,
        7  -- [double = cD] 8,  7  -- [double = cO] 9,  7  -- [double = cB] 10,
        9  -- [double = cY] 11, 9  -- [double = cD] 12, 9  -- [double = cB] 13,
        13 -- [double = cO] 14, 13 -- [double = cG] 15, 13 -- [double = cR] 16,
        4  -- [double = cR] 17, 4  -- [double = cB] 18, 4  -- [double = cD] 19,
        19 -- [double = cG] 20, 19 -- [double = cO] 21, 19 -- [double = cB] 22,
        22 -- [double = cO] 23, 22 -- [double = cR] 24, 22 -- [double = cG] 25,
        23 -- [double = cY] 26, 23 -- [double = cG] 27, 23 -- [double = cD] 28,
        8  -- [double = cY] 29, 8  -- [double = cG] 30, 8  -- [double = cB] 31,
        31 -- [double = cO] 32, 31 -- [double = cR] 33, 31 -- [double = cG] 34
      };
      \Paths
      \Marks
    \end{scope}
    \begin{scope}[shift={(7,0)}]
      \NumberedBox{4}
      \graph[
      tree layout,
      sibling distance = 2pt,
      nodes = {circle, inner sep = 0pt, minimum size = 3mm, as=},
      edges = {draw=white, double = black, line cap = round, double distance = 3pt}
      ]{
        0  -- [double = cR] 1,
        1  -- [double = cG] 2,  1  -- [double = cD] 3,  1  -- [double = cO] 4,
        2  -- [double = cB] 5,  2  -- [double = cY] 6,  2  -- [double = cR] 7,
        7  -- [double = cO] 8,  7  -- [double = cG] 9,  7  -- [double = cB] 10,
        9  -- [double = cY] 11, 9  -- [double = cD] 12, 9  -- [double = cB] 13,
        13 -- [double = cO] 14, 13 -- [double = cG] 15, 13 -- [double = cR] 16,
        4  -- [double = cR] 17, 4  -- [double = cB] 18, 4  -- [double = cY] 19,
        19 -- [double = cG] 20, 19 -- [double = cO] 21, 19 -- [double = cD] 22,
        22 -- [double = cB] 23, 22 -- [double = cR] 24, 22 -- [double = cG] 25,
        23 -- [double = cY] 26, 23 -- [double = cG] 27, 23 -- [double = cO] 28,
        8  -- [double = cY] 29, 8  -- [double = cG] 30, 8  -- [double = cD] 31,
        31 -- [double = cO] 32, 31 -- [double = cR] 33, 31 -- [double = cG] 34
      };
      \Paths
      \Marks
    \end{scope}
  \end{tikzpicture} 
  \fi
  
  \newcommand{\colorCirc}[1]{\tikz \node[circle, fill=#1, inner sep=0pt, minimum size = 5pt] {};}
  \newcommand*\circled[1]{\tikz[baseline=(char.base)]{
    \node[fill=black, circle, inner sep=1pt, minimum size = 2pt] (char) {\color{white} #1};
  }}
  \caption[An illustration for the construction of canonical path]{ \label{fig:canonical-path}
    The figure illustrates the canonical path $\gamma^{\sigma,\tau}$.
    For simplicity, we only draw edges whose color will be used in the construction rather than the entire tree.
    The edge-coloring in \circled{1} is $\sigma$ and the coloring in \circled{4} is $\tau$.
    Let $a = \colorCirc{cG}$ and $b = \colorCirc{cR}$, we mark the vertices in $(a,b)$-alternating path $\+E^*$ as $(v_0, v_1, v_2, v_3, v_4)$.
    Then, we fix the order $\+O$ of colors as $\+O=\{ \colorCirc{cO} \prec \colorCirc{cD} \prec \colorCirc{cY}  \prec \colorCirc{cB} \prec \colorCirc{cG} \prec \colorCirc{cR}\}$; note, the colors $\colorCirc{cG}, \colorCirc{cR}$ are the last two colors in this order.
    The odd edges $e_i \in \+E^*$ and their associated paths $\+E_i$ are marked by dashed lines (and grey background).
    The coloring in \circled{2} is obtained after Stage-I, and hence after we recolored the edges in $\+E$ in order from the leaves up.  Then the coloring in \circled{3} is after Stage-II, and finally the coloring in \circled{4} is after Stage-III and is $\tau$.
  }

\end{figure}

\begin{figure}[!h]
  \centering
  \colorlet{cR}{magenta}
  \colorlet{cG}{lime}
  \colorlet{cB}{cyan}
  \colorlet{cY}{yellow}
  \colorlet{cD}{darkgray}
  \colorlet{cO}{orange}
  \colorlet{cBG}{black!20}

  \ifarxiv
  \includegraphics{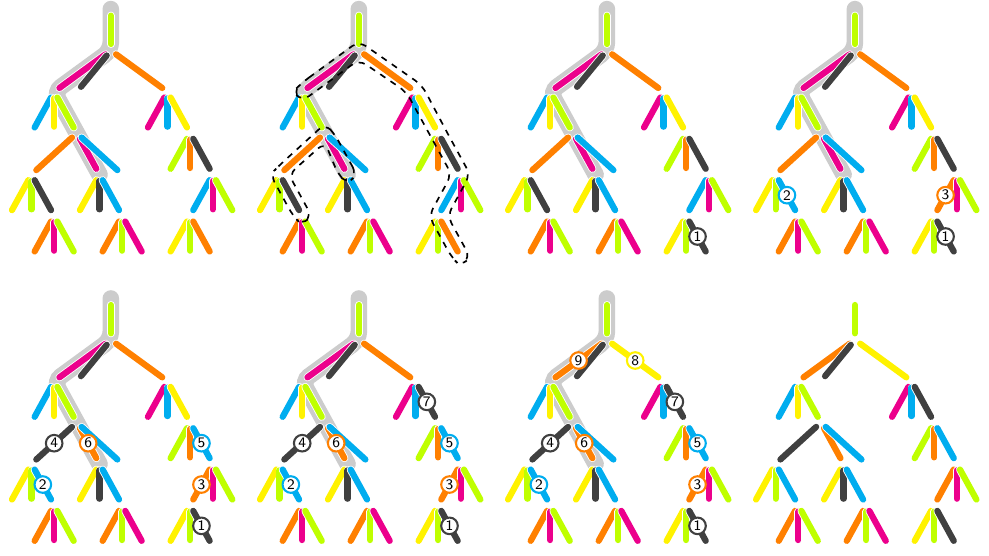}
  \else
  \input{Figures/Fig2.tex} 
  \fi
  
  \newcommand{\colorCirc}[1]{\tikz \node[circle, fill=#1, inner sep=0pt, minimum size = 5pt] {};}
  \caption[An illustration for the step-I]{ \label{fig:path-step-I}
    The figure details \textbf{Stage-I} of the canonical path.
    For convenience, we call the sub-figure located in the $i$-th row and $j$-th column as $\text{fig}(i, j)$.
    The canonical path is as follows.
    First, as in $\text{fig}(1, 1)$, we find the $(\colorCirc{cG}, \colorCirc{cR})$-alternating path; this is marked with a shadowed background.
    Then, for the odd edges on the alternating path (i.e., the \colorCirc{cR}-edges), we want to change their color to some color other than \colorCirc{cG} and \colorCirc{cR}.
    To do that, we run Stage-I according to the priority of colors $\colorCirc{cO} \prec \colorCirc{cD} \prec \colorCirc{cY}  \prec \colorCirc{cB} \prec \colorCirc{cG} \prec \colorCirc{cR}$ (see also \Cref{fig:canonical-path}) and get two lists of edges.
    We mark these lists $\+E$ with dashed circles in $\text{fig}(1,2)$.
    Intuitively, to change the color of the odd edges on the alternating path, we need to change the color of edges in these lists.
    Then, as showed from $\text{fig}(1,3)$ to $\text{fig}(2,3)$, we change the color of the edges in the lists according to the order of their depth.
    For the edges on the same level, to break the tie, we first consider edges from left to right, but with an exception that we keep the edge in the alternating path to be the last one.
    We add number on these edges to denote this order.
    Lastly, when we change the color of some edge $i$, we always change its color to a different color which has the lowest priority among all the available colors.
  }
\end{figure}
\begin{proposition}
If $q = \Delta + 2$, the above process constructs a feasible canonical path $\gamma^{\sigma,\tau}$ from $\sigma$ to $\tau$.
\end{proposition}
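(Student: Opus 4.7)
The plan is to verify the three defining properties of the sequence of colorings $\gamma^{\sigma,\tau} = (\gamma_0, \ldots, \gamma_m)$ produced by Stages I--III: (a) each $\gamma_k$ is a proper (list) edge coloring, (b) the $\gamma_k$ are pairwise distinct, and (c) $\gamma_0 = \sigma$ and $\gamma_m = \tau$. The single-edge property (consecutive colorings differ on exactly one edge) is immediate from the construction.

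First I would handle Stage-I. The key subclaim is that each auxiliary path $\+E_i = (h_1, \ldots, h_t)$ is well-defined and terminates: since $q = \Delta + 2$, every internal vertex $u$ has exactly $q - \Delta = 2$ unused colors in $\sigma$, so whenever a current edge $h_j = \{u_j, u_{j+1}\}$ has only one color available (the trigger for extending $\+E_i$) the $\+O$-first color unused at $u_j$ must also be used at $u_{j+1}$, yielding a unique child-edge to append; the process strictly descends and hence terminates, either at a leaf edge or at an internal edge with $\geq 2$ available colors. Then I would verify the bottom-up recoloring by induction from $h_t$ upward: once $h_{j+1}$ has been recolored to $\sigma(h_{j+2})$ (or, at the base, to any available color different from $\sigma(h_t)$), the color $\sigma(h_{j+1})$ is freed at $u_{j+1}$; and since $h_{j+1}$ was chosen as the edge at $u_{j+1}$ whose $\sigma$-color equals the first color unused at $u_j$, that same color is still unused at $u_j$, so the recoloring $h_j \gets \sigma(h_{j+1})$ is proper. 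The top edge $e_i$ then receives $c_i = \sigma(h_1)$, which is free at $v_{i+1}$ by the initial choice of $c_i$ and free at $v_i$ after $h_1$ has been recolored. Finally, the paths $\+E_i \cup \{e_i\}$ for distinct odd $i$ branch off at distinct vertices of $\+E^*$ into disjoint subtrees, so they lie at line-graph distance $\geq 2$ and the recolorings performed within one do not interfere with the availability of colors in another.

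Next I would argue as in the text that after Stage-I the color $b$ is available for every even $e_j \in \+E^*$: the two adjacent alternating-path edges now carry colors in $[q] \setminus \{a,b\}$, and the only other potentially troublesome neighbors are the top edges of $\+E_{j-1}, \+E_{j+1}$ incident to $v_j$ or $v_{j+1}$; but in $\sigma$ both $a$ and $b$ are present at each such vertex, so the $\+O$-first color unused there lies outside $\{a,b\}$, forcing the post-Stage-I shifted color on those top edges to avoid $b$. Stage-II then recolors each even $e_j$ from $a$ to $b$ sequentially; validity is immediate from the availability just established. Stage-III reverses the Stage-I ordering, and the key step is recoloring each odd $e_i$ from $c_i$ to $a$: by Stage-II the two neighbors $e_{i\pm 1}$ now carry $b$, and the other neighbors at $v_i, v_{i+1}$ either carry an unchanged $\sigma$-color $\neq a$ (because in $\sigma$, $a$ appears only on $e_{i\pm 1}$ at these vertices) or carry the Stage-I shifted color $\sigma(h_2) \neq a$ (since the first-available color at $v_i$ in $\sigma$ cannot be $a$ as $a$ is already used at $v_i$ via $e_{i-1}$). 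An analogous invariant, mirroring the Stage-I argument in reverse, handles the subsequent restoration of each $h_j \in \+E_i$ back to $\sigma(h_j)$.

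For distinctness, within each stage every edge is recolored at most once, so intermediate states are uniquely characterized by the set of completed recolorings and are pairwise distinct; across stages, Stage-II acts only on even edges of $\+E^*$, which are disjoint from the edge set $\+E$ touched by Stages I and III, and these even edges carry $a$ throughout Stage-I but $b$ throughout Stage-III, ruling out any repeated state. The final configuration has the odd edges of $\+E^*$ at $a$, the even edges at $b$, the $\+E_i$-edges restored to $\sigma$, and all other edges untouched, which is precisely $\tau = \sigma \oslash_r b$. The principal obstacle is the color-availability bookkeeping across the three stages; it is exactly the combination of placing $a, b$ last in $\+O$ together with the level-by-level ordering that puts $\+E^*$ after $\+E_i$ in Stage-I (and before it in Stage-III) that guarantees each intermediate coloring remains proper, and this is where the hypothesis $q = \Delta + 2$ is used decisively through the ``exactly $2$ available colors at every internal vertex'' count.
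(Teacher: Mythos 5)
Your proof is correct, but it takes a genuinely different route from the paper's for the crux of the argument, which is the feasibility of Stage-III. The paper does \emph{not} directly verify Stage-III; instead it proves \Cref{claim:reverse}, namely that Stage-III (from $\gamma$ to $\tau$) is the reverse of the path produced by running Stage-I \emph{starting from $\tau$} with the roles of $a$ and $b$ (and $\sigma,\tau$) swapped. The claim is established by checking that Stage-I($\sigma$) and Stage-I($\tau$) build the same set $\+E$, the same paths $\+E_i$, and assign identical shifted colors, so $\gamma = \gamma'$; feasibility of Stage-III then falls out for free from the (already verified) feasibility of Stage-I. You instead verify Stage-III head-on, via a root-to-leaf invariant that mirrors the leaf-to-root invariant for Stage-I: when restoring $h_j$ to $\sigma(h_j)$, the edge $h_{j-1}$ has already been restored, $h_{j+1}$ still carries its shifted color $\sigma(h_{j+2})$ (or the fresh color at the tail), and one checks $\sigma(h_j)$ is unused at both $u_j$ and $u_{j+1}$. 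Both approaches are sound. The paper's reversal argument is more economical --- it avoids re-doing the color-availability bookkeeping for Stage-III --- whereas your direct argument is perhaps conceptually more transparent but puts more weight on the sentence ``an analogous invariant, mirroring the Stage-I argument in reverse,'' which is exactly where that bookkeeping has to be redone carefully (and where a reader would want the detail spelled out). Two small points you should tighten if writing this up: handle the boundary cases explicitly (when $i = s$ so $e_{i+1}$ does not exist, and when $\+E_i = \emptyset$), and note that the ``top edge of $\+E_{j-1}$'' you flag as potentially troublesome is in fact incident to $v_{j-1}$, so it is not actually a neighbor of $e_j$ --- only $\+E_{j+1}$'s first edge (meeting $e_j$ at $v_{j+1}$) needs the $\+O$-ordering argument, consistent with what the paper says.
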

\begin{proof}
It is easy to verify the canonical path constructed at Stage-I and Stage-II is a feasible path starting from $\sigma$ to $\gamma$, where $\gamma$ denotes the coloring obtained after Stage-II. 
Stage-III constructs a canonical path from $\gamma$ to $\tau$.
We only need to show that any coloring in the canonical path constructed at Stage-III is a feasible edge coloring.

Consider the following process $\+P$, we run Stage-I starting from the coloring $\tau$ and obtain a canonical path from $\tau$ to a coloring $\gamma'$. 
Specifically, we first find the maximal $(b,a)$-alternating path $\+E^*$ from $r$ in $\tau$. 
We then do the same procedure described in Stage-I except that we flip the roles of $\sigma$ and $\tau$, and also the roles of $a$ and $b$.
\begin{claim}\label{claim:reverse}
The canonical path generated by $\+P$ is the reverse of the canonical path generated by Stage-III. In particular, $\gamma =\gamma'$.
\end{claim}
It is easy to verify  Stage-I generates a feasible canonical path, the above claim implies the canonical path generated by Stage-III is also feasible, which proves the proposition.

We now prove \Cref{claim:reverse}.
We use Stage-I($\tau$) to denote the process $\+P$, where the notation emphasizes that $\+P$ is the Stage-I starting from $\tau$.
Similarly, we use Stage-I($\sigma$) to denote the original process starting from $\sigma$ (described in \Cref{sec:const}).
To prove the claim, we show that two processes Stage-I($\sigma$) and Stage-I($\tau$) find the same set of edges $\+E$, and for any $e \in \+E$, two processes change the color of $e$ to the same color $\gamma(e) = \gamma'(e)$.

We first verify that Stage-I($\tau$) generates the same paths $\+E^*$ and $(\+E_i)_{ i \in [s] \text{ is odd}}$ as Stage-I($\sigma$). 
It is easy to see that the alternating path $\+E^*$ is the same.
If a path $\+E_i$ is empty in Stage-I($\sigma$), then $e_i$ must have an available color in $\sigma$ that is not in $\{a,b\}$, the same color must be available for $e_i$ in $\tau$, and the path $\+E_i$ is also empty in Stage-I($\tau$). If a path $\+E_i$ is not empty in Stage-I, we need to show that $\+E_i$ is also constructed by Stage-I($\tau$). The first edge of $\+E_i$ is determined by the first available color of $v_{i+1}$.
Since $a,b$ are the last two colors in $\+O$ and $q = \Delta + 2$,
in both Stage-I($\tau$) and Stage-I($\tau$),
the first available color is not in $\{a,b\}$.
Then, the two processes find the same first available color and the same first edge in $\+E_i$.
The path $\+E_i$ intersects with $\+E^*$ only at the vertex $v_i$.
Since $\sigma$ and $\tau$ agree everywhere except the alternating path and $\sigma(N_e(v_i)) = \tau(N_e(v_i))$, if the first edge is the same, it is easy to verify that all other edges are the same. 
Hence, the two processes obtain the same set of edges $\+E$.
We only need to show that for any $e \in \+E$, Stage-I($\tau$) changes the color $\tau(e)$ to the color $\gamma(e)$ (namely, $\gamma'(e) = \gamma(e)$), which proves the claim.

For each odd $i$, $e_i$ intersects with the first edge in $\+E_i$ at the vertex $v_i$. We can put $e_i$ before the first edge of $\+E_i$ to obtain a path $\+E_i^+$. Note that all $\+E_i^+$ are not adjacent to each other. It suffices to analyze one path $\+E_i^+$. Say $\+E_i^+$ has edges $e_1, h_1,h_2,\ldots,h_t$. Since $\sigma$ and $\tau$ differ only at the alternating path, for all $h_k$ with $k \geq 2$, $h_k$ is not adjacent to $\+E^*$, it is easy to see Stage-I($\tau$) and Stage-I($\sigma$) change $h_k$ to the same color, which implies $\gamma'(h_k)=\gamma(h_k)$. We will take care of $h_1$ and $e_i$ in the following two paragraphs.

Consider the edge $h_1$. Suppose $\+E_i$ only has edge $h_1$. In Stage-I($\sigma$), the color of $h_1$ is changed to the first available color $c$ of $h_1$ in $\sigma$ such that $c \neq \sigma(h_1)$. Hence $\gamma(h_1) = c$. In Stage-I($\tau$), the color of $h_1$ is changed to the first available color $c'$ of $h_1$ in $\tau$ such that $c' \neq \tau(h_1)$. 
Since $h_1$ contains the vertex $v_i$, it holds that $\sigma(N_e(v_i)) = \tau(N_e(v_i))$ and $\{a,b\} \subseteq \sigma(N_e(v_i))$.
Since $\sigma$ and $\tau$ take the same color for edges not in $\+E^*$,
One can verify that $c = c'$.
Suppose $\+E_i$ has at least 2 edges. In Stage-I($\sigma$), the color of $h_1$ is changed to the first available color $c$ of $v_i$ in $\sigma$.  In Stage-I($\tau$), the color of $h_1$ is changed to the first available color $c'$ of $v_i$ in $\tau$. Again, by $\sigma(N_e(v_i)) = \tau(N_e(v_i))$, we know $c = c'$.

Consider the edge $e_i$. If $e_i$ has an available color $c_i \notin \{a,b\}$ in $\sigma$, it must have the same available color in $\tau$. In this case, $e_i$ is changed to the same color in Stage-I($\sigma$) and Stage-I($\tau$) . If $e_i$ does not have such an available color in $\sigma$, then it does not have it in $\tau$. In this case, in Stage-I($\sigma$), $e_i$ is changed to the first available color $c$ of $v_{i+1}$ in $\sigma$.
In Stage-I($\tau$), $e_i$ is changed to the first available color $c'$ of $v_{i+1}$ in $\tau$.
Since, $a$ and $b$ are last two colors in the ordering $\+O$,  $c \notin \{a,b\}$ and $c' \notin \{a,b\}$. One can verify $c =c'$.  

Combining all the above analysis together proves the claim, which proves the proposition.
\end{proof}

\subsection{Analysis of Congestion (Proof of \Cref{lem:expectation-bound})} \label{sec:ana-con}
In this section, we use the coupling and canonical path constructed above to prove \Cref{lem:expectation-bound}.
We will also specify the parameter $\ell$ for the lemma. We first point out that the bound for $\xi_j$ with $j < \ell$ is trivial. By the definition of congestion,
\begin{align*}
    \xi_j \leq \sum_{(\gamma,\gamma'): \gamma \oplus \gamma' \in L_j} \frac{1}{\mu(\gamma)\frac{1}{q}} \leq q^{d^{O(\ell)}}
\end{align*}
because the total number of list colors is at most $q^{d^{O(\ell)}}$. This bound holds for any choice of $\ell$.
Our main task is to bound the congestion $\xi_{\ell} = \max_{a,b}\xi_{\ell}^{ab} \leq \frac{1}{2(\ell + 1)}$ where
 \begin{align}\label{eq:conab}
    \xi_\ell^{ab} = \sum_{\substack{s=(\gamma,\gamma')\in \Lambda(\mu): \gamma \oplus \gamma' \in L_\ell }} \frac{ \tp{\Pr[(\sigma,\tau) \sim \+C_{ab}]{s \in \gamma^{\sigma,\tau}   } }^2}{\mu(\gamma)Q^{\text{Glauber}}_\mu(\gamma,\gamma')},
\end{align}
where $\Lambda(\mu) \subseteq \Omega(\mu) \times \Omega(\mu)$ is the  set of feasible transitions made by the Glauber dynamics.

\paragraph{Proof Overview:}
We first provide a high-level proof overview before diving into the details.  Note that Stage-II only changes the colors of edges in $\+E^*$ that are in even levels. If we set $\ell$ to be an odd number, then we only change the colors of leaf edges in Stage-I and Stage-III.  
Let us consider a move $(\gamma,\gamma')$ in Stage-I such that $\gamma \oplus \gamma' \in L_\ell$. 
The analysis is similar if it is in Stage-III. 
To bound the congestion, the key step is to recover all possible initial colorings $\sigma$ that can use the move $(\gamma,\gamma')$ in Stage-I of the canonical path $\gamma^{\sigma,\tau}$, where the final coloring $\tau$ is fixed by $\sigma$. 
According to the ordering of edges in $\+E$ defined in Stage-I, we always first change the colors of leaves, which implies for any non-leaf edges $e$, $\sigma(e) = \gamma(e)$. 
Now, we only need to recover the colors of leaf edges in $\sigma$.
Furthermore, given $\gamma$, we can almost recover all the paths $\+E^*$ and $(\+E_i)_{i \in [s] \text{ is odd}}$ constructed in Stage-I. 
The only part we do not know is that if a path recovered by $\gamma$ ends at level $\ell - 1$, then in $\sigma$, this path may end at level $\ell - 1$ or end at level $\ell$. 
The second case may occur because when we move from $\sigma$ to $\gamma$ along the canonical path, we may change the colors of some leaf edges, which reduces the length of the path by 1. 
Hence, these paths can help us find all leaf edges where $\sigma$ and $\gamma$ may disagree.
Suppose that in the paths constructed by $\gamma$, there are $P$ paths ending at level $\ell  - 1$.
%
Then there are $(1 + (\Delta - 1)(q-\Delta))^P < (2\Delta)^P$ possible initial colorings $\sigma$.
This is because for each of these paths, $\sigma$ may have the same path (1 possibility) or have one additional leaf edge ($\Delta - 1$ possibilities) and there are at most $(q-\Delta) = 2$ ways the give the colors to the additional leaf edge.  
To bound the congestion, we hope that the probability in~\eqref{eq:conab} can balance the number of possibilities.
First note that $P$ cannot be too large because the total number of paths $\+E^*,(\+E_i)_{i \in [s] \text{ is odd}}$ is at most $1+\lceil \ell / 2\rceil$.
Also note that if $\gamma$ has $P$ paths touching level $\ell - 1$, then $\sigma$ also has at least $P$ paths touching level $\ell - 1$.
If we sample a random $\sigma\sim \mu^{ra}$, 
then the probability of having $P$ such paths also decays fast with respect to $P$. 
%

In the detailed analysis we need to deal with the summation over all $(\gamma,\gamma')$ in~\eqref{eq:conab}.
We first change the probability space such that instead of considering a random $\sigma$, we consider a random $\gamma$. Given a coloring $\gamma$, we reconstruct all $\+E^*, (\+E_i)_{i \in [s] \text{ is odd}}$ using $\gamma$ so that we recover all possible initial colorings $\sigma$. We also need to find all possible next coloring $\gamma'$ such $(\gamma,\gamma')$ changes the color of a leaf edge $e$.  The choices of $\gamma'$ is also limited because there must exist a path in $\+E^*, (\+E_i)_{i \in [s] \text{ is odd}}$ that ends at $e$ so that the canonical path will change the color of $e$.\footnote{By explicitly specifying the ordering used in Stage-I,  one can even show that given $\gamma$, there is at most one possible $\gamma'$ with $\gamma \oplus \gamma' \in L_{\ell}$ such that the move $(\gamma,\gamma')$ can appear in the canonical path. However, we do not need to use this tight bound in our proof.}
Finally, we show that the $\gamma$ with many possible $\sigma$ and $\gamma'$ appears with low probability and we can bound the congestion.

\paragraph{Detailed Analysis:}
Fix a leaf edge $h \in L_\ell$. Let $\gamma \in \Omega(\mu)$. We use $\+L_h^\gamma$ to denote the set of available colors for edge $h$ given coloring $\gamma({\^T^\star \setminus h})$.
Let $c \in \+L_h^\gamma \setminus \gamma(h)$.
The coloring $\gamma'$ can be written as $\gamma \oslash_h c$, 
which is the coloring obtained from $\gamma$ by replacing the color on $h$ with $c$.
This is because the maximal $(\gamma(h),c)$-alternating path only has one edge $h$.
Let $p^{ra} := 1/\abs{\Omega(\mu^{ra})}$.
In the distribution $\mu^{ra}$, every coloring $x \in \Omega(\mu^{ra})$ has probability $p^{ra}$ as $\mu^{ra}$ is the uniform distribution. Similarly, in $\mu$, each coloring $y \in \Omega(\mu)$ has probability $p := 1/\abs{\Omega(\mu)}$.
We note that $p^{ra}$ and $p$ are two numbers rather than distributions.
Note that $Q^{\text{Glauber}}_\mu(\gamma,\gamma \oslash_hc) = \frac{1}{q-d}$ where $d = \Delta - 1$. Since $\+C_{ab}$ is the flip coupling, $\tau$ is fixed by $\sigma$ and $b$. We simply denote $\gamma^{\sigma,\tau}$ by $\gamma^{\sigma,b}$ to emphasize that the path is fixed once $\sigma$ and $b$ are given. 
We have 
\begin{align*}
  \xi_\ell^{ab} &= \frac{q-d}{p}\sum_{\gamma \in \Omega(\mu)}\sum_{h \in L_\ell} \sum_{c \in \+L^\gamma_h \setminus \gamma(h)} \tp{\sum_{\sigma \in \Omega(\mu^{ra})}p^{ra} \cdot \textbf{1}[(\gamma,\gamma \oslash_h c) \in \gamma^{\sigma,b} ] }^2\\
  &= \frac{(q-d)(p^{ra})^2}{p^2}\sum_{\gamma \in \Omega(h)}p\sum_{h \in L_\ell} \sum_{c \in \+L^\gamma_h \setminus \gamma(h)} \abs{\set{\sigma \in \Omega(\mu^{ra}) \mid (\gamma,\gamma \oslash_h c) \in \gamma^{\sigma,b} }}^2.
\end{align*}
Note that $\frac{p^{ra}}{p} = 1/\mu_r(a) = 2$ and $q = d + 3$. We have
\begin{align}\label{eq:def+E}
   \frac{\xi_\ell^{ab}}{12} =  \E[\gamma \sim \mu]{\sum_{h \in L_\ell} \sum_{c\in \elist{\gamma}{h} \setminus \gamma(h) } \abs{\set{\sigma \in \Omega(\mu^{ra}) \mid (\gamma,\gamma \oslash_h c) \in \gamma^{\sigma,b}}}^2} := \+R^{ab}. 
\end{align}
In the rest of this section, we construct canonical paths to make $\+R^{ab} \leq \frac{1}{24(\ell+1)}$ for all $a,b$, which implies $\xi_\ell \leq \frac{1}{2(\ell+1)}$.
Comparing to ${\xi_\ell^{ab}}$, $\+R^{ab}$ changes the randomness of expectation from the randomness of $(\sigma,\tau) \sim \+C_{ab}$ to the randomness of $\gamma \sim \mu$, which will make the analysis easier. We analyze $\+R^{ab}$ in the rest of this section.

Fix $\gamma \in \Omega(\mu)$. Without loss of generality, we assume the color of $\gamma(r)$ is either $a$ or $b$.
If $\gamma(r) = a$, then let $\+E^*(\gamma)$ denote the longest $(a,b)$-alternating path starting from $r$.
If $\gamma(r) = b$, then let $\+E^*(\gamma)$ denote the longest $b$-$a$ alternating path starting from $r$.
We list all vertices along $\+E^*(\gamma)$ as $u_0,u_1,\ldots,u_{S(\gamma)+1}$, where $r=\{u_0,u_1\}$ and $S(\gamma)+1$ is the length of $\+E^*$. In other words, $S(\gamma)$ is the number of edges in $\+E^*(\gamma)$ except the root $r$.  
For every edge $f_i = \{u_i,u_{i+1}\} \in \+E^*(\gamma)$, where $i \in [S(\gamma)]$ is odd, we define $\+E_i(\gamma)$ be the path obtained by applying the same procedure for finding the path $\+E_i$ in Stage-I. 
If $\gamma(r) = a$, it is exactly in the same setting as the procedure described in Stage-I. 
If $\gamma(r) = b$, one can go through the procedure in Stage-I to find the path.

For each $i \in [\ell]$ such that $i$ is odd, we use $P_i(\gamma) \in \{0,1\}$ to indicate the event that $i \leq S(\gamma)$ and the $\+E_i(\gamma) \cap L_{\ell-1}(\^T^\star) \neq \emptyset$. In words, $P_i(\gamma) = 1$ means the alternative path $\+E^*(\gamma)$ has length at least $i$ (so that $e_i$ and $\+E_i(\gamma)$ exist) and the path $\+E_i(\gamma)$ contains some edge in level $\ell - 1$ of the tree $\^T^\star$. Define the sum $P(\gamma)$ as follows
\begin{align}\label{eq:defX}
    P(\gamma) := \sum_{i \in [\ell]: i \text{ is odd}} P_i(\gamma).
\end{align}
Define $Z(\gamma)$ to indicate whether $\+E^*(\gamma)$ touches level $\ell - 1$ in the tree $\mathbb{T}^{\star}$, formally, 
\begin{align}\label{eq:defZ}
    Z(\gamma) := \*1[S(\gamma) \geq \ell - 1].
\end{align}

\begin{lemma} \label{lem:congestion}
Let $\Delta \geq 2$ and $q = \Delta +2$ be two integers. Fix $a$ and $b$. If $\ell \geq 1$ is odd, then for any $\gamma \in \Omega(\mu)$,
 \begin{align*}
    \sum_{h \in L_\ell} \sum_{c\in \elist{\gamma}{h} \setminus \gamma(h)} \abs{\set{\sigma \in \Omega(\mu^{ra}) \mid (\gamma,\gamma \oslash_h c) \in \gamma^{\sigma,b}  }}^2 \leq  2(P(\gamma) + Z(\gamma)) \cdot (2\Delta)^{2(P(\gamma) + Z(\gamma))}.
  \end{align*}
\end{lemma}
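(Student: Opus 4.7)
The plan is to reverse-engineer the canonical path construction in order to control two quantities: (i)~the number of valid pairs $(h,c)$ at which the transition $(\gamma, \gamma \oslash_h c)$ can occur for some canonical path passing through $\gamma$, and (ii)~the number of initial colorings $\sigma \in \Omega(\mu^{ra})$ consistent with observing $\gamma$ in that transition. First I would note that since $\ell$ is odd, the leaf level $L_\ell$ cannot contain any even-indexed edge $e_{2j}$ of $\+E^*$, and Stage-II never recolors edges outside $\+E^*$; hence the transition at a leaf edge must occur in Stage-I or Stage-III.

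Next I would exploit the specific edge ordering built into Stage-I (all leaves processed before internal edges; within each level, edges of $\+E_i$ processed before edges of $\+E^*$) and its reverse in Stage-III. This ordering implies that immediately before (in Stage-I) or after (in Stage-III) any leaf recoloring, the intermediate coloring agrees with $\sigma$ on every non-leaf edge. Combined with the fact that $\sigma$ and $\tau$ agree off $\+E^*$, this shows that $\gamma$ determines $\sigma$ on all non-leaf edges, and therefore also determines the portion of $\+E^*(\sigma)$ and each $\+E_i(\sigma)$ above level~$\ell$. The only remaining ambiguity is which of the paths that already reach $L_{\ell-1}$ in $\gamma$ extends one more edge into $L_\ell$ in $\sigma$. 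For each such extendable path there are at most $1 + (\Delta-1)(q-\Delta) = 1 + 2(\Delta-1) \leq 2\Delta$ possibilities (either no extension, or one of $\Delta-1$ child edges coupled with one of $q-\Delta=2$ possible $\sigma$-colors). Since at most $P(\gamma)+Z(\gamma)$ paths in $\gamma$ reach $L_{\ell-1}$, there are at most $(2\Delta)^{P(\gamma)+Z(\gamma)}$ consistent $\sigma$'s per $(h,c)$.

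For the count of $(h,c)$ pairs, the leaf $h$ must be the terminal edge of some path reaching $L_\ell$, so $h$ is one of at most $P(\gamma)+Z(\gamma)$ choices; given $h$, the color $c$ is essentially prescribed by the Stage-I first-available-color rule (or by $c = \sigma(h)$ in Stage-III), contributing at most a factor of $2$ from the Stage-I vs.\ Stage-III ambiguity. Multiplying gives at most $2(P(\gamma)+Z(\gamma))$ valid $(h,c)$ pairs, and combining everything yields
\[
\sum_{h \in L_\ell} \sum_{c \in \elist{\gamma}{h} \setminus \gamma(h)} \abs{\set{\sigma \in \Omega(\mu^{ra}) \mid (\gamma, \gamma \oslash_h c) \in \gamma^{\sigma,b}}}^2 \leq 2(P(\gamma)+Z(\gamma)) \cdot (2\Delta)^{2(P(\gamma)+Z(\gamma))},
\]
as desired.

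The main obstacle will be rigorously verifying the reconstruction step: one has to argue via case analysis on the stage (I vs.\ III) and on the position of $h$ within its enclosing path $\+E_i$ or $\+E^*$ that the intermediate coloring $\gamma$ really does pin down $\sigma$ everywhere except at the terminal leaf extensions. A secondary subtlety is the boundary case in which the alternating path $\+E^*$ itself reaches $L_{\ell-1}$ (tracked by $Z(\gamma)$ rather than any $P_i(\gamma)$), and a careful use of the fact that $a,b$ are the last two colors in the ordering $\+O$ to ensure that the first available color chosen by Stage-I is never $a$ or $b$, so the recoloring color $c$ has only $O(1)$ admissible values.
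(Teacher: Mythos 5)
Your proposal is essentially the paper's proof, and the numerics all check out: using the odd-$\ell$ observation to rule out Stage-II, the Stage-I ordering to pin $\sigma$ on non-leaf edges, the `suspicious path' reconstruction to bound the $\sigma$-count by $(2\Delta)^{P(\gamma)+Z(\gamma)}$, and the $P(\gamma)+Z(\gamma)$ bound on the number of admissible leaves $h$. Two small imprecisions are worth flagging, though they do not change the bound. First, the factor $2$ does not actually come from ``Stage-I vs.\ Stage-III ambiguity'': the stage is uniquely determined by $\gamma(r)$ (color $a$ forces Stage-I, color $b$ forces Stage-III), so there is no ambiguity; the factor $2$ is simply the crude bound $|\mathcal{L}^\gamma_h \setminus \gamma(h)| \le q-\Delta = 2$ on the number of admissible new colors $c$, which is already built into the double sum being estimated. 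Second, in Stage-III the intermediate coloring agrees with $\tau$ (not $\sigma$) on non-leaf edges; the paper handles this by reversing roles and reconstructing $\tau$, from which $\sigma$ is recovered via the flip coupling (this is precisely the content of the reversal claim for Stage-III). Your ``main obstacle'' paragraph is an accurate forecast: the real work is the case analysis showing that every $\sigma$ consistent with the transition differs from $\gamma$ in at most one leaf per suspicious path, which in the paper is isolated as \Cref{lem:recover-1} and \Cref{lem:recover-2}.
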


The proof of \Cref{lem:congestion} is deferred to \Cref{sec:congestion}. 
We assume $\ell$ is odd in \Cref{lem:congestion} because of the following two reasons. (1) To prove \Cref{lem:expectation-bound}, we only need to show such $\ell$ exists. The existence of an odd $\ell$ is sufficient for us. (2) We can avoid discussing some corner cases when proving \Cref{lem:congestion} if $\ell$ is odd.
Hence, we use this assumption only for the simplicity of the proof. It is not difficult to generalize our proof in \Cref{sec:congestion} to make it work for all $\ell$.

\begin{lemma} \label{lem:bad-event-prob}
  Let $\Delta \geq 2$ and $q = \Delta +2$ be two integers.
  Then, for any $0 \leq s \leq \ell$ and $0 \leq x \leq \lceil s/2 \rceil$,
  \begin{align*}
    \Pr[\gamma \sim \mu]{S(\gamma) = s \land P(\gamma) = x} &\leq \tp{1 - \frac{1}{\Delta}}^{s} \binom{\lceil s/2 \rceil}{x} \prod_{i=1}^x \tp{1 - \frac{2}{\Delta}}^{\ell + 2i - s- 3}.
  \end{align*}
\end{lemma}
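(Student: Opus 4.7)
The plan is to exploit the Markov structure of the uniform distribution on proper edge colorings of the tree, together with the fact that disjoint subtrees are conditionally independent given their shared boundary, in order to decompose the joint event into a product of step-by-step probabilities. By symmetry of the color palette we may condition on $\sigma(r) = a$ throughout; the event $\set{S(\gamma) \geq s}$ then asks that the maximal $(a,b)$-alternating path from $r$ contains at least $s+1$ edges, while $\set{P_i(\gamma) = 1}$ asks that the auxiliary path $\+E_i(\gamma)$ reaches level $\ell - 1$.

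The first estimate I will establish is $\Pr{S(\gamma) \geq s} \leq (1 - 1/\Delta)^s$. For each $i = 1, \ldots, s$, conditional on the coloring on $e_0, \ldots, e_{i-1}$ being consistent with an alternating path, the path continues to $e_i$ iff the alternating color appears among the $\Delta - 1$ child edges of $v_i$. A direct counting argument on the uniform distribution, using that the $\Delta - 1$ child edges at $v_i$ carry distinct colors from $[q] \setminus \set{\sigma(e_{i-1})}$ (with weights coming from the subtree partition functions), shows that this continuation probability is at most $(\Delta - 1)/\Delta = 1 - 1/\Delta$. Chaining these $s$ conditional estimates, and using $\Pr{S(\gamma) = s} \leq \Pr{S(\gamma) \geq s}$, yields the desired bound on the alternating-path factor.

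Conditional on the realization of $\+E^*$ up to level $s$, the auxiliary paths $\+E_i(\gamma)$ for different odd $i \in [s]$ lie in pairwise-disjoint subtrees---each $\+E_i$ starts at a sibling-child of $v_{i+1}$ inside $v_i$'s subtree and descends without crossing $\+E^*$ or any other $\+E_{i'}$. Hence the events $\set{P_i(\gamma) = 1}$ for distinct odd $i$ are conditionally independent given $\+E^*$. For each odd $i$, I will show $\Pr{P_i(\gamma) = 1 \mid S(\gamma) \geq s} \leq (1 - 2/\Delta)^{\ell - i - 1}$: since the $j$-th edge of $\+E_i$ lies at level $i + j - 1$, reaching level $\ell - 1$ requires $\+E_i$ to contain at least $\ell - i$ edges, which in turn requires a chain of ``forced continuation'' events of the form ``the current edge has only one available color.'' Each such event has conditional probability at most $1 - 2/\Delta$ under the tree Markov structure, by a short case analysis on the overlap pattern between the $\Delta$ colors used at one endpoint of the edge and the $\Delta$ colors used at the other (recall $q = \Delta + 2$).

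Finally, for any subset $J$ of $x$ odd indices in $[s]$, the conditional independence and the above estimates yield
\[
\Pr{S(\gamma) \geq s \land \forall i \in J,\, P_i(\gamma) = 1}
\leq (1 - 1/\Delta)^s \prod_{i \in J} (1 - 2/\Delta)^{\ell - i - 1}.
\]
The right-hand side is maximized over $J$ (of size $x$) when $J$ consists of the $x$ largest odd indices in $[s]$, namely $\set{s - 2x + 2k : k = 1, \ldots, x}$ for $s$ odd (the even case is analogous). Substituting this $J$ and reindexing via $k \mapsto x - k + 1$ rewrites the product of exponents as $\ell + 2i - s - 3$ for $i = 1, \ldots, x$, matching the statement; a union bound over the $\binom{\lceil s/2 \rceil}{x}$ choices of $J$ then completes the proof. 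The main obstacle will be the rigorous verification of the per-step ``forced continuation'' bound---comparing ratios of subtree partition functions under different boundary colorings requires a delicate case analysis at each edge, and one must also confirm that conditioning on the alternating-path prefix (and on previously analyzed $\+E_{i'}$'s) does not inflate the continuation probability beyond $1 - 2/\Delta$.
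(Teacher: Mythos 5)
Your outline is essentially the paper's proof: fix the $x$ odd indices $\{j_1 > \dots > j_x\}$ at which the auxiliary paths reach level $\ell-1$, union-bound over the $\binom{\lceil s/2 \rceil}{x}$ choices, reveal the alternating path to extract the $(1-1/\Delta)^s$ factor, reveal the sibling subtree carrying each $\+E_{j_i}$ to extract $(1-2/\Delta)^{\ell-1-j_i}$, and use $j_i \leq s - 2(i-1)$ to pass to the exponent $\ell + 2i - s - 3$.

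The ``main obstacle'' you anticipate is in fact routine here, precisely because $\mu = \mu^\star_\ell$ lives on a \emph{complete} $\Delta$-regular tree, so no partition-function comparison is needed. All subtrees hanging below a non-leaf vertex $u_2$ are isomorphic, so conditioned on the color $c_0$ of the parent edge $(u_1,u_2)$ the colors of the $\Delta-1$ child edges of $u_2$ form a uniformly random $(\Delta-1)$-subset of $[q]\setminus\{c_0\}$; the two colors $c,c'$ missing from $N_e(u_1)$ lie in this set, and the forced-continuation event (both $c$ and $c'$ present among the children) has probability $\binom{q-3}{\Delta-3}/\binom{q-1}{\Delta-1} = \frac{(\Delta-1)(\Delta-2)}{(\Delta+1)\Delta} \le 1 - 2/\Delta$. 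Conditioning on previously revealed edges does not inflate this, by the Markov property of the tree: at the moment this step is examined everything revealed lies weakly above $u_2$, so the subtree below $u_2$ still sees only $c_0$ as its boundary. One small precision: you should state the conditional independence relative to the full revealed set $\Lambda$ (the alternating path together with every edge incident to one of its vertices), not $\+E^*$ alone, since the starting edge of $\+E_i$ is determined by the colors in $N_e(v_i)$ and $N_e(v_{i+1})$; once $\Lambda$ is fixed the continuation events for distinct odd indices sit in pairwise-disjoint subtrees and the product bound follows exactly as you describe.
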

The proof of \Cref{lem:bad-event-prob} is deferred to \Cref{sec:bad-event-prob}.
Intuitively, there is an exponential decay of rate $1 - 1/\Delta$ which prevents the length of $\+E^*(\gamma)$ and $P(\gamma)$ from being too large.
Given \Cref{lem:congestion,lem:bad-event-prob}, we are ready to bound $\+R^{ab}$ in~\eqref{eq:def+E}.
Now, we always assume $\ell$ is an odd number so that we can use \Cref{lem:congestion}.
By \Cref{lem:congestion}, it holds that
\begin{align*}
  \+R^{ab} &\leq \E[\gamma \sim \mu]{2(P(\gamma) + Z(\gamma)) \cdot (2\Delta)^{2(P(\gamma) + Z(\gamma))}}\\
  &=  \sum_{s=0}^{\ell} \sum_{x=0}^{\lceil s/2 \rceil} \Pr[\gamma \sim \mu]{S(\gamma) = s \land P(\gamma) = x} \cdot 2(x + \*1[s \geq \ell - 1])(2\Delta)^{2(x + \*1[s \geq \ell - 1])} :=  \sum_{s=0}^{\ell} \sum_{x=0}^{\lceil s/2 \rceil} F(s,x).
\end{align*}
The equation holds because it enumerates all possible values for $S(\gamma)$ and $P(\gamma)$. Note that $\binom{\lceil s/2 \rceil}{x} \leq \ell^x$. We can simplify the bound in \Cref{lem:bad-event-prob} as 
\begin{align}\label{eq-simple}
 \Pr[\rho \sim \mu]{S(\gamma) = s \land P(\gamma) = x} \leq    \exp\tp{-\frac{1}{\Delta}x^2 + \tp{\ln \ell - \frac{\ell - s - 2}{\Delta}} x - \frac{s}{\Delta}}.
\end{align}
Suppose $s \geq \ell-1$. Using \Cref{lem:bad-event-prob}, we have 
\begin{align*}
    F(s,x) &\leq 2(x+1)(2\Delta)^{2(x+1)} \exp\tp{-\frac{1}{\Delta}x^2+\tp{\ln \ell + \frac{2}{\Delta}}x - \frac{\ell - 1}{\Delta}}\\
    &\leq 2(\ell+1)\exp\tp{-\frac{1}{\Delta}x^2+\tp{\ln \ell + \frac{2}{\Delta} + 2 \ln(2\Delta)}x  + 2\ln(2\Delta) - \frac{\ell - 1}{\Delta}}
\end{align*}
Note that $g(x) = -\frac{1}{\Delta}x^2+\tp{\ln \ell + \frac{2}{\Delta} + 2 \ln(2\Delta)}x  + 2\ln(2\Delta) = O(\Delta(\ln\ell + \ln \Delta)^2)$. We can take $\ell = O(\Delta^2 \log^2 \Delta)$ to make $F(s,x) \leq \frac{1}{100(\ell+1)^3}$.
 
Suppose $s < \ell - 1$, by rearranging terms in~\eqref{eq-simple}, we have
\begin{align*}
   F(s,x) &\leq 2x(2\Delta)^{2x}  \exp\tp{-\frac{1}{\Delta}x^2 + \tp{\ln \ell - \frac{\ell - 2}{\Delta}} x + \frac{s}{\Delta}(x-1)}.   
\end{align*}
If $x \leq 1$, then $F(s,x) \leq F(0,1) = 8\Delta^2 \exp(-\frac{\ell - 3}{\Delta} + \ln{\ell}) \leq \frac{1}{100(\ell+1)^3}$ for large  $\ell = O(\Delta^2 \ln^2 \Delta)$. If $x > 1$, then $F(s,x) \leq F(\ell, x) = 2x(2\Delta)^{2x} \exp\tp{-\frac{1}{\Delta}x^2+\tp{\ln \ell + \frac{2}{\Delta}}x - \frac{\ell}{\Delta}}$. Then, the upper bound $F(s,x) \leq \frac{1}{100(\ell +1)^3}$ can be obtained from a similar analysis for the case $s \geq \ell - 1$. Hence, there exists an odd $\ell = O(\Delta^2 \log^2 \Delta)$ such that 
\begin{align*}
  \+R^{ab} \leq   \sum_{s=0}^{\ell} \sum_{x=0}^{\lceil s/2 \rceil} F(s,x) \leq  \sum_{s=0}^{\ell} \sum_{x=0}^{\lceil s/2 \rceil}\frac{1}{100(\ell + 1)^3} \leq \frac{1}{100(\ell+1)}.
\end{align*}
Finally, by~\eqref{eq:def+E}, we have $\xi_\ell^{ab} \leq 12\+R^{ab} < \frac{1}{2(\ell + 1)}$ for all $a,b$.

\subsection{Recovering the initial and final colorings: Proof of \Cref{lem:congestion}}
\label{sec:congestion}
For any initial $\sigma$ and final $\tau$, let $\gamma^{\sigma,\tau}$ be the canonical path. 
Since we assume $\ell$ is an odd number, the canonical path cannot change the color of a leaf edge at Stage-II. If $(\gamma, \gamma')$ changes the color of the leaf edge, the move $(\gamma, \gamma')$ is either in Stage-I or Stage-III. 
We can distinguish Stage-I or Stage-III by looking at the color of $\gamma(r) = \gamma'(r)$. 
If the color of the root hanging edge $r$ is $a$, then the move is in Stage-I. Otherwise, the color of $r$ is $b$ and the move is in Stage-III. 
We first assume $(\gamma, \gamma')$ is in Stage-I. We show how to recover the initial state $\sigma$. Note that since $a$ and $b$ are fixed. Once we recover $\sigma$, the $\tau$ is uniquely fixed.
The other case is $(\gamma,\gamma')$ is in Stage-III. We will discuss it at the end of the proof.

Let $\gamma \in \Omega(\mu)$ with $\gamma(r) = a$.
Let $h \in L_\ell(\^T^\star)$ be a leaf node. Let $c$ be an available color for $h$ in $\gamma$.
Given a move $(\gamma, \gamma \oslash_h c)$, we want to recover how many possible pairs $\sigma$ such that the canonical path $\gamma^{\sigma,\tau}$ uses the move, where $\tau = \sigma \oslash_r b$ is fixed.
Recall $\+E^*(\gamma)$ is the longest $(a,b)$-alternating path in $\gamma$, with vertex sequence $u_0,u_1,\ldots,u_{S(\gamma)+1}$. 
Recall $\+E_i(\gamma)$ is defined in~\Cref{sec:ana-con}.

\begin{lemma}\label{lem:recover-1}
Fix a coloring $\gamma \in \Omega(\mu)$ with $\sigma(r) = a$.
For any $h \in L_\ell(\^T^\star)$ and any $c\in \elist{\gamma}{h} \setminus \gamma(h)$, the move $(\gamma, \gamma \oslash_h c)$ can appear in some canonical path $\gamma^{\sigma,\tau}$ only if one of the following two conditions holds
\begin{itemize}
    \item $h \in \+E^*(\gamma)$;
    \item there exists an odd $i \in [S(\gamma)]$ with $i \neq \ell$ such that $h \in \+E_i(\gamma)$.
\end{itemize}
\end{lemma}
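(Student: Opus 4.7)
The plan is to first locate which stage of the canonical path the transition $(\gamma, \gamma \oslash_h c)$ belongs to, and then exploit the disjointness of the subtrees hosting $\+E^*(\sigma)$ and the side-paths $\+E_i(\sigma)$ to show that the reconstructions of these objects from $\gamma$ coincide with their counterparts in $\sigma$.

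Step 1 (locating the stage). Under the hypothesis $\gamma(r) = a$, the transition must lie in Stage-I of some $\gamma^{\sigma,\tau}$: Stage-II only recolors even-indexed edges of $\+E^*(\sigma)$, whose levels are even and therefore different from $\ell$ since $\ell$ is odd; and Stage-III runs entirely with $\gamma(r) = b$ because $r = e_0$ was flipped from $a$ to $b$ during Stage-II. Hence $h$ is a leaf edge touched during Stage-I, and the set of leaves recolored in Stage-I is exactly $\bigl(\+E^*(\sigma) \cup \bigcup_{j\text{ odd}} \+E_j(\sigma)\bigr) \cap L_\ell$. Consequently either $h = e_\ell \in \+E^*(\sigma)$ (when $s(\sigma) = \ell$), or $h$ is the deepest edge of some $\+E_i(\sigma)$ with odd $i < \ell$; note that $\+E_\ell(\sigma) = \varnothing$ because when $q = \Delta + 2$ any leaf edge admits an available color outside $\{a,b\}$.

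Step 2 (invariance of the reconstructions). The Stage-I ordering processes edges from the leaves upward, and within the leaf level processes edges of $\bigcup_j \+E_j$ before edges of $\+E^*$. Thus at the state $\gamma$: every non-leaf edge has its $\sigma$-color; the leaf edge $e_\ell$ of $\+E^*(\sigma)$, if present, still has color $b$; and any discrepancy from $\sigma$ occurs at leaves belonging to $\bigcup_{j\text{ odd}} \+E_j(\sigma)$. Using these facts I will verify that $\+E^*(\gamma) = \+E^*(\sigma)$ and, for the relevant $i$, $\+E_i(\gamma) = \+E_i(\sigma)$. For $\+E^*$, the downward walk from $r$ sees only unchanged colors at every level through $\ell$ and cannot extend past a leaf. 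For $\+E_i(\gamma)$ with odd $i \le \ell - 2$, the construction examines neighborhoods of vertices whose incident edges are all non-leaf until the final extension step, which occurs at the leaf-parent vertex $u_t$ of $\+E_i$; here I need the leaf siblings of $h$ at $u_t$ to still carry their $\sigma$-colors. This holds because those siblings lie in the subtree rooted at $u_t$, which is contained in the subtree of $\+E_i$, and the subtrees hosting $\+E^*, \+E_1, \+E_3, \dots$ are pairwise disjoint, so no earlier Stage-I move could have altered any neighbor of $h$ at $u_t$. Therefore the available-color test on the penultimate edge of $\+E_i$ and the computation of the first available color at $u_{t-1}$ both agree with $\sigma$, and $h$ is selected as the $t$-th edge of $\+E_i(\gamma)$.

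Step 3 (conclusion and main obstacle). Combining the two cases gives $h \in \+E^*(\gamma)$ or $h \in \+E_i(\gamma)$ for some odd $i \in [S(\gamma)]$ with $i \ne \ell$, which is exactly the desired conclusion. The main difficulty is the verification in Step 2 at the leaf-parent vertex: a priori, prior Stage-I moves on other leaves could have changed the set of available colors needed either to extend $\+E_i$ past level $\ell - 1$ or to determine which level-$\ell$ edge is chosen by the construction. The disjoint-subtree structure of $\+E^*, \+E_1, \+E_3, \dots$ is what rules this out, by guaranteeing that every leaf sibling of $h$ at $u_t$ lies outside $\+E(\sigma)$ and is therefore untouched, so the neighborhood $N_e(u_t)$ used by the reconstruction is identical in $\gamma$ and in $\sigma$.
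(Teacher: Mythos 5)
Your proof is correct and follows the same structure as the paper's: locate the move in Stage-I, split into the cases $h$ on $\+E^*(\sigma)$ versus $h$ in some $\+E_i(\sigma)$, and show the reconstructions from $\gamma$ agree with the $\sigma$-paths. Your explicit use of the disjoint-subtree structure to argue that the leaf siblings of $h$ at $u_t$ are untouched fills in a detail the paper leaves implicit when it asserts $\+E_j(\gamma)=\+E_j(\sigma)$ merely because $\+E_j(\sigma)$'s own edges retain their colors.
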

\begin{proof}
Suppose the move $(\gamma, \gamma \oslash_h c)$ appears in some canonical path $\Gamma^{\sigma,\tau}$. 
It must appear in Stage-I.
In Stage-I, the algorithm uses a specific ordering to change the colors of edges. It holds that $\gamma(h) = \sigma(h)$ and for all non-leaf edge $e$, $\gamma(e) = \sigma(e)$.
Let $\+E^*(\sigma) = \sigma \oplus \tau$ be the $(a,b)$-alternating path for the initial coloring $\sigma$.
Recall $\+E^*(\sigma)$ has vertices $v_0,v_1,\ldots,v_{S(\sigma)+1}$.
Let $\+E_i(\sigma)$ denote the path $\+E_i$ constructed in Stage-I.
Since the move is used in the canonical path, there are only two cases.
\begin{itemize}
\item The leaf edge $h$ is the last edge in $\+E^*(\sigma)$. In this case $\+E^*(\sigma) = \+E^*(\gamma)$ and $h \in \+E^*(\gamma)$.
\item The leaf edge $h$ belongs to $\+E_j(\sigma)$, where  $j \in [S(\sigma)]$ is odd and $j \leq \ell - 2$. In this case $\+E_j(\gamma) = \+E_j(\sigma)$, because no edges in $\+E_j(\sigma)$ changed its color in the path from $\sigma$ to $\gamma$. Hence, $h \in \+E_j(\gamma)$.
\end{itemize}
In the second case, we assume $j \leq \ell - 2$. This is because both $j$ and $\ell$ are odd numbers. Also note that $j \neq \ell$, because any leaf edge must have an available color $c \notin \{a,b\}$ and $\+E_{\ell}(\sigma) = \emptyset$.
\end{proof}

Fix a coloring $\gamma \in \Omega(\mu)$ with $\sigma(r) = a$.
Fix $h \in L_\ell(\^T^\star)$ and $c\in \elist{\gamma}{h} \setminus \gamma(h)$ satisfying the condition in \Cref{lem:recover-1}.
For any odd $i \in [S(\gamma)]$ with $i \leq \ell - 2$, we say the path $\+E_i(\gamma)$ is \emph{suspicious} if the last edge is in the level $\ell - 1$ of $\^T$, formally, $\+E_i(\gamma) \cap L_{\ell-1}(\^T^\star) \neq \emptyset$ and $\+E_i(\gamma) \cap L_{\ell}(\^T^\star) = \emptyset$.
We use $i_1,i_2,\ldots,i_\beta$ to list the indices of all suspicious paths.
For any $j \in [\beta]$, we use $H(i_j)$ to denote the set of leaf edges adjacent to the last edge in path $\+E_{i_j}(\gamma)$. 
Finally, we say the alternating path $\+E^*(\gamma)$ is \emph{suspicious} if $S(\gamma) =\ell - 1$.
If $\+E^*(\gamma)$ is suspicious, let $H_0$ be the set of leaf edges adjacent to $\+E^*(\gamma)$; otherwise, let $H_0 = \emptyset$.
By the definition of paths $\+E_i(\sigma)$ and $\+E^*(\sigma)$, it is easy to verify that 
\begin{itemize}
\item $H_0,H(i_1),\ldots,H(i_\beta)$ are disjoint.
\item Since the degree of the tree is $\Delta$, we have $|H_0| < \Delta$ and $|H(i_j)| < \Delta$ for all $j \in [\beta]$.
\end{itemize}

Next, we show the following results.
\begin{lemma}\label{lem:recover-2}
The move $(\gamma, \gamma \oslash_h c)$ appears in a canonical path $\gamma^{\sigma,\tau}$ only if
\begin{itemize}
\item $\sigma \oplus \gamma \subseteq H_0 \cup H(i_1) \cup \ldots \cup H(i_\beta)$;
\item For any $H \in \{H_0,H(i_1),\ldots,H(i_\beta)\}$, $|(\sigma \oplus \gamma) \cap H| \leq 1$.
\end{itemize}
\end{lemma}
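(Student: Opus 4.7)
My plan is to exploit the Stage-I schedule---leaves first, then level by level upward, with edges of $\cup_i \+E_i$ before those of $\+E^*$ within each level---to pin down precisely which edges of $\sigma$ could have been touched along the prefix of the canonical path leading to $\gamma$. Since $\gamma(r) = a$, the move $(\gamma, \gamma \oslash_h c)$ must occur in Stage-I, and because $h \in L_\ell(\^T^\star)$, every recoloring executed before it has been on a leaf edge. Hence $\sigma(e) = \gamma(e)$ for every non-leaf edge $e$, which already gives $\sigma \oplus \gamma \subseteq L_\ell(\^T^\star)$.

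Next I would compare the constructions of $\+E^*(\cdot)$ and $\+E_i(\cdot)$ under $\sigma$ and under $\gamma$. When $\+E_i$ is extended from an edge of level $t \leq \ell - 2$, the procedure queries only colors of edges incident to vertices at depths $t-1$ and $t$, all of which are non-leaf; hence $\+E_i(\sigma)$ and $\+E_i(\gamma)$ coincide through every such step. Divergence can therefore occur only when the procedure reaches a level-$(\ell-1)$ edge $\{u_j,u_{j+1}\}$, because the ``two available colors'' test at this edge also depends on the $\Delta - 1$ leaf children of $u_{j+1}$. If in $\sigma$ one of these leaves carries the first available color $c$ of $u_j$, then $\+E_i(\sigma)$ extends one further step to that leaf, whereas after that leaf is recolored the edge $\{u_j,u_{j+1}\}$ acquires a second available color in $\gamma$ and $\+E_i(\gamma)$ terminates there, making $\+E_i(\gamma)$ suspicious. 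The same phenomenon applies to $\+E^*$: using that $\ell$ is odd and $\+E_\ell(\sigma) = \emptyset$ (each leaf edge has $q-(\Delta-1) = 3$ available colors, so $e_\ell$ always has an available color outside $\{a,b\}$), the only way $\+E^*(\gamma)$ can differ from $\+E^*(\sigma)$ is by losing the terminal leaf $e_\ell$, in which case $S(\gamma) = \ell - 1$ and $\+E^*(\gamma)$ is suspicious.

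Consequently any $e \in (\sigma \oplus \gamma) \cap L_\ell(\^T^\star)$ must extend either a path $\+E_i(\sigma)$ whose truncation is some suspicious $\+E_{i_j}(\gamma)$, yielding $e \in H(i_j)$, or the alternating path $\+E^*(\sigma)$ whose truncation is a suspicious $\+E^*(\gamma)$, yielding $e \in H_0$. This gives the containment $\sigma \oplus \gamma \subseteq H_0 \cup \bigcup_{j=1}^{\beta} H(i_j)$. For each such $H$ the extension is unique: it is the edge incident to the terminal vertex whose $\sigma$-color equals the first available color $c$ of the parent (for $\+E_{i_j}$) or equals $b$ (for $\+E^*$), and in both cases this color is already determined by $\gamma$ alone. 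Hence $|(\sigma \oplus \gamma) \cap H| \leq 1$ for every $H \in \{H_0, H(i_1),\ldots, H(i_\beta)\}$.

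The main technical obstacle is the comparison step: verifying cleanly that the path-construction procedure is insensitive to leaf colors at every level $\leq \ell - 2$ and that at level $\ell - 1$ the loss of a single terminal leaf is the only possible discrepancy between $\sigma$ and $\gamma$. Once this is settled, the Stage-III case $\gamma(r) = b$ follows from a symmetric argument exchanging the roles of $\sigma \leftrightarrow \tau$ and $a \leftrightarrow b$.
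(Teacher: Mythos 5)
Your proposal is correct and follows essentially the same reasoning as the paper's proof. The paper tracks the intermediate colorings $\gamma_k$ along the canonical path and shows inductively that each recolored leaf turns the corresponding $\+E_{j_{k'}}(\sigma)$ into a suspicious truncation, while you argue directly that the construction of $\+E_i(\cdot)$ is insensitive to leaf colors at every level $\leq \ell-2$ and can only lose a single terminal leaf at level $\ell-1$; these are the same key observation phrased from two angles, and both yield the containment in $H_0\cup\bigcup_j H(i_j)$ and the per-block cardinality bound.
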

\begin{proof}
Consider a canonical path $\Gamma^{\sigma,\tau}$ where $\sigma$ and $\tau$ come from the flipping coupling. Since we assume $\gamma(r) =a$, the move $(\gamma, \gamma \oslash_h c)$ can only occur at the Stage-I. Consider all the paths $\+E_i(\sigma)$, where $i \in [S(\sigma)]$ is odd and $i \neq \ell$.
If $\+E_i(\sigma)$ contains a leaf edge, then we denote this leaf edge by $h(i)$.
If $S(\sigma) = \ell$, then we use $h(\ell)$ to denote the last leaf edge in $\+E^*(\sigma)$. 
Let $j_1,j_2,\ldots,j_\alpha$ be all odd indices in $[S(\sigma)]$ such that the leaf edge $h(j_k)$ is defined.
Then, all these $\alpha$ leaf edges are the first $\alpha$ edges in $\+E$ according to the ordering in Stage-I. Assume the first $\alpha$ steps in Stage-I are to change the colors of edges $h(j_1),h(j_2),\ldots,h(j_\alpha)$ so that we obtain a path $(\sigma=\gamma_0, \gamma_1,\ldots,\gamma_\alpha)$. 

Fix $1 \leq k \leq \alpha$. Recall $S(\gamma_k)$ denotes the length of the maximal $(a,b)$-alternating path $\+E^*(\gamma_k)$ from $r$ in $\gamma_k$. For any odd $i \in [S(\gamma_k)]$, again we can define the path $\+E_i(\gamma_k)$ (applying Stage-I from $\gamma_k$ to construct paths $\+E_i(\gamma_k)$). 
We claim the following properties.
\begin{itemize}
    \item for any $ 1\leq k' \leq k-1$ with $j_{k'} \neq \ell$, it holds that $\+E_{j_{k'}}(\sigma) =  (\+E_{j_{k'}}(\gamma_k),h(j_{k'}))$.
    In words, $\+E_{j_{k'}}(\gamma_k)$ is a suspicious path obtained from $\+E_{j_{k'}}(\sigma)$ by removing the last edge (the leaf edge).

    \item if $j_{k^*} = \ell$ for some $1\leq k^* \leq k - 1$, then $S(\gamma_k) = \ell - 1$.
\end{itemize}
The first property holds because once we change the color of $h(j_{k'})$, $\+E_{j_k}(\sigma)$ becomes the suspicious path $\+E_{j_k'}(\sigma) = \+E_{j_k}(\sigma) \setminus h(j_{k'})$ and $\+E_{j_k}(\sigma) = \+E_{j_k'}(\sigma)$.  
The second property holds because once we change the color of the leaf edge in the alternating path $\+E^*(\sigma)$, then $\+E^*(\gamma_{j_{k^*}}) = \+E^*(\sigma) \setminus h(j_{k^*})$, which means the length of the alternating path decreases by 1 and we have $S(\gamma_k) = S(\gamma_{k^*}) = \ell - 1$.
The above two properties says the first $k-1$ moves are ``recorded'' in $\gamma_k$ by suspicious paths and $S(\gamma_k)$. Formally, let $R$ be the set of all indices of suspicious paths in $\gamma_k$. It holds that  $\{ j_1,j_2,\ldots,j_{k-1} \} \subseteq R \cup \set{\ell}$ and $\ell \in \{ j_1,j_2,\ldots,j_{k-1} \}$ only if $S(\gamma_k)=\ell - 1$.

Now given a move  $(\gamma, \gamma \oslash_h c)$, we show how to recover all possible starting colorings $\sigma$.
The $\sigma$ must have the same colors as $\gamma$ for all edges at level $\leq \ell - 1$. 
We list all the indices $i_1,i_2,\ldots,i_\beta$ of suspicious paths for $\gamma$. Let $H(i_k)$ and $H_0$ be the sets in the lemma.
By above analysis, for any $k \in [\beta]$, in $\sigma$, either $\sigma(H(i_k)) = \gamma(H(i_k))$ (the case $\+E_{i_k}(\sigma) = \+E_{i_k}(\gamma)$) or there exists only one edge $h(i_k) \in H(i_k)$ such that $\sigma(H(i_k))$ differs from $\gamma(H(i_k))$ at the color of $h(i_k)$ (the case $\+E_{i_k}(\sigma) = (\+E_{i_k}(\gamma),h(i_k))$). 
If $S(\gamma) = \ell - 1$, then $\sigma$ and $\gamma$ can differ at $\leq 1$ edge in $H_0$.
For all other leaf edges not in $H_0 \cup H(i_1) \cup \ldots \cup H(i_\beta)$, $\sigma$ and $\gamma$ must take the same color.
This proves the lemma.
\end{proof}

Now, we are ready to prove \Cref{lem:congestion}.  Fix a $\gamma$ with $\gamma(r) = a$. By \Cref{lem:recover-1}, the number of possible $h \in L_\ell(\^T^\star)$ are at most $P(\gamma) + Z(\gamma)$. For each $h$, the number of $ c\in \elist{\gamma}{h} \setminus \gamma(h)$ is at most $|\elist{\gamma}{h} \setminus \gamma(h)| = q - \Delta = 2$.\footnote{We remark that one can further reduce the number of $c's$ to $1$ because the algorithm always uses the first available color in the ordering $\+O$. But the upper bound $2$ is enough for us.} Next, given $(\gamma,\gamma\oslash_hc)$, we need to recover the number of possible $\sigma$'s. By \Cref{lem:recover-2}, for each $H \in \{H_0,H(i_1),\ldots,H(i_\beta)\}$, either $\sigma(H) = \gamma(H)$ or there exists $h \in H$, such that $\gamma$ and $\sigma$ differ only at $h$, and to get a proper coloring, the number of possible colors for $\sigma(h)$ is $q - \Delta = 2$. 
Recall $|H| \leq \Delta - 1$ for all $H \in \{H_0,H(i_1),\ldots,H(i_\beta)\}$.
Hence, the total number of possibilities is $1 + 2(\Delta - 1) < 2\Delta$.
Note that $H_0 \neq \emptyset$ only if $Z(\gamma) \geq 1$. 
\begin{align*}
       \sum_{h \in L_\ell} \sum_{c\in \elist{\gamma}{h} \setminus \gamma(h)} \abs{\set{\sigma \in \Omega(\mu^{ra}) \mid (\gamma,\gamma \oslash_hc) \in \gamma^{\sigma,b}  }}^2 \leq  2(P(\gamma) + Z(\gamma)) \cdot (2\Delta)^{2(X(\gamma) + Z(\gamma))}.
\end{align*}

Finally, consider the case  $\gamma(r) = b$, and thus $(\gamma, \gamma')$ is in Stage-III. 
By \Cref{claim:reverse}, Stage-III is the reverse of the Stage-I starting from $\tau$.
Hence, Stage-III changes the colors of the leaves at the end of the process.
Instead of recovering the initial coloring $\sigma$, we recover the final coloring $\tau$ in this case.
The proof in this case follows by symmetry.



\subsection{Bounding the probability of reaching the leaves: Proof of \Cref{lem:bad-event-prob}}\label{sec:bad-event-prob}

Now, we prove \Cref{lem:bad-event-prob}. Suppose $S(\gamma) = s$ and $P(\gamma) = x$. We can fix the $x$ odd indices in $[s]$ such that $P_i(\gamma) = 1$. There all at most $\binom{\lceil s/2 \rceil}{x}$ ways to choose these indices. Fix a choice $j_1 > j_2 > \ldots > j_{x}$. We bound the probability of $S(\gamma) = s$ and $P_{j_i}(\gamma) = 1$ for all $i \in [x]$. Consider the following process, we first sample a random $\gamma \in \mu$ and then reveal the colors on edges one by one.
We first reveal $\gamma(r)$. Without loss of generality, we assume $\gamma(r) = a$. The case $\gamma(r) = b$ follows by symmetry. We then reveal colors of all children edges of $r$. If one child edge $e$ takes the color $b$, which happens with probability $\frac{\Delta - 1}{q-1}$, we reveal the colors of all children edges of $e$ and look for the color $a$. If $S(\gamma) = s$, we must repeat this process for $s$ steps and reveal colors on a subset $\Lambda$. It gives 
\begin{align*}
    \Pr[]{S(\gamma) = s} \leq \tp{\frac{\Delta-1}{q-1}}^s \leq \tp{1-\frac{q-\Delta}{q-1}}^s \leq \tp{1-\frac{1}{\Delta}}^s.
\end{align*}

For any $i \in [x]$, 
we simulate the process in Stage-I for finding the path $\+E_i$. Given the coloring in $\Lambda$, we either know $\+E_{j_i} = \emptyset$ or we know the first edge $(u_1 = v_{j_i},u_2)$ in $\+E_{j_i}$. This is because the coloring on $\Lambda$ gives the colors of all edges adjacent to the $(a,b)$-alternating path. If $\+E_{j_i} = \emptyset$, then the probability of $P_{j_i}(\gamma) = 1$ is $0$. Assume the worst case that the first edge $(u_1,u_2)$ is fixed. At this moment, we already revealed the colors of all edges incident to $u_1$. Since $q = \Delta +2$, there are 2 missing colors, say colors $c$ and $c'$. We will keep constructing the path $\+E_{j_i}$ only if $(u_1,u_2)$ has less than 2 available colors, which happens only if both $c$ and $c'$ appear in child-edges of $u_2$.
We reveal in coloring of child-edges of $u_2$, conditional on all information revealed so far,\footnote{Let $\Lambda$ denote the set of edges revealed so far. The $\Lambda$ is a random subset. However, the random  $\Lambda$ is fully determined by $\gamma(\Lambda)$. Hence, given $\Lambda$ and $\gamma(\Lambda)$, all other variables in $\overline{\Lambda}$ follows the distribution of $\mu$ conditional on $\gamma(\Lambda)$. } this happens with probability $\binom{q-3}{\Delta - 1}/\binom{q-1}{\Delta - 1} =\frac{\Delta - 2}{\Delta}$. This event needs should happen for $\ell - 1 -j_1$ times if $P_{j_i}(\gamma) = 1$. The probability is at most $(1-\frac{2}{\Delta})^{\ell - 1 - j_1}$. 
Finally, we can use this argument for all paths $\+E_{j_i}$ for $i \in [x]$, and all the edges we considered above are disjoint. This gives
\begin{align*}
  \Pr[]{S(\gamma = s) \land \forall i \in [x], P_{j_i}(\gamma)  = 1 } &\leq \tp{1-\frac{1}{\Delta}}^s \prod_{i =1}^x\tp{1 - \frac{2}{\Delta}}^{\ell - 1 - j_i}\\
(\text{by $j_i \geq s - 2(i-1)$})\quad  &\leq \tp{1-\frac{1}{\Delta}}^s \prod_{i =1}^x\tp{1 - \frac{2}{\Delta}}^{\ell + 2i -s - 3}.
\end{align*}
The above bound relies on the fact the graph is a tree, 
so that we can use conditional independence property to take the product of all probabilities.
The lemma follows from a union bound over all possible $j_1,j_2,\ldots,j_x$.

\section{Coupling and canonical paths for the edge dynamics} \label{sec:edge-dynamics-cond}
In this section, we give a proof overview for \Cref{lem:edge-dynamics-cond}.
The proof will follow the same high-level plan described in \Cref{sub:congestion-overview} and \Cref{sec:canonical-path}.
We assume the readers are already familiar with those sections.
For convenience, we denote $\^T^\star_\ell, \mu^\star_\ell$ as $\^T$ and $\mu$, respectively.
Also, we will use $\widetilde{\^T}$ to denote the tree $\^T - r$.

Recall that the collection of available update is defined in $\+B$ in \eqref{eq:def-edge-block}. 
For convenient, we will use
\begin{align} 
   \label{eq:def-collection-A}
  \+A &:= \set{\set{e,r} \mid e \in L_1(\^T^\star_\ell)} \\
   \label{eq:def-collection-B}
  \+B &= \set{\set{e} \mid e\in \^T^\star_\ell} \cup \+A.
\end{align}

Our goal is to establish approximate root-factorization of variance as defined in \eqref{eqn:root-block-factorization}.
In particular, we need to verify it with parameter $\*\alpha$ and $\beta$ as in \Cref{lem:edge-dynamics-cond}.
Recall this states that for any function $f:\Omega(\mu) \to \^R$, the following holds:
\begin{align}\label{eq:root-factorize}
  \Var[\mu]{\mu_{\widetilde{\^T}}[f]} 
  &\leq \tp{\sum_{i=0}^\ell \alpha_i \sum_{h \in L_i} \mu[\Var[h]{f}]} + \beta \sum_{B \in \+A} \mu[\Var[B]{f}].
\end{align}

To allow the update on multiple edges, we generalize some definitions in \Cref{sub:congestion-overview}.
\begin{definition}[Canonical paths w.r.t. a coupling set]
    Given a coupling set $\+C$, for any pair $(\sigma, \tau) \in \Omega(\+C)$, a canonical path $\gamma^{\sigma, \tau}$ is a simple path from $\sigma$ to $\tau$ on the graph $(\Omega(\mu), \Lambda(\mu))$, where an edge $(\gamma, \gamma') \in \Lambda(\mu)$ if and only if $\gamma \oplus \gamma' \in \+B$.

    Let $\Gamma = \{\gamma^{\sigma,\tau} \mid (\sigma,\tau) \in \Omega(\+C) \}$ denote the set of canonical paths w.r.t. the coupling set $\+C$.
\end{definition}

In the following, we use $Q^{\+B}_\mu(\gamma, \gamma')$ to denote the transition rate from $\gamma$ to $\gamma'$ (with $\gamma \oplus \gamma' = B \in \+B$) in the (continuous-time) heat-bath $\+B$-block dynamics for $\mu$.
Formally,
\begin{align*}
    Q^{\+B}_\mu(\gamma, \gamma') &= \mu^{\gamma_{\^T\setminus B}}(\gamma') = \mu^{\gamma_{\^T\setminus B}}_{B}(\gamma_B').
\end{align*}

\begin{definition}[Congestion of canonical paths w.r.t. a coupling set]
Given a coupling set $\+C$ and a set of canonical paths $\Gamma$, define the expected congestion at level $0 \leq t \leq \ell$ by
\begin{align}\label{eq:def-congestion-level}
    \xi_t &= \max_{a,b\in [q-d]:\atop a\neq b} \sum_{s=(\gamma,\gamma')\in\Lambda(\mu):\atop (\gamma\oplus\gamma')\in L_t} \frac{ \tp{\Pr[(\sigma,\tau) \sim \+C_{ab}]{s\in\gamma^{\sigma,\tau}   } }^2}{\mu(\gamma)Q^{\+B}_\mu(\gamma,\gamma')}.
\end{align}
Then, define the expected congestion at $\+A$ by
\begin{align}\label{eq:def-congestion-block}
    \xi_{\+A} &= \max_{a,b\in [q-d]:\atop a\neq b} \sum_{s=(\gamma,\gamma')\in\Lambda(\mu):\atop (\gamma\oplus\gamma')\in \+A} \frac{ \tp{\Pr[(\sigma,\tau) \sim \+C_{ab}]{s\in\gamma^{\sigma,\tau}   } }^2}{\mu(\gamma)Q^{\+B}_\mu(\gamma,\gamma')}.
\end{align}
\end{definition}

Note that in \eqref{eq:def-congestion-level} the summation is over pairs $\gamma,\gamma'$ that only differ at a single edge.
In this case, it holds that $Q^{\+B}_\mu(\gamma, \gamma') = Q^{\text{Glauber}}_\mu(\gamma, \gamma')$ and hence \eqref{eq:def-congestion-level} has the same form as \eqref{eq:def-congestion}.

Similar to \Cref{lem:con-var}, we have the following result.
\begin{lemma}\label{lem:con-var-block}
If there exist a set of couplings $\+C$ and a set $\Gamma$ of canonical paths such that the congestion with respect to coupling is $\xi_t$ for $0 \leq t \leq \ell$ and $\xi_{\+A}$ for $\+A$, then~\eqref{eq:root-factorize} holds with $\*\alpha = 2(\ell+1)\*\xi$ and $\beta = 2\xi_{\+A}$.
\end{lemma}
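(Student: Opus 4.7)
The plan is to adapt the proof of Lemma~\ref{lem:con-var} almost verbatim, with the only new ingredient being that the canonical paths may now make either a singleton-edge update or a two-edge block update in $\+A$, which contributes an extra term to be split off before Cauchy--Schwarz. As in \eqref{eqn:var-breakup}, I start from
\begin{align*}
  \oVar_\mu[\mu_{\widetilde{\^T}}[f]] = \frac{1}{2}\sum_{a\neq b \in [q-d]} \mu_r(a)\mu_r(b)\bigl(\mu^{ra}[f]-\mu^{rb}[f]\bigr)^2,
\end{align*}
apply the coupling $\+C_{ab}$ to rewrite the difference of conditional expectations as $\E_{(\sigma,\tau)\sim\+C_{ab}}[f(\sigma)-f(\tau)]$, and telescope along the canonical path $\gamma^{\sigma,\tau}=(\gamma_0,\ldots,\gamma_m)$ to arrive, exactly as in \eqref{eq:Eab}, at
\begin{align*}
  E_{ab} = \sum_{s=(\gamma,\gamma')\in\Lambda(\mu)}(f(\gamma)-f(\gamma'))\Pr_{(\sigma,\tau)\sim\+C_{ab}}\!\bigl[s\in\gamma^{\sigma,\tau}\bigr].
\end{align*}

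Every $s\in\Lambda(\mu)$ satisfies $\gamma\oplus\gamma'\in\+B$, so $\gamma\oplus\gamma'$ is either a singleton edge in some level $L_i$ or a block in $\+A$. I would partition the sum above into $\ell+1$ singleton level-contributions $A_0,\ldots,A_\ell$ (defined as in \eqref{eq:def-Ai}) and one extra term $A_{\+A}$ collecting the $\+A$-block moves, and then apply Cauchy--Schwarz in two stages: first with two groups to separate the singleton and $\+A$ parts (factor~$2$), and then inside the singleton part with $\ell+1$ groups across the levels (factor~$\ell+1$), yielding
\begin{align*}
  E_{ab}^{\,2} \;\leq\; 2(\ell+1)\sum_{i=0}^{\ell} A_i^{\,2} \;+\; 2\,A_{\+A}^{\,2}.
\end{align*}

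For the inner estimate of each $A_i^{\,2}$ and $A_{\+A}^{\,2}$ I would repeat the Cauchy--Schwarz trick of the proof of Lemma~\ref{lem:con-var}: insert $\sqrt{\mu(\gamma)Q^{\+B}_\mu(\gamma,\gamma')}$ into numerator and denominator, and use the identity
\begin{align*}
  \sum_{(\gamma,\gamma'):\, \gamma\oplus\gamma'=S}\mu(\gamma)Q^{\+B}_\mu(\gamma,\gamma')(f(\gamma)-f(\gamma'))^2 \;=\; 2\,\mu[\oVar_{S}f],
\end{align*}
which is valid for every $S\in\+B$ (whether $|S|=1$ or $|S|=2$) and follows by unfolding $\oVar_{S}[f](\gamma)=\tfrac{1}{2}\sum_{\gamma':\gamma_{E\setminus S}=\gamma'_{E\setminus S}}\mu^{\gamma_{E\setminus S}}(\gamma'_S)(f(\gamma)-f(\gamma'))^2$ and averaging over $\gamma$. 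This delivers $A_i^{\,2}\leq 2\xi_i\sum_{e\in L_i}\mu[\oVar_{e}f]$ and $A_{\+A}^{\,2}\leq 2\xi_{\+A}\sum_{B\in\+A}\mu[\oVar_{B}f]$ by the very definitions of $\xi_i$ and $\xi_{\+A}$ in \eqref{eq:def-congestion-level} and \eqref{eq:def-congestion-block}.

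Plugging everything back, the leading factor $\tfrac{1}{2}$ from the variance decomposition cancels one factor of~$2$, and $\sum_{a\neq b}\mu_r(a)\mu_r(b)\leq 1$ absorbs the outer weights, giving
\begin{align*}
  \oVar_\mu[\mu_{\widetilde{\^T}}[f]] \;\leq\; 2(\ell+1)\sum_{i=0}^{\ell}\xi_i\sum_{h\in L_i}\mu[\oVar_{h}f] \;+\; 2\xi_{\+A}\sum_{B\in\+A}\mu[\oVar_{B}f],
\end{align*}
which is \eqref{eq:root-factorize} with $\*\alpha=2(\ell+1)\*\xi$ and $\beta=2\xi_{\+A}$. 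The only genuinely new piece beyond Lemma~\ref{lem:con-var} is the two-group Cauchy--Schwarz splitting the singleton and $\+A$ contributions and the variance identity for 2-edge blocks; I do not expect any real obstacle, since the identity is the standard reversibility identity for the heat-bath block chain on $B$, and the telescoping/coupling machinery is identical to the Glauber case.
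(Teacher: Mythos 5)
Your proposal is correct and follows essentially the same route as the paper: decompose the variance via the coupling, telescope along the canonical paths, split the telescoped sum into the $\ell+1$ singleton levels plus the $\+A$-block contribution, apply Cauchy--Schwarz in two layers to produce the $2(\ell+1)$ and $2$ factors, and then invoke the local-variance/reversibility identity (valid for any block $S\in\+B$) to bound each group by the corresponding congestion times $\sum\mu[\oVar_S f]$. The paper writes this slightly more compactly by defining $\widehat{E}_{ab}$ as the full telescoped sum and noting that the singleton part $E_{ab}$ was already handled in Lemma~\ref{lem:con-var}, but the bookkeeping and the two Cauchy--Schwarz applications are identical to yours.
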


Now, following the proof of \Cref{lem:con-var}, we have,
\begin{proof}
 We note that by definition
  \begin{align*}
    \Var[\mu]{\mu_{\widetilde{\^T}}[f]}
   &= \frac{1}{2} \sum_{\substack{a, b \in [q-d]: a\neq b}} \mu_r(a) \mu_r(b) \tp{\E[(\sigma,\tau) \sim \+C_{ab}]{ \sum_{j=1}^{m(\sigma,\tau)} f(\gamma^{\sigma,\tau}_j) - f(\gamma^{\sigma,\tau}_{j-1})}}^2.
  \end{align*}    
Fix $a$ and $b$. Denote $ \E[(\sigma,\tau) \sim \+C_{ab}]{ \sum_{j=1}^{m(\sigma,\tau)} f(\gamma^{\sigma,\tau}_j) - f(\gamma^{\sigma,\tau}_{j-1})}$ by $\widehat{E}_{ab}$. 
Then $\widehat{E}_{ab}$ can be rewritten as
\begin{align*}
 \widehat{E}_{ab} = E_{ab} + \sum_{\substack{(\gamma,\gamma'):\\ \gamma \oplus \gamma' \in \+A }}(f(\gamma)-f(\gamma'))\Pr[(\sigma,\tau) \sim \+C_{ab}]{\gamma^{\sigma,\tau} \text{ uses } (\gamma,\gamma')  },
\end{align*}
where $E_{ab}$ is defined in \eqref{eq:Eab}.
By using Cauchy-Schwarz, 
\begin{align*}
\widehat{E}_{ab}^2
\leq 2E_{ab}^2 + 2\tp{\sum_{\substack{(\gamma,\gamma'):\\ \gamma \oplus \gamma' \in \+A }}(f(\gamma)-f(\gamma'))\Pr[(\sigma,\tau) \sim \+C_{ab}]{\gamma^{\sigma,\tau} \text{ uses } (\gamma,\gamma')  }}^2 := 2E_{ab}^2 + 2A^2.
\end{align*}
We note that $E_{ab}$ is already bounded in the proof of \Cref{lem:con-var}, we only need to bound $A^2$.
Note that $A$ and $A_i$ in \eqref{eq:def-Ai} has the same form.
By the argument in the proof of \Cref{lem:con-var}, we have
\begin{align*}
    A^2 &\leq 2 \xi_{\+A} \sum_{B \in \+A} \mu[\Var[B]{f}].
\end{align*}

Combining everything together, we have
\begin{align*}
   \Var[\mu]{\mu_{\widetilde{\^T}}[f]} 
   &\leq 2\sum_{\substack{a, b \in [q-d]: a\neq b}} \mu_r(a) \mu_r(b)  \tp{(\ell+1)\sum_{i=0}^\ell \xi_i\sum_{e \in L_i} \mu[\Var[e]{f}] + \xi_{\+A} \sum_{B \in \+A} \mu[\Var[B]{f}]} \\
   &\leq 2(\ell+1)\sum_{i=0}^\ell \xi_i\sum_{e \in L_i} \mu[\Var[e]{f}] + 2\xi_{\+A} \sum_{B\in \+A} \mu[\Var[B]{f}].
\end{align*}
Hence, the root-factorization of variance in~\eqref{eq:root-factorize} holds with $\*\alpha = 2(\ell+1)\*\xi$ and $\beta = 2\xi_{\+A}$.
\end{proof}

\subsection{Canonical paths and congestion analysis}
With \Cref{lem:con-var-block} in hand, we could bound $\alpha_i$ for $i < \ell$ and $\beta$ by using the crude $q^{\Delta^{O(\ell)}}$ bound.
In order to prove \Cref{lem:edge-dynamics-cond}, we only need to verify $\alpha_\ell \leq 1/2$ for some large $\ell$.
Just like what we did in \Cref{sec:canonical-path}.
Recall that in the setting of \Cref{lem:edge-dynamics-cond}, $q = \Delta + 1 = d + 2$. 
The major difference of this section from \Cref{sec:canonical-path} is that we will use a slightly different construction of canonical path.

Fix $a, b \in [q - d]$ with $a \neq b$, we will use the same coupling $\+C_{a,b}$ between $\mu^{ra}$ and $\mu^{rb}$ as we did in \Cref{sec:canonical-path}.
%
Following the notations in \Cref{sec:canonical-path}, we can find the maximal $(a,b)$-alternating path $\+E^* = (e_0, e_1, e_2, \cdots, e_s)$ of length $s$ in $\sigma$ such that $e_0 = r$; $\sigma(e_i) = a$ for even $i$ and $\sigma(e_j) = b$ for odd $j$.
The destination configuration $\tau$ is then obtained by exchanging the colors $a$ and $b$ in $\+E^*$.

Note that since $q = \Delta + 1$, every edge $e$ has at most $2$ available colors when we fix the colors of its neighbors.
This means when we say we ``flip'' the color of some edge, there is only one choice of its new color.
Hence, in this section, we could omit the ordering $\+O$ that is used in \Cref{sec:canonical-path} for simplicity.
Now, we start our construction of the canonical path $\gamma^{\sigma, \tau}$.
The construction takes care of the parity of $\abs{\+E^*}$.
We also assume $\ell$ is odd as in \Cref{sec:canonical-path} to avoid annoying corner cases.

\paragraph{When $\abs{\+E^*}$ is odd or $\abs{\+E^*} = \ell + 1$:}
In this case, we use the same construction for $\gamma^{\sigma, \tau}$ as in \Cref{sec:canonical-path}.
The condition on $\abs{\+E^*} =: s$ ensures that for each odd $i \leq \ell$, we have either
\begin{enumerate}
    \item $i \neq s$, i.e., $e_i$ is not the last edge in $\+E^*$ (happens when $\abs{\+E^*}$ is odd);
    \item $i = s$ but $e_s = e_i$ is a leaf of $\^T$ (happens when $\abs{\+E^*} = \ell + 1$ is even).
\end{enumerate}

Now, for an odd $i$, in \textbf{Stage-I} of \Cref{sec:canonical-path}, the edge list $\+E_i$ is constructed as follow.

When $e_i$ has $2$ available colors, we know that it must have an available color $c \not\in \set{a, b}$, and hence we could let $\+E_i = \emptyset$.
We note that the case $i = s$ and $e_i$ is a leaf falls into this case.

When $e_i$ only has $1$ available color, i.e., its current color, then $e_i$ cannot be a leaf.
By our assumption, this means $e_i$ cannot be the last edge in $\+E^*$.
Suppose $e_i = (v_i, v_{i+1})$ such that $v_{i+1}$ is a child of $v_i$.
Then the unique edge $e \in N_e(v_i) \setminus \set{e_i}$ such that $\sigma(e) \in [q]\setminus \sigma(N_e(v_{i+1}))$ will be add to $\+E_i$.
This process will be applied recursively on $e$ until it finally reaches an edge with $2$ available colors (e.g., leaves).
We refer the reader to \Cref{sec:canonical-path} for a more detailed description.

Here, a crucial observation is that $e \neq e_{i-1}$.
This is because we know $\sigma(e_{i+1}) = a$ and hence $\sigma(e_{i-1}) = a \not\in [q] \setminus \sigma(N_e(v_{i+1}))$.
But we know that $\sigma(e) \in [q] \setminus \sigma(N_e(v_{i+1}))$ which means $e$ cannot be $e_{i-1}$.
This property ensures that for different odd $i,j$, the distance between $(e_i \cup \+E_i)$ and $(e_j \cup \+E_i)$ is at least $2$ in the line graph. 
This means all the argument in \Cref{sec:canonical-path} also works in this case.

\paragraph{Otherwise:}
In this case, the last edge $e_s$ in $\+E^*$ must be an odd edge, i.e., $s$ is odd.
If we use the same construction of $\gamma^{\sigma, \tau}$ as in \Cref{sec:canonical-path}, we can no longer ensure that for different odd $i,j$, the distance between $(e_i \cup \+E_i)$ and $(e_j \cup \+E_i)$ is at least $2$.
Hence the argument in \Cref{sec:canonical-path} will fail.
See \Cref{fig:bad-example} for an example.

\begin{figure}[!h]
  \centering
  \colorlet{cR}{magenta}
  \colorlet{cG}{lime}
  \colorlet{cB}{cyan}
  \colorlet{cY}{yellow}
  \colorlet{cD}{darkgray}
  \colorlet{cO}{orange}
  \colorlet{cBG}{black!20}

  \ifarxiv
  \includegraphics{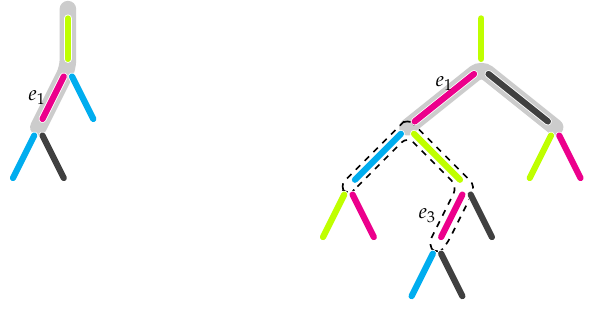}
  \else
    \begin{tikzpicture}
      \begin{scope}
      \graph[
      tree layout,
      nodes = {circle, inner sep = 0pt, minimum size = 3mm, as=},
      edges = {draw=white, double = black, line cap = round, double distance = 3pt}
      ]{
        0  -- [double = cG] 1,
        1  -- [double = cR] 2,  1  -- [double = cB] 3,  
        2  -- [double = cB] 4,  2  -- [double = cD] 5
      };
      \path (1) -- (2) node [midway, left] {$e_1$};
      \begin{scope}[on background layer]
      \draw[double = cBG, draw = white, line cap = round, double distance = 8pt, rounded corners] plot coordinates{(0) (1) (2)};
      \end{scope}
      \end{scope}
      \begin{scope}[shift={(7,0)}]
      \graph[
      tree layout,
      nodes = {circle, inner sep = 0pt, minimum size = 3mm, as=},
      edges = {draw=white, double = black, line cap = round, double distance = 3pt}
      ]{
        0  -- [double = cG] 1,
        1  -- [double = cR] 2,  1  -- [double = cD] 3,  
        2  -- [double = cB] 4,  2  -- [double = cG] 5,
        5  -- [double = cR] 6,  5  -- [double = cD] 7,
        6  -- [double = cB] 8,  6  -- [double = cD] 9,
        4  -- [double = cG] 10, 4  -- [double = cR] 11,
        3  -- [double = cG] 12, 3  -- [double = cR] 13
      };
      \path (1) -- (2) node [midway, above] {$e_1$};
      \path (5) -- (6) node [midway, left, label={[label distance=-3pt]180:$e_3$}] {};
      \begin{scope}[on background layer]
      \draw[double = cBG, draw = white, line cap = round, double distance = 8pt, rounded corners] plot coordinates{(2) (1) (3)};
      \draw[draw=black, rounded corners, dashed, thick] (6.south) -- (6.east) -- (5.east) -- (2.east) -- (2.north) -- (2.west) -- (4.west) -- (4.south) -- (4.east) -- (2.south) -- (5.west) -- (6.west) -- cycle;
      \end{scope}
      \end{scope}
  \end{tikzpicture} 
  \fi
  
  \newcommand{\colorCirc}[1]{\tikz \node[circle, fill=#1, inner sep=0pt, minimum size = 5pt] {};}
  \caption[An illustration for the construction of canonical path]{ \label{fig:bad-example}
    We consider two $3$-regular trees with \colorCirc{cG}, \colorCirc{cR}, \colorCirc{cB}, \colorCirc{cD} as avaliable colors.
    In both trees, we set $a = \colorCirc{cG}$ and $b = \colorCirc{cR}$.
    In the first example, we mark the odd numbered edge $e_1$ with corresponding $\+E_1$ using shadowed background.
    This means, in order to change the color of $e_1$, by the construction of $\+E_1$ we used in \Cref{sec:canonical-path}, we want to change the color of the root edge.
    Since the root edge only have two avaliable color \colorCirc{cG} and \colorCirc{cR}, this is not possible.
    
    In the second example, we mark $e_1$ and $\+E_1$ with shadowed background and we mark $e_3$ and $\+E_3$ with dashed circle.
    It is then easy to see that these paths have neiboring edges.
    This means, after fliping the edges in $e_3 \cup \+E_3$, the path $e_1 \cup \+E_1$ will be different.
    This will fail the construction in \Cref{sec:canonical-path}, which havily relies on these paths can be handled independently.
  }
\end{figure}

However, in this case, we note that all the even edge $e_j$ cannot be the last edge of $\+E^*$.
Hence, we could modify the original construction of $\gamma^{\sigma, \tau}$ as follow.
\begin{itemize}
    \item [\textbf{Stage-I}] For every positive even edge $j$ (excluding $e_0 = r$), construct $\+E_j$ as we did for odd edges.
    Then we flip these edges ($\cup_{\text{pos. even } j} (\+E_j \cup \set{e_j})$) from button up while making sure that edges in $\+E_j$ is flipped before $e_j$.
    After that, each edge $e_j$ have changed its color from $a$ to $c_i \not\in \set{a, b}$.
    \item [\textbf{Stage-II}] Exchange the color of $e_0$ and $e_1$. Since $\abs{\+E^*}$ is even, we know that $\abs{\+E^*} \geq 2$. Then, change the color of all the other odd edge $e_i$ from $b$ to $a$.
    \item [\textbf{Stage-III}] Reverse of \textbf{Stage-I}.
\end{itemize}
We note that \textbf{Stage-II} is the only place where we use the update $\set{e_0 = r, e_1}$.
Recall that the definition of $\xi_\ell$ that we used in this section has exactly the same form as in \Cref{sec:canonical-path}.
With this construction of $\gamma^{\sigma, \tau}$, one could then verifies $\alpha_\ell \leq 1/2$ and finish the proof of \Cref{lem:edge-dynamics-cond} by routinely following the argument used in the proof of \Cref{lem:congestion} and \Cref{lem:bad-event-prob}.
We omit this part for simplicity.

\section{Open problems}

We provide some interesting directions to explore in future works.
 
\begin{enumerate}
\item For $q=\Delta+1, \Delta\geq 3$, what is the relaxation time of the Glauber dynamics on $\Delta$-regular trees?   We have a lower bound of $\Omega(\Delta n)$ and an upper bound of $\Delta n^{1 + O(1/\log\Delta)}$.
\item Same question for general trees of maximum degree $\Delta$.  For $\Delta=2$ there is a mixing time lower bound of $\Omega(n^3\log{n})$ for $\Delta=2$ by \cite{DGJ}.  Is there a similar lower bound for general trees when $\Delta\geq 3$?
\item For $q\geq\Delta+2$ can we get tight bounds on the mixing time for general trees?  
\item 
For general graphs (not necessarily trees), \cite{WZZ24} proved rapid mixing of the Glauber dynamics when $q>(2+o(1))\Delta$.  Is there a MCMC algorithm (or other method) when $q\geq\Delta+2$,  or is there a hardness of approximate counting result in this parameter region?
\end{enumerate}

\newpage

\bibliographystyle{alpha}
\bibliography{refs.bib}

\newcommand{\etalchar}[1]{$^{#1}$}
\begin{thebibliography}{CDM{\etalchar{+}}19}

\bibitem[AL20]{alev2020improved}
Vedat~Levi Alev and Lap~Chi Lau.
\newblock Improved analysis of higher order random walks and applications.
\newblock In {\em Proceedings of the 52nd Annual ACM Symposium on Theory of
  Computing (STOC)}, pages 1198--1211, 2020.

\bibitem[ALG21]{AbdolazimiLiuOveisGharan}
Dorna Abdolazimi, Kuikui Liu, and Shayan~Oveis Gharan.
\newblock A matrix trickle-down theorem on simplicial complexes and
  applications to sampling colorings.
\newblock In {\em Proceedings of the 62nd Annual {IEEE} Symposium on
  Foundations of Computer Science (FOCS)}, pages 161--172, 2021.

\bibitem[ALO20]{anari2020spectral}
Nima Anari, Kuikui Liu, and Shayan {Oveis Gharan}.
\newblock Spectral independence in high-dimensional expanders and applications
  to the hardcore model.
\newblock In {\em Proceedings of the 61st Annual IEEE Symposium on Foundations
  of Computer Science (FOCS)}, pages 1319--1330, 2020.

\bibitem[BCC{\etalchar{+}}22]{BCCPSV21}
Antonio Blanca, Pietro Caputo, Zongchen Chen, Daniel Parisi, Daniel
  \v{S}tefankovi\v{c}, and Eric Vigoda.
\newblock On mixing of {M}arkov chains: Coupling, spectral independence, and
  entropy factorization.
\newblock In {\em Proceedings of the 33rd Annual {ACM-SIAM} Symposium on
  Discrete Algorithms (SODA)}, pages 3670--3692, 2022.

\bibitem[BCP{\etalchar{+}}22]{blanca2022entropy}
Antonio Blanca, Pietro Caputo, Daniel Parisi, Alistair Sinclair, and Eric
  Vigoda.
\newblock Entropy decay in the {S}wendsen-{W}ang dynamics on {$\Bbb Z^d$}.
\newblock {\em Ann. Appl. Probab.}, 32(2):1018--1057, 2022.

\bibitem[BKMP05]{berger2005Glauber}
Noam Berger, Claire Kenyon, Elchanan Mossel, and Yuval Peres.
\newblock Glauber dynamics on trees and hyperbolic graphs.
\newblock {\em Probab. Theory Related Fields}, 131(3):311--340, 2005.

\bibitem[Cap23]{Cap23}
Pietro Caputo.
\newblock Lecture notes on {E}ntropy and {M}arkov {C}hains.
\newblock {\em Preprint, available from:
  \href{http://www.mat.uniroma3.it/users/caputo/entropy.pdf}{http://www.mat.uniroma3.it/users/caputo/entropy.pdf}},
  2023.

\bibitem[CDM{\etalchar{+}}19]{CDMPP19}
Sitan Chen, Michelle Delcourt, Ankur Moitra, Guillem Perarnau, and Luke Postle.
\newblock Improved bounds for randomly sampling colorings via linear
  programming.
\newblock In {\em Proceedings of the 30th Annual ACM-SIAM Symposium on Discrete
  Algorithms (SODA)}, pages 2216--2234, 2019.

\bibitem[Che24]{chen2023combinatorial}
Zongchen Chen.
\newblock Combinatorial approach for factorization of variance and entropy in
  spin systems.
\newblock In {\em Proceedings of the 35th Annual {ACM-SIAM} Symposium on
  Discrete Algorithms (SODA)}, pages 4988--5012, 2024.

\bibitem[CLV21]{CLV21}
Zongchen {Chen}, Kuikui {Liu}, and Eric {Vigoda}.
\newblock Optimal mixing of {G}lauber dynamics: Entropy factorization via
  high-dimensional expansion.
\newblock In {\em Proceedings of the 53rd Annual ACM Symposium on Theory of
  Computing (STOC)}, pages 1537--1550, 2021.

\bibitem[CMT15]{CMT15}
Pietro Caputo, Georg Menz, and Prasad Tetali.
\newblock Approximate tensorization of entropy at high temperature.
\newblock {\em Ann. Fac. Sci. Toulouse Math. (6)}, 24(4):691--716, 2015.

\bibitem[CP21]{CP21}
Pietro Caputo and Daniel Parisi.
\newblock Block factorization of the relative entropy via spatial mixing.
\newblock {\em Communications in Mathematical Physics}, 388:793--818, 2021.

\bibitem[CYYZ24]{chen2024spectral}
Xiaoyu Chen, Xiongxin Yang, Yitong Yin, and Xinyuan Zhang.
\newblock Spectral independence beyond total influence on trees and related
  graphs.
\newblock {\em ArXiv preprint arXiv:2404.04668}, 2024.

\bibitem[DFFV06]{DFFHV}
Martin Dyer, Abraham~D. Flaxman, Alan~M. Frieze, and Eric Vigoda.
\newblock Randomly coloring sparse random graphs with fewer colors than the
  maximum degree.
\newblock {\em Random Structures \& Algorithms}, 29(4):450--465, 2006.

\bibitem[DGJ06]{DGJ}
Martin Dyer, Leslie~Ann Goldberg, and Mark Jerrum.
\newblock Systematic scan for sampling colorings.
\newblock {\em The Annals of Applied Probability}, 16(1):185--230, 2006.

\bibitem[DGU14]{DGU14}
Martin Dyer, Catherine Greenhill, and Mario Ullrich.
\newblock Structure and eigenvalues of heat-bath {M}arkov chains.
\newblock {\em Linear Algebra and its Applications}, 454:57--71, 2014.

\bibitem[DHP20]{DHP}
Michelle Delcourt, Marc Heinrich, and Guillem Perarnau.
\newblock The {G}lauber dynamics for edge-colourings of trees.
\newblock {\em Random Structures \& Algorithms}, 57(4):1050--1076, 2020.

\bibitem[DS91]{DG91}
Persi Diaconis and Daniel Stroock.
\newblock Geometric bounds for eigenvalues of {M}arkov chains.
\newblock {\em The Annals of Applied Probability}, 1(1):36--61, 1991.

\bibitem[DS96]{diaconis1996logarithmic}
P.~Diaconis and L.~{Saloff-Coste}.
\newblock Logarithmic {S}obolev inequalities for finite {M}arkov chains.
\newblock {\em The Annals of Applied Probability}, 6(3):695--750, 1996.

\bibitem[EF23]{eppstein2023rapid}
David Eppstein and Daniel Frishberg.
\newblock Rapid mixing for the hardcore {G}lauber dynamics and other {M}arkov
  chains in bounded-treewidth graphs.
\newblock In {\em Proceedings of the 34th International Symposium on Algorithms
  and Computation (ISAAC)}, pages 30:1--30:13, 2023.

\bibitem[EHSV23]{efthymiou2023optimal}
Charilaos Efthymiou, Thomas~P. Hayes, Daniel Stefankovic, and Eric Vigoda.
\newblock Optimal mixing via tensorization for random independent sets on
  arbitrary trees.
\newblock In {\em {APPROX/RANDOM}}, pages 33:1--33:16, 2023.

\bibitem[GJK10]{goldberg2010mixing}
Leslie~Ann Goldberg, Mark Jerrum, and Marek Karpinski.
\newblock The mixing time of {G}lauber dynamics for coloring regular trees.
\newblock {\em Random Structures \& Algorithms}, 36(4):464--476, 2010.

\bibitem[G{\v{S}}V15]{galanis2015inapproximability}
Andreas Galanis, Daniel {\v{S}}tefankovi{\v{c}}, and Eric Vigoda.
\newblock Inapproximability for antiferromagnetic spin systems in the tree
  nonuniqueness region.
\newblock {\em Journal of the ACM}, 62(6):Article 50, 2015.

\bibitem[HS07]{HayesSinclair}
Thomas~P. Hayes and Alistair Sinclair.
\newblock A general lower bound for mixing of single-site dynamics on graphs.
\newblock {\em The Annals of Applied Probability}, 17(3):931--952, 2007.

\bibitem[JS89]{jerrum1989approximating}
Mark Jerrum and Alistair Sinclair.
\newblock Approximating the permanent.
\newblock {\em SIAM Journal on Computing}, 18(6):1149--1178, 1989.

\bibitem[LMP09]{lucier2009glauber}
Brendan Lucier, Michael Molloy, and Yuval Peres.
\newblock The {G}lauber dynamics for colourings of bounded degree trees.
\newblock In {\em Approximation, randomization, and combinatorial optimization
  (RANDOM)}, volume 5687 of {\em Lecture Notes in Comput. Sci.}, pages
  631--645, 2009.

\bibitem[LP17]{levin2017markov}
David~A. Levin and Yuval Peres.
\newblock {\em Markov Chains and Mixing Times: {S}econd Edition}.
\newblock American Mathematical Society, Providence, RI, 2017.

\bibitem[LS88]{lawler1988bounds}
Gregory~F. Lawler and Alan~D. Sokal.
\newblock Bounds on the {$L^2$} spectrum for {M}arkov chains and {M}arkov
  processes: a generalization of {C}heeger's inequality.
\newblock {\em Transactions of the American Mathematical Society},
  309(2):557--580, 1988.

\bibitem[MSW03]{MSW03}
Fabio Martinelli, Alistair Sinclair, and Dror Weitz.
\newblock The {I}sing {M}odel on {T}rees: {B}oundary {C}onditions and {M}ixing
  {T}ime.
\newblock In {\em Proceedings of the 44th Annual {IEEE} Symposium on
  Foundations of Computer Science (FOCS)}, pages 628--639, 2003.

\bibitem[MSW04]{MSW04}
Fabio Martinelli, Alistair Sinclair, and Dror Weitz.
\newblock Glauber dynamics on trees: boundary conditions and mixing time.
\newblock {\em Communications in Mathematical Physics}, 250:301--334, 2004.

\bibitem[RSV{\etalchar{+}}14]{restrepo2014phase}
Ricardo Restrepo, Daniel Stefankovic, Juan~C. Vera, Eric Vigoda, and Linji
  Yang.
\newblock Phase transition for {G}lauber dynamics for independent sets on
  regular trees.
\newblock {\em {SIAM} Journal on Discrete Mathematics}, 28(2):835--861, 2014.

\bibitem[Sin92]{Sin92}
Alistair Sinclair.
\newblock Improved bounds for mixing rates of {M}arkov chains and
  multicommodity flow.
\newblock {\em Combinatorics, Probability and Computing}, 1(4):351--370, 1992.

\bibitem[SJ89]{SJ89}
Alistair Sinclair and Mark Jerrum.
\newblock Approximate counting, uniform generation and rapidly mixing {M}arkov
  chains.
\newblock {\em Information and Computation}, 82(1):93--133, 1989.

\bibitem[Sly10]{sly2010computational}
Allan Sly.
\newblock Computational transition at the uniqueness threshold.
\newblock In {\em Proceedings of the 51st Annual IEEE Symposium on Foundations
  of Computer Science (FOCS)}, pages 287--296, 2010.

\bibitem[SZ17]{sly2017glauber}
Allan Sly and Yumeng Zhang.
\newblock The {G}lauber dynamics of colorings on trees is rapidly mixing
  throughout the nonreconstruction regime.
\newblock {\em The Annals of Applied Probability}, 27(5):2646--2674, 2017.

\bibitem[TVVY12]{tetali2012phase}
Prasad Tetali, Juan~C. Vera, Eric Vigoda, and Linji Yang.
\newblock Phase transition for the mixing time of the {G}lauber dynamics for
  coloring regular trees.
\newblock {\em The Annals of Applied Probability}, 22(6):2210--2239, 2012.

\bibitem[Vig00]{Vigoda}
Eric Vigoda.
\newblock Improved bounds for sampling colorings.
\newblock {\em Journal of Mathematical Physics}, 41(3):1555--1569, 2000.

\bibitem[WZZ24]{WZZ24}
Yulin Wang, Chihao Zhang, and Zihan Zhang.
\newblock Sampling proper colorings on line graphs using {$(1+o(1))\Delta$}
  colors.
\newblock In {\em Proceedings of the 56th Annual ACM Symposium on Theory of
  Computing (STOC)}, 2024.

\end{thebibliography}

\newpage

\appendix

\section{The $\Delta n$ lower bound for relaxation time}\label{sec:lower-relax}
In this section, we prove \Cref{thm:lower-relax}.
Let $\^T = (V, E)$ be a tree of $n$ edges with maximum degree $\Delta$.
Let $q \geq \Delta + 1$ and let $\mu$ be the uniform distribution over all the $q$-edge-colorings on $\^T$.
We will give a lower bound on the relaxation time $\Trelax(\mu)$ of the Glauber dynamics on $\mu$.

\begin{lemma} \label{lem:relax-lb}
    If there is an edge $\set{u, v} = e \in E$ such that $\-{deg}_{\^T}(u) = \-{deg}_{\^T}(v) = \Delta$, then for $\Delta + 1 \leq q \leq 2\Delta$, it holds that $\Trelax(\mu) \geq \frac{\Delta n}{2(q-\Delta)^2}$.
\end{lemma}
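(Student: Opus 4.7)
My approach is a direct Poincar\'e inequality with a simple indicator test function concentrated at the distinguished edge $e$. Set $f(\sigma) := \*1[\sigma(e) = 1]$. By the permutation symmetry of the $q$ colors under $\mu$, the marginal of $\sigma(e)$ is uniform on $[q]$, so $\Var[\mu]{f} = (q-1)/q^2$. The Poincar\'e inequality then gives $\Trelax(\mu) \geq \Var[\mu]{f}/\mathcal{E}(f,f)$, where $\mathcal{E}$ denotes the Dirichlet form of the heat-bath Glauber dynamics. All I need is a tight upper bound on $\mathcal{E}(f,f)$.

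Since $f$ depends only on $\sigma(e)$, only updates at edge $e$ contribute to $\mathcal{E}(f,f)$. Let $a(\sigma) := \abs{[q] \setminus \set{\sigma(h) : h \in N(e)}}$ denote the number of colors available for $e$ given $\sigma_{E\setminus\set{e}}$. A direct computation with the heat-bath rule gives
\begin{align*}
\mathcal{E}(f,f) \;=\; \frac{1}{n}\sum_{\sigma:\sigma(e)=1}\mu(\sigma)\cdot \frac{a(\sigma)-1}{a(\sigma)} \;\leq\; \frac{1}{n}\sum_{\sigma:\sigma(e)=1}\mu(\sigma)\bigl(a(\sigma)-1\bigr) \;=\; \frac{1}{nq}\cdot \E[]{a(\sigma)-1 \mid \sigma(e)=1},
\end{align*}
using the elementary inequality $(a-1)/a \leq a-1$ together with $\mu_e(1) = 1/q$.

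The crux of the proof is an exact computation of the conditional expectation. Write $N_u := \set{\sigma(h) : h \in N(u)\setminus\set{e}}$ and $N_v$ analogously; each is a $(\Delta-1)$-subset of $\set{2,\ldots,q}$ because the $\Delta-1$ edges incident to $u$ other than $e$ are pairwise adjacent and none uses color~$1$. Two structural facts drive the calculation: first, removing $e$ splits $\^T$ into the subtree rooted at $u$ and the subtree rooted at $v$, so these two sides are conditionally independent given $\sigma(e)=1$; second, color permutations fixing $1$ are symmetries of the conditional measure, so $N_u$ is uniformly distributed over $\binom{\set{2,\ldots,q}}{\Delta-1}$ and likewise $N_v$. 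Hence $\Pr{c \in N_u} = \Pr{c \in N_v} = (\Delta-1)/(q-1)$ for every $c \in \set{2,\ldots,q}$, and by inclusion--exclusion with independence $\E[]{\abs{N_u \cup N_v} \mid \sigma(e)=1} = 2(\Delta-1) - (\Delta-1)^2/(q-1)$. Since $a(\sigma) = q - \abs{N_u \cup N_v}$, a short algebraic manipulation in which all linear terms cancel collapses $\E[]{a(\sigma)-1 \mid \sigma(e)=1}$ into the clean perfect square $(q-\Delta)^2/(q-1)$.

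Plugging the three pieces together yields $\Trelax(\mu) \geq n(q-1)^2/(q(q-\Delta)^2)$. For $q \in [\Delta+1, 2\Delta]$ we have $q-1 \geq \Delta$ and $q \leq 2\Delta$, so $(q-1)^2/q \geq \Delta^2/(2\Delta) = \Delta/2$, which delivers the desired bound $\Trelax(\mu) \geq \Delta n/(2(q-\Delta)^2)$. I do not foresee any significant technical obstacle beyond carefully justifying the marginal law of $N_u$ via color-symmetry and verifying the algebraic collapse. Conceptually, the main content is that the bottleneck at $e$ is captured \emph{exactly} by the rare event that a neighboring color on the $u$-side also appears on the $v$-side, and this rare event is what produces the critical square $(q-\Delta)^2$ in the denominator.
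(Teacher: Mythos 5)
Your proof is correct, and it takes a closely related but genuinely different route from the paper's. The paper uses a conductance/Cheeger argument: it defines the bottleneck set $S = \{\sigma : \sigma(e) = 1\}$, bounds the conductance $\Phi(S) \le \frac{1}{n}\Pr_{\sigma\sim\mu^{e1}}[a(\sigma) \ge 2]$ where $a(\sigma)$ is the number of colors available at $e$, estimates $\Pr[a(\sigma)\le 1]$ combinatorially and lower-bounds it by $1 - (q-\Delta)^2/\Delta$, and then applies Cheeger's inequality $\Trelax \ge (2\Phi_\star)^{-1}$. You instead apply the Poincar\'e variational bound directly with the indicator test function $f=\mathbf{1}[\sigma(e)=1]$ (which is exactly the dual of the paper's bottleneck set), bound the Dirichlet form via the inequality $(a-1)/a\leq a-1$, and compute the conditional first moment $\E{a(\sigma)-1 \mid \sigma(e)=1} = (q-\Delta)^2/(q-1)$ exactly, using conditional independence of the two sides of $e$ plus the color-symmetry argument that makes $N_u, N_v$ independent uniform $(\Delta-1)$-subsets of $\{2,\dots,q\}$. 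What this buys: your closed-form first-moment computation is cleaner than the paper's hypergeometric calculation (which in fact has a harmless off-by-one in the binomial coefficient, though the final lower bound $1-(q-\Delta)^2/\Delta$ it claims is still valid), and your intermediate bound $\Trelax \geq n(q-1)^2/(q(q-\Delta)^2)$ is slightly stronger than the paper's before specializing to the range $q\in[\Delta+1,2\Delta]$. The conceptual content is the same in both: the bottleneck is the rare overlap event between the two color sets $N_u$ and $N_v$, which produces the $(q-\Delta)^2$ factor.
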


To prove \Cref{thm:lower-relax}, it is sufficient for us to prove \Cref{lem:relax-lb}.
We will prove the lower bound in \Cref{lem:relax-lb} by using the notion of conductance.
Let $\Omega$ be the support of $\mu$ and let $P \in \^R^{\Omega \times \Omega}$ be the transition matrix for the Glauber dynamics on $\mu$.
The conductance is defined as follow.

\begin{definition} \label{def:conductance}
    For a set $S \subseteq \Omega$, the \emph{conductance $\Phi(S)$ on $S$} is defined as
    \begin{align*}
        \Phi(S) := \frac{\sum_{x \in S, y \in \Omega \setminus S} \mu(x) P(x, y)}{\sum_{z \in S} \mu(z)}.
    \end{align*}
    Then the conductance $\Phi_\star$ for the whole chain is defined as $\Phi_\star := \min_{S:\mu(S) \leq 1/2} \Phi(S)$.
\end{definition}
For convenience, we may use $\mu(S) = \sum_{x \in S} \mu(x)$ to simplify the notation.

The relaxation time is closely related to the conductance, which is known as Cheeger's inequality~\cite{SJ89, lawler1988bounds}.
Here, for convenience, we use the version stated in the textbook.
\begin{lemma}[\text{\cite[Theorem 13.10]{levin2017markov}}] \label{lem:conductance-relax}
   It holds that $\frac{\Phi_\star^2}{2} \leq \Trelax^{-1} \leq 2\Phi_\star$.
\end{lemma}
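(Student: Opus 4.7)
The plan is to establish the two inequalities separately, following the now-classical arguments behind discrete Cheeger's inequality.

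For the upper bound $\Trelax^{-1} \leq 2\Phi_\star$, the approach is to apply the Rayleigh-quotient characterization of the spectral gap to indicator functions. Since the Glauber dynamics has non-negative eigenvalues, $\Trelax^{-1} = 1 - \lambda_2$, and
\begin{equation*}
  1 - \lambda_2 \;=\; \inf_{f: \Var[\mu]{f} > 0} \frac{\mathcal{E}(f,f)}{\Var[\mu]{f}}, \qquad \mathcal{E}(f,f) := \tfrac{1}{2}\sum_{x,y} \mu(x) P(x,y)\bigl(f(x) - f(y)\bigr)^2.
\end{equation*}
For any $S \subseteq \Omega$ with $\mu(S) \leq 1/2$, testing with $f = \mathbf{1}_S$ gives $\mathcal{E}(\mathbf{1}_S, \mathbf{1}_S) = \sum_{x \in S, y \notin S} \mu(x) P(x,y) = \mu(S) \Phi(S)$ by \Cref{def:conductance}, together with $\Var[\mu]{\mathbf{1}_S} = \mu(S)(1 - \mu(S)) \geq \mu(S)/2$. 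Hence $1 - \lambda_2 \leq 2\Phi(S)$, and minimizing over $S$ produces $\Trelax^{-1} \leq 2\Phi_\star$.

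For the lower bound $\Phi_\star^2/2 \leq \Trelax^{-1}$, the plan is the Jerrum--Sinclair / Lawler--Sokal argument based on level sets of the second eigenfunction. Let $\psi$ realize $\lambda_2$, choose a median $c$ of $\psi$ under $\mu$ (possibly flipping sign so that $\{\psi > c\}$ has $\mu$-mass at most $1/2$), and set $g := (\psi - c)_+ \geq 0$. A truncation lemma, exploiting the eigenfunction identity $(I - P)\psi = (1 - \lambda_2)\psi$ together with the sign structure of $g$, yields $\mathcal{E}(g, g) \leq (1 - \lambda_2)\, \mu(g^2)$. The crucial step is the coarea-type bound
\begin{equation*}
  \Phi_\star\, \mu(g^2) \;\leq\; \sum_{x,y} \mu(x) P(x,y) \bigl|g(x)^2 - g(y)^2\bigr|,
\end{equation*}
obtained by writing $g(x)^2 = \int_0^\infty \mathbf{1}[g(x)^2 > t]\, dt$ and applying the definition of $\Phi_\star$ to each level set $S_t = \{g^2 > t\}$, all of which have $\mu$-measure at most $1/2$. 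Factoring $|g(x)^2 - g(y)^2| = |g(x) - g(y)|\,(g(x) + g(y))$ and applying Cauchy--Schwarz, combined with the stationarity bound $\sum_{x,y} \mu(x) P(x,y)(g(x) + g(y))^2 \leq 4\mu(g^2)$, converts this into $\Phi_\star^2 \mu(g^2)^2 \lesssim \mathcal{E}(g,g)\, \mu(g^2)$. Together with the truncation bound this gives $1 - \lambda_2 \geq \mathcal{E}(g,g)/\mu(g^2) \gtrsim \Phi_\star^2$.

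The main obstacle is the careful bookkeeping needed to recover exactly the constant $1/2$ rather than the $1/8$ that a naive execution of the outline above produces; sharpening requires choosing $c$ to split the $\mu$-mass exactly, and applying Cauchy--Schwarz to the half-chain Dirichlet form rather than to the full symmetrized one. Since this constant (and the precise formulation of the truncation lemma) is a standard textbook computation carried out in full in \cite[Theorem 13.10]{levin2017markov}, we would simply cite their argument and not reprove it; the sketch above merely records the structure behind the cited bound.
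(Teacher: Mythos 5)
Your proposal is correct: the paper itself offers no proof of this lemma beyond citing \cite[Theorem 13.10]{levin2017markov}, and your sketch is the standard Cheeger-inequality argument (indicator test functions for the easy direction, level sets of the second eigenfunction for the hard one) that underlies that citation, before you too defer to the textbook for the constants. This matches the paper's approach, so nothing further is needed.
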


How, we are ready to prove \Cref{lem:relax-lb}.

\begin{proof} [Proof of \Cref{lem:relax-lb}]
    Let $S \subseteq \Omega$ be defined as
    \begin{align}
        S := \set{\sigma \in \Omega \mid \sigma_e = 1}.
    \end{align}
    It is easy to see that $\mu(S) = 1/q \leq 1/2$ according to the definition.
    Then, in order to escape from $S$, the Glauber dynamics needs to update the edge $e$ to some other color $c \neq 1$.
    This means
    \begin{align*}
        \mu(S) \Phi(S) = 
        \sum_{\sigma \in S, \tau \in \Omega\setminus S} \mu(\sigma) P(\sigma, \tau)
        &= \frac{1}{n} \sum_{\sigma \in S}\mu(\sigma) \frac{\abs{\elist{\sigma}{e}}-1}{\abs{\elist{\sigma}{e}}}
        \leq  \frac{1}{n} \sum_{\sigma \in S}\mu(\sigma) \*1[\abs{\elist{\sigma}{e}} \geq 2],
    \end{align*}
    where we use $\elist{\sigma}{e} = [q] \setminus \set{\sigma_h \mid \text{$h$ incidents $e$}}$ to denote the set of available colors of $e$.
    Hence,
    \begin{align} \label{eq:relax-lb-1}
        \Phi(S) &\leq \frac{1}{n} \Pr[\sigma \sim \mu^{e1}]{\abs{\elist{\sigma}{e}} \geq 2}.
    \end{align}
    According to our assumption in \Cref{lem:relax-lb}, $e = \set{u, v}$ such that $\-{deg}_{\^T}(u) = \-{deg}_{\^T}(v) = \Delta$.
    In the rest part of the proof, we assume $\sigma \sim \mu^{e1}$.
    Let $\vlist{\sigma}{v}$ and $\vlist{\sigma}{u}$ be the set of colors that appears around vertex $v$ and $u$, respectively.
    Then $\abs{\elist{\sigma}{e}} \leq 1$ is equivalent to $[q]\setminus \vlist{\sigma}{v} \subseteq \vlist{\sigma}{u}$ and
    \begin{align*}
        \Pr{\abs{\elist{\sigma}{e}} \leq 1} 
        &=\Pr{[q]\setminus \vlist{\sigma}{v} \subseteq \vlist{\sigma}{u}}
        = \E{\Pr{[q]\setminus \vlist{\sigma}{v} \subseteq \vlist{\sigma}{u} \mid \vlist{\sigma}{v}}} \\
        &= \E{\left.\binom{q - (q - \Delta)}{\Delta - (q - \Delta)}\right/\binom{q-1}{\Delta-1}} = \frac{\Delta! (\Delta - 1)!}{(2\Delta - q)! (q - 1)!}.
    \end{align*}
    Let $q = \Delta + c$ such that $c \in [1, \Delta]$, we have
    \begin{align} \label{eq:relax-lb-2}
         \Pr{\abs{\elist{\sigma}{e}} \leq 1} &=
         \frac{\Delta (\Delta - 1) \cdots (\Delta - c + 1)}{(\Delta + c - 1) (\Delta + c - 2) \cdots \Delta} = \prod_{i=1}^c \frac{\Delta + 1 - i}{\Delta + c - i} \geq \tp{1 - \frac{c}{\Delta}}^c \geq 1 - \frac{c^2}{\Delta},
    \end{align}
    where in the last inequality we use the Bernoulli inequality: $(1 + x)^r \geq 1 + rx$ for $x \geq -1$ and $r \geq 1$.
    Combining \eqref{eq:relax-lb-1} and \eqref{eq:relax-lb-2}, we have
    \begin{align*}
        \Phi_\star \leq \Phi(S) \leq \frac{c^2}{n\Delta}.
    \end{align*}
    According to \Cref{lem:conductance-relax}, we have $\Trelax \geq \frac{n \Delta}{2c^2} = \frac{n\Delta}{ 2(q-\Delta)^2}$.
\end{proof}

\section{Relaxation time of Glauber dynamics when $q = \Delta 
+ 1$}
\label{sec:regular-refined}
In this section, we prove \Cref{thm:regular-refined}.
The proof will follow the same high-level plan as \Cref{thm:main-mixing}.
The crucial part is verifying the conditions required in \Cref{thm:induction}.
However, different from \Cref{sec:glauber-relax-time} where we use \Cref{thm:induction} for $\ell = \Theta(\Delta^2 \log^2 \Delta)$, in this section, we will focus on the $\ell = 1$ case.
Formally, we will prove the following result.

\begin{lemma} \label{lem:verify-condition-ell=1}
    Let $\Delta \geq 2, q = \Delta + 1$, and $\ell = 1$.
    The constants in \Cref{thm:induction} exist with $\alpha_0 = 4\Delta, \alpha_1 = 8$, and $\gamma = 6$.
\end{lemma}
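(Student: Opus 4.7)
The plan is to verify both conditions of Theorem \ref{thm:induction} directly for $\ell = 1$ with $q = \Delta + 1$. The tree $\^T^\star_1$ has just $\Delta$ edges: the hanging root $r$ and $d = \Delta - 1$ child edges of the root vertex, so both conditions amount to explicit calculations on this small graph.

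For Condition 2, I will observe that $\mu_1$ (proper $q$-edge colorings of $\^T_1 = K_{1,\Delta}$) and $\mu^{\star, 1}_1$ (on $\^T^\star_1$ with $r$ pinned to color $1$) are both uniform distributions over injective functions $[k] \to [k+1]$, with $k = \Delta$ and $k = d$ respectively. Under the natural bijection with $S_{k+1}$, the heat-bath Glauber dynamics is the lazy version of the random star-transposition walk, whose relaxation time is $\Theta(k)$. Hence approximate tensorization holds with some constant $\gamma = O(1)$; I would pin down $\gamma \leq 6$ via a direct computation, either a short canonical-paths argument inside the star or by invoking known spectral-gap bounds for the star-transpositions chain and converting to tensorization via the standard equivalence.

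For Condition 1, I will use the flip coupling $\+C_{ab}$ of Section \ref{sec:canonical-path} for $a, b \in [q-d] = \{1, 2\}$. Starting from $\sigma \sim \mu^{ra}$, the $(a, b)$-alternating path from $r$ has length $1$ (no child of the root is colored $b$) or length $2$ (some child $e^*$ is colored $b$). In the first case, the canonical path is a single transition at $r$; in the second case, since $q - \Delta = 1$ pins a unique ``free'' color $c \notin \{a, b\} \cup \sigma(\text{other children})$, the canonical path has three steps: recolor $e^*$ from $b$ to $c$, recolor $r$ from $a$ to $b$, recolor $e^*$ from $c$ to $a$. I then compute the congestion directly. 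For a transition $(\gamma, \gamma')$ at $r$ with $\gamma(r) = a, \gamma'(r) = b$, the children of the root in $\gamma$ must use all $\Delta - 1$ colors of $[q] \setminus \{a, b\}$, and exactly $\Delta$ canonical paths pass through this transition: one length-$1$ path (with $\sigma = \gamma$) and one length-$2$ path for each of the $\Delta - 1$ choices of $e^*$. Using $\mu(\gamma) = 1/(2\Delta!)$ and $Q^{\text{Glauber}}_\mu(\gamma, \gamma') = 1/2$, the routine sum yields $\xi_0 = 4\Delta$. For a transition at a child edge, only one canonical path uses it (either as step 1 or step 3 of a length-$2$ path), giving $\xi_1 = 8(\Delta-1)/\Delta \leq 8$. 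Finally, I apply Lemma \ref{lem:con-var}, tracking the refined factor $\sum_{a \neq b} \mu_r(a) \mu_r(b) = \tfrac{1}{2}$ (which holds since $|[q - d]| = 2$ and $\mu_r$ is uniform on $\{1, 2\}$ by color symmetry) rather than the crude bound $1$; this improves the constants to $\alpha_i \leq \xi_i$, hence $\alpha_0 \leq 4\Delta$ and $\alpha_1 \leq 8$.

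The main obstacle is achieving the tight constant $\gamma = 6$ in Condition 2: while the qualitative bound $\gamma = O(1)$ is essentially classical, pinning down the precise small constant requires either a careful hands-on argument on the single-star distribution (e.g., spectral decomposition using the symmetric-group symmetry) or a sharp invocation of star-transpositions bounds. The root-tensorization analysis in Steps 2--3 is then essentially mechanical, the key observation being that the length-$2$ canonical paths in Stages I/III each pass through a given child-edge transition at most once, while the length-$1$ and length-$2$ paths through a given $r$-transition together number only $\Delta$, enough to cancel the $(\Delta - 1)!$ count of such transitions.
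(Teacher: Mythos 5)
Your Condition~1 analysis matches the paper's and is correct: the flip coupling, the length-$1$/length-$2$ canonical paths, the counts ``exactly $\Delta$ paths through a root transition'' and ``one path through a child-edge transition,'' and the resulting $\xi_0 = 4\Delta$, $\xi_1 = 8(\Delta-1)/\Delta \leq 8$ are all right, as is the observation that keeping the factor $\sum_{a\neq b}\mu_r(a)\mu_r(b) = \tfrac12$ in Lemma~\ref{lem:con-var} cancels the $(\ell+1) = 2$ and yields $\alpha_i = \xi_i$. The paper reorganizes the same Cauchy--Schwarz bookkeeping into a dedicated Lemma~\ref{lem:tensor-via-routing} (bounding the expected number of level-$t$ steps per path and the max number of paths per transition, rather than summing $(\Pr[\text{used}])^2/(\mu(\gamma)Q)$), but the numbers come out identical, so this half of your argument is essentially the paper's proof in different notation.

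For Condition~2 you take a genuinely different route, and that is also where the gap is. Your identification of the Glauber dynamics on the star with $\Delta$ edges and $q=\Delta+1$ colors (and on the pinned star $\mu^{\star,1}_1$) as the lazy star-transposition walk on $S_{\Delta+1}$ (resp. $S_\Delta$) is correct, and in principle a sharp spectral gap for that walk, converted back to tensorization via $\Trelax \le C\Delta$, would close the argument. But you explicitly defer the quantitative step (``pin down $\gamma\le 6$ via a direct computation'') and call it the ``main obstacle.'' That step is precisely what the paper supplies, and by a different mechanism: rather than analyzing the global chain (representation theory of $S_n$ or canonical paths on the whole state space), Section~\ref{sec:ell=1-gamma} invokes the local-to-global theorem of Alev--Lau (Lemma~\ref{lem:local-to-global}) to reduce the tensorization constant to the product $\prod_{i=0}^{\Delta-2}(1-\lambda_2(P_{\Delta-i}))^{-1}$ over local walks, and then proves $\lambda_2(P_m)\le \tfrac{1}{m(m-1)}$ by a direct PSD decomposition of the correlation matrix (Lemma~\ref{lem:cor-lambda-max}), giving $\gamma\le e^{\pi^2/6}\le 6$. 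Your star-transpositions idea is a plausible and arguably cleaner alternative (and could even beat $6$), but as written it is a plan rather than a proof; the eigenvalue / local-to-global computation is the substantive content of the paper's argument that you would need to replace with an actual spectral bound.
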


The bounds on $\*\alpha$ and $\gamma$ in \Cref{lem:verify-condition-ell=1} will be established in \Cref{sec:ell=1-alpha} and \Cref{sec:ell=1-gamma}, respectively.
We remark that we want \Cref{thm:regular-refined} to serve as an asymptotic tight upper bound for trees with large degree, which should match the lower bound stated in \Cref{thm:lower-relax}.
This means we need very accurate bounds for $\*\alpha$ and $\gamma$ rather than just say that they are some function of $q$ and $\Delta$.

How, we are able to prove \Cref{thm:regular-refined}.
\begin{proof} [Proof of \Cref{thm:regular-refined}]
    Let $\^T_\ell$ be a complete $\Delta$-regular tree with depth $\ell$.
    Suppose $\^T_\ell$ has $n$ vertices, as it is a complete $\Delta$-regular tree, it holds that $\ell = \Theta(\log_\Delta n)$.
    Let $\mu = \mu_\ell$ be the uniform distribution over all the $(\Delta + 1)$-edge-colorings on $\^T_\ell$.
    According to \Cref{lem:verify-condition-ell=1} and \Cref{thm:induction}, it holds that $\mu$ satisfies approximate tensorization of variance with constant $C \leq 10\Delta \cdot 8^{\ell+1} = \Delta \cdot n^{O(1/\log \Delta)}$. 
    This means $\Trelax$ is bounded by $\Delta \cdot n^{1 + O(1/\log \Delta)}$.
    Finally, according to \Cref{cor:sob-gap} and \eqref{eq:log-sob-mixing}, it holds that $\Tmix$ is bounded by $O(1) \times \Trelax \times \ell (\Delta \log q) \times \log \log \frac{1}{\mu_{\min}} = (\Delta \log \Delta)^2 \cdot n^{1 + O(1/\log \Delta)}$.
\end{proof}

\subsection{Root tensorization on a star}
\label{sec:ell=1-alpha}
In this section, we will give an upper bound for the vector $\*\alpha$ in \Cref{lem:verify-condition-ell=1}.
Before we start, let us first recall the definition of $\^T^\star_\ell$ and $\mu^\star_\ell$ in \Cref{def:mu-k}.
Let $\^T^\star_\ell$ be the complete $d$-ary tree of depth $\ell$ with a hanging root edge $r$. 
For convenience, let $\widetilde{\^T} := \^T^\star - r$.
Let $\mu = \mu^\star_\ell$ be the list coloring distribution on $\^T^\star$, where $\+L_r = \set{1, 2}$ and $\+L_h = [q], q = \Delta + 1$ for other edges $h \in \^T^\star_\ell$.
In order to bound the vector $\*\alpha$, we need to establish the \textbf{root}-tensorization of variance for $\mu^\star_\ell$.
That is, for any function $f:\Omega(\mu) \to \^R$, we want the following inequality to hold:
\begin{align*}
  \Var[\mu]{\mu_{\widetilde{\^T}}[f]}
  &\leq \sum_{i=0}^\ell \alpha_i \sum_{h \in L_i(\^T^\star_\ell)} \mu[\Var[h]{f}].
\end{align*}
%
To achieve this, we do a similar canonical path analysis as we have done in \Cref{sub:congestion-overview}.
The different part is that we use the Cauchy's inequality in a slightly different way to make our bound on $\*\alpha$ tighter.
For each $\sigma \in \Omega(\mu^{r1})$, we pick the unique destination $\tau(\sigma) \in \Omega(\mu^{r2})$ such that $\tau(\sigma)$ is obtained by replacing the maximal $(1,2)$-alternating path from the root $r$ in $\sigma$ with an $(2,1)$-alternating path.
Then, we can bound $\*\alpha$ as follows.

For convenience, given a configuration $\rho \in \Omega(\mu)$ and an edge $h$, we will use $\rho \oslash h$ to denote the configuration that differs at $h$ comparing to $\rho$;
we note that since $q = \Delta + 1$, there is at most one such configuration.
If there is no such configuration, then we let $\rho \oslash h$ be $\rho$.

\begin{lemma} \label{lem:tensor-via-routing}
  Suppose for each $\sigma \in \Omega(\mu^{r1})$, there is a path of configuration $\gamma^\sigma = (\gamma_0^\sigma, \gamma_1^\sigma, \cdots, \gamma_{m(\sigma)}^\sigma)$ of length $m(\sigma)$ that $\gamma_0^\sigma = \sigma$, $\gamma_{m(\sigma)}^\sigma = \tau(\sigma)$ and $\gamma_i^\sigma \oplus \gamma_{i-1}^\sigma = \set{h_i^\sigma}$ for $1 \leq i \leq m(\sigma)$.
  Note that we use $\sigma \oplus \tau := \set{h: \sigma_h \neq \tau_h}$ to denote the symmetric difference between two configurations.
  Then, $\forall 0 \leq t \leq \ell$,
  \begin{align}
    \label{eq:tensor-bound-2}
    \alpha_t &\leq 2(\ell + 1) \cdot \E[\sigma \sim \mu^{r1}]{\abs{\set{i \mid h^\sigma_i \in L_\ell}}} \cdot \max_{h \in L_t, \rho \in \Omega(\mu)} \abs{\set{\sigma \mid (\rho, \rho \oslash h) \in \gamma^\sigma}}.
  \end{align}
\end{lemma}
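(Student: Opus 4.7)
The plan is to run the coupling-plus-canonical-paths scheme from Lemma~\ref{lem:con-var} on the distribution $\mu=\mu^\star_\ell$, exploiting two special features of the setting: the hanging root edge has the two-element list $\+L_r=\{1,2\}$, so the outer symmetrization over ordered color pairs collapses to a single term, and both $\mu$ and $\mu^{r1}$ are uniform on their supports, so squares of sums of indicators can be tightened via Cauchy--Schwarz by exactly the maximum number of paths through any transition.

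Concretely, I would first expand
\[
\Var[\mu]{\mu_{\widetilde{\^T}}[f]} \;=\; \mu_r(1)\mu_r(2)\bigl(\mu^{r1}[f]-\mu^{r2}[f]\bigr)^2 \;=\; \tfrac14\bigl(\mu^{r1}[f]-\mu^{r2}[f]\bigr)^2,
\]
use the bijection $\sigma\mapsto\tau(\sigma)$ (which realizes the flip coupling) to write $\mu^{r1}[f]-\mu^{r2}[f] = \E_{\sigma\sim\mu^{r1}}[f(\sigma)-f(\tau(\sigma))]$, telescope along the canonical path $\gamma^\sigma$, and regroup as a sum over directed transitions $s=(\rho,\rho')$ partitioned by the level $t$ of the recolored edge, getting $\mu^{r1}[f]-\mu^{r2}[f]=\sum_{t=0}^{\ell}A_t$ with
\[
A_t \;=\; \sum_{s:\,\rho\oplus\rho'\in L_t}\bigl(f(\rho)-f(\rho')\bigr)\Pr_{\sigma\sim\mu^{r1}}[s\in\gamma^\sigma].
\]
Two Cauchy--Schwarz steps follow: $(\sum_t A_t)^2 \leq (\ell+1)\sum_t A_t^2$ across levels, and then a factoring of each $A_t^2$ with weights $\sqrt{Q(\rho,\rho')}$, where $Q(\rho,\rho'):=\mu(\rho)Q^{\mathrm{Glauber}}_\mu(\rho,\rho')$. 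The first factor is the standard $2\sum_{h\in L_t}\mu[\Var[h]{f}]$; what remains is a congestion sum $C_t := \sum_{s:L_t}(\Pr_\sigma[s\in\gamma^\sigma])^2/Q(\rho,\rho')$.

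The crux is bounding $C_t$. Because $\mu^{r1}$ is uniform, $\Pr_\sigma[s\in\gamma^\sigma]=N(s)/|\Omega(\mu^{r1})|$ with $N(s):=|\{\sigma:s\in\gamma^\sigma\}|\leq M$, and hence $(\Pr_\sigma[s\in\gamma^\sigma])^2\leq (M/|\Omega(\mu^{r1})|)\Pr_\sigma[s\in\gamma^\sigma]$. The remaining single probability unfolds along each path as $\sum_{s:L_t}\Pr_\sigma[s\in\gamma^\sigma]/Q(\rho,\rho') = \sum_{\sigma}\mu^{r1}(\sigma)\sum_{i:\,h_i^\sigma\in L_t}|\+L^{\gamma^\sigma_{i-1}}_{h_i^\sigma}|/\mu(\gamma^\sigma_{i-1})$. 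Using uniformity of $\mu$ (so $1/\mu(\rho)=|\Omega(\mu)|=2|\Omega(\mu^{r1})|$), together with the structural identity $|\+L^\rho_h|=2$ for every leaf edge of $\^T^\star_\ell$ (the $\Delta-1$ other edges incident to the degree-$\Delta$ non-leaf endpoint necessarily exhibit $\Delta-1$ distinct colors, leaving exactly $q-(\Delta-1)=2$ colors free), the congestion collapses at the leaf level to $4M\cdot \E_\sigma[|\{i:h_i^\sigma\in L_\ell\}|]$. Reassembling the two Cauchy--Schwarz steps then gives $\alpha_t \leq 2(\ell+1)\cdot M\cdot \E_\sigma[|\{i:h_i^\sigma\in L_\ell\}|]$, matching the claimed bound.

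The main obstacle I anticipate is the bookkeeping required to extend the bound uniformly to every level $0\leq t\leq \ell$ while keeping the expectation indexed by $L_\ell$ rather than $L_t$: the identity $|\+L^\rho_h|=2$ is special to leaves, and for $0\leq t< \ell$ one instead needs to dominate the level-$t$ congestion contribution by the leaf-level one. This should be done using the specific structure of the canonical paths from Section~\ref{sec:canonical-path}, where non-leaf recolorings come paired with (or are subordinate to) recolorings of leaves below them in the tree. A minor secondary point is the $t=0$ case (the hanging root edge), where the path recolors $r$ exactly once and the counting should be absorbed into the same bound after a constant-factor adjustment.
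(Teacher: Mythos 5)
Your approach is essentially correct and recovers the bound, but it travels through the congestion machinery of \Cref{lem:con-var} rather than going directly as the paper does in \Cref{sec:ell=1-alpha}. After the level-wise Cauchy--Schwarz, the paper applies a second, unit-weight Cauchy--Schwarz to $\mathbf{Var}_t$, splitting it into the count factor $\E[\sigma\sim\mu^{r1}]{\abs{\set{i : h^\sigma_i\in L_t}}}$ times a sum of squared increments, and then pulls out the max path-count $M$; you instead insert $\sqrt{Q}$ weights to isolate the Dirichlet form and a congestion $C_t$, then unroll $C_t$ using the observation that the squared path-probability is at most $(M/\abs{\Omega(\mu^{r1})})$ times the path-probability, together with uniformity of $\mu$. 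Both routes are equivalent and both yield $\alpha_t \leq 2(\ell+1)\cdot M\cdot \E[\sigma\sim\mu^{r1}]{\abs{\set{i : h^\sigma_i\in L_t}}}$, with $L_t$ (not $L_\ell$) in the expectation.

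The obstacle you flag at the end is real but misdiagnosed. First, $\abs{\+L^\rho_h}\leq 2$ is not a leaf-specific fact: in the $q=\Delta+1$ regime every non-root edge $h$ of $\^T^\star_\ell$ has at least $\Delta-1$ distinctly colored neighbors incident to a non-leaf endpoint, and the hanging root edge $r$ has its list cut down to $\set{1,2}$, so in all cases $\abs{\+L^\rho_h}\leq q-(\Delta-1)=2$ and no cross-level comparison argument is needed. Second, your argument (like the paper's own internal derivation) produces $L_t$ in the expectation, and that is the correct index: the paper's application of the lemma to bound $\alpha_0$ only makes sense if the expectation is taken over $L_0$ (where it equals $1$, since the canonical path recolors $r$ exactly once), not over $L_1$ (where it equals $2\Pr{\abs{\mathrm{AP}_\sigma}=2}$, the quantity used to bound $\alpha_1$). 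The statement as printed therefore appears to have a typo, $L_\ell$ should read $L_t$; once that is fixed your argument closes, and the ``leaf-versus-interior'' workaround you anticipated is unnecessary.
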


\begin{proof}[Proof of \Cref{lem:tensor-via-routing}]
  We note that by definition
  \begin{align}
    \Var[\mu]{\mu_{\widetilde{\^T}}[f]}
    \nonumber &= \mu_r(1)\mu_r(2) \tp{\mu^{r1}(f) - \mu^{r2}(f)}^2 \\
    \label{eq:var-all} &= \mu_r(1) \mu_r(2) \tp{\E[\sigma \sim \mu^{r1}]{f(\sigma) - f(\tau(\sigma))}}^2,
  \end{align}
  %
  Then, we could use the telescoping sum for $f(\sigma) - f(\tau(\sigma))$ and we have
  \begin{align} \label{eq:telescope}
    f(\sigma) - f(\tau(\sigma)) &= \sum_{i=1}^{m(\sigma)} \tp{f(\gamma_{i-1}^\sigma) - f(\gamma_i^\sigma)}.
  \end{align}
  Now, combining \eqref{eq:var-all} and \eqref{eq:telescope}, we have
  \begin{align*}
    \tp{\E[\sigma \sim \mu^{r1}]{f(\sigma) - f(\tau(\sigma))}}^2
    &= \tp{\E[\sigma \sim \mu^{r1}]{\sum_{i=1}^{m(\sigma)} \tp{f(\gamma^\sigma_{i-1}) - f(\gamma^\sigma_i)} }}^2 \\
    &= \tp{\sum_{t=0}^\ell \E[\sigma \sim \mu^{r1}]{\sum_{i:h^\sigma_i \in L_t} \tp{f(\gamma^\sigma_{i-1}) - f(\gamma^\sigma_i)} }}^2,
  \end{align*}
  where $L_t$ is the set of edges at level $t$ of the tree $\^T$.
  By using the Cauchy's inequality, we have
  \begin{align} \label{eq:cauchy-1}
    \tp{\E[\sigma \sim \mu^{r1}]{f(\sigma) - f(\tau(\sigma))}}^2 
    \leq (\ell + 1) \sum_{t=0}^{\ell} \tp{\E[\sigma \sim \mu^{r1}]{\sum_{i:h^\sigma_i \in L_t} \tp{f(\gamma^\sigma_{i-1}) - f(\gamma^\sigma_i)} }}^2.
  \end{align}
  For convenience, for $0 \leq t \leq \ell$, we let
  \begin{align} \label{eq:cauchy-2}
    \mathbf{Var}_t := \tp{\E[\sigma \sim \mu^{r1}]{\sum_{i:h^\sigma_i \in L_t}\tp{f(\gamma^\sigma_{i-1}) - f(\gamma^\sigma_i)} }}^2.
  \end{align}
  Since $\mu^{r1}$ is a uniform distribution, we let $p^{r1} := 1/\abs{\Omega(\mu^{r1})}$.
  Now, applying Cauchy's inequality again on $\oVar_t$, we have
  \begin{align}
    \mathbf{Var}_t
    \nonumber
    &= (p^{r1})^2\tp{ \sum_{\sigma \in \Omega(\mu^{r1})} \sum_{i:h^\sigma_i \in L_t} \tp{f(\gamma^\sigma_{i-1}) - f(\gamma^\sigma_i)} }^2 \\
    \nonumber
    &\leq p^{r1}\tp{\sum_{\sigma \in \Omega(\mu^{r1})} \abs{\set{i \mid h^\sigma_i \in L_t}}} \cdot p^{r1} \tp{\sum_{\sigma \in \Omega(\mu^{r1})} \sum_{i:h^\sigma_i \in L_t} \tp{f(\gamma^\sigma_{i-1}) - f(\gamma^\sigma_i)}^2}  \\
    \nonumber
    &= \E[\sigma\sim \mu^{r1}]{\abs{\set{i\mid h^\sigma_i \in L_t}}} \cdot
      p^{r1} \sum_{h \in L_t} \sum_{\rho \in \Omega(\mu)} \abs{\set{\sigma \mid (\rho, \rho\oslash h) \in \gamma^\sigma}} \tp{f(\rho) - f(\rho \oslash h)}^2 \\
    \nonumber
    &\leq 
      \E[\sigma\sim \mu^{r1}]{\abs{\set{i\mid h^\sigma_i \in L_t}}} \cdot
      \max_{h, \rho}\abs{\set{\sigma \mid (\rho, \rho \oslash h) \in \gamma^\sigma}} 
      \cdot p^{r1} \sum_{h \in L_t} \sum_{\rho \in \Omega(\mu)} \tp{f(\rho) - f(\rho \oslash h)}^2 \\
    \label{eq:cauchy-route-2}
    &= 
      \E[\sigma\sim \mu^{r1}]{\abs{\set{i\mid h^\sigma_i \in L_t}}} \cdot
      \max_{h, \rho}\abs{\set{\sigma \mid (\rho, \rho \oslash h) \in \gamma^\sigma}}
      \cdot 8 
\sum_{h \in L_t} \mu[\Var[h]{f}]
  \end{align}
  Then \eqref{eq:tensor-bound-2} could be proved by combining \eqref{eq:var-all}, \eqref{eq:cauchy-1}, \eqref{eq:cauchy-2}, and \eqref{eq:cauchy-route-2}, where we use the fact that $\mu_r(1) = \mu_r(2) = 1/2$.
\end{proof}

Now, we are ready to bound $\*\alpha$ in \Cref{lem:verify-condition-ell=1}.

\begin{proof} [Proof (bound $\*\alpha$ in  \Cref{lem:verify-condition-ell=1})]
Recall that we only consider $\ell = 1$ in \Cref{lem:verify-condition-ell=1}.
This choice makes everything simple and we could finish the proof by a case study.

For convenience, given a configuration $\rho \in \Omega(\mu)$, we will use $\ap[\rho]$ to denote the maximal $(1,2)$-alternating path or $(2, 1)$-alternating path in $\rho$ initiated from the root edge $r$ depending on $\rho_r$.
We construct $\gamma^\sigma$ by considering two cases according to the length of $\ap$: (1) $\abs{\ap} = 1$; (2) $\abs{\ap} = 2$.

When $\abs{\ap} = 1$, it holds that $\ap = \set{r}$ and we have $\tau(\sigma) = \sigma \oslash r$.

When $\abs{\ap} = 2$, it holds that $\ap = \set{r, h}$ where $h$ is some edge in the first level.
Then we can route $\sigma$ to $\tau$ by noticing  that $\tau = \sigma \oslash h \oslash r \oslash h$.

First, we claim that $\alpha_1$ is bounded by $8$.
Since for $\rho \in \Omega(\mu)$ and $h \in L_1$, there is only $1$ possible starting point $\sigma$.
When $\abs{\ap[\rho]} = 1$, we know that $\rho \oslash h = \tau$, so that we could determine $\sigma$ unqiuely.
When $\abs{\ap[\rho]} = 2$, we know that $\rho = \sigma$.
Formally, for every $\rho \in \Omega(\mu)$ and $h \in L_1$, it holds that
\begin{align*}
    \abs{\set{\sigma \mid (\rho, \rho \oslash h) \in \gamma^\sigma}} = 1
\end{align*}
Then, by our construction of $\gamma^\sigma$, it is straight forward to see that
\begin{align*}
    \E[\sigma \sim \mu^{r1}]{\abs{\set{i \mid h^\sigma_i \in L_1}}} \leq 2 \Pr{\abs{\ap} = 2} \leq 2.
\end{align*}
Now, our claim for $\alpha_1$ comes from \Cref{lem:tensor-via-routing}

Then, we claim that $\alpha_0$ is bounded by $4\Delta$.
Given $\rho \in \Omega(\mu)$ and $h = r \in L_0$, we will consider two cases: (1) $\abs{\ap[\rho]}=1$; (2) $\abs{\ap[\rho]} = 2$.
When $\abs{\ap[\rho]}=1$, there are at most $\Delta$ possible starting configuration $\sigma$: (1) $\sigma = \rho$; (2) $\sigma = \rho \oslash g$ for some $g \in L_1$.
This means 
\begin{align*}
\abs{\set{\sigma \mid (\rho, \rho \oslash h) \in \gamma_\sigma}} \leq \Delta.
\end{align*}
When $\abs{\ap[\rho]}=2$, we know that $\abs{\set{\sigma \mid (\rho, \rho \oslash h) \in \gamma_\sigma}} = 0$ from our construction of $\gamma_\sigma$.
Also, we know that
\begin{align*}
    \E[\rho \sim \mu^{r1}]{\abs{\set{i \mid h^\sigma_i 
\in L_1}}} = 1.
\end{align*}
Now, our claim for $\alpha_0$ comes from \Cref{lem:tensor-via-routing}.
\end{proof}

\subsection{Approximate tensorization on a star}
\label{sec:ell=1-gamma}
\newcommand{\Cor}{\Psi^{\-{cor}}}
In this section, we bound $\gamma$ in \Cref{lem:verify-condition-ell=1}.
Let $S_\Delta = (V, [\Delta])$ be the star graph with $\Delta$ edges.
In the rest part of this section, we fix $q = q(\Delta) = \Delta + 1$ and consider the uniform distribution $\mu_\Delta$ over all the $q$-edge-colorings on $S_\Delta$.
We will prove approximate tensorization of variance for $\mu_\Delta$ which is formally stated as in \Cref{lem:var-ell=1-tensor}.
Then, \Cref{lem:var-ell=1-tensor} directly implies that $\gamma \leq C = \exp(\pi^2/6) \leq 6$.
\begin{lemma} \label{lem:var-ell=1-tensor}
  For $\Delta \geq 1$, $\mu_\Delta$ satisfies approximate tensorization of variance with constant $C = \exp(\pi^2/6)$.
\end{lemma}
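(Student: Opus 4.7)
The plan is to prove approximate tensorization by induction on $\Delta$, but because projecting $\mu_\Delta$ onto any $\Delta-1$ of its coordinates yields a uniform distribution on injections $[\Delta-1]\to[\Delta+1]$ (not a rescaled copy of $\mu_{\Delta-1}$), I would work in the two-parameter family $\nu_{n,q}$ = uniform distribution on injections $[n]\to[q]$ for $1\le n\le q$, so that $\mu_\Delta=\nu_{\Delta,\Delta+1}$. The target is a bound $C_{n,q}\le\prod_{k=1}^{n} a_k$ where the factors $a_k$ telescope to at most $\exp(\pi^2/6)$; concretely one expects $a_k=\exp(1/k^2)$ or $(1+1/k^2)$, so that $\sum 1/k^2=\pi^2/6$ governs the accumulated constant and $C_{\Delta,\Delta+1}\le\exp(\pi^2/6)$. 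The base case $n=1$ is immediate because $\nu_{1,q}$ is a single-variable uniform distribution and tensorizes with constant $1$.

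For the inductive step from $n-1$ to $n$, I would peel off edge $1$ via the law of total variance \eqref{eq:var-total-law}, using that the marginal of $\nu_{n,q}$ on $\{2,\dots,n\}$ is exactly $\nu_{n-1,q}$:
\begin{align*}
    \Var[\nu_{n,q}]{f}
    \;=\;\nu_{n,q}[\Var_1[f]] + \Var[\nu_{n-1,q}]{\nu_1[f]}.
\end{align*}
Applying the inductive hypothesis to the second term gives a bound
\begin{align*}
    \Var[\nu_{n-1,q}]{\nu_1[f]}
    \;\le\; C_{n-1,q}\sum_{i=2}^{n}\nu_{n-1,q}[\Var_i[\nu_1[f]]].
\end{align*}
To close the induction one needs a comparison inequality of the form
\begin{align*}
    \nu_{n-1,q}[\Var_i[\nu_1[f]]] \;\le\; a_n\cdot\nu_{n,q}[\Var_i[f]],
\end{align*}
with $a_n$ essentially of size $1+1/n^2$. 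Combining this with $\nu_{n,q}[\Var_1[f]]\le a_n\cdot\nu_{n,q}[\Var_1[f]]$ (trivially) yields $C_{n,q}\le C_{n-1,q}\cdot a_n$, and the product telescopes to the desired constant.

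The main obstacle is proving the comparison inequality with the sharp $1+O(1/n^2)$ constant, since both sides involve conditional distributions with different numbers of free colors ($q-n+2$ versus $q-n+1$) and the inner average $\nu_1[f]$ entangles edges $1$ and $i$. My plan would be to expand both variances explicitly as sums over configurations of the uniform-injection distribution, write their difference as a covariance between the ``free-color'' edge $1$ and edge $i$ conditioned on $\sigma_{-1,-i}$, and then bound this covariance using Cauchy--Schwarz combined with the exchangeability of the remaining colors on the star. A naive comparison only gives $a_n=O(1/n)$, which accumulates to a suboptimal $\exp(\Theta(\log n))=\mathrm{poly}(n)$ bound, so extracting the $1/n^2$ savings is the delicate step; it should follow from the fact that fixing one additional edge in a star with $q-n+2$ free colors reduces the conditional variance on edge $i$ in a controlled multiplicative way, exploiting the combinatorial symmetry of injections. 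An alternative route, which I would fall back on if the direct comparison proves unwieldy, is to identify $\nu_{\Delta,\Delta+1}$ with the uniform distribution on $S_{\Delta+1}$ (by placing the missing color on a virtual coordinate) and bound the spectral gap of the corresponding lazy star-transposition walk via character theory of the symmetric group; however, the inductive approach is more self-contained and keeps the analysis inside the language of tensorization used throughout the paper.
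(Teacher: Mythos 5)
Your high-level plan --- induction on the number of edges, peeling off one coordinate per step via the law of total variance and paying a multiplicative loss $a_n = 1 + O(1/n^2)$ that accumulates to $\exp(\pi^2/6)$ --- has the right shape and is, morally, the same decomposition the paper performs; but the entire quantitative content lives in exactly the step you call ``delicate'' and leave unproved. The paper routes this through the local-to-global theorem of Alev and Lau (\Cref{lem:local-to-global}), which directly yields $C = \prod_{m=2}^{\Delta}(1-\lambda_2(P_m))^{-1}$ where $P_m$ is the two-coordinate local walk on the $m$-edge star; this product formula is what your induction would produce, with $a_m = (1-\lambda_2(P_m))^{-1}$ as the per-step loss. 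The only non-trivial estimate is then $\lambda_2(P_\Delta) \le \frac{1}{\Delta(\Delta-1)}$, obtained from $\lambda_{\max}(\Cor_\Delta) \le 1 + 1/\Delta$ via \eqref{eq:cor-local-walk-eigenvalue}. That eigenvalue bound (\Cref{lem:cor-lambda-max}) is proved by explicitly decomposing the $q\Delta \times q\Delta$ correlation matrix $\Cor_\Delta$ into a scalar multiple of the identity, a rank-one correction that vanishes on $\mathbf 1^\perp$, and two PSD terms that can be dropped. That computation \emph{is} the proof, and your proposal does not carry it out.

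More specifically, your ``comparison inequality'' $\nu_{n-1,q}[\Var[i]{\nu_1[f]}] \le a_n\,\nu_{n,q}[\Var[i]{f}]$ is not self-evidently of the right form: the left-hand side conditions $\sigma_i$ on fewer coordinates (one extra free colour) and applies $\Var[i]{\cdot}$ to the averaged function $\nu_1[f]$ rather than to $f$, and there is no automatic inequality between the two sides without a genuine correlation estimate. The sketch you give for it (expand as a covariance, Cauchy--Schwarz, exchangeability) is a collection of pointers rather than an argument, and by your own admission the naive version only gives $a_n = 1 + O(1/n)$, which does not accumulate to a constant. In short, you correctly guess the shape of the answer and the scheme of the argument, but the proposal is missing its core --- a sharp two-coordinate spectral estimate --- which is precisely Lemma~\ref{lem:cor-lambda-max}. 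Your fallback route via characters of $S_{\Delta+1}$ is a genuine alternative (the Glauber dynamics on the star with $q=\Delta+1$ is the lazy star-transposition walk, and its eigenvalues are classical), but it is not developed here and would also require a non-trivial computation.
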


In the rest part of this section, we will focus on proving \Cref{lem:var-ell=1-tensor}.
The proof relies on the standard local-to-global argument for the high-dimensional expander~\cite{alev2020improved}.

Before we start the proof, we first give a brief introduction on this technique.
For simplicity we specialize the notions for the special distribution $\mu$ that we will consider in this section.

For any integer $\Delta \geq 2$, let $\mu = \mu_\Delta$, we consider the following matrices:
\begin{itemize}
\item Let $\Cor \in \^R^{q\Delta \times q\Delta}$ be the correlation matrix of $\mu$ defined as follow
  \begin{align} \label{eq:cor-matrix}
    \forall u, v \in [\Delta], \quad \forall a, b \in [q], \quad \Cor(ua, vb) := \mu^{ua}_v(b) - \mu_v(b).
  \end{align}
\item Let $P_\Delta \in \^R^{q\Delta \times q\Delta}$ be the transition matrix of the non-lazy local random walk defined as follow
  \begin{align} \label{eq:local-walk}
    \forall u, v \in [\Delta], \quad \forall a, b \in [q], \quad P_\Delta(ua, vb) := \frac{\*1[u \neq v]}{\Delta - 1} \cdot \mu^{ua}_v(b).
  \end{align}
\end{itemize}

\begin{lemma}[\text{\cite[Theorem 1.5]{alev2020improved}}] \label{lem:local-to-global}
  The distribution $\mu_\Delta$ satisfies approximate tensorization of variance with constant $C = \prod_{i=0}^{\Delta-2} (1 - \lambda_2(P_{\Delta-i}))^{-1}$ where $\lambda_2$ refers to the second largest eigenvalue of a matrix.
\end{lemma}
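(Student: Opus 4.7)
}
The plan is to apply \Cref{lem:local-to-global} and then explicitly diagonalize the local walk matrix $P_k$ for each $2 \leq k \leq \Delta$. This reduces the task to bounding the product $\prod_{k=2}^{\Delta}(1-\lambda_2(P_k))^{-1}$ uniformly in $\Delta$ by $\exp(\pi^2/6)$. The key simplification is that on the star $S_k$ with $q=k+1$ colors, by full edge-symmetry, the conditional marginal satisfies $\mu^{ua}_v(b) = \frac{1}{k}\,\mathbf{1}[b\neq a]$ for every $u\neq v$: conditioning on $u$ receiving color $a$ leaves a uniformly random proper assignment of the $k-1$ remaining edges with the $k$ colors of $[q]\setminus\{a\}$, so every color $b\neq a$ appears on $v$ with equal probability $1/k$. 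Substituting into the definition \eqref{eq:local-walk} yields the clean expression
\begin{equation*}
P_k(ua,vb) \;=\; \frac{1}{k(k-1)}\,\mathbf{1}[u\neq v,\; b\neq a].
\end{equation*}

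With this explicit form, I will decompose $\mathbb{R}^{[k]\times [q]}$ into four $P_k$-invariant subspaces. Writing $F_1(u)=\sum_a f(u,a)$, $F_2(a)=\sum_u f(u,a)$ and $T=\sum_{u,a} f(u,a)$, a direct calculation gives
\begin{equation*}
(P_k f)(u,a) \;=\; \frac{1}{k(k-1)}\bigl(T - F_1(u) - F_2(a) + f(u,a)\bigr).
\end{equation*}
Plugging in the four natural types of test functions $f$ then yields the complete spectrum:
\begin{itemize}
\item constants give eigenvalue $1$;
\item functions $f(u,a)=g(u)$ with $\sum_u g(u)=0$ give eigenvalue $-1/(k-1)$;
\item functions $f(u,a)=h(a)$ with $\sum_a h(a)=0$ give eigenvalue $-1/k$;
\item functions with both row-sums and column-sums zero give eigenvalue $1/(k(k-1))$.
\end{itemize}
These subspaces span $\mathbb{R}^{[k]\times [q]}$ by a standard dimension count, so $\lambda_2(P_k)=\max\{-1/(k-1),-1/k,1/(k(k-1))\}=1/(k(k-1))$ for all $k\geq 2$.

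Finally, I combine the bounds via \Cref{lem:local-to-global}. Taking logarithms and using $\log(1+x)\leq x$,
\begin{equation*}
\log C \;=\; \sum_{k=2}^{\Delta}\log\!\left(\frac{k(k-1)}{k(k-1)-1}\right) \;\leq\; \sum_{k=2}^{\Delta}\frac{1}{k(k-1)-1}.
\end{equation*}
The elementary inequality $k(k-1)-1 = k^2-k-1 \geq k^2-2k+1 = (k-1)^2$ for $k\geq 2$ bounds each term by $1/(k-1)^2$, and summing yields
\begin{equation*}
\log C \;\leq\; \sum_{k=2}^{\Delta}\frac{1}{(k-1)^2} \;\leq\; \sum_{j=1}^{\infty}\frac{1}{j^2} \;=\; \frac{\pi^2}{6},
\end{equation*}
which gives the desired constant $C = \exp(\pi^2/6)$. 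I do not anticipate any genuine obstacle here: the only subtle point is verifying that the four invariant subspaces above account for the full dimension and therefore for the entire spectrum, which follows from the rank-one-plus-rank-one structure of $T-F_1(u)-F_2(a)$; the rest is a routine telescoping bound.
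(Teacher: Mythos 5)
You were asked to prove \Cref{lem:local-to-global} itself, i.e.\ the local-to-global statement that variance tensorization for $\mu_\Delta$ follows with constant $\prod_{i=0}^{\Delta-2}(1-\lambda_2(P_{\Delta-i}))^{-1}$ from the second eigenvalues of the local walks. Your proposal does not prove this: it invokes it as a black box and instead proves the downstream \Cref{lem:var-ell=1-tensor}. The paper likewise does not prove \Cref{lem:local-to-global} --- it is imported verbatim from Alev and Lau \cite{alev2020improved}. So, measured against the stated target, there is a genuine gap: the entire content of the lemma (the Garland-style induction over links that converts eigenvalue bounds on the local walks into a global variance factorization) is delegated to the citation, and nothing in your write-up addresses it. Note also that for the product over $i$ to make sense one must identify $P_{\Delta-i}$ with the local walk of an arbitrary link of $\mu_\Delta$ at level $i$; this holds here because pinning $i$ edges of the star to distinct colors leaves a uniform injection of the remaining $\Delta-i$ edges into the remaining $\Delta-i+1$ colors, i.e.\ a relabeled copy of $\mu_{\Delta-i}$ --- a point worth stating if one is going to use the lemma, and certainly if one is going to prove it.

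Taken on its own terms as a proof of \Cref{lem:var-ell=1-tensor}, your argument is correct and takes a mildly different route from the paper's. The symmetry computation $\mu^{ua}_v(b)=\tfrac1k\mathbf{1}[b\neq a]$ is right, the inclusion--exclusion identity $(P_kf)(u,a)=\tfrac{1}{k(k-1)}\bigl(T-F_1(u)-F_2(a)+f(u,a)\bigr)$ is right, the four invariant subspaces have dimensions $1$, $k-1$, $k$, and $k(k-1)$, which sum to $kq$, and they are mutually orthogonal with respect to the (uniform) stationary measure, so the spectrum is exactly $\{1,-\tfrac{1}{k-1},-\tfrac1k,\tfrac{1}{k(k-1)}\}$ and $\lambda_2(P_k)=\tfrac{1}{k(k-1)}$. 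The paper instead passes through the correlation matrix of \eqref{eq:cor-matrix}, writes it as $(\Delta-1)P_\Delta-\mathbf{1}\mathbf{1}^\top/q+\mathbf{I}$, and uses a PSD decomposition to get $\lambda_{\max}\le 1+1/\Delta$ (\Cref{lem:cor-lambda-max}), hence the same bound $\lambda_2(P_\Delta)\le 1/(\Delta(\Delta-1))$; your exact diagonalization buys nothing extra numerically but is more self-contained. Your final telescoping step is in fact slightly more careful than the paper's: you use $\log\tfrac{1}{1-x}\le\tfrac{x}{1-x}$ together with $k(k-1)-1\ge(k-1)^2$, whereas the paper's displayed bound $C\le\exp\bigl(\sum_k\tfrac{1}{k(k-1)}\bigr)$ implicitly uses $\log\tfrac{1}{1-x}\le x$, which is false for $x>0$ (the conclusion survives since $\sum_k\tfrac{1}{k(k-1)-1}\le\pi^2/6$ anyway, exactly as you argue).
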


According to the definition of $\mu$, \eqref{eq:cor-matrix} and \eqref{eq:local-walk}, it is straightforward to verify that
\begin{align} \label{eq:cor-local-walk}
  \Cor_\Delta = (\Delta - 1) P_\Delta - \^1 \^1^\top/q + \^I,
\end{align}
where we use $\^1$ to denote the all-$1$ vector over $\^R^{q\Delta}$ and $\^I$ is the identity matrix over $\^R^{q\Delta \times q\Delta}$.
Since $\Cor_\Delta \^1 = 0$, according to \eqref{eq:cor-local-walk}, we have $\lambda_{\max}(\Cor_\Delta) = (\Delta - 1)\lambda_2(P_\Delta) + 1$, which means
\begin{align} \label{eq:cor-local-walk-eigenvalue}
  \lambda_2(P_\Delta) &= (\lambda_{\max}(\Cor_\Delta) - 1) / (\Delta - 1).
\end{align}

We have the following upper bound for $\lambda_{\max}(\Cor_\Delta)$.

\begin{lemma} \label{lem:cor-lambda-max}
  It holds that $\lambda_{\max}(\Cor_\Delta) \leq 1 + \frac{1}{\Delta}$. 
\end{lemma}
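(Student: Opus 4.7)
The plan is to diagonalize $\Cor_\Delta$ explicitly by exploiting its $S_\Delta \times S_q$ symmetry. Since $\mu_\Delta$ is the uniform distribution over injections $[\Delta] \to [q]$, the entries of $\Cor_\Delta$ take only three values, computed directly from the marginals $\mu_v(a) = 1/q$ and the pairwise marginals for $u \neq v$: namely, $\Cor_\Delta(ua, ub) = \*1[a=b] - 1/q$; for $u \neq v$, $\Cor_\Delta(ua, vb) = -1/q$ when $a = b$ and $\Cor_\Delta(ua, vb) = 1/(q(q-1))$ when $a \neq b$. A short direct computation then yields the closed-form action
\begin{align*}
    (\Cor_\Delta f)(u,a) = \frac{q}{q-1}\, f(u,a) - \frac{1}{q-1}\sum_{b} f(u,b) - \frac{1}{q-1}\sum_{v} f(v,a) + \frac{1}{q(q-1)}\sum_{v,b} f(v,b),
\end{align*}
where the coefficient $q/(q-1)$ on the diagonal arises from $1 + 1/(q(q-1)) + 1/q$.

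Next I would introduce the orthogonal decomposition $\^R^{[\Delta]\times[q]} = W_0 \oplus W_1 \oplus W_2 \oplus V_0$ into the isotypic components of the $S_\Delta \times S_q$ action: $W_0 = \mathrm{span}(\*1)$; $W_1 = \{f(u,a) = \psi(u) : \sum_u \psi(u) = 0\}$; $W_2 = \{f(u,a) = \phi(a) : \sum_a \phi(a) = 0\}$; and $V_0 = \{f : \sum_a f(u,a) = 0\ \forall u,\ \sum_v f(v,a) = 0\ \forall a\}$. A dimension count $1 + (\Delta - 1) + (q-1) + (\Delta-1)(q-1) = \Delta q$ confirms that this is a direct sum, and mutual orthogonality is immediate. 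From the closed-form action, each of these subspaces is $\Cor_\Delta$-invariant and $\Cor_\Delta$ acts on it as a scalar: the eigenvalue is $0$ on $W_0$ (the all-ones vector is in the kernel); $0$ on $W_1$ (since $\sum_b f(u,b) = q\psi(u)$ while $\sum_v f(v,a) = 0$ and $\sum_{v,b} f = 0$, the diagonal $\frac{q}{q-1}\psi(u)$ is cancelled by $\frac{1}{q-1} \cdot q\psi(u)$); $1/(q-1)$ on $W_2$ (by a symmetric computation, using $q - \Delta = 1$); and $q/(q-1)$ on $V_0$ (where all three correction terms vanish identically).

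Substituting $q = \Delta + 1$ identifies the largest eigenvalue as $q/(q-1) = 1 + 1/\Delta$, giving the lemma. The only edge case is $\Delta = 1$, where $V_0 = \{0\}$ and the top eigenvalue is merely $1/(q-1) = 1 \le 1 + 1/\Delta$. The main potential obstacle is arithmetic bookkeeping in the derivation of the closed-form action above; once that formula is in hand, the isotypic decomposition forces $\Cor_\Delta$ to be a scalar on each summand by Schur's lemma, so the four eigenvalue calculations become single-line checks.
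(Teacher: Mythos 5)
Your proof is correct, and it takes a genuinely different route from the paper's. The paper starts from the same explicit entries $\Cor_\Delta(ua,vb)$, rewrites the matrix as
\begin{equation*}
\Cor_\Delta = \Bigl(1+\tfrac1\Delta\Bigr)\^I + \Bigl(\tfrac1\Delta - \tfrac1{\Delta+1}\Bigr)\^1\^1^\top - \tfrac1\Delta\Bigl(\^I_q\otimes\^1_\Delta\^1_\Delta^\top + (\^1_q\otimes\^I_\Delta)(\^1_q\otimes\^I_\Delta)^\top\Bigr),
\end{equation*}
notes that the last two summands are positive semidefinite, drops them, and restricts the Rayleigh quotient to $f\perp\^1$ (justified since $\Cor_\Delta\^1=0$), which annihilates the rank-one term and immediately gives $f^\top\Cor_\Delta f \leq (1+1/\Delta)\,f^\top f$. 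You instead diagonalize $\Cor_\Delta$ exactly via the $S_\Delta\times S_q$ isotypic decomposition $W_0\oplus W_1\oplus W_2\oplus V_0$, obtaining the full spectrum $\{0,\,0,\,1/\Delta,\,1+1/\Delta\}$ with multiplicities $1$, $\Delta-1$, $q-1$, $(\Delta-1)(q-1)$; all four per-block eigenvalue computations check out against the closed-form action you derived. That requires more bookkeeping, but it buys more: it shows the bound $1+1/\Delta$ is attained on $V_0$ (so the lemma holds with equality for $\Delta\geq 2$), which the paper's PSD-subtraction argument does not see since it never identifies the top eigenspace. Both are valid, and the downstream application only needs the inequality, so your extra precision is a strengthening rather than a necessity.
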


Now, we are ready to prove \Cref{lem:var-ell=1-tensor}.
\begin{proof}[proof of \Cref{lem:var-ell=1-tensor}]
  According to \Cref{lem:local-to-global}, \eqref{eq:cor-local-walk-eigenvalue}, and \Cref{lem:cor-lambda-max}, it holds that
  \begin{align*}
    C &\leq \exp\tp{\sum_{i=0}^{\Delta-2} \frac{1}{(\Delta-i)(\Delta - i - 1)}} \leq \exp\tp{\sum_{i=1}^\infty \frac{1}{i^2}} = \exp(\pi^2/6). \qedhere
  \end{align*}
\end{proof}

We only left to prove \Cref{lem:cor-lambda-max}.

\begin{proof} [Proof of \Cref{lem:cor-lambda-max}]
  According to the definition of $\Cor$ in \eqref{eq:cor-matrix}, for $u, v \in S_\Delta$ and $a, b \in [q]$, we have
  \begin{align} \label{eq:cor-matrix-explicit}
    \Cor_\Delta(ua, vb) &= - \frac{1}{\Delta + 1} + \*1[u\neq v \land a\neq b]\frac{1}{\Delta} + \*1[u = v \land a = b].
  \end{align}
  We may consider $\Cor$ as a block matrix by letting coordinates with the same color in the same block.
  Then, it holds that
  \begin{align} \label{eq:cor-matrix-block}
    \Cor_\Delta &= \tp{1 + \frac{1}{\Delta}} \^I + \tp{\frac{1}{\Delta} - \frac{1}{\Delta+1}} \^1\^1^\top - \frac{1}{\Delta}\tp{\^I_q \otimes \tp{\^1_\Delta \^1_\Delta^\top} + \tp{\^1_q \otimes \^I_\Delta}\tp{\^1_q \otimes \^I_\Delta}^\top},
  \end{align}
  where $\otimes$ is the Kronecker product and we let $\^I$ denote the identity matrix over $\^R^{q\Delta \times q\Delta}$, $\^1$ denote the all-$1$ vector over $\^R^{q\Delta}$.
  Similarly, for $1 \leq t \leq q\Delta$, we let $\^I_t$ be the identity matrix in $\^R^{t\times t}$ and we let $\^1_t$ be the all-$1$ vector of length $t$.

  According to \eqref{eq:cor-matrix}, it holds that $\Cor_\Delta \^1 = 0$.
  Since $\Cor_\Delta$ is a symmetric matrix, to prove \Cref{lem:cor-lambda-max}, we only need to bound $\frac{f^\top \Cor_\Delta f}{f^\top f}$ for $f \neq 0$ and $f \perp \^1$.
  By \eqref{eq:cor-matrix-block}, we have
  \begin{align*}
    f^\top \Cor_\Delta f
    &\overset{(\star)}{\leq} f^\top\tp{\tp{1 + \frac{1}{\Delta}} \^I + \tp{\frac{1}{\Delta} - \frac{1}{\Delta+1}} \^1\^1^\top}f 
    \overset{f\perp \^1}{=} \tp{1 + \frac{1}{\Delta}} f^\top f,
  \end{align*}
  where $(\star)$ holds by
  \begin{align*}
    \^I_q \otimes \tp{\^1_\Delta \^1_\Delta^\top} &\succcurlyeq 0 \quad \text{and} \quad \tp{\^1_q \otimes \^I_\Delta}\tp{\^1_q \otimes \^I_\Delta}^\top \succcurlyeq 0. \qedhere
  \end{align*}
\end{proof}

\section{Block factorization via root-factorization: Proof of \Cref{thm:adjacentpair}}\label{sec:adjacentpair}

In this section, we prove \Cref{thm:adjacentpair}.
Note that \Cref{thm:adjacentpair} is a generalization of \Cref{thm:induction}, and the proof presented here is a generalization of the proof of \Cref{thm:induction} presented in \Cref{sec:proof-induction}.

Recall the definitions of $\^T_k, \^T^\star_k$ and $\mu_k, \mu^\star_k, \mu^{\star,1}_k$ from \Cref{def:Tk-Tk-star,def:mu-k}.
We will prove a more general version of \Cref{thm:adjacentpair}, which will yield \Cref{thm:adjacentpair} as a corollary.

\begin{definition} \label{def:collection-A}
For $\ell \geq 1$, let $\+A$ be a collection of edge sets of $\^T^\star_\ell$ such that $\forall S \in \+A$ satisfies:
\begin{itemize}
    \item $\abs{S} \geq 2$, i.e., $S$ is not a singleton edge in $\^T^\star_\ell$;
    \item $S \cap L_\ell(\^T^\star_\ell) = \emptyset$.
\end{itemize}
\end{definition}

Then, we define the collection $\+B$ as
\begin{align} \label{eq:def-gen-edge-block}
    \+B := \set{\set{e} \mid e \in \^T^\star_\ell} \cup \+A,
\end{align}
so that \eqref{eq:def-gen-edge-block} is a generalization of \eqref{eq:def-edge-block}.
Then, following \eqref{eqn:root-block-factorization}, we say that \emph{approximate \textbf{root}-factorization of variance} for $\mu^\star_\ell$ holds with constants $C(B)$ if for all $f:\Omega(\mu^\star_\ell) \to \^R$
\begin{align*}
\Var[\mu_\ell^\star]{\mu_{\^T_\ell^\star\setminus\{r\}}[f]} \leq \sum_{B\in \+B}C(B)\mu_\ell^\star(\Var[B]{f}).
\end{align*}

\begin{lemma} \label{lem:induction-block}
  Let $\Delta \geq 2$, $q \geq \Delta+1$ be two integers.
  Suppose there exists an integer $\ell\geq 1$ where the following holds:
  \begin{itemize}
  \item There exist constants $\alpha=(\alpha_0,\dots,\alpha_\ell)$ and $\beta$ and 
  approximate {\bf root}-factorization of variance for $\mu^\star_k$ holds with constants $C(\set{e})=\alpha_i$ where $e\in L_i(\^T_\ell^\star)$ and $C(B) = \beta$ where $\abs{B} \geq 2$;
  \item  There exists a constant $\gamma$ such that for $1 \leq j \leq \ell$, both $\mu_j$ and $\mu^{\star,1}_j$ satisfy approximate tensorization of variance with constant $\gamma$.
  \end{itemize}
  Then, for any $k \geq 1$, $\mu_k$ satisfies the following block factorization of variance: for any $f:\Omega(\mu_k) \to \^R$, 
  \begin{align} \label{eq:ind-block-target}
      \Var[\mu_k]{f} &\leq \tp{\sum_{t=1}^k F_k(t) \sum_{e \in L_t(\^T_k)} \mu[\Var[e]{f}]} + \beta \sum_{\substack{1 \leq t < k \\ t \equiv k (\-{mod}\; \ell)}} F_t(t) \sum_{r \in L_t(\^T_k)}\sum_{B \in \+A^r} \mu[\Var[B]{f}],
  \end{align}
  where $F_\cdot(\cdot)$ is the function defined in \eqref{eq:F-recursion} and for every edge $r \leq k - \ell$, we let $\+A^r$ be a copy of $\+A$ by seeing the sub-tree rooted at $r$ of depth $\ell$ as $\^T^\star_\ell$.
\end{lemma}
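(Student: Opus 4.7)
The plan is to mirror the inductive proof of \Cref{thm:induction} from \Cref{sec:proof-induction}, simultaneously tracking the new block-variance contributions alongside the singleton terms. We induct on $k$. The base case $k \leq \ell$ is immediate: by the second hypothesis, $\mu_k$ satisfies approximate tensorization with constant $\gamma$, giving the singleton terms with $F_k(t) = \gamma$ as required; the block sum in \eqref{eq:ind-block-target} is empty because any $t$ with $t \equiv k \ (\-{mod}\ \ell)$ and $1 \leq t < k$ forces $t \leq k-\ell \leq 0$.

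For the inductive step $k > \ell$, set $R = \bigcup_{i=k-\ell+1}^{k} L_i(\^T_k)$ and apply the law of total variance to split $\Var[\mu]{f} = \mu[\Var[R]{f}] + \Var[\mu]{\mu_R[f]}$. The first term is bounded by $\gamma \sum_{i=1}^{\ell} \sum_{h \in L_{k-\ell+i}(\^T_k)} \mu[\Var[h]{f}]$ exactly as in \Cref{claim-bound}, feeding into the bottom-level singleton terms of the target. Applying the induction hypothesis to $\mu_R[f]$, viewed as a function on $\Omega(\mu_{k-\ell})$, yields singleton terms for levels $1, \ldots, k-\ell$ together with block terms indexed by $t \equiv k \ (\-{mod}\ \ell)$ and $1 \leq t < k-\ell$. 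For singletons at levels $t < k-\ell$, conditional independence replaces $\mu_{k-\ell}[\Var[h]{\mu_R[f]}]$ by $\mu[\Var[h]{f}]$ as in \eqref{middle-star}; for $h \in L_{k-\ell}(\^T_k)$ we obtain $\mu[\Var[\^T^h]{\mu_{\widetilde{\^T}^h}[f]}]$, and the root-factorization hypothesis bounds this by a singleton decay $\sum_{i=0}^{\ell} \alpha_i \sum_{g \in L_{k-\ell+i}} \mu[\Var[g]{f}]$ plus a new block contribution $\beta \sum_{h \in L_{k-\ell}} \sum_{B \in \+A^h} \mu[\Var[B]{f}]$.

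The novelty lies in handling the inherited block terms. For $B \in \+A^r$ with $r \in L_t(\^T_k)$ and $t < k-\ell$, the structural restriction $B \cap L_\ell(\^T^\star_\ell) = \emptyset$ in \Cref{def:collection-A} forces $B$ to occupy only levels $t, t+1, \ldots, t+\ell-1$ of $\^T_k$; since $t \equiv k \ (\-{mod}\ \ell)$ and $t < k-\ell$ imply $t \leq k-2\ell$, the boundary $\partial B$ reaches at most level $t+\ell \leq k-\ell$, so $\partial B \cap R = \emptyset$. Thus \Cref{lem:basic-var-fact} gives $\mu_{\^T \setminus R}[\Var[B]{\mu_R[f]}] \leq \mu[\Var[B]{f}]$, preserving the coefficient $\beta F_t(t)$ inherited from the inductive hypothesis. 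Together with the newly created block contribution at level $t = k-\ell$ of weight $\beta F_{k-\ell}(k-\ell)$ from root-factorization, these cover exactly the range of $t$ with $t \equiv k \ (\-{mod}\ \ell)$ and $1 \leq t \leq k-\ell$, which matches the target (the largest such $t$ with $t < k$ is $k-\ell$). The singleton coefficients obey the same recurrence \eqref{eq:F-recursion} as in \Cref{thm:induction}. The main obstacle is the block bookkeeping: verifying the non-adjacency $\partial B \cap R = \emptyset$, which is precisely what the restriction $B \cap L_\ell(\^T^\star_\ell) = \emptyset$ is designed to enforce, and checking that block coefficients propagate through each inductive step without amplification so that each block at level $t$ ends up with coefficient exactly $\beta F_t(t)$.
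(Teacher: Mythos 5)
Your proposal follows essentially the same approach as the paper's proof of \Cref{lem:induction-block}: induction on $k$, base case from the approximate-tensorization hypothesis, inductive step via the law of total variance with $R$ the bottom $\ell$ levels, \Cref{claim-bound} for the inner term, the induction hypothesis applied to $f' = \mu_R[f]$ on $\mu_{k-\ell}$, and root-factorization producing both the singleton decay at level $k-\ell$ and the new block contribution. You also correctly identify the key structural check — that $\+A$'s restriction $S \cap L_\ell(\^T^\star_\ell) = \emptyset$ together with $t \le k-2\ell$ forces $\partial B \cap R = \emptyset$ — which is precisely the condition the paper invokes (via conditional independence plus \Cref{lem:basic-var-fact}) to push the inherited block terms through unchanged with coefficient $\beta F_t(t)$. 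The only minor quibble is that you compress two distinct steps (the conditional-independence equality $\mu_{\^T\setminus R}[\Var[B]{f'}] = \mu[\Var[B]{\mu_R[f]}]$ and the convexity inequality $\mu[\Var[B]{\mu_R[f]}] \le \mu[\Var[B]{f}]$) into a single invocation of \Cref{lem:basic-var-fact}, but the conclusion and reasoning are sound.
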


Now we can prove \Cref{thm:adjacentpair} as a direct corollary of \Cref{lem:induction-block}.
\begin{proof} [Proof of \Cref{thm:adjacentpair}]
    We set $\+B$ as in \eqref{eq:def-edge-block}.
    Since \Cref{thm:adjacentpair} assumes $\ell \geq 2$, we know this specific $\+B$ satisfies the requirements in \Cref{def:collection-A} and \eqref{eq:def-gen-edge-block}.
    Now, according to \Cref{lem:induction-block}, we know that we have the block factorization in \eqref{eq:ind-block-target}.
    Furthermore, according to \Cref{sec:upper-bound-F}, we know that the function $F_\cdot(\cdot)$ can be bounded by $\widehat{F}_k$ defined in \eqref{eq:Fk-explicit}, which means for any $1 \leq t \leq k$, we have
    \begin{align*}
        F_k(t) &\leq  \gamma \cdot \tp{\max_{0 \leq j \leq \ell} \alpha_j \cdot \sum_{i=0}^{\ftp{k/\ell}}\alpha_\ell^i + \max\set{1, \alpha_0}}.
    \end{align*}
    We finish the prove by comparing terms in \eqref{eq:ind-block-target} and the block factorization claimed in \Cref{thm:adjacentpair} using the fact that variance is non-negative.
\end{proof}

In the rest part of this section, we will prove \Cref{lem:induction-block}.
Fix parameters $d,q,\ell$ and $\*\alpha$ in the theorem.
Our proof is based on an induction of the depth $k$.

  \paragraph{Base case ($k \leq \ell$):}
  When $k \leq \ell$, \eqref{eq:ind-block-target} follows from the second assumption of \Cref{lem:induction-block} with $F_k(t) = \gamma$.

  \paragraph{Inductive case ($k > \ell$):}
  Suppose \eqref{eq:ind-block-target} holds for all smaller $k' < k$.
  We prove that the theorem holds for $k$.
  For convenience, we denote $\mu = \mu_k$ and $\^T = \^T_k$.
  Let $R = \cup_{i=k-\ell+1}^k L_i(\^T)$ be the edge set that contains all the  $\ell$ levels in the bottom of $\^T$.
  According to the law of total variance, 
  \begin{align} \label{eq:law-total-var-block}
    \Var[\mu]{f} &\overset{\eqref{eq:var-total-law}}{=} \mu[\Var[R]{f}] + \Var[\mu]{\mu_R[f]}.
  \end{align}

 The first term in the RHS of \eqref{eq:law-total-var-block} could be bounded by \Cref{claim-bound}.

  Next, we bound the second term in the RHS of~\eqref{eq:law-total-var-block} by induction hypothesis.
  Note that for any $\sigma \in \Omega(\mu)$, the value of function $\mu_R[f](\sigma)$ depends only on $\sigma_{\^T \setminus R}$. 
  By viewing $\mu_R[f]$ as a function $f':\Omega(\mu_{\^T \setminus R}) \to \^R$, we have
  $\Var[\mu]{\mu_R[f]} = \Var[\mu_{\^T - R}]{f'} = \Var[\mu_{{k-\ell}}]{f'}$, where the last equation uses the fact that the marginal distribution $\mu_{\^T \setminus R}$ is the same as $\mu_{{k-\ell}}$ and $\mu_{{k-\ell}}$ is the uniform distribution over proper $q$-edge-coloring in tree $\^T_{k-\ell}$.
  By the I.H. on $\mu_{k-\ell} = \mu_{\^T \setminus R}$, we have
  \begin{align} 
    \Var[\mu]{\mu_R[f]}
    \nonumber &\leq \sum_{t=1}^{k-\ell} F_{k-\ell}(t) \sum_{h \in L_t(\^T)} \mu_{\^T \setminus R}[\Var[h]{f'}] + \beta \sum_{\substack{1 \leq t < k-\ell \\ t\equiv k (\-{mod} \ell)}} F_t(t) \sum_{r \in L_t(\^T)} \sum_{B \in \+A^r} \mu_{\^T\setminus R}[\Var[B]{f'}] \\
    \nonumber &\overset{(*)}{=} \sum_{t=1}^{k-\ell-1} F_{k-\ell}(t) \sum_{h \in L_t(\^T)} \mu[\Var[h]{\mu_R[f]}] 
                + F_{k-\ell}(k-\ell) \sum_{h \in L_{k-\ell}(\^T)} \mu\left[\Var[\^T^h]{\mu_{R}[f]}\right] \\
    \nonumber &\quad + \beta \sum_{\substack{1 \leq t < k-\ell \\ t\equiv k (\-{mod} \ell)}} F_t(t) \sum_{r \in L_t(\^T)} \sum_{B \in \+A^r} \mu[\Var[B]{\mu_R[f]}] \\
    \label{eq:ind-hypo-block}
              &\overset{\eqref{eq:var-convex}}{\leq} \sum_{t=1}^{k-\ell-1} F_{k-\ell}(t) \sum_{h \in L_t(\^T)} \mu[\Var[h]{f}] 
                + F_{k-\ell}(k-\ell) \sum_{h \in L_{k-\ell}(\^T)} \mu\left[\Var[\^T^h]{\mu_{\widetilde{\^T}^h}[f]}\right] \\
    \nonumber &\quad + \beta \sum_{\substack{1 \leq t < k-\ell \\ t\equiv k (\-{mod} \ell)}} F_t(t) \sum_{r \in L_t(\^T)} \sum_{B \in \+A^r} \mu[\Var[B]{f}],
  \end{align}
  where we use $\^T^h$ to denote the subtree with the hanging root edge $h$ and use $\widetilde{\^T}^h := \^T^h \setminus h$ to denote the subtree beneath the edge $h$.  
  The equation ($*$) holds because
  \begin{itemize}
      \item for all $h \in L_t(\^T)$ with $t < k - \ell$, all the neighboring edges of $h$ are in $\^T \setminus R$, i.e. $\partial h \subseteq \^T \setminus R$, by conditional independence, $\mu_{\^T \setminus R}[\Var[h]{f'}] =  \mu[\Var[h]{\mu_R[f]}]$;
      \item for all $1 \leq t < k -\ell$ such that $t\equiv k (\-{mod} \ell)$, we know that $t \leq k - 2\ell$, and for $r \in L_t(\^T_k)$, by the definition of $\+A^r$ and $\+A$ in \Cref{def:collection-A}, we know that $\forall B \in \+A^r$, $\ol{\partial} B \subseteq \^T\setminus R$, which, by conditional independence, implies $\mu_{\^T\setminus R}[\Var[B]{f'}] = \mu[\Var[B]{\mu_R[f]}]$;
      \item for all $h \in L_{k-\ell}(\^T)$, $\mu_{\^T \setminus R}[\Var[h]{f'}] = \mu\left[\Var[\^T^h]{\mu_{R}[f]}\right]$ because for any $\sigma \in \Omega(\mu)$, the value of $\mu_R[f](\sigma)$ is independent from $\sigma_{\^T_h \setminus h}$. 
  \end{itemize}
  Since $\mu\left[\Var[\^T^h]{\mu_{\widetilde{\^T}^h}[f]}\right]=\sum_{\sigma \in \Omega(\mu_{\^T  \setminus \widetilde{\^T}^h }) } \mu_{\^T  \setminus \widetilde{\^T}^h }(\sigma)\Var[\mu^\sigma]{ \mu_{\widetilde{\^T}^h} [f]}$. Fix one $\sigma$. One can view $f$ as a function from $\Omega(\mu^\sigma_{\widetilde{\^T}^h})$ to $\^R$. By applying the first assumption of \Cref{lem:induction-block} to each $\sigma$ individually and rearranging, 
  \begin{align} \label{eq:route-decay-block}
    \sum_{h \in L_{k-\ell}(\^T)} \mu\left[\Var[\^T^h]{\mu_{\widetilde{\^T}^h}[f]}\right]
    &\leq \sum_{i=0}^\ell \alpha_i \sum_{g \in L_{k-\ell + i}(\^T)} \mu[\Var[g]{f}]
      + \beta \sum_{r \in L_{k-\ell}(\^T)} \sum_{B \in \+A^r} \mu[\Var[B]{f}].
  \end{align}
  Combining \eqref{eq:law-total-var-block}, \eqref{eq:mu[Var_R[f]]} in \Cref{claim-bound}, \eqref{eq:ind-hypo-block}, and \eqref{eq:route-decay-block}, it holds that
  \begin{align*}
    \Var[\mu]{f}
    &\leq \sum_{t=1}^{k-\ell-1} F_{k-\ell}(t) \sum_{h \in L_t(\^T)} \mu[\Var[h]{f}] \\
    &\quad + F_{k-\ell}(k-\ell) \cdot \alpha_0 \sum_{h \in L_{k-\ell}(\^T)} \mu[\Var[h]{f}] \\
    &\quad + \sum_{i=1}^\ell (\alpha_i F_{k-\ell}(k-\ell) + \gamma) \sum_{h\in L_{k-\ell+i}(\^T)} \mu[\Var[h]{f}] \\
    &\quad + \beta \sum_{\substack{1 \leq t < k \\ t\equiv k (\-{mod} \ell)}} F_t(t) 
 \sum_{r \in L_t(\^T)} \sum_{B \in \+A^r} \mu[\Var[B]{f}].
  \end{align*}
  The above inequality immediately gives the recurrence of $F$ as in \eqref{eq:F-recursion} and finishes the proof.

\section{Monotonicity of the Glauber dynamics (Proof of \Cref{lem:monotonicity-glauber})}
\label{sec:monotonicity-glauber}
In this section, we prove \Cref{lem:monotonicity-glauber}.
Recall that $\^T = (V, E)$ is a tree with root $r$ and for $H \subseteq E$ the tree $\^T_H = (V_H, H \subseteq E)$ is a connected sub-tree of $\^T$ containing $r$.
We use $\mu$ and $\nu$ to denote the distribution of uniform $q$-edge-coloring on $\^T$ and $\^T_H$, respectively.

By the definition of approximate tensorization of variance and log-Sobolev constant in \Cref{def:tensorization} and \eqref{eq:def-log-sob}, to prove \Cref{lem:monotonicity-glauber}, it is sufficient for us to prove the following inequalities for every function $g:\Omega(\mu) \to \^R$:
\begin{align} 
  \label{eq:monotone-GD-AT-target}
  \Var[\nu]{g} &\leq Cq \sum_{e \in H} \nu[\Var[e]{g}] \\
  \label{eq:monotone-GD-sob-target}
  \Ent[\nu]{g^2} &\leq c_{\-{sob}}^{-1}(\mu) q \sum_{e \in H} \nu[\Var[e]{g}],
\end{align}
where \eqref{eq:monotone-GD-AT-target} is for the first part of \Cref{lem:monotonicity-glauber} and \eqref{eq:monotone-GD-sob-target} is for the second part of \Cref{lem:monotonicity-glauber}.
We recall the notation $\Ent{\cdot}$ for entropy is defined in \eqref{eq:def-ent}.

For a fixed function $g: \Omega(\nu) \to \^R$, we construct a function $f:\Omega(\mu) \to \^R$ as follow
\begin{align} \label{eq:def-f-g}
  \forall \sigma \in \Omega(\nu), \quad f(\sigma) := g(\sigma_H),
\end{align}
where we use $\sigma_H$ to denote the sub-vector obtained by restricting $\sigma$ to the edge set $H$.
Then~\eqref{eq:monotone-GD-AT-target} and \eqref{eq:monotone-GD-sob-target} follow directly from the following claims.
Through the whole proof, the crucial property that we use is $\nu = \mu_H$.
\begin{claim} \label{claim:equal-var}
  $\Var[\mu]{f} \geq \Var[\nu]{g}$ and $\Ent[\mu]{f^2} \geq \Ent[\nu]{g^2}$.
\end{claim}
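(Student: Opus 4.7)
The plan is to observe that the claim actually holds as equality in both cases, which trivially implies the stated inequalities. The key structural fact is that $f$, as defined in \eqref{eq:def-f-g}, is a function of $\sigma$ that depends only on the restriction $\sigma_H$. Since $\nu = \mu_H$ is stipulated to be the marginal of $\mu$ on $H$, the distribution of $\sigma_H$ when $\sigma \sim \mu$ is exactly $\nu$. This lifts to any functional of $f$: for any measurable $\phi: \mathbb{R} \to \mathbb{R}$, the tower property gives
\[
\E[\sigma \sim \mu]{\phi(f(\sigma))} = \E[\sigma \sim \mu]{\phi(g(\sigma_H))} = \E[\tau \sim \nu]{\phi(g(\tau))}.
\]

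The first step is therefore to instantiate this identity at $\phi(x) = x$ and $\phi(x) = x^2$ to conclude $\mu[f] = \nu[g]$ and $\mu[f^2] = \nu[g^2]$; subtracting squares immediately yields $\Var[\mu]{f} = \Var[\nu]{g}$. The second step is to instantiate the identity at $\phi(x) = x^2 \log x^2$, giving $\mu[f^2 \log f^2] = \nu[g^2 \log g^2]$; combined with the previous identity for $\mu[f^2]$ and the definition
\[
\Ent[\mu]{f^2} = \mu[f^2 \log f^2] - \mu[f^2] \log \mu[f^2],
\]
this yields $\Ent[\mu]{f^2} = \Ent[\nu]{g^2]$. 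In particular both inequalities asserted in \Cref{claim:equal-var} hold as equalities, so the stated $\geq$ is immediate.

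There is no real obstacle: the entire argument reduces to the tower property together with the given identification $\nu = \mu_H$. The one technicality worth flagging is the case $g \geq 0$ with $g(\tau) = 0$ on some $\tau \in \Omega(\nu)$, where the convention $0 \log 0 = 0$ introduced in \eqref{eq:def-ent} must be applied consistently; this is handled automatically since the same convention applies on both sides of the identity for $\phi(x) = x^2 \log x^2$. Thus the only work is to write out these two lines of algebra, and the claim follows.
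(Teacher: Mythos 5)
Your proof is correct and rests on the same two facts as the paper's argument: that $\nu = \mu_H$ (so $X_H \sim \nu$ when $X \sim \mu$) and that $f$ — hence $f^2$ — is a function of $X_H$ alone. You go slightly further by observing that both inequalities are in fact equalities via a direct change of variables, whereas the paper invokes the law of total variance/entropy to obtain just the $\geq$; since the conditional variance/entropy given $X_H$ vanishes identically here, the two routes are essentially the same computation.
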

\begin{proof}

  We focus on proving $\Ent[\mu]{f^2} \geq \Var[\nu]{g^2}$.
  The fact $\Var[\mu]{f} \geq \Var[\nu]{g}$ can be proved using a similar argument.
  Let $X \sim \mu$ be a random edge coloring and let $F = f^2(X)$ be a random variable.
  Since $X \sim \mu$ and $\nu = \mu_H$, we know that $X_H \sim \nu$.
  This means we can write
  \begin{align}
      \nonumber
      \Ent[\mu]{f^2} &= \Ent[X\sim \mu]{f^2(X)} = \Ent{F} \\
      \label{eq:g-Ef}
      \Ent[\nu]{g^2} &= \Ent[X \sim \mu]{g^2(X_H)} = \Ent[X \sim \mu]{\E{f^2(X) \mid X_H}} = \Ent{\E{F \mid X_H}},
  \end{align}
  where the second equation in \eqref{eq:g-Ef} follows from the definition of $f$ in \eqref{eq:def-f-g}.
  Finally, according to the law of total entropy, $\Ent{F} - \Ent{\E{F \mid X_H}} = \E{\Ent{F \mid X_H}}$, which is $\geq 0$ according to the non-negativity of entropy.
\end{proof}

\begin{claim} \label{claim:local-var-GD}
  $\sum_{e \in E} \mu[\Var[e]{f}] \leq q \sum_{h \in H} \nu[\Var[h]{g}]$.
\end{claim}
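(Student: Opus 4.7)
The plan is to reduce the left-hand sum to edges in $H$ and then compare the conditional variances edge by edge via a simple ``restriction lemma'' for uniform distributions on subsets of $[q]$.

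First I would observe that since $f(\sigma) = g(\sigma_H)$ depends only on the coordinates in $H$, for any $e \in E \setminus H$ the function $f$ is constant along the fiber $\{c \in [q] : c \text{ compatible with } \sigma_{E \setminus e}\}$, so $\Var[e]{f} \equiv 0$. Hence the sum on the left collapses to $\sum_{h \in H}\mu[\Var[h]{f}]$, and it suffices to prove the pointwise-type bound $\mu[\Var[h]{f}] \leq q \cdot \nu[\Var[h]{g}]$ for each $h \in H$.

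Fix such an $h \in H$. For any coloring $\sigma \in \Omega(\mu)$, let $B(\sigma_{H \setminus h}) := [q] \setminus \{\sigma(e') : e' \in H, e' \cap h \neq \emptyset\}$ and $A(\sigma_{E \setminus h}) := [q] \setminus \{\sigma(e') : e' \in E, e' \cap h \neq \emptyset\}$; then $A(\sigma_{E \setminus h}) \subseteq B(\sigma_{H \setminus h})$ and both are nonempty (because $\sigma(h)$ belongs to $A$). Writing $F_{\sigma_{H \setminus h}}(c) := g(\sigma_H^{h \leftarrow c})$, the heat-bath definition of the conditional variance gives
\[
\Var[h]{f}(\sigma) = \Var[\text{unif}(A(\sigma_{E \setminus h}))]{F_{\sigma_{H \setminus h}}}
\quad\text{and}\quad
\Var[h]{g}(\sigma_H) = \Var[\text{unif}(B(\sigma_{H \setminus h}))]{F_{\sigma_{H \setminus h}}}.
\]

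The key step is the elementary inequality: for any $F:[q]\to\mathbb{R}$ and nonempty $A \subseteq B \subseteq [q]$,
\[
\Var[\text{unif}(A)]{F} \;\leq\; \frac{|B|}{|A|}\,\Var[\text{unif}(B)]{F},
\]
which follows because the mean $\bar F_A$ minimizes $\sum_{c \in A}(F(c)-x)^2$, so
$|A|\Var[\text{unif}(A)]{F} \leq \sum_{c \in A}(F(c)-\bar F_B)^2 \leq \sum_{c \in B}(F(c)-\bar F_B)^2 = |B|\Var[\text{unif}(B)]{F}$. Applying this with $|A| \geq 1$ and $|B| \leq q$ yields $\Var[h]{f}(\sigma) \leq q\,\Var[h]{g}(\sigma_H)$ pointwise in $\sigma$.

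Finally, I would take $\mu$-expectations. Since $\Var[h]{g}(\sigma_H)$ depends only on $\sigma_H$ and $\nu = \mu_H$, we have $\mathbb{E}_{\sigma \sim \mu}[\Var[h]{g}(\sigma_H)] = \mathbb{E}_{\tau \sim \nu}[\Var[h]{g}(\tau)] = \nu[\Var[h]{g}]$, giving $\mu[\Var[h]{f}] \leq q\,\nu[\Var[h]{g}]$. Summing over $h \in H$ and adding the vanishing terms for $e \in E \setminus H$ yields the claim. The only subtlety worth isolating is verifying $|A(\sigma_{E \setminus h})| \geq 1$, which is immediate from the fact that $\sigma$ itself is a proper coloring; there is no obstacle beyond bookkeeping.
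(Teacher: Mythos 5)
Your proof is correct. The reduction to edges in $H$, the identification of $\Var[h]{f}(\sigma)$ and $\Var[h]{g}(\sigma_H)$ as variances of the same function $F$ under uniform measures on the nested sets $A(\sigma_{E\setminus h})\subseteq B(\sigma_{H\setminus h})$, the elementary lemma $\Var[\mathrm{unif}(A)]{F}\le(|B|/|A|)\Var[\mathrm{unif}(B)]{F}$, and the final marginalization $\mathbb{E}_{\sigma\sim\mu}[\Var[h]{g}(\sigma_H)]=\nu[\Var[h]{g}]$ using $\nu=\mu_H$ are all valid.

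Your route is genuinely cleaner than the paper's, though it exploits the same two facts. The paper expands $\mu[\Var[h]{f}]$ as a double sum over colorings, drops the factor $\mu^\sigma_h(b)\le 1$ (this is your $|A|\ge 1$), reorganizes to sum over the $E\setminus H$ coordinate last, bounds that sum by $1$, and then reinserts $\mu^\alpha_h(b)\ge 1/q$ (this is your $|B|\le q$). In effect it proves the inequality only after averaging over $\sigma_{E\setminus H}$. You instead isolate a single pointwise statement $\Var[h]{f}(\sigma)\le q\,\Var[h]{g}(\sigma_H)$ and then integrate, so the mechanism by which the factor $q$ appears is more transparent; your ``restriction lemma'' also cleanly encapsulates the step the paper performs by dropping a weight and then re-inserting a marginal. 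The two proofs have the same strength and use the same bounds, but yours is more modular and easier to audit. One small thing worth stating explicitly: for the pointwise claim to typecheck one needs $\Omega(\mu^\sigma_h)=A(\sigma_{E\setminus h})$ and $\Omega(\nu^{\sigma_{H\setminus h}}_h)=B(\sigma_{H\setminus h})$, which you implicitly use; on trees with $q\ge\Delta+1$ both equalities hold because every partial proper coloring extends, so no issue arises.
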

\begin{proof}
  For $e \not\in H$, note that $\mu[\Var[e]{f}] = 0$ by the definition of $f$.
  Then for every $h \in H$, we will show that $\mu[\Var[h]{f}] \leq q \nu[\Var[h]{g}]$.
  By the definition of $\mu[\Var[h]{f}]$, we have
  \begin{align*}
    \mu[\Var[e]{f}]
    &= \frac{1}{2} \sum_{\sigma \in \Omega(\mu_{E\setminus h})} \mu_{E\setminus h}(\sigma) \sum_{a, b \in \Omega(\mu^\sigma_h)} \mu^\sigma_h(a) \mu^\sigma_h(b) (f(\sigma, a) - f(\sigma, b))^2 \\
    &\leq \frac{1}{2} \sum_{\sigma \in \Omega(\mu_{E\setminus h})} \sum_{a, b \in \Omega(\mu^\sigma_h)} \mu(\sigma, a) (f(\sigma, a) - f(\sigma, b))^2 \\
    &= \frac{1}{2} \sum_{\alpha \in \Omega(\mu_{H\setminus h})} \sum_{\beta \in \Omega(\mu^{\alpha}_{E\setminus H})} \sum_{a, b \in \Omega(\mu^{\alpha,\beta}_h)} \mu(\alpha, \beta, a) (f(\alpha, \beta, a) - f(\alpha, \beta, b))^2 \\
    &= \frac{1}{2} \sum_{\alpha \in \Omega(\mu_{H\setminus h})} \mu_{H\setminus h}(\alpha) \sum_{a, b \in \Omega(\mu^\alpha_h)} \mu^\alpha_h(a) \sum_{\beta \in \Omega(\mu^{\alpha,a}_{E\setminus H}) \cap \Omega(\mu^{\alpha,b}_{E\setminus H})} \mu^{\alpha, a}_{E\setminus H}(\beta) (g(\alpha, a) - g(\alpha, b))^2 \\
    &\leq \frac{1}{2} \sum_{\alpha \in \Omega(\mu_{H\setminus h})} \mu_{H\setminus h}(\alpha) \sum_{a, b \in \Omega(\mu^\alpha_h)} \mu^\alpha_h(a) (g(\alpha, a) - g(\alpha, b))^2 \\
    &\leq q \cdot \frac{1}{2} \sum_{\alpha \in \Omega(\mu_{H\setminus h})} \mu_{H\setminus h}(\alpha) \sum_{a, b \in \Omega(\mu^\alpha_h)} \mu^\alpha_h(a) \mu^\alpha_h(b) (g(\alpha, a) - g(\alpha, b))^2 
    = q \nu[\Var[h]{g}],
  \end{align*}
  where in the last inequality, we use the fact that $\mu^\alpha_h(b) \geq 1/q$ and $\nu = \mu_H$.
\end{proof}

\begin{proof}[Proof of \Cref{lem:monotonicity-glauber}]
    (First part): suppose $\mu$ satisfies approximate tensorization of variance with constant $C$, which could be written as the following inequality: for any $f: \Omega(\mu) \to \^R$, 
    \begin{align*}
        \Var[\mu]{f} \leq C \sum_{e \in E} \mu[\Var[e]{f}].
    \end{align*}
    Then, for any function $g: \Omega(\nu) \to \^R$, we define the function $f$ as in \eqref{eq:def-f-g}.
    According to \Cref{claim:equal-var} and \Cref{claim:local-var-GD}, it hold that
    \begin{align*}
       \Var[\nu]{g} \leq \Var[\mu]{f} \leq C \sum_{e \in E} \mu[\Var[e]{f}] \leq Cq \sum_{h \in H} \nu[\Var[h]{g}],
    \end{align*}
    which proves \eqref{eq:monotone-GD-AT-target} and hence the first part of \Cref{lem:monotonicity-glauber}.

    (Second part): Suppose $\mu$ has log-Sobolev constant $c_{\-{sob}}(\mu)$.
    According to definition, this means for any $f: \Omega(\mu) \to \^R$, it holds that
    \begin{align*}
        \Ent[\mu]{f^2} \leq c_{\-{sob}}^{-1}(\mu) \sum_{e \in E} \mu[\Var[e]{f}].
    \end{align*}
    Then, for any function $g: \Omega(\nu) \to \^R$, we define the function $f$ as in \eqref{eq:def-f-g}.
    According to \Cref{claim:equal-var} and \Cref{claim:local-var-GD}, it hold that
    \begin{align*}
       \Ent[\nu]{g^2} \leq \Ent[\mu]{f^2} \leq c_{\-{sob}}^{-1}(\mu) \sum_{e \in E} \mu[\Var[e]{f}] \leq q c_{\-{sob}}^{-1}(\mu) \sum_{h \in H} \nu[\Var[h]{g}],
    \end{align*}
    which gives \eqref{eq:monotone-GD-sob-target} and hence proves the second part of \Cref{lem:monotonicity-glauber}.
\end{proof}

\section{Monotonicity of the heat-bath edge dynamics}
\label{sec:monotonicity-edge-dynamics}
In this section, we prove \Cref{lem:monotonicity-edge-dynamics}.
Recall that $\^T = (V, E)$ is a tree with root $r$ and for $H \subseteq E$ the tree $\^T_H = (V_H, H \subseteq E)$ is a connected sub-tree of $\^T$ containing $r$
We use $\mu$ and $\nu$ to denote the distribution of uniform $q$-edge-coloring on $\^T$ and $\^T_H$, respectively.
The proof follows the same high level plan as the proof of \Cref{lem:monotonicity-glauber}.
The only difference is that we need to consider edge pairs instead of singleton edges.

Fix a function $g:\Omega(\nu) \to \^R$, let the function $f: \Omega(\mu) \to \^R$ be defined as in \eqref{eq:def-f-g}.
We note that \Cref{lem:monotonicity-edge-dynamics} follows directly from \Cref{claim:equal-var}, \Cref{claim:local-var-GD}, and the following lemma.
\begin{lemma} \label{lem:local-var-paring-edges}
  Let $\+P_E$ and $\+P_H$ be the sets of neighboring edge pairs in $E$ and $H$, respectively.
  It holds that $\sum_{\set{e,h}\in \+P_E} \mu[\Var[\set{e,h}]{f}] \leq q^2 \sum_{\set{e,h} \in \+P_H} \nu[\Var[\set{e,h}]{g}] + 2q^2 \sum_{e \in H} \nu[\Var[e]{f}]$.
\end{lemma}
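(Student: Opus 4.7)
The plan is to partition the sum $\sum_{\set{e,h}\in\+P_E} \mu[\Var[\set{e,h}]{f}]$ according to how the pair $\set{e,h}$ meets $H$, giving three cases: (i) both $e,h\in H$, so $\set{e,h}\in\+P_H$; (ii) exactly one of $e,h$ lies in $H$; or (iii) neither lies in $H$. Since $f(\sigma)=g(\sigma_H)$ is independent of the colors on $E\setminus H$, the conditional variance in case (iii) is identically zero and such pairs contribute nothing. The bound will then come from handling cases (i) and (ii) separately, and the claimed constants already dictate the decomposition: the $q^2\sum_{\+P_H}$ term will absorb case (i), and the $2q^2\sum_{e\in H}$ term will absorb case (ii).

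For case (i), fix $S=\set{e,h}\subseteq H$ and set $\pi:=\mu(\cdot\mid\sigma_{E\setminus S})$ and $\pi':=\nu(\cdot\mid\sigma_{H\setminus S})$. Both are uniform on their supports: $\pi'$ is uniform on a set $A'\subseteq[q]^2$ of size at most $q^2$, and $\pi$ is uniform on a set $A\subseteq A'$ (because any pair proper with respect to the larger boundary $\sigma_{E\setminus S}$ is automatically proper with respect to its sub-configuration $\sigma_{H\setminus S}$). I would then invoke the elementary Dirichlet-style comparison
\begin{align*}
\Var[\pi]{g} \;\leq\; \E[\pi]{(g-\E[\pi']{g})^2} \;\leq\; \sum_{x\in A}(g(x)-\E[\pi']{g})^2 \;\leq\; \abs{A'}\cdot \Var[\pi']{g} \;\leq\; q^2\,\Var[\pi']{g},
\end{align*}
where the first step is ``variance is at most the second moment about any centering constant,'' the second uses $\pi(x)\leq 1$, and the third uses that $\pi'$ is uniform on $A'$. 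Summing over $\sigma_{E\setminus S}$ weighted by $\mu_{E\setminus S}$ and then collapsing to $\nu_{H\setminus S}$-weights via $\nu=\mu_H$ yields $\mu[\Var[\set{e,h}]{f}]\leq q^2\,\nu[\Var[\set{e,h}]{g}]$ for every case-(i) pair.

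For case (ii), suppose $e\in H$ and $h\notin H$. Since $f$ depends only on $\sigma_H$, the conditional variance simplifies to $\Var[\mu_e(\cdot\mid\sigma_{E\setminus S})]{g(\sigma_{H\setminus\set{e}},\cdot)}$, where $\mu_e(\cdot\mid\sigma_{E\setminus S})$ is the $e$-marginal of the conditional on $\set{e,h}$. This marginal is in general \emph{not} uniform, because different color choices for $e$ leave different numbers of feasible colors for its neighbor $h$. Its support is nonetheless contained in the support of the uniform distribution $\nu_e(\cdot\mid\sigma_{H\setminus\set{e}})$, which itself has size at most $q$. The three-line Dirichlet bound above requires only $\pi(x)\leq 1$ and uniformity of $\pi'$; neither condition is lost here, so the same calculation yields a factor-$q$ variance comparison. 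For each $e\in H$ the number of edges $h\notin H$ adjacent to $e$ in $\^T$ is at most $2(\Delta-1)\leq 2q$, using $\Delta\leq q-1$. After summing over such $h$ and again collapsing via $\nu=\mu_H$, the total case-(ii) contribution is at most $2q\cdot q \sum_{e\in H}\nu[\Var[e]{g}]=2q^2\sum_{e\in H}\nu[\Var[e]{f}]$, matching the claimed constant.

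Combining the three cases yields the inequality. The main technical obstacle is case (ii): a naïve comparison would want both $\pi$ and $\pi'$ to be uniform so that their densities can be compared both ways, but $\mu_e(\cdot\mid\sigma_{E\setminus S})$ genuinely fails to be uniform once the outside-$H$ boundary leaves $h$ free. The resolution is to center the variance at the mean of the dominating distribution $\pi'$ rather than at the mean of $\pi$; this turns the comparison into a one-sided inequality that only needs $\pi(x)\leq 1$, which is automatic for any probability distribution.
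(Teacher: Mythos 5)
Your proof is correct and follows the same case decomposition as the paper (zero, one, or two of $\set{e,h}$ in $H$), obtaining the same per-case bounds $q^2\nu[\Var[\set{e,h}]{g}]$, $q\nu[\Var[e]{g}]$, and $0$, and the same counting of at most $2(\Delta-1)\leq 2q$ one-foot-in-$H$ pairs per $e\in H$. The only difference is cosmetic: the paper derives each per-case estimate via an inline chain (upper-bound one conditional density by $1$, marginalize out $E\setminus H$ as a sum of probabilities $\leq 1$, then reinstate the dropped factor via $\mu^\alpha_S(b)\geq 1/q^2$ or $\geq 1/q$), whereas you package the same algebra as a pointwise comparison $\Var[\pi]{g}\leq\abs{A'}\Var[\pi']{g}$ (centering at $\E[\pi']{g}$ and using $\pi(x)\leq1$ with $\pi'$ uniform on a support of size at most $q^2$ or $q$) and then average over $\sigma$; the two are equivalent.
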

\begin{proof}
  To finish the proof, we do a case study for $\set{e, h} \in \+P_E$, we claim that:
  \begin{enumerate}
  \item for $e \not\in H$ and $h \not\in H$, $\mu[\Var[\set{e,h}]{f}] = 0$;
  \item for $e \in H$ and $h \in H$, $\mu[\Var[\set{e,h}]{f}] \leq q^2 \nu[\Var[\set{e,h}]{g}]$;
  \item for $e \in H$ and $h \not\in H$, $\mu[\Var[\set{e,h}]{f}] \leq q \nu[\Var[e]{f}]$.
  \end{enumerate}
  Then, \Cref{lem:local-var-paring-edges} follows from these claims directly.
  The first case is trivial due to the definition of the function $f$.
  For the second case, by definition of $\mu[\Var[\set{e,h}]{f}$, we have
  \begin{align*}
    \mu[\Var[\set{e,h}]{f}]
    &= \frac{1}{2} \sum_{\sigma \in \Omega(E\setminus\set{e,h})}\mu_{E\setminus\set{e,h}}(\sigma) \sum_{a, b \in \Omega(\mu^\sigma_{e,h})} \mu^\sigma_{e,h}(a) \mu^\sigma_{e,h}(b) (f(\sigma, a) - f(\sigma, b))^2 \\
    &\leq \frac{1}{2} \sum_{\sigma \in \Omega(E\setminus\set{e,h})} \sum_{a, b \in \Omega(\mu^\sigma_{e,h})} \mu(\sigma, a) (f(\sigma, a) - f(\sigma, b))^2 \\
    &= \frac{1}{2} \sum_{\alpha \in \Omega(\mu_{H\setminus\set{e,h}})} \sum_{\beta \in \Omega(\mu^\sigma_{E\setminus H})} \sum_{a, b \in \Omega(\mu^{\alpha, \beta}_{e,f})} \mu(\alpha, \beta, a) (f(\alpha, \beta, a) - f(\alpha, \beta, b))^2 \\
    &= \frac{1}{2} \sum_{\alpha \in \Omega(\mu_{H\setminus\set{e,h}})} \mu_{H\setminus\set{e,h}}(\sigma) \sum_{a, b \in \Omega(\mu^{\alpha}_{e,f})} \mu^\sigma_{e,h}(a) \sum_{\beta \in \Omega(\mu^{\sigma,a}_{E\setminus H}) \cap \Omega(\mu^{\sigma,b}_{E\setminus H})} \mu^{\alpha,a}_{E\setminus H}(\beta) (f(\alpha, \beta, a) - f(\alpha, \beta, b))^2 \\
    &\leq \frac{1}{2} \sum_{\alpha \in \Omega(\mu_{H\setminus\set{e,h}})} \mu_{H\setminus\set{e,h}}(\sigma) \sum_{a, b \in \Omega(\mu^{\alpha}_{e,f})} \mu^\sigma_{e,h}(a) (g(\alpha, a) - g(\alpha, b))^2 \\
    &\leq q^2 \frac{1}{2} \sum_{\alpha \in \Omega(\mu_{H\setminus\set{e,h}})} \mu_{H\setminus\set{e,h}}(\sigma) \sum_{a, b \in \Omega(\mu^{\alpha}_{e,f})} \mu^\sigma_{e,h}(a) \mu^\sigma_{e,h}(b) (g(\alpha, a) - g(\alpha, b))^2  = q^2 \nu[\Var[\set{e,h}]{g}],
  \end{align*}
  where, in the last inequality, we use the fact that $\nu = \mu_H$ and $\mu^{\sigma}_{e,h}(b) \geq 1/q^2$.
  For the third case where $e \in H$ and $h \not\in H$, by a similar calculation, we have
  \begin{align*}
    &\mu[\Var[\set{e,h}]{f}] \\
    &= \frac{1}{2} \sum_{\sigma \in \Omega(\mu_{E\setminus\set{e,h}})} \mu_{E\setminus\set{e, h}}(\sigma) \sum_{a,b \in \Omega(\mu^\sigma_{e,h})} \mu^\sigma_{e,h}(a) \mu^\sigma_{e,h}(b) (f(\sigma, a) - f(\sigma, b))^2 \\
    &= \frac{1}{2} \sum_{\sigma \in \Omega(\mu_{E\setminus\set{e,h}})} \mu_{E\setminus\set{e, h}}(\sigma) \sum_{a,b \in \Omega(\mu^{\sigma}_e)} \mu^\sigma_{e}(a) \mu^\sigma_{e}(b) \sum_{x,y \in \Omega(\mu^{\sigma, a}_h) \cap \Omega(\mu^{\sigma, b}_h)} \mu^{\sigma,a}_h(x) \mu^{\sigma,b}_h(y) (f(\sigma, a, x) - f(\sigma, b, y))^2 \\
    &\leq \frac{1}{2} \sum_{\sigma \in \Omega(\mu_{E\setminus\set{e,h}})} \mu_{E\setminus\set{e, h}}(\sigma) \sum_{a,b \in \Omega(\mu^{\sigma}_e)} \mu^\sigma_{e}(a) (f(\sigma, a, \cdot) - f(\sigma, b, \cdot))^2 \\
    &= \frac{1}{2} \sum_{\alpha \in \Omega(\mu_{H\setminus \set{e}})} \mu_{H\setminus\set{e}}(\alpha) \sum_{a,b \in \Omega(\mu^{\alpha}_e)} \mu^\alpha_{e}(a) \sum_{\beta \in \Omega(\mu^{\alpha,a}_{E\setminus H\setminus \set{h}}) \cap \Omega(\mu^{\alpha, b}_{E\setminus H \setminus \set{h}})} \mu^{\alpha, a}_{E\setminus H \setminus \set{h}}(\beta) (f(\alpha, \beta, a, \cdot) - f(\alpha, \beta, b, \cdot))^2 \\
    &\leq \frac{1}{2} \sum_{\alpha \in \Omega(\mu_{H\setminus \set{e}})} \mu_{H\setminus\set{e}}(\alpha) \sum_{a,b \in \Omega(\mu^{\alpha}_e)} \mu^\alpha_{e}(a) (g(\alpha, a) - g(\alpha, b))^2 \\
    &\leq q \frac{1}{2} \sum_{\alpha \in \Omega(\mu_{H\setminus \set{e}})} \mu_{H\setminus\set{e}}(\alpha) \sum_{a,b \in \Omega(\mu^{\alpha}_e)} \mu^\alpha_{e}(a) \mu^\alpha_e(b) (g(\alpha, a) - g(\alpha, b))^2 
     = q \nu[\Var[e]{g}],
  \end{align*}
  where, in the last inequality, we use the fact that $\nu = \mu_H$ and $\mu^{\alpha}_{e}(b) \geq 1/q$.
\end{proof}

\section{Proof of \Cref{lem:basic-var-fact}}\label{app:proof}
  Let $G = (V, E)$ be a graph and $\mu$ be a distribution over $[q]^E$ satisfying~\eqref{eq:cond-ind}.
  Let $f: \Omega(\mu) \to \^R$ be a function.
  First, we study some exchange property for the conditional expectation.
  For any two edges $e$ and $e'$, define $\-{dist}(e,e')$ as the length of shortest path between $e$ and $e'$ in line graph.
  For the proofs in this section, we utilize the \emph{conditional independence} in~\eqref{eq:cond-ind}.

\begin{proposition} \label{prop:exchange-dist-ge-2}
  Let $S, T \subseteq E$ such that $\-{dist}(S, T) \geq 2$, then it holds that
  \begin{align*}
    \mu_{S \cup T}[f] = \mu_S\mu_T (f) = \mu_T\mu_S(f).
  \end{align*}
\end{proposition}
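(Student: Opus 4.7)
The plan is to reduce both equalities to a single pointwise computation at an arbitrary $\sigma \in \Omega$, using the conditional independence property in~\eqref{eq:cond-ind} as the only non-trivial ingredient. Fix $\sigma$ and set $C := E \setminus (S \cup T)$, so that $E$ is the disjoint union $S \sqcup T \sqcup C$ and $\sigma = (\sigma_S, \sigma_T, \sigma_C)$. By the very definition given in \Cref{sec:tensor-prelim}, the value $\mu_{S \cup T}[f](\sigma) = \mu^{\sigma_C}[f]$ depends only on the pinning on $C$.

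The first step is to check that the hypothesis $\dist(S,T) \geq 2$ (distance in the line graph of $G$) legitimately lets us apply~\eqref{eq:cond-ind} with $A = S$, $B = T$, and $C$ as above. Distance at least $2$ means in particular that $S \cap T = \emptyset$ and no edge of $S$ shares an endpoint with any edge of $T$ in $G$, so every path in the line graph from a vertex in $S$ to a vertex in $T$ must pass through some vertex of $C$. Hence removing $C$ separates $S$ from $T$, and~\eqref{eq:cond-ind} gives, for every feasible pinning on $C$,
\[
\mu^{\sigma_C}_{S \cup T} = \mu^{\sigma_C}_S \times \mu^{\sigma_C}_T.
\]

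Next I would unfold $\mu_S[\mu_T[f]](\sigma)$. By definition it equals $\E[X_S \sim \mu^{\sigma_{E \setminus S}}_S]{\mu_T[f](X_S, \sigma_T, \sigma_C)}$, and because $\mu_T[f]$ depends only on the coordinates outside $T$, the inner expression is $\mu^{X_S, \sigma_C}[f]$. Two consequences of the product factorization feed in simultaneously: the $S$-marginal satisfies $\mu^{\sigma_{E \setminus S}}_S = \mu^{\sigma_T, \sigma_C}_S = \mu^{\sigma_C}_S$, and the $T$-marginal under any pinning of $S$ and $C$ equals $\mu^{\sigma_C}_T$. Substituting both yields
\[
\mu_S[\mu_T[f]](\sigma) = \sum_{x_S, x_T} \mu^{\sigma_C}_S(x_S)\, \mu^{\sigma_C}_T(x_T)\, f(x_S, x_T, \sigma_C),
\]
which, using~\eqref{eq:cond-ind} once more to re-couple the product back into the joint distribution, equals $\mu^{\sigma_C}[f] = \mu_{S\cup T}[f](\sigma)$.

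The identity $\mu_T\mu_S(f) = \mu_{S \cup T}[f]$ follows immediately by the same argument with the roles of $S$ and $T$ interchanged. I do not anticipate a real obstacle here: the only step that uses the geometry of $G$ at all is the verification that $C$ separates $S$ and $T$ in the line graph, and everything after that is bookkeeping within the product decomposition supplied by~\eqref{eq:cond-ind}.
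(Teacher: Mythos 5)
Your proof is correct and takes essentially the same route as the paper's: both fix a configuration, condition on $C = E\setminus(S\cup T)$, invoke the conditional-independence factorization from~\eqref{eq:cond-ind} (after checking that $\dist(S,T)\geq 2$ means $C$ separates $S$ and $T$ in the line graph), and unfold the nested and joint conditional expectations into the same symmetric double sum. Your write-up simply spells out the intermediate marginal identities (e.g., $\mu^{\sigma_T,\sigma_C}_S=\mu^{\sigma_C}_S$) more explicitly than the paper does, but the argument is the same.
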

\begin{proof}
  For a fixed $\eta \in \Omega(\mu)$, let $\rho = \eta_{E\setminus (S\cup T)}$. Note that $\rho$ fixes the configuration both on $\partial S$ and $\partial T$. 
  By \eqref{eq:cond-ind}, conditional on $\rho$, configurations on $S$ and $T$ are independent.
  Then, by the definition of $\mu_S\mu_T[f]$ and above conditional independence property, one can verify
  \begin{align*}
    (\mu_S\mu_T(f))(\eta) =  \sum_{\sigma \in \Omega(\mu^\rho_S)} \mu^\rho_S(\sigma) \sum_{\tau \in \Omega(\mu^\rho_T)} \mu^\rho_T(\tau) f(\sigma \uplus \tau \uplus \rho).
  \end{align*}
Note that $\sigma,\tau,\rho$ are configurations over three disjoint sets (which forms a partition of $E$), we use $\sigma \uplus \tau \uplus \rho$ to denote the merged configuration on $E$. Since RHS is symmetric with respect to $S$ and $T$, we have  $(\mu_S\mu_T(f))(\eta) =  (\mu_T\mu_S(f))(\eta)$.
Also, by definition and the conditional independence,
\begin{align*}
 (\mu_{S \cup T} [f])(\eta) = \sum_{\sigma \in \Omega(\mu^\rho_S)}\sum_{\tau \in \Omega(\mu^\rho_T)}\mu^\rho_S(\sigma)\mu^\rho_T(\tau) f(\sigma \uplus \tau \uplus \rho).&\qedhere
\end{align*}

\end{proof}

\begin{proposition} \label{prop:exchange-contain}
  If $T \subseteq S \subseteq E$, then it holds that
  \begin{align*}
    \mu_S\mu_T(f) = \mu_T\mu_S(f) = \mu_S(f).
  \end{align*}
\end{proposition}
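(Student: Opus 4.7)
The plan is to recognize that this proposition is simply the tower property of conditional expectation, reformulated in the paper's notation where $\mu_S[f](\sigma)$ is the conditional expectation of $f$ under $\mu$ given $\sigma_{E\setminus S}$ (i.e., $\mu_S[f]$ is a random variable whose value depends only on the coloring restricted to $E \setminus S$). Since $T \subseteq S$, we have $E \setminus S \subseteq E \setminus T$, so configurations on $E \setminus T$ determine those on $E \setminus S$. The two equalities correspond to the two standard consequences of this containment of $\sigma$-algebras.

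For the equality $\mu_T\mu_S(f) = \mu_S(f)$, the key observation is that $\mu_S[f]$ is, by construction, a function of $\sigma_{E\setminus S}$ only. Since $E \setminus S \subseteq E \setminus T$, the value $\mu_S[f](\sigma)$ is already determined by $\sigma_{E\setminus T}$. Applying $\mu_T$ averages over configurations on $T$ while holding $\sigma_{E\setminus T}$ fixed; but the function being averaged does not depend on the configuration on $T$ at all, so the average equals the constant value, yielding $\mu_T[\mu_S[f]] = \mu_S[f]$. I would write this out by fixing $\eta \in \Omega(\mu)$, letting $\rho = \eta_{E\setminus T}$, and observing that for every $\xi \in \Omega(\mu^\rho_T)$ the completed configuration $\xi \uplus \rho$ agrees with $\eta$ on $E \setminus S$, so $\mu_S[f](\xi \uplus \rho) = \mu_S[f](\eta)$.

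For the equality $\mu_S\mu_T(f) = \mu_S(f)$, this is the tower property proper. I would fix $\eta \in \Omega(\mu)$, let $\rho = \eta_{E\setminus S}$, and decompose $S = T \cup (S \setminus T)$, writing any $\xi \in \Omega(\mu^\rho_S)$ as $\xi = \xi_T \uplus \xi_{S\setminus T}$. Expanding the definitions gives
\begin{align*}
\mu_S[\mu_T[f]](\eta)
&= \sum_{\xi \in \Omega(\mu^\rho_S)} \mu^\rho_S(\xi)\, \mu_T[f](\xi \uplus \rho) \\
&= \sum_{\xi_{S\setminus T}} \mu^\rho_{S\setminus T}(\xi_{S\setminus T}) \sum_{x} \mu^{\rho \uplus \xi_{S\setminus T}}_T(x)\, f(x \uplus \xi_{S\setminus T} \uplus \rho),
\end{align*}
where in the inner sum I have used that $\mu_T[f](\xi \uplus \rho)$ depends only on the configuration on $E \setminus T$, namely $\xi_{S\setminus T} \uplus \rho$, and have suppressed the argument's range. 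The chain rule for conditional probabilities, $\mu^\rho_{S\setminus T}(\xi_{S\setminus T}) \cdot \mu^{\rho \uplus \xi_{S\setminus T}}_T(x) = \mu^\rho_S(x \uplus \xi_{S\setminus T})$, collapses the double sum to $\sum_{\xi' \in \Omega(\mu^\rho_S)} \mu^\rho_S(\xi')\, f(\xi' \uplus \rho) = \mu_S[f](\eta)$.

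There is no real obstacle here: unlike \Cref{prop:exchange-dist-ge-2}, this proposition does not use the conditional independence property \eqref{eq:cond-ind} at all, and holds for arbitrary distributions $\mu$. The only thing to take care of is the bookkeeping of which variables are being marginalized versus conditioned upon, and the factorization of $\mu^\rho_S$ into $\mu^\rho_{S\setminus T}$ times $\mu^{\rho \uplus \xi_{S\setminus T}}_T$, which is just the standard factorization of a joint distribution into marginal and conditional.
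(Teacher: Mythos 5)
Your proposal is correct and follows essentially the same route as the paper's proof: the paper disposes of $\mu_S\mu_T(f)=\mu_S(f)$ by invoking the law of total expectation and of $\mu_T\mu_S(f)=\mu_S(f)$ by observing that $\mu_S[f]$ is a function of $\sigma_{E\setminus S}$ alone (hence of $\sigma_{E\setminus T}$), and you have simply unpacked both steps into explicit sums using the chain rule for conditionals. Your remark that, unlike \Cref{prop:exchange-dist-ge-2}, this proposition requires no conditional independence is accurate and consistent with (though not stated in) the paper's terse proof.
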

\begin{proof}
  By the law of total expectation, we have $\mu_S\mu_T(f) = \mu_S(f)$.
  Also, note that $E\setminus S \subseteq E\setminus T$, by definition, it is easy to verify $\mu_T\mu_S(f)  = \mu_S(f)$.
\end{proof}

\begin{proposition} 
  When $\partial S \cap T = \emptyset$, it holds that
  \begin{align} \label{eq:exchangable}
    \mu_S\mu_T(f) = \mu_T\mu_S(f).
  \end{align}
\end{proposition}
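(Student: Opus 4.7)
The plan is to decompose $E$ into four disjoint blocks: $A = S \setminus T$, $B = T \setminus S$, $C = S \cap T$, and $D = E \setminus (S \cup T)$, so that $S = A \cup C$ and $T = B \cup C$. The hypothesis $\partial S \cap T = \emptyset$ forces in particular $\partial S \cap B = \emptyset$ (since $B \subseteq T$), which means no edge of $B$ shares a vertex with any edge of $S = A \cup C$. Consequently, inside the line graph of $G$ restricted to the edges in $A \cup B \cup C = E \setminus D$, the sets $B$ and $A \cup C$ lie in disjoint components; equivalently, removing $D$ disconnects $B$ from $A \cup C$ in the line graph of $G$.

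This disconnection is exactly the hypothesis needed to invoke the conditional independence property~\eqref{eq:cond-ind}, with $D$ playing the role of the separator and with $A \cup C$ and $B$ playing the two separated sides: for every $\rho \in \Omega(\mu_D)$,
\begin{equation*}
\mu^{\rho}_{A \cup B \cup C} \;=\; \mu^{\rho}_{A \cup C} \times \mu^{\rho}_{B}.
\end{equation*}
From this I would read off the two consequences that drive the computation: (i) the marginal $\mu^{\rho}_A$ is unchanged if one further conditions on the $B$-coordinates; and (ii) for any fixed value $\sigma_A$ of the $A$-coordinates, the conditional law $\mu^{(\sigma_A,\rho)}_{T} = \mu^{(\sigma_A,\rho)}_{B \cup C}$ factorizes as $\mu^{\rho}_{B} \times \mu^{(\sigma_A,\rho)}_{C}$.

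With these two facts in hand the identity becomes mechanical. Fix an arbitrary $\eta \in \Omega(\mu)$ and set $\rho = \eta_D$. Unfolding the definitions of $\mu_S$ and $\mu_T$ and then applying (i) and (ii) in turn yields
\begin{equation*}
(\mu_S\mu_T f)(\eta) \;=\; \sum_{\sigma_A}\mu^{\rho}_A(\sigma_A)\sum_{\tau_B,\tau_C}\mu^{\rho}_B(\tau_B)\,\mu^{(\sigma_A,\rho)}_C(\tau_C)\,f(\sigma_A,\tau_B,\tau_C,\rho).
\end{equation*}
Collapsing $\mu^{\rho}_A(\sigma_A)\,\mu^{(\sigma_A,\rho)}_C(\tau_C) = \mu^{\rho}_{A\cup C}(\sigma_A,\tau_C)$ turns the right-hand side into a triple sum that is manifestly symmetric under swapping the roles of $B$ and $A \cup C$; carrying out the analogous expansion of $(\mu_T\mu_S f)(\eta)$ starting from the other side produces the same expression.

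The only substantive step is the first: translating the hypothesis $\partial S \cap T = \emptyset$ into the separation condition required by \eqref{eq:cond-ind}. Everything after that is bookkeeping, parallel in spirit to the proofs of \Cref{prop:exchange-dist-ge-2} and \Cref{prop:exchange-contain}, and I anticipate no real obstacle beyond keeping track of which coordinates are fixed versus summed at each stage.
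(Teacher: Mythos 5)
Your argument is correct, and it takes a genuinely different route from the paper's. The paper proves the identity by a short chain of operator equalities that reuses the two preceding propositions: writing $A := S\cap T$ and $B := T\setminus S$ (the printed ``$A = S\setminus T$'' in the paper's proof appears to be a slip, since the chain requires $A\cup B = T$), one has $A\subseteq S$ and $\dist(A,B)\geq 2$, whence
\[
\mu_S\mu_T(f) = \mu_S\mu_A\mu_B(f) = \mu_S\mu_B(f) = \mu_B\mu_S(f) = \mu_B\mu_A\mu_S(f) = \mu_T\mu_S(f)
\]
by alternating \Cref{prop:exchange-dist-ge-2} and \Cref{prop:exchange-contain}. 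You instead bypass both auxiliary propositions and expand $(\mu_S\mu_T f)(\eta)$ directly from the conditional independence property~\eqref{eq:cond-ind} with $D = E\setminus(S\cup T)$ as the separator, landing on the explicit symmetric triple sum $\sum \mu^\rho_{A\cup C}\,\mu^\rho_B\, f$; this in passing establishes the slightly stronger fact $\mu_S\mu_T f = \mu_{S\cup T} f = \mu_T\mu_S f$. The tradeoff: the paper's version is shorter on the page but in effect re-derives the same independence three times through the lemmas, while yours is self-contained and exhibits the common value explicitly at the cost of more coordinate bookkeeping. One point worth spelling out if you write this up: the facts $\mu^{(\eta_B,\rho)}_A = \mu^\rho_A$, $\mu^{(\tau_B,\rho)}_S = \mu^\rho_S$, and $\mu^{(\eta_A,\rho)}_B = \mu^\rho_B$, which you need in order to show that both $\mu_S\mu_T$ and $\mu_T\mu_S$ expand to the same triple sum, all follow from the single factorization $\mu^\rho_{A\cup B\cup C} = \mu^\rho_{A\cup C}\times\mu^\rho_B$ by marginalizing; the symmetry that finishes the proof is the symmetry of that factorization, not merely formal symmetry of the displayed formula, and it is worth saying so rather than leaving it as ``carrying out the analogous expansion.''
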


\begin{proof}
  We let $A = S\setminus T$ and let $B = T\setminus S$.
  Since $\partial S \cap T = \emptyset$, it holds that $A \subseteq S$ and $\-{dist}(S, B) \geq 2$, which implies $\-{dist}(A,B)\geq 2$.
  According to \Cref{prop:exchange-dist-ge-2} and \Cref{prop:exchange-contain}, we have
  \begin{align*}
    \mu_S\mu_T(f) &\overset{\text{Prop.\ref{prop:exchange-dist-ge-2}}}{=} \mu_S\mu_A\mu_B(f)
    \overset{\text{Prop.\ref{prop:exchange-contain}}}{=} \mu_S\mu_B(f)
    \overset{\text{Prop.\ref{prop:exchange-dist-ge-2}}}{=} \mu_B\mu_S(f)
    \overset{\text{Prop.\ref{prop:exchange-contain}}}{=} \mu_B\mu_A\mu_S(f)  \overset{\text{Prop.\ref{prop:exchange-dist-ge-2}}}{=} \mu_T\mu_S(f). \qedhere
  \end{align*}
\end{proof}

Finally, we are ready to prove \Cref{lem:basic-var-fact}.
Let $S, T \subseteq E$ such that $\partial S \cap T = \emptyset$.
We show that for any $f: \Omega(\mu) \to \^R$,
\begin{align*}
  \mu[\Var[S]{\mu_T[f]}] \leq \mu[\Var[S]{\mu_{S\cap T}[f]}].
\end{align*}
For two functions $g_1,g_2$, we use $(g_1-g_2)^2$ denote the function such that $(g_1-g_2)^2(\sigma) = (g_1(\sigma)-g_2(\sigma))^2$.
Note that for any function $g$, $\Var[S]{g} = \mu_S[(g - \mu_S[g])^2]$, where both LHS and RHS are functions and the equation here means equal two functions are the same.
Using all above propositions, we have
  \begin{align*}
    \mu[\Var[S]{\mu_T[f]}]
    &= \mu\left[\mu_S[(\mu_T(f) - \mu_S\mu_T(f))^2]\right] \\
    &= \mu\left[(\mu_T(f) - \mu_S\mu_T(f))^2\right] \\
    &\overset{\eqref{eq:exchangable}}{=} \mu\left[(\mu_T(f) - \mu_T\mu_S (f))^2\right] \\
    &= \mu\left[(\mu_T\mu_{S\cap T}(f) - \mu_T\mu_S \mu_{S\cap T}(f))^2\right] \\
    &\overset{(\star)}{\leq} \mu\left[(\mu_{S\cap T}(f) - \mu_S \mu_{S\cap T}(f))^2\right] \\
    &= \mu\left[\mu_S[(\mu_{S\cap T}(f) - \mu_S \mu_{S\cap T}(f))^2]\right] 
     = \mu[\Var[S]{\mu_{S\cap T}[f]}],
  \end{align*}
  where $(\star)$ is an application of Jensen's
  inequality.

\end{document}